\numberwithin{equation}{section}
\numberwithin{equation}{subsection}
\theoremstyle{plain}
\newtheorem{lemma}[equation]{Lemma}
\newtheorem{thm}[equation]{Theorem}
\newtheorem{cor}[equation]{Corollary}
\newtheorem{lem}[equation]{Lemma}
\newtheorem{prop}[equation]{Proposition}
\theoremstyle{definition}
\newtheorem{example}[equation]{Example}
\newtheorem{remark}[equation]{Remark}
\newtheorem{ex}[equation]{Example}
\newtheorem{rem}[equation]{Remark}
\numberwithin{equation}{section}
\numberwithin{equation}{subsection}
\newcommand{\bnen}{\begin{equation}}
\newcommand{\eden}{\end{equation}}
\newcommand{\bean}{\begin{eqnarray}}
\newcommand{\eean}{\end{eqnarray}}
\newcommand{\bsen}{\begin{subequations}}
\newcommand{\esen}{\end{subequations}}
\newcommand{\bea}{\begin{eqnarray*}}
\newcommand{\eea}{\end{eqnarray*}}
\newcommand{\bne}{\begin{equation*}}
\newcommand{\ede}{\end{equation*}}
\def\C{\mathbb C}
\def\R{\mathbb R}
\def\Z{\mathbb Z}
\def\N{\mathbb N}
\def\P{\mathbb P}
\title{Multifold degeneracy points of quantum systems and singularities of matrix varieties}
\author[1]{Gy\"orgy Frank}
\author[1,2]{Andr\'as P\'alyi}
\author[2]{Gerg\H{o} Pint\'er}
\author[1,3]{D\'aniel Varjas}
\affil[1]{Department of Theoretical Physics, 
    Institute of Physics, 
    Budapest University of Technology and Economics, M\H{u}egyetem rkp.~3., H-111 Budapest, Hungary}
    \affil[2]{HUN-REN-BME-BCE Quantum Technology Research Group, Budapest University of Technology and Economics, M\H{u}egyetem rkp.~3., H-111 Budapest, Hungary}
\affil[3]{IFW Dresden and Würzburg-Dresden Cluster of Excellence ct.qmat, Helmholtzstrasse 20, 01069 Dresden, Germany}
\date{\today}
\begin{document}

\maketitle

\begin{abstract}

Parameter-dependent quantum systems often exhibit energy degeneracy points, whose comprehensive description naturally lead to the application of methods from singularity theory.
A prime
example is an electronic band structure where two energy levels coincide in a point of momentum
space. It may happen, and this case is the focus of our work, that three or more levels coincide at a parameter point, called multifold
degeneracy. Upon a generic perturbation, such a multifold degeneracy point is dissolved into a
set of Weyl points, that is, generic two-fold degeneracy points. In this work, we provide an upper
bound to the number of Weyl points born from the multifold degeneracy point. To compute this upper bound, we describe the geometric degeneracy variety in the space of complex matrices. We compute its multiplicity at certain singular points corresponding to a multifold degeneracy, and the multiplicity of holomorphic map germs with respect to this variety.
Our work covers physics and mathematics aspects in detail, and attempts to bridge the two disciplines and communities. For self-containedness, we
survey examples of multi-fold degeneracies in quantum systems and condensed-matter physics,
as well as the established tools of local algebraic geometry that we use to identify the upper
bound.    
\end{abstract}

\tableofcontents

\section{Introduction}

In this paper, we address a physically motivated question using the tools of local algebraic geometry. 
In short, the question is: given a multifold degeneracy point in the parameter space of a parameter-dependent Hermitian matrix, how many generic twofold degeneracy points are born from this multifold degeneracy point upon a perturbation?
Our work covers physics and mathematics aspects in detail, and attempts to bridge the two disciplines and communities.
To make this paper self-contained, a significant part is devoted to survey examples of multi-fold degeneracies in quantum systems and condensed-matter physics, as well as the established tools of local algebraic geometry that we use to address the above question.

\subsection{The physical setup}  

In quantum mechanics and condensed matter physics, several systems are described by a parameter-dependent Hamiltonian matrix, which is a smooth map of a manifold $M$ to the space $\text{Herm}(n)$ of $n \times n$ Hermitian matrices. 
A point $p$ of $M$ is called degeneracy point if the corresponding Hermitian matrix $H(p)$ has coincident eigenvalues --  in other words, the energy levels cross each other.

An example in elementary quantum physics is a spin-$1/2$ object (e.g., a localized electron) in a magnetic field \cite{Griffiths2018}, described by the Hamiltonian $H(\mathbf{B}) = B_x \sigma_x + B_y \sigma_y + B_z \sigma_z$, expressed with the $2\times 2$ Pauli matrices $\sigma_x$, $\sigma_y$, $\sigma_z$, depending on the magnetic field vector  $\mathbf{B}=(B_x,B_y,B_z)$.
In this example, the manifold $M = \mathbb{R}^3$ is the space of magnetic field vectors, $\mathbf{B} $, and $n=2$. Since the two eigenvalues (energy levels) of $H(\mathbf{B})$ are $\lambda_{1,2}=\pm \| \mathbf{B} \|$, the only degeneracy point is $\mathbf{B}=0$.

An example in condensed matter physics is the electronic band structure~\cite{AshcroftMermin} of a three-dimensional Weyl semimetal \cite{ShuangJiaNatMat,Armitage2018}. 
Such a band structure is often calculated from an $n$-orbital tight-binding model, which is formulated as an $n\times n$ Hamiltonian matrix that depends on wave vector of the electron. 
In this context, the wave vector is taken from the first Brillouin zone, which is a three-dimensional torus, that is, $M = T^3$.
In Weyl semimetals, neighboring eigenvalues of the Hamiltonian become degenerate in a few isolated points of the Brillouin zone. 
These isolated degeneracy points contribute to exotic conduction properties such as the anomalous Hall effect and the chiral anomaly \cite{ShuangJiaNatMat,Armitage2018}, and are related to the appearance of peculiar surface states and surface Fermi arcs\cite{Huang2015,SuYangXu}.


\begin{figure}[ht!]
    \centering
        \includegraphics[width=\linewidth]{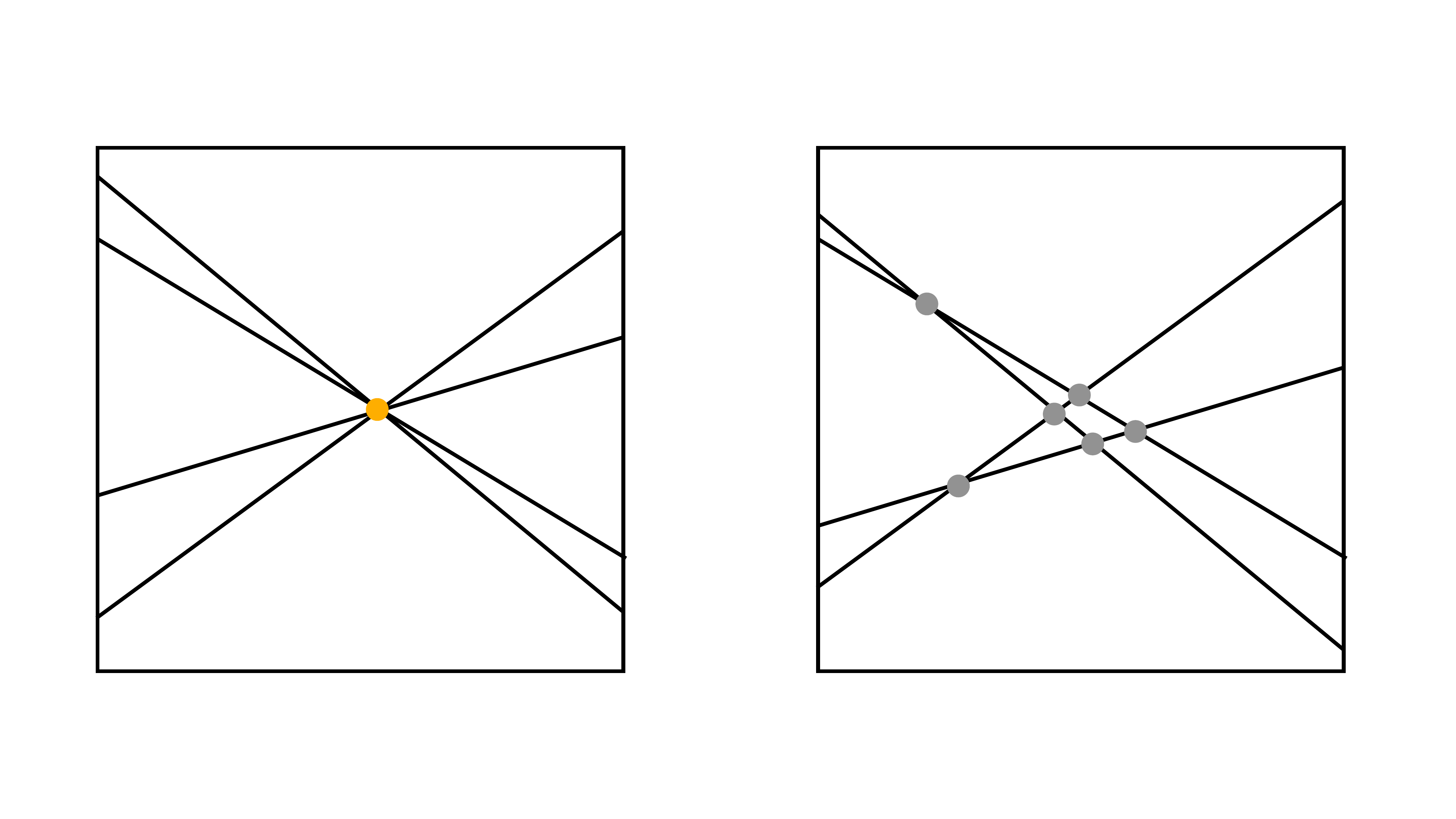}
    \caption{Fourfold degeneracy point of a 1-parameter linear family of $4 \times 4$ real diagonal matrices (left) and its generic perturbation (right). The lines are the graphs of the four eigenvalues, as functions of the parameter (horizontal line). For 1-parameter diagonal families $H(x)=\text{diag}(a_1x, \dots, a_k x) \in \text{Diag}_{\R}(k)$, $a_j$ are pairwise different, 0 is a $k$-fold degeneracy point with $H(0)=0$. A generic perturbation $H_t(x)=\text{diag}(a_jx +b_jt)$ has $\sharp \mathbf{WP}=\binom{k}{2}$ generic twofold degeneracy points for $t \neq 0$. See the diagonal case in Section~\ref{ss:proofdiag}.}
    \label{fig:both}
\end{figure}

\begin{figure}[ht!]
    \centering
    \begin{subfigure}[t]{0.36\columnwidth}
        \centering
        \includegraphics[width=\linewidth]{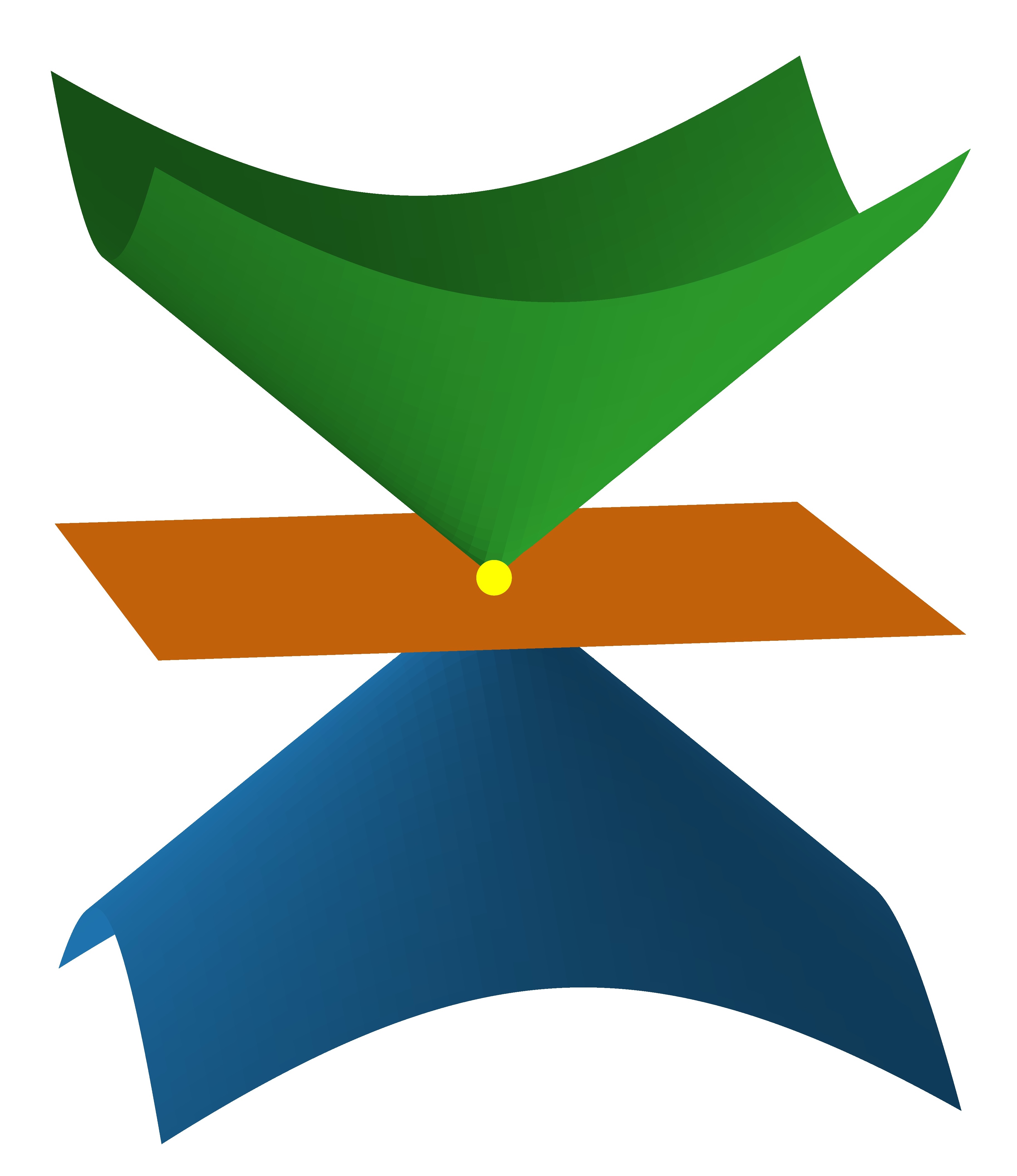}
    \end{subfigure}%
    \hspace*{2cm}
    \begin{subfigure}[t]{0.36\columnwidth}
        \centering
        \includegraphics[width=\linewidth]{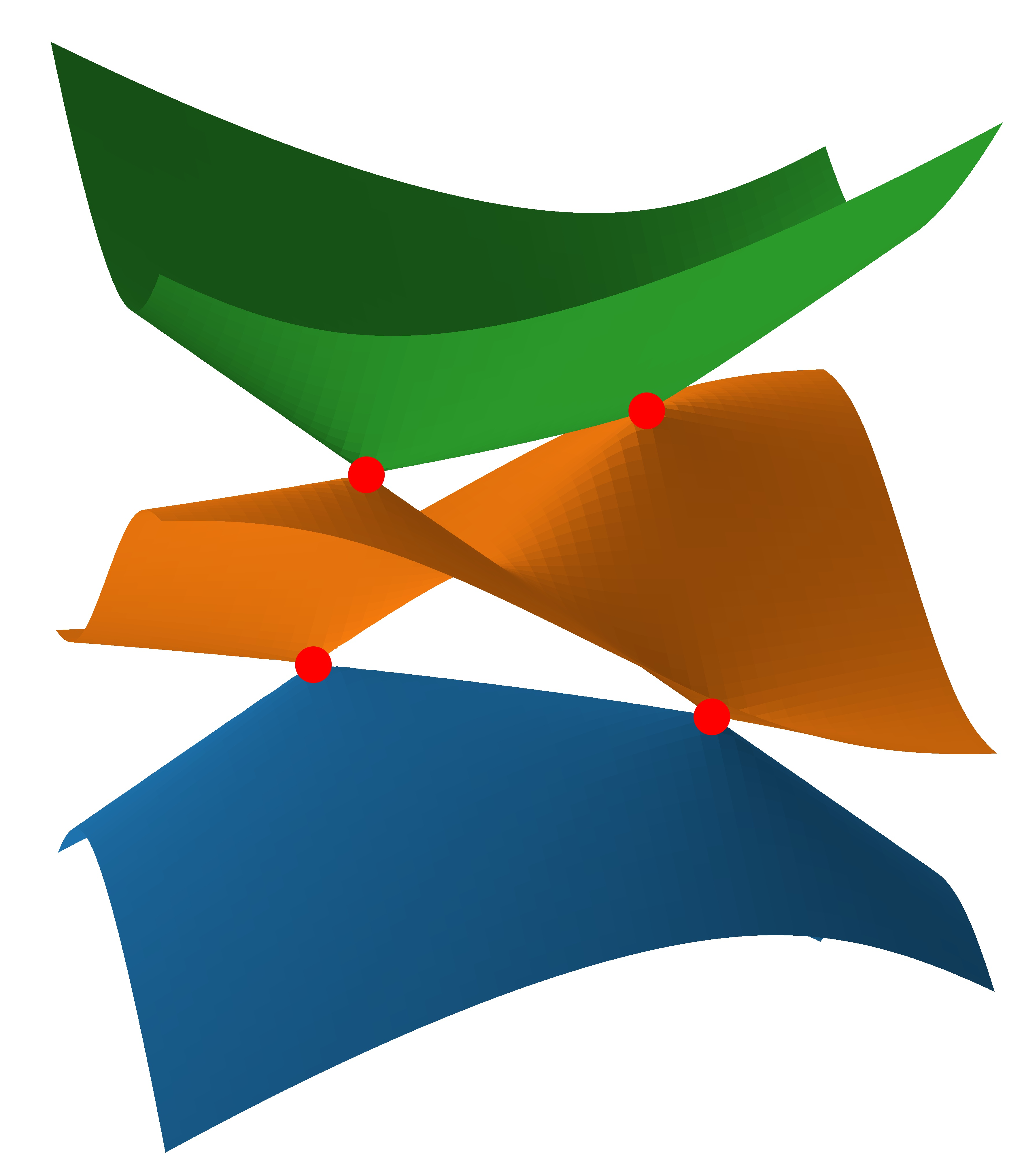}
    \end{subfigure}
    \caption{Threefold degeneracy point of a 2-parameter linear family of $3 \times 3$ real symmetric matrices  (left) and a generic perturbation of it (right). The surfaces are the graphs of the three eigenvalues, as functions of the parameters (horizontal plane). 
    A 2-parameter symmetric family $H(x,y)=x A + y B \in \text{Symm}_{\R}(k)$ has a  $k$-fold degeneracy at $0$ with $H(0)=0$. For a generic choice of $A$ and $B$, 0 is an isolated degeneracy point, moreover, it is isolated also in the complex sense.  For a generic perturbation $H_t(x,y)=xA+yB+tC$, the inequality $\sharp \mathbf{WP} \leq \sharp \mathbf{cWP}=\frac{k(k^2-1)}{6}$ holds for the number of Weyl points and complex Weyl points, if $t \neq 0$. See the symmetric case in Section~\ref{ss:symmpart1}.}
    \label{fig:both2}
\end{figure}

In the above examples, the parameter space $M$ is three-dimensional.
Whenever the parameter space $M$ is three-dimensional, the generic degeneracy points are called Weyl points; they are isolated 2-fold degeneracy points (i.e., two eigenvalues coincide)\footnote{In the physics literature, this terminology is ambiguous: many articles refer to any isolated two-fold degeneracy point as a `Weyl point', not only to the generic degeneracy points.}. Multifold degeneracy points may appear due to the symmetries of the system. 
Isolated multifold degeneracy points appear for example in the case of a spinful particle with a spin higher than $1/2$, placed in a magnetic field~\cite{Scherubl2019,Frank2020,FrankTeleportation,Frank2022}, or in the electronic band structure of a crystalline material with certain lattice symmetries \cite{BradlynScience,Alpin,Robredo,encyclopedia}.
A generic symmetry breaking perturbation splits the multifold degeneracy point into Weyl points  ($ \mathbf{WP}$). 
Our main target quantity is the number of these Weyl points,  $\sharp \mathbf{WP}$. 
Of course, this number may depend on the choice of the perturbation.

A lower bound for $\sharp \mathbf{WP}$ can be given in terms of the first Chern numbers of the corresponding eigenvector bundles \cite{Bruno2006}, hence, the main goal of this paper is to give an upper bound for $\sharp \mathbf{WP}$.
We provide this upper bound using the toolkit of local algebraic geometry. 
This upper bound is interpreted as the number of complex Weyl points $\sharp \mathbf{cWP}$ of the perturbation, and we find that it does not depend on the choice of the perturbation. An important special case is a linear Hamiltonian $H: \R^3 \to \text{Herm}(k)$, which has a $k$-fold degeneracy at $0$, $H(0)=0$. If this degeneracy point is \emph{isolated also in complex sense} (defined in, e.g., Sec.~\ref{sec:summaryofresults}), then we show that $\sharp \mathbf{WP} \leq \sharp \mathbf{cWP} =k^2(k^2-1)/12$ holds for the number of real and complex Weyl points.

We have carried out a similar investigation for non-generic two-fold degeneracy points in \cite{BirthQ}; the present work is a generalization to multifold degeneracy points. We also studied degeneracy points in several different physical systems, such as interacting spins in external magnetic fields \cite{Scherubl2019,Frank2020,FrankTeleportation,Frank2022}, Weyl--Josephson superconducting circuits in quantum mechanics \cite{FrankTeleportation,frank2023singularity}, and ball and spring systems in classical mechanics \cite{GubaWeyl}. 

While most of these examples are described by a hermitian family depending on 3 parameters, this latter one \cite{GubaWeyl} is described by a 2-parameter family of real symmetric matrices. Motivated by this, here we also study real symmetric matrix families simultaneously with the hermitian case, which is our main focus. We also present our arguments for real diagonal matrix families depending on 1 parameter. This serves as a pedagogical illustration of the concepts, relations, and methods. 

Figure~\ref{fig:both} illustrates the dissolution of a four-fold degeneracy point (orange, left panel) into six generic twofold degeneracy points (gray, right panel) by a perturbation, for a 1-parameter diagonal matrix family. 
The degeneracy points correspond to the intersections of the graph of eigenvalues (dispersion relation, black lines).
Figure~\ref{fig:both2} is a similar illustration, showing the dissolution of a three-fold degeneracy point (yellow, left panel) into four generic two-fold degeneracy points (red, right panel) by a perturbation, for a 2-parameter real symmetric matrix family. 
Here, the degeneracy points correspond to the intersections of the graph of eigenvalues shown as the colored surfaces.


\subsection{Key mathematical insights}

For the characterization of degeneracy points first we describe the set of matrices with coincident eigenvalues inside the space $\text{Herm}(n)$ of $n \times n$ hermitian matrices and the space $\C^{n \times n}$ of all $n \times n$ complex matrices, then we characterize the intersection points of the Hamiltonian map $H$  (and its complexification) with these sets.

Our method is fundamentally based on the following observation: (1) $\lambda \in \R$ is a degenerate eigenvalue of an $n \times n$ hermitian matrix $A \in \text{Herm}(n)$ if and only if (2) the rank of $A-\lambda \mathds{1}_n$ (where $\mathds{1}_n$ is the $n \times n$ identity matrix) is at most $n-2$, and this happens if and only if (3) all the $(n-1) \times (n-1)$ minors $M_{ij}(A-\lambda \mathds{1}_n)$ (obtained by removing the $i$-th row and $j$-th column) are 0. For an arbitrary $n \times n$ complex matrix $A \in \C^{n \times n}$ and $\lambda \in \C$, the conditions (2) and (3) are still equivalent, and they are equivalent with (1') $\lambda$ is a \emph{geometrically degenerate} eigenvalue of $A$, that is, the dimension of the corresponding eigenspace is at least 2. 

The \emph{(hermitian) degeneracy variety} $\Sigma_{\text{herm}} \subset \text{Herm}(n)$ of matrices with coincident eigenvalues is a real algebraic subset of codimension 3. This variety $\Sigma_{\text{herm}}$ is fundamental in the characterization of degeneracy points, see below. 
The above description suggests that the relevant complex counterpart of $\Sigma_{\text{herm}} \subset \text{Herm}(n)$ is the \emph{geometric degeneracy variety} $\Sigma \subset \C^{n \times n}$ that consists of all complex matrices with at least one eigenvalue with at least two dimensional eigenspace. $\Sigma$ is a complex algebraic variety of codimension 3. Its vanishing ideal $I(\Sigma)$ was partially described in  \cite{DomokosHermitian}, where it is also proved that $\Sigma$ is the complexification of $\Sigma_{\text{herm}}$ (that is, $\Sigma$ is the smallest complex algebraic subset containing $\Sigma_{\text{herm}}$).

Recall that $p_0 \in M^3$ is a degeneracy point of a parameter-dependent Hamiltonian $H: M^3 \to \text{Herm}(n)$ if $A_0:=H(p_0) \in \Sigma_{\text{herm}}$.  In particular, the matrices $H(p_0)$ corresponding to Weyl points are non-singular points of $\Sigma_{\text{herm}}$ with transverse intersection of the map $H$ and  $\Sigma_{\text{herm}}$ at $H(p_0)$, while the matrices $H(p_0)$ corresponding to multifold degeneracy points are singular points of $\Sigma_{\text{herm}}$, see e.g. \cite{PinterSW, ArnoldSelMath1995}.

 Locally, by introducing a local chart in $M^3$ around $p_0$, we consider the Hamiltonian germ $H: (\R^3, 0) \to (\text{Herm}(n), A_0)$. The set of \emph{Weyl points}  (generic degeneracy points) $\mathbf{WP}$ of a generic perturbation $H_t$ of $H$ born from $p_0$ consists of the points $p \in \R^3$ close to $p_0=0$ with $H_t(p) \in \Sigma$. Although their number $\sharp \mathbf{WP}$ depends on the perturbation, the number of complex solutions does not depend on it. That is, considering the complexification $f=H_{\C}:  (\C^3, 0) \to (\C^{n \times n}, A_0)$ and the corresponding generic perturbation $f_t$, the set  $\mathbf{cWP}$ of \emph{complex Weyl points} consists of $p \in \C^3$ close to 0 with $f_t(p) \in \Sigma$. Obviously $\mathbf{WP} \subset \mathbf{cWP}$, hence, $\sharp \mathbf{WP} \leq \sharp \mathbf{cWP}$, and we will show that $\sharp \mathbf{cWP} $ does not depend on the perturbation, under the additional hypothesis that $0$ is an isolated degeneracy point in the complex sense, that is, $f^{-1}(\Sigma)=\{0\}$.

To determine $\sharp \mathbf{cWP} $, a natural idea from singularity theory is to compute the multiplicity of  $H_{\C}$ with respect to $\Sigma$ at $H(p_0)=A_0$, see Appendix~\ref{ss:intsect}. 
However, the vanishing ideal of $\Sigma$ is not well-understood in the literature, moreover, it will be pointed out that $\Sigma$ is not Cohen--Macaulay. This means, informally, that it does not behave nicely under perturbation. Instead we use the  variety $\Sigma' \subset \C^{n \times n}$ of matrices of rank at most $n-2$, and we reduce the investigation of $\Sigma$ to $\Sigma'$ by considering the minors $M_{ij}(f-\lambda \mathds{1})$, according to the description at the beginning of this subsection. This reduction has two purposes: (1) $\Sigma'$ is a determinantal variety, i.e. it has a nice vanishing ideal generated by the $(n-1) \times (n-1)$ minors, which is convenient to work with, (2) in particular, $\Sigma'$ is Cohen--Macaulay, hence it behaves well under perturbation. Determinantal varieties are well-understood, see e.g.~\cite{BrunsVetterBook}, and
also their multiplicities are known in general \cite{HarrisTu,HerzogTrung}.

\subsection{Preliminary example: spin-1 Hamiltonian}\label{ex:spin1}

    For $s=1$ the spin Hamiltian (discussed in Section~\ref{ss:exspin}) up to linear scaling of $x,y,z$ is
    \begin{equation}
        H(x,y,z)=\begin{pmatrix}
z&x-iy&0\\
x+iy&0&x-iy\\
0&x+iy&-z
\end{pmatrix}.
    \end{equation}
It is linear and it has a 3-fold degeneracy at $(x,y,z)=(0,0,0)$, $H(0,0,0)=0$. 

For the upper bound $\sharp \mathbf{cWP}$ of the number of Weyl points of a generic perturbation consider the ideal $J=I_2(H-\lambda \mathds{1}) \subset \mathcal{O}_4$ generated by the $2 \times 2$ minors $M_{ij}(H(x,y,z)-\lambda \mathds{1})$ of  $H(x,y,z)-\lambda \mathds{1}$ (see Section~\ref{ss:compweyl}). A basis of the quotient algebra $\mathcal{O}_4/J$ is formed by the 6 residue classes $[1]$, $[x]$, $[y]$, $[z]$, $[\lambda ]$, $[x^2]=[y^2]=[z^2]/2=[\lambda^2]/2$, verifying that $\sharp \mathbf{cWP}=\dim(\mathcal{O}_4/J)=3^2 \cdot (3^2-1)/12=6$.

 By Lemma~\ref{le:chernspin},  the Chern numbers of the three bands bands are $c_1(\eta_{-1})=-2$, $c_1(\eta_{0})=0$, $c_1(\eta_{1})=2$ (see Section~\ref{ss:chern} for definitions). Hence, the lower bound $\sharp_{\text{alg}} \mathbf{WP}$ for the number of real Weyl points is 4, according to Corollary~\ref{co:spinbounds}. Therefore, the possible number of real Weyl points is 4 or 6, since the parity cannot change by Proposition~\ref{pr:parity}.

For a (generic) perturbation $H_t$ of $H$ the (real and complex) Weyl points can be found as the solutions of the system of equations $M_{ij}(H_t(x,y,z)-\lambda \mathds{1})=0$. In particular, for
\begin{equation}
H_t(x,y,z)=\begin{pmatrix}
z&x-iy&0\\
x+iy&0&x-iy\\
0&x+iy&-z
\end{pmatrix}
+\begin{pmatrix}
0&t&0\\
t&0&-t\\
0&-t&0
\end{pmatrix}
\end{equation}
the 4 real solutions (i.e. the Weyl points $\mathbf{WP}$) are 
\begin{itemize}
    \item $(x,y,z)=(t, 0, \pm \sqrt{2}t)$ with degenerate eigenvalue $\lambda=-z=\mp \sqrt{2}t$,
    \item $(x, y, z)=(-t, 0, \pm \sqrt{2} t) $  with degenerate eigenvalue $\lambda=z=\pm \sqrt{2}t$.
\end{itemize}
 There are 2 more complex Weyl points $(x,y,z)=(0, \pm i, 0) $ with geometric degenerate eigenvalue $\lambda=0$. We verify that $\sharp_{\text{alg}} \mathbf{WP}=\sharp \mathbf{WP} =4 <6=\sharp \mathbf{cWP}$. This perturbation makes the lower bound sharp.

A more complicated perturbation 
\begin{equation}
H_t(x,y,z)=\begin{pmatrix}
z&x-iy&0\\
x+iy&0&x-iy\\
0&x+iy&-z
\end{pmatrix}
+\begin{pmatrix}
0&0&t\\
0&-t+t^3-2tz&0\\
t&0&0
\end{pmatrix}
\end{equation}
has 6 real Weyl points $\mathbf{WP}$: 
\begin{itemize}
    \item $(x,y,z)=(\pm t^2, 0,0)$, $\lambda=-t$,
    \item $(x,y,z)=(0, \pm t\sqrt{2-t^2},0)$, $\lambda=t$,
    \item $x=y=0$, \begin{equation}
        z= t^2 \left( \frac{2-2t^2 \pm \sqrt{2+t^2}}{1-4t^2}
        \right), \ \lambda=-t+t^3-2tz=
        -t \left( \frac{1-t^2 \pm 2t^2 \sqrt{2+t^2}}{1-4t^2}
        \right),
    \end{equation}
\end{itemize}
 Hence, $\sharp \mathbf{WP} =\sharp \mathbf{cWP}=6$ holds in this case, making the upper bound sharp.

\subsection{Summary of results} 

We summarize the results of the paper in more details.
\label{sec:summaryofresults}

For map germs $f:(\C^3, 0) \to (\C^{n \times n}, A_0)$, $A_0 \in \Sigma$ with $f^{-1}(\Sigma, A_0)=\{0\}$, we consider the number of pre-images (`complex Weyl points') $f_t^{-1}(\Sigma,A_0)$ born from $0$ of a generic perturbation\footnote{Throughout the paper we use the same notation for a map germ or set germ and a sufficiently small representative of it. To consider perturbations, we always fix a sufficiently small representative of the map germ (here $f$), the space of control parameters ($t$) and the set germ in the target ($(\Sigma, A_0)$). See details in Appendix~\ref{ss:intsect}} $f_t$ of $f$. We express this number $\sharp f_t^{-1}(\Sigma,A_0)$ in terms of the codimension of  the ideal $J \subset \mathcal{O}_4$ generated by the $(n-1) \times (n-1)$ minors $M_{ij}(f(x,y,z) -\lambda \mathds{1})$ of $f(x,y,z) -\lambda \mathds{1}$ in the algebra $\mathcal{O}_4$ of holomorphic map germs in 4 variables $x,y,z,\lambda$. More precisely,
\begin{equation}\label{eq:pullbackintro}
    \sharp f^{-1}_t(\Sigma, A_0; \lambda_0) =\dim_{\C} \frac{\mathcal{O}_4}{J},
\end{equation}
 where $(\Sigma, A_0; \lambda_0)$ denote the branch of the analytic set germ $(\Sigma, A_0)$ corresponding to the geometric degenerate eigenvalue $\lambda_0$ (see Equation~\eqref{eq:projectionlocal}), and $\sharp f^{-1}_t(\Sigma, A_0)$ is obtained by the sum for every geometric degenerate eigenvalue of $A_0$, see Theorem~\ref{th:pullback} and Remark~\ref{re:total}. 
In particular, $\sharp f_t^{-1}(\Sigma,A_0)$ does not depend on the perturbation. To derive Equation~\eqref{eq:pullbackintro}, we use the reduction (`lifting') of $\Sigma$ to $\Sigma'$. 

We apply Equation~ \eqref{eq:pullbackintro} for a degeneracy point $p_0 \in M^3$ of a parameter-dependent quantum system $H: M^3 \to \text{Herm}(n)$, $H(p_0)=A_0 \in \Sigma_{\text{herm}}$, more precisely, for the  Hamiltonian germ $H: (\R^3, 0) \to (\text{Herm}(n), A_0)$. If $p_0=0$ is an  isolated degeneracy point in the complex sense, that is, $H_{\C}^{-1}(\Sigma)=\{0\}$, then Equation \eqref{eq:pullbackintro} provides an upper bound $\sharp \mathbf{cWP}$ for the number of (real) Weyl points $\sharp \mathbf{WP}$ of a generic perturbation $H_t$ of $H$, see Corollary~\ref{co:pullback}. 

We evaluate Equation~\eqref{eq:pullbackintro} in a special case defined by
the following conditions.
(1) We assume that $A_0 \in \Sigma$ has a \emph{strictly $k$-fold eigenvalue} $\lambda_0$, that is, both the algebraic and geometric multiplicity\footnote{Recall that the \emph{algebraic multiplicity} of an eigenvalue of a matrix is its multiplicity as a root of the characteristic polynomial, while its \emph{geometric multiplicity} is the dimension of the corresponding eigenspace. The algebraic multiplicity is greater than or equal to the geometric multiplicity.} 
    of $\lambda_0$ is equal to $k$. 
   (2) We assume that for the map germ $f: (\C^3, 0) \to (\C^{n \times n}, A_0)$ -- in addition to satisfying $f^{-1}(\Sigma, A_0; \lambda_0)=\{0\}$ --  the number $\sharp f_t^{-1}(\Sigma, A_0; \lambda_0)$ of pre-images of a generic perturbation $f_t$ of $f$ is equal to the multiplicity $\text{mult}(\Sigma, A_0; \lambda_0)$ of the analytic set germ $(\Sigma, A_0; \lambda_0)$. Then this number is
   \begin{equation}\label{eq:main}
       \sharp f_t^{-1}(\Sigma, A_0; \lambda_0)=\text{mult}(\Sigma, A_0; \lambda_0)=\frac{k^2(k^2-1)}{12},
   \end{equation}
   independently of the size $n$ of the matrices, see Theorem~\ref{th:multipl} and Corollary~\ref{co:generic}. Recall that condition (2) holds for almost every map, see Appendix~\ref{ss:intsect}. In particular, Equation~\eqref{eq:main} holds if $A_0=0$ (hence, $n=k$) and $f$ is a linear map with isolated degeneracy at 0, see Corollary~\ref{co:kanegyzet_symm}. 
   
   This result is applied to linear Hamiltonian systems $H: (\R^3, 0) \to (\text{Herm}(k), 0)$ with isolated degeneracy point at 0 in the complex sense. In this case 
   \begin{equation}
       \sharp \mathbf{WP} \leq \sharp \mathbf{cWP}=\frac{k^2(k^2-1)}{12}
   \end{equation}
    holds for the number of real and complex Weyl points of a generic perturbation $H_t$ of $H$, see Corollary~\ref{co:pyram}. In particular, this is the case for the spin Hamiltonian and for the electronic band structure of some materials, see Section~\ref{ss:exspin} and Section~\ref{sec:bandstructureexample}.
    
    We suggest the terminology \emph{$k$-fold Weyl point} for the most generic $k$-fold degeneracy points, i.e. those with $\sharp \mathbf{cWP}=k^2(k^2-1)/12$. Hence, the isolated degeneracy point at 0 (in complex sense) of a linear Hamiltonian is a $k$-fold Weyl point. 
    
    In Section~\ref{ss:degpoints} we also establish a lower bound $\sharp_{\text{alg}} \mathbf{WP}$ for the number of Weyl points $\sharp \mathbf{WP}$. In Section~\ref{ss:chern}, $\sharp_{\text{alg}} \mathbf{WP}$ is expressed in terms of the first Chern numbers of the eigenvector bundles, following the arguments in \cite{Bruno2006}.   Note that, in contrast of the two-fold case, $k$-fold Weyl points possibly differ from each other in the lower bound $\sharp_{\text{alg}} \mathbf{WP}$, see Remark~\ref{re:lowerweyl} and the example in Section~\ref{sec:bandstructureexample}.

     For the proof of formula~\eqref{eq:main}, we consider the $A_0=0$ case, then it is generalized for arbitrary $k$-fold degenerate $A_0$.  We show that $\text{mult}(\Sigma, 0)=\text{mult}(\Sigma',0)$, see Proposition~\ref{pr:primetilde}.  The fact that $\text{mult}(\Sigma',0)=k^2(k^2-1)/12$ is a special case of a more general result, indeed, $\text{mult}(V(I_r), 0)$ is known for the vanishing locus $V(I_r) \subset \C^{k \times k}$ of the ideal $I_r$ generated by the $r \times r$ minors, see Refs.~\cite{HarrisTu,HerzogTrung}
     ($r=k-1$ in our special case, that is, $\Sigma'=V(I_{n-1})$). In this paper we give an independent, more direct and more elementary proof for $\text{mult}(\Sigma',0)=k^2(k^2-1)/12$ by 
     computing the Hilbert series of the corresponding algebra using the \emph{Gulliksen--Neg{\aa}rd free resolution}, see Section~\ref{ss:proofgencase}. 

For arbitrary $k$-fold degenerate $A_0 \in \Sigma \subset \C^{n \times n}$ we show that $(\Sigma, A_0; \lambda_0)$ is an analytic trivial deformation of $(\Sigma, 0) \subset (\C^{k \times k}, 0)$, hence, $\text{mult}(\Sigma, A_0; \lambda_0)=\text{mult}(\Sigma,0)$, see Corollary~\ref{co:trivfam} and \ref{co:effmodmult}.
For this, we introduce a local parametrization we call \emph{complex Schrieffer--Wolff chart} of $\C^{n \times n}$ fitting to $(\Sigma, A_0; \lambda_0)$ around the strictly $k$-fold degenerate matrix $A_0$, see Section~\ref{ss:compSW}. The Schrieffer--Wolff transformation \cite{SW,BravyiSW,PinterSW} is a standard method used in quantum mechanics to decrease the dimension of the Hamiltonian around a degeneracy point. It is shown in Ref.~\cite{PinterSW} that the Schrieffer--Wolff transformation induces a local chart in $\text{Herm}(n)$, which fits to $(\Sigma_{\text{herm}}, A_0; \lambda_0)$. Therefore, the complex Schrieffer-Wolff chart is a generalization of this result.

    As a byproduct of our approach, we show that $\Sigma$ is not Cohen--Macaulay, see Theorem~\ref{co:notcohmac}. Hence, the reduction of $\Sigma$ to $\Sigma'$ cannot be avoided in Theorem~\ref{th:pullback}, in other words, the number of pre-images of a perturbation cannot be obtained in a simple way from the vanishing ideal $I(\Sigma)$ of $\Sigma$, see Theorem~\ref{th:notwork00}.

     Analogous results are formulated to real and complex symmetric matrix families. Although the diagonal case is trivial, it is also discussed to illustrate the methods and provide a more complete picture.  The three cases (general; complex symmetric and diagonal) are discussed simultaneously.


   
   The results are illustrated on physical examples, including the spin Hamiltonian (Section~\ref{ss:exspin}) and an example from condensed matter physics describing the electronic band structure of a material (Section~\ref{sec:bandstructureexample}).  

     Moreover, we show that in case of two-fold degeneracy points a complete description of the type of degeneracy can be given using our methods. More precisely, two-fold degeneracy points are traditionally investigated via Schrieffer--Wolff transformation, which provides an \emph{effective map germ} $h=(h_1, h_2, h_3): (\R^3, 0) \to (\R^3, 0)$ characterizing the degeneracy, see Section~\ref{ss:efftwofold}. It is reasonable to consider $h$ up to contact equivalence, which is the same as its real local algebra $Q_0(h)=\mathcal{O}_3^{\R}/I(h_1, h_2, h_3)$ up the isomorphism. Using the above algebraic methods we construct the local algebra $Q_0(h)$ of $h$ directly from $H$  avoiding the SW, up to isomorphism of real algebras. Hence, this algebra carries the relevant information about the degeneracy point.  See Section~\ref{ss:twofold}.

\subsection{Structure of the paper}
The structure of the paper is the following. 

In Section~\ref{s:tools} the degeneracy varieties $\Sigma_{\text{herm}}$, $\Sigma$ (etc.), and the reduction of $\Sigma$ to $\Sigma'$ are introduced. The section finishes with the complex Schrieffer--Wolff chart and its consequences.

The proof of the main theorems are in Section~\ref{s:proofs}. We compute the multiplicity of the introduced complex analytic set germs, and the multiplicity of holomorphic map germs  with respect to these set germs. We show that $\Sigma$ is not Cohen--Macaulay.

In Section~\ref{s:physex} we give a summary of the theory of degeneracy points of parameter dependent Hamiltonian systems. This summary includes the lower bound for $\sharp \mathbf{WP}$ in terms of the first Chern numbers of eigenvector bundles, and also the study of (non-generic) two-fold degeneracy points via the effective map germ $h: (\R^3, 0) \to (\R^3, 0)$ derived from the Schrieffer--Wolff transformation.   The results are illustrated on physical examples, including the spin Hamiltonian (Section~\ref{ss:exspin}) and an example from condensed matter physics describing the electronic band structure of a material (Section~\ref{sec:bandstructureexample}).

In Appendix~\ref{a:app} the required notions of algebraic geometry are summarized. Especially, the multiplicity of holomorphic map germs with respect to complex analytic set germs is discussed in Section~\ref{ss:intsect}.


\section{Matrix varieties}\label{s:tools}


In Section~\ref{ss:degcomp} we summarize the properties of the degeneracy variety $\Sigma_{\text{herm}} \subset \text{Herm}(n)$ and its complexification, the geometric degeneracy variety $\Sigma \subset \C^{n \times n}$. Then, in Section~\ref{ss:matrvar} we describe its relation with the determinantal variety $\Sigma'$ consists of matrices of rank at most $n-2$. Finally, in Section~\ref{ss:compSW} we give a suitable local parametrization of $\C^{n \times n}$ called complex Schrieffer--Wolff chart around strictly $k$-fold degenerate matrices $A_0 \in \Sigma$.

\subsection{Preliminaries: degeneracy variety and complexification}\label{ss:degcomp}

We start with a general (preliminary) description of the (hermitian) degeneracy variety $\Sigma_{\text{herm}} \subset \text{Herm}(n)$, which is a shortened and slightly modified version of \cite[Section 2]{PinterSW}. Then, the description of the complexification $\Sigma \subset \C^{n \times n}$ is based on the results in \cite{DomokosHermitian}. We also mention the case of real and complex symmetric matrices which have analogous characterization. 

 The Hermitian matrices (of size $n \times n$) are the complex matrices $H$ with $H=H^{\dagger}$. Their set $\mbox{Herm}(n)$ is a real vector space, it is a real subspace of  $\C^{n \times n}$ of all $ n \times n$ complex matrices. Although $\C^{n \times n}$ is a complex vector space (of dimension $n^2$), $\mbox{Herm}(n)$ inherits only real vector space structure, because it is not closed under the multiplication by the imaginary unit $i$. 
The dimension of $\mbox{Herm}(n)$ over $\R$ is $n^2$, i.e., $\mbox{Herm}(n) \cong \R^{n^2}$ as real vector spaces, a suitable basis is described in \cite{PinterSW}.

Hermitian matrices have real eigenvalues and they can be diagonalized by a unitary basis transformation.
Furthermore, for each $H \in  \mbox{Herm}(n)$, one can choose a unitary matrix $U \in U(n)$ such that 
\begin{equation}\label{eq:dia} \Lambda = U^{-1} \cdot H \cdot U \end{equation}
is diagonal with the eigenvalues in increasing order $\lambda_1 \leq \lambda_2 \leq \dots \leq \lambda_n$. The ordered eigenvalues are continuous functions $\lambda_i: \mbox{Herm}(n) \to \R$, as it can be proven e.g. using Weyl's inequality~\cite{Weyl1912}.
However, the diagonalizing matrix $U$ is not unique; each column, in fact, can be independently multiplied by an arbitrary phase factor. Moreover, in the case of degenerate eigenvalues $\lambda_{i+1} =\dots = \lambda_{i+k} $, the corresponding columns of $U$ form a unitary basis of the corresponding $k$-dimensional eigenspace. This basis can be transformed by any unitary action of $U(k)$ to obtain a different matrix $U'$ that still diagonalizes $H$.

The \emph{(hermitian) degeneracy set} $\Sigma_{\text{herm}}^{(n)} =\Sigma_{\text{herm}} \subset \mbox{Herm}(n)$ is the set of matrices with at least two coinciding eigenvalues,
\begin{equation}
    \Sigma_{\text{herm}}=\{A \in \text{Herm}(n) \ | \ \exists i, \  \lambda_i(A) =\lambda_{i+1}(A) \}.
\end{equation}
This subset $\Sigma_{\text{herm}}$ is a real subvariety, indeed, it can be defined
by one polynomial equation. Namely, let $\Delta(A)$ be the discriminant of the characteristic polynomial of $A$, it is a polynomial of the matrix entries $a_{ij}$ of $A$. Then $\Sigma_{\text{herm}}$  is the zero locus of $\Delta$, i.e., $\Sigma_{\text{herm}}=\Delta^{-1}(0)$.
By the Neumann--Wigner theorem \cite{Neumann1929} the codimension of $\Sigma_{\text{herm}}$ is 3. See two different proofs for it in \cite{PinterSW}, Table 1 and Corollary 3.1.6, see also see Corollary~\ref{co:twofoldsmooth} -- a third proof is indicated in Section~\ref{ss:matrvar} of this paper.

The subvariaty $\Sigma_{\text{herm}}$ can be decomposed into the disjoint union of different \emph{strata} based on the type of the degeneracy, i.e., which eigenvalues coincide \cite{ArnoldSelMath1995}. Each stratum can be labeled by an ordered partition of $n$ in the following  way. 
Let $\kappa=(k_1, \dots, k_l)$ be a sequence of integers $1 \leq k_i \leq n$ of length $1 \leq l < n$ such that $n=\sum_{i=1}^l k_i$, defining $\kappa$ as an ordered partition of $n$. 
The number of the ordered partitions of $n$ is $2^{n-1}$, (including the case with $l=n$). The associated stratum of $\Sigma_{\text{herm}} $ consists of the matrices with coinciding eigenvalues $\lambda_1 = \dots = \lambda_{k_1} <  \lambda_{k_1 + 1} = \dots = \lambda_{k_1+k_2} < \dots$.

The partition with $l=n$ and $k_i=1$ marks the complement of $\Sigma_{\text{herm}}$, the set of non-degenerate matrices. Each stratum  is a (not closed) smooth submanifold of $\mbox{Herm}(n)$, whose dimension is $n^2-\sum_{i=1}^l \left(k_i^2-1 \right)$ by the Neumann--Wigner theorem \cite{Neumann1929}, cf. \cite[Table 1]{PinterSW}. 
As a consequence of the continuity of the eigenvalues, the closure of the stratum labeled by $\kappa$ contains the higher degeneracies corresponding to the ordered partitions $\kappa'$ coarser than $\kappa$. 

 $\Sigma_{\text{herm}}$ and its real symmetric counterpart $\Sigma_{\text{sym}, \R}$ are studied in the literature from topological and physical point of view \cite{ArnoldSelMath1995},
and from  differential geometrical approach \cite{BreidingSymmetric, KozhasovMinimality}.

The complexification of $\text{Herm}(n) \cong \R^{n^2}$ is the space of $n \times n$ complex matrices $\C^{n \times n} \cong \C^{n^2}$.  What is the complexification of $\Sigma_{\text{herm}}$? A candidate would be the zero locus the discriminant of the characteristic polynomial $\Delta^{-1}(0) \subset \C^{n \times n}$, but it is too big, indeed, its codimension is 1. Although its intersection with $\text{Herm}(n)$ is the codimension 3 real variety $\Sigma_{\text{herm}}$, it is a real phenomenon caused by the fact that $\Delta$ is the sum of complete squares, see \cite{DomokosHermitian,DomokosDiscriminant,DomokosInvariant}.
Note that $A \in \Delta^{-1}(0)$ if and only if $A$ has at least one eigenvalue $\lambda$ with algebraic multiplicity at least 2, that is, $\lambda$ is a multiple root of the characteristic polynomial of $A$. In this sense $\Delta^{-1}(0) \subset \C^{n \times n}$ is the \emph{algebraic degeneracy variety}, but it will be not used in the rest of this article.

Instead, we consider the \emph{geometric degeneracy variety} $\Sigma^{(n)}=\Sigma \subset \C^{n \times n}$, which consists of the matrices $A$ satisfying one, (therefore, all) of the following equivalent properties:
\begin{enumerate}
    \item $A$ has at least one eigenvalue $\lambda$ with geometric multiplicity at least 2, that is, the dimension of the corresponding eigenspace is at least 2.
    \item $A$ has at least two Jordan blocks for at least one eigenvalue $\lambda$.
    \item The degree of the minimal polynomial of $A$ is less than $n$.
    \item The powers $\mathds{1}_n, A, A^2, \dots, A^{n-1}$ of $A$ are linearly dependent. 
    \item Writing the above powers of $A$ as column vectors, the $n \times n$ minors of the resulting $n^2 \times n$ matrix are 0.
\end{enumerate}

Let us denote by $I_{(n)} \subset \C[A]$ the ideal generated by the minors described in point 5. Its zero locus $V(I_{(n)}) \subset \C^{n \times n}$ is equal to $\Sigma^{(n)}$, as sets. In \cite{DomokosHermitian}, $V(I_{(n)})$ is denoted by $(\C^{n \times n})_1$, and it is proved that
\begin{itemize}
    \item For $n=3$, $I_{(3)}$ is radical ideal, i.e. it is the vanishing ideal (reduced structure) of $\Sigma^{(3)} $. The analogous statement for $n >3$ is not known. See \cite[Cor. 4.7.]{DomokosHermitian}.
    \item For every $n$, the complexification of $\Sigma_{\text{herm}}$ is $\Sigma$. That is, considering all real polynomials in $\R[A]$ vanishing on $\Sigma_{\text{herm}} \subset \R^{n^2}$, their common zero locus in $\C^{n^2}$ is $\Sigma$. Equivalently, $\Sigma$ is the smallest complex  algebraic subset in $\C^{n \times n}$ containing $\Sigma_{\text{herm}}$, i.e. $\Sigma$ is the complex Zariski closure of $\Sigma_{\text{herm}}$. See \cite[Prop. 2.4.]{DomokosHermitian}.
\end{itemize}

In Section~\ref{ss:proof-notcohmac} we construct the the vanishing ideal of $\Sigma^{(n)}$ for every $n$ in a different way, and we prove that $\Sigma^{(n)}$ is not Cohen--Macaulay if $n \geq 3$. 

A stratification of $\Sigma$ can be given according to the Jordan block structure of the matrices. Since the hermitian matrices have only $1 \times 1$ Jordan blocks, $\Sigma_{\text{herm}} \subset \Sigma$ is contained in the union of the corresponding strata. However, as complex matrices, the ordering of the eigenvalues is not defined. In other words, two strata of $\Sigma_{\text{herm}}$ corresponding to different ordered partitions $\kappa$ and $\kappa'$ are in the same stratum of $\Sigma$ if $\kappa$ and $\kappa'$ define the same unordered partition. 

In Section~\ref{ss:matrvar} we show that each eigenvalue $\lambda$ of $A$ with geometric multiplicity at least 2 defines
an analytic subsetgerm $(\Sigma, A; \lambda) \subset (\Sigma, A)$ which is the union of irreducible components. If $\lambda$ is a strictly $k$-fold eigenvaule (i.e. both the algebraic and geometric multiplicity  is $k$), then we describe $(\Sigma, A; \lambda)$ in Section~\ref{ss:compSW}, in particular we show that it is one irreducible component.

Similar description can be given for the real symmetric degeneracy variety $\Sigma_{ \text{sym}, \R} \subset \text{Symm}_{\R}(n) \cong \R^{(n+1)n/2}$. Its codimension is 2. Its stratification also corresponds to the ordered partitions $\kappa$, but the codimensions of the strata are different than in the hermitian case. As proved in \cite{DomokosHermitian}, the complexification (complex Zariski closure) of $\Sigma_{ \text{sym}, \R}$ is $\Sigma_{\text{sym}, \C}=\Sigma_{\text{sym}} \subset \text{Symm}_{\C}(n)$.

\subsection{Lifting the geometric degeneracy variety}\label{ss:matrvar}

Here we describe the relation of $\Sigma$ with the determinantal variety $\Sigma'$ consists of matrices of rank at most $n-2$. As an intermediate step we introduce the lifted degeneracy variety $\widetilde{\Sigma} \subset \C^{n \times n} \times \C$. We summarize the properties $\Sigma'$, $\widetilde{\Sigma}$ and the natural projection $\widetilde{\Sigma} \to \Sigma$.

Consider the following complex varieties 
\begin{eqnarray}
 \text{Matrices of corank at least 2:} &  \Sigma'^{(n)}=\Sigma'=\{A \in \C^{n \times n} \ | \ \text{rk}(A) \leq n-2 \} \\
 \text{Lifted geometric degeneracy variety:}   & \widetilde{\Sigma}^{(n)}=\widetilde{\Sigma}=\{(A, \lambda) \in \C^{n \times n} \times \C \ | \ A-\lambda \mathds{1} \in \Sigma' \} 
\end{eqnarray}
Then an equivalent characterization of the geometric degeneracy variety $\Sigma$ is 
  \begin{equation}\label{eq:sigmanew} 
  \Sigma=\{ A \in \C^{n \times n} \ | \ \exists \lambda \in \C, \ (A,\lambda) \in \widetilde{\Sigma} \}=
  \{ A \in \C^{n \times n} \ | \ \exists \lambda \in \C, \ \text{rk}(A-\lambda \mathds{1}) \leq n-2 \}.
\end{equation}

By classical results, $\Sigma' \subset \C^{n \times n}$ is an affine variety of codimension 4 \cite[Proposition 1.1]{BrunsVetterBook}.
Let $\C{[A]}$ denote the polynomial ring on $n^2$ variables $A=(a_{ij})_{i,j=1, \dots, n}$. Let $I_{n-1} \subset \C{[A]}$ denote the ideal generated by the $(n-1) \times (n-1)$ minors $M_{ij}(A)$. Obviously, $\Sigma'=V(I_{n-1})$. Moreover, by \cite[Theorem
2.10]{BrunsVetterBook}, $I_{n-1}$  is a prime ideal. Therefore, 
\begin{itemize}
    \item $I_{n-1}$ defines the reduced structure of $\Sigma'$, see also Theorem
2.11 in \cite{BrunsVetterBook}, i.e., $I(\Sigma)=I_{n-1}$.
\item $\Sigma'$ is irreducible, see also Proposition 1.1 in \cite{BrunsVetterBook}.
\end{itemize}
The non-singular points of $\Sigma'$ are the matrices $A$ with $\text{rk} (A)=n-2$, see Proposition 1.1 in \cite{BrunsVetterBook}.
Moreover, $\Sigma'$ is the of singular points of the set of singular matrices $\{A \in \C^{n \times n} \ | \ \det (A)=0\}$.

We also consider $\Sigma'$ locally around the origin, that is, the complex analytic germ $(\Sigma',0) \subset (\C^{n\times n}, 0)$. Slightly abusing the notation, we also denote by $I_{n-1}$ the ideal  in $\mathcal{O}_{n \times n}$. It defines a reduced structure of $(\Sigma',0)$. Since $I_{n-1}$ is homogeneous, $(\Sigma',0)$ is irreducible as complex analytic set germ, which follows from its algebraic irreducibility and homogeneity by Serre's GAGA principle \cite{SerreGAGA}.

$\widetilde{\Sigma}$ is a trivial deformation of $\Sigma'$, i.e. $\widetilde{\Sigma}\cong \Sigma' \times \C$. In fact, for each fixed $\lambda \in \C$, the set of matrices $A$ satisfying $(A, \lambda) \in \widetilde{\Sigma}$ is the translated copy $\Sigma'+\lambda \mathds{1}$ of $\Sigma'$:
\begin{equation}
    (A, \lambda) \in \widetilde{\Sigma} \ \Leftrightarrow \ A \in \Sigma'+\lambda \mathds{1}.
\end{equation}
Therefore, the codimension of $\widetilde{\Sigma}$ in $\C^{n \times n} \times \C$ is $4$, and its reduced structure in $\C[A][\lambda]$ (or, locally, in $ \mathcal{O}_{n \times n +1}$) is given by the ideal $I(\widetilde{\Sigma})$ generated by the $(n-1) \times (n-1) $ minors of $M_{ij}(A-\lambda \mathds{1})$ of $A-\lambda \mathds{1}$. In other words, this is the  pull-back ideal $\rho^*(I_{n-1}) \cdot \C[A][\lambda]$, where $\rho(A, \lambda)=A-\lambda \mathds{1}$.

As a conclusion of the corresponding properties of $\Sigma'$, $\widetilde{\Sigma}$ is also irreducible, and its non-singular points are the pairs $(A, \lambda)$ with $\text{rk}(A- \lambda \mathds{1})=n-2$, that is, the dimension of the eigenspace of $A$ corresponding to $\lambda$ is equal to 2. 


The vanishing ideals of $\Sigma'$ and $\widetilde{\Sigma}$ are homogeneous ideals of degree $n-1$. Indeed, the minor determinants are degree $n-1$ homogeneous polynomials of the matrix entries, and, for $\widetilde{\Sigma}$, the matrix entries are homogeneous linear polynomials of the variables $a_{ij}$ and $\lambda$. In other words, the homogeneity of the ideals is equivalent with the fact that $A \in \Sigma'$ implies $c \cdot A \in \Sigma'$, and $(A, \lambda) \in \widetilde{\Sigma}$ implies $(c \cdot A, c \cdot \lambda) \in \widetilde{\Sigma} $ for all $c  \in \C$. 

This also means that the space germs $(\Sigma', 0)$ and $(\widetilde{\Sigma}, (0,0))$ are homogeneous. 
However, the space germs $(\Sigma', A_0)$ and $(\widetilde{\Sigma}, (A_0, \lambda_0))$ are not homogeneous, if $A_0 \neq 0$. Algebraically it means, for $(\Sigma', A_0)$, that the minors of $A+A_0$ are not homogeneous polynomials of the matrix entries of $A$. Geometrically it means that $A+A_0 \in \Sigma'$ does not imply $c \cdot A +A_0 \in \Sigma'$ in general. Similarly for $\widetilde{\Sigma}$.

$\Sigma'$ and $\widetilde{\Sigma}$ (and also their local versions) are determinantal varieties. That is, their vanishing ideals are generated by $r \times r$ minors of a $p \times q$ matrix over the Cohen-Macaulay rings $\C[A]$ and $\C[A][ \lambda]$, respectively ($\mathcal{O}_{n \times n}$ and $\mathcal{O}_{n \times n +1}$ in the local version), for some integer $r$  such that their codimension is $(p-r+1)(q-r+1)$; in this concrete situation, $p=q=n$, $r=n-1$ and the codimension is $(n-(n-1)+1)^2 = 2^2 = 4$. See \cite[Section 18.5]{EisenbudCommutativeAlgebra}, also \cite{BrunsVetterBook} and \cite[Theorem C.6]{MondBook}.

Since determinantal varieties  are Cohen--Macaulay \cite[Theorem 2.5.]{BrunsVetterBook}, $\Sigma'$ and $\widetilde{\Sigma}$ are Cohen--Macaulay. Therefore, they behave nicely under perturbation of map germs in sense Section~\ref{ss:intsect}, Proposition~\ref{pr:everypert}.

Recall from the previous section that $\Sigma$ is the set of matrices $A$ with at least one eigenvalue $\lambda$ of geometric multiplicity at least 2, that is, the dimension of the eigenspace corresponding to $\lambda$ is at least 2. Since $\Sigma=V(I_{(n)})$, it is a complex algebraic subset. However, it is known only for $n=3$ that $I_{(n)}$ is the vanishing ideal of $\Sigma$, that is, $I(\Sigma)=I_{(n)}$. In general, $\sqrt{I_{(n)}}=I(\Sigma)$ holds.

Let $p: \C^{n \times n} \times \C \to \C^{n \times n}$ denote the projection onto the first coordinate, $p(A, \lambda)=A$. By Equation~\eqref{eq:sigmanew}, the image of the restriction $p|_{\widetilde{\Sigma}}$ is $\Sigma$. Consider the restriction
\begin{equation}
    p|_{\widetilde{\Sigma}}: \widetilde{\Sigma} \to \Sigma
\end{equation}
onto its image. Observe that each $A \in \Sigma$ has a preimage $(A,\lambda)$ for every eigenvalue $\lambda$ with geometric multiplicity at least 2. Hence, 
non-injectivity of $p|_{\widetilde{\Sigma}}$ happens at points $(A, \lambda), (A, \mu) \in \widetilde{\Sigma}$ with $\lambda \neq \mu$, that is, for $A \in \Sigma$ with more than one 
geometric degenerate eigenvalue. 
By the same argument, each point $A \in \Sigma$ has finitely many preimages in $\widetilde{\Sigma}$. 

What is more, we show that $p|_{\widetilde{\Sigma}}$ is \emph{finite} and \emph{generically 1-to-1} (later, in next section). We will use these stronger properties to characterize the local components of $\Sigma$, to determine the vanishing ideal $I(\Sigma)$ of $\Sigma$  via $p|_{\widetilde{\Sigma}}$ and to prove that $\Sigma$ is not Cohen--Macaulay in Section~\ref{ss:proof-notcohmac}. 

For finiteness, recall the induced homomorphism between the quotient algebras
\begin{eqnarray}\label{eq:inducedhomo}
    (p|_{\widetilde{\Sigma}})^*: \mathcal{O}_{\Sigma}=\frac{\C[A]}{I(\Sigma)} & \to & \frac{\C[A][\lambda]}{I(\widetilde{\Sigma})}=\mathcal{O}_{\widetilde{\Sigma}} \\
    (p|_{\widetilde{\Sigma}})^*(h) &=& h \circ  p|_{\widetilde{\Sigma}} \\
    (p|_{\widetilde{\Sigma}})^*(h)(A, \lambda) &=& h(A)
    ,
\end{eqnarray}
and that $p|_{\widetilde{\Sigma}}$ is called \emph{finite} if $\mathcal{O}_{\widetilde{\Sigma}}$ becomes a finitely generated module over $\mathcal{O}_{\Sigma}$ via $(p|_{\widetilde{\Sigma}})^*$. Finiteness implies that each pre-image $(p|_{\widetilde{\Sigma}})^{-1}(A)$ is finite \cite[Sec.~5.3]{Shafarevich}.

\begin{prop}\label{pr:projectiongen1to1}
$p|_{\widetilde{\Sigma}}$ 
is finite.
\end{prop}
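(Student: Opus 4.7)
My plan is to prove finiteness by establishing that $\lambda$ is integral over $\mathcal{O}_{\Sigma}$ via the pullback homomorphism $(p|_{\widetilde{\Sigma}})^*$ of \eqref{eq:inducedhomo}. As an $\mathcal{O}_{\Sigma}$-algebra, $\mathcal{O}_{\widetilde{\Sigma}}$ is generated by the single extra function $\lambda$, since the matrix coordinates $a_{ij}$ are already contained in the image of $(p|_{\widetilde{\Sigma}})^*$. Therefore, once I exhibit a monic polynomial relation for $\lambda$ with coefficients coming from $\mathcal{O}_{\Sigma}$, the standard integral-extension argument will show that $\mathcal{O}_{\widetilde{\Sigma}}$ is a finitely generated $\mathcal{O}_{\Sigma}$-module, which is the definition of finiteness.

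The natural candidate for the monic polynomial is the characteristic polynomial
\[
\chi_A(\lambda) = \det(\lambda \mathds{1}_n - A) = \lambda^n - c_1(A) \lambda^{n-1} + \cdots + (-1)^n c_n(A),
\]
whose coefficients $c_i(A) \in \C[A]$ are polynomial functions of the matrix entries of $A$. The key point to verify is that $\chi_A(\lambda) \in I(\widetilde{\Sigma})$. This follows in two independent ways. First, $(A, \lambda) \in \widetilde{\Sigma}$ means $\text{rk}(A - \lambda \mathds{1}_n) \leq n-2 < n$, so $\det(A - \lambda \mathds{1}_n) = 0$ pointwise on $\widetilde{\Sigma}$; since $I(\widetilde{\Sigma})$ is the (radical) vanishing ideal, $\chi_A(\lambda) \in I(\widetilde{\Sigma})$. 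Second, cofactor expansion of $\det(\lambda \mathds{1}_n - A)$ along any row expresses it as an explicit $\C[A][\lambda]$-linear combination of the $(n-1)\times(n-1)$ minors of $A - \lambda \mathds{1}_n$, which by the description of $I(\widetilde{\Sigma})$ recalled above generate this very ideal.

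With the relation $\chi_A(\lambda) \equiv 0 \pmod{I(\widetilde{\Sigma})}$ established, $\lambda^n$ equals an $\mathcal{O}_{\Sigma}$-linear combination of $1, \lambda, \ldots, \lambda^{n-1}$ in $\mathcal{O}_{\widetilde{\Sigma}}$, and by induction on the degree in $\lambda$ the same holds for every $\lambda^m$ with $m \geq n$. Hence $\{1, \lambda, \ldots, \lambda^{n-1}\}$ generates $\mathcal{O}_{\widetilde{\Sigma}}$ as an $\mathcal{O}_{\Sigma}$-module, so $p|_{\widetilde{\Sigma}}$ is finite. I do not expect a real obstacle here; the only subtle point—that $\chi_A(\lambda)$ lies in $I(\widetilde{\Sigma})$—is transparent both from the rank condition defining $\widetilde{\Sigma}$ and from the cofactor expansion, and the rest of the argument is bookkeeping about integral extensions.
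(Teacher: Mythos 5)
Your proof is correct and follows essentially the same strategy as the paper: exhibit an element of $I(\widetilde{\Sigma})$ that is monic in $\lambda$ with coefficients in $\mathbb{C}[A]$, then conclude by the standard integral-extension argument. The only difference is the choice of polynomial: the paper uses the single minor $M_{nn}(A-\lambda\mathds{1})$, which is already a generator of $I(\widetilde{\Sigma})$ and is (up to sign) monic in $\lambda$ of degree $n-1$, whereas you use $\det(A-\lambda\mathds{1})$, of degree $n$, and justify its membership in the ideal via cofactor expansion. Both are valid; the paper's choice is marginally more economical since it avoids the extra step of expanding the determinant in terms of the minors.
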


\begin{proof}
$\mathcal{O}_{\widetilde{\Sigma}}$ is a  finitely generated module over $\mathcal{O}_{\Sigma}$ if and only if it is finitely generated over $\C [A]$, that is, if $p|_{\widetilde{\Sigma}}$, considered as $\widetilde{\Sigma} \to \C^{n \times n}$, is finite. In this version, $(p|_{\widetilde{\Sigma}})^*(h)=h $ modulo $I(\widetilde{\Sigma})$ holds for a polynomial $h \in \C[A] \subset \C[A][\lambda]$. Hence, $\mathcal{O}_{\widetilde{\Sigma}}$ is generated over $\C[A]$ by the powers of $\lambda$  modulo $I(\widetilde{\Sigma})$. We have to show that there is a power $\lambda^K$ which is the combination of lower powers with coefficients in $\C[A]$, modulo $I(\widetilde{\Sigma})$.

Consider the minor $M_{nn}(A-\lambda \mathds{1}) \in I(\widetilde{\Sigma})$. Since it is a monic polynomial in $\lambda$ of degree $n-1$, it provides $\lambda^{n-1}$ as the combination of lower powers with coefficients in $\C[A]$, modulo $I(\widetilde{\Sigma})$, proving the statement.
\end{proof}

Therefore, we can see again that $\Sigma$ is an algebraic subset, since it is the image of a finite polynomial map \cite[Theorem D.6]{MondBook}, \cite[Sec.~1.5.3, Finite Maps]{Shafarevich}. This also shows that $\dim(\widetilde{\Sigma})=\dim(\Sigma)$, hence, the codimension of $\Sigma $ in $\C^{n \times n}$ is 3. Moreover, since $\widetilde{\Sigma}$ is Cohen--Macaulay, it is pure-dimensional (that is, its dimension is the same at all points), hence, $\Sigma$ is pure-dimensional as well.

The vanishing ideal $I(\Sigma)$ of $\Sigma$ in $\C[A]$  is obviously homogeneous, since $A \in \Sigma$ implies $c \cdot A \in \Sigma$ for every $A$ and $c \in \C$. This also means that the localization $(\Sigma, 0) \subset \C^{n \times n}$ is homogeneous (it is a cone), but  the space germs $(\Sigma, A_0)$ are not homogeneous in general for $A_0 \neq 0$. In Section~\ref{ss:proof-notcohmac} we determine $I(\Sigma)$ as the kernel of the induced homomorphism $(p|_{\widetilde{\Sigma}})^*$, and we prove that $\Sigma$ is not Cohen--Macaulay.



 A finite morphism is also locally finite, hence, the map germ
     \begin{equation}\label{eq:projectionlocal}
     p|_{(\widetilde{\Sigma}, (A_0, \lambda_0))}: (\widetilde{\Sigma}, (A_0, \lambda_0)) \to (\Sigma, A_0)
 \end{equation}
 is finite for $(A_0, \lambda_0) \in \widetilde{\Sigma}$. Let $(\Sigma, A_0; \lambda_0) \subset (\Sigma, A_0)$ denote its image, i.e. the local branch of the projection $p|_{\widetilde{\Sigma}}$. By Remmert’s proper mapping \cite[Theorem D.6]{MondBook}, $(\Sigma, A_0; \lambda_0) \subset (\Sigma, A_0)$ is an analytic subsetgerm of  $(\Sigma, A_0)$ of the same dimension $\dim (\Sigma, A_0; \lambda_0)=n^2-3$, and it is also pure-dimensional. We have the following consequence.

 \begin{cor}\label{co:unionofcomp} \begin{enumerate}
     \item $(\Sigma, A_0; \lambda_0)$ is the union of irreducible components of $(\Sigma, A_0)$.\footnote{The definition of irreducible components of an affine algebraic set or analytic set germ, can be found in the standard textbooks of algebraic geometry, for example, in \cite{HartshoneBook}. However, in case of pure-dimensional analytic set germs, an equivalent definition is the following: an irreducible component is a minimal pure-dimensional analytic subsetgerms of the same dimension.}
     \item If a matrix $A_0 \in \Sigma$ has more than one geometric degenerate eigenvalues, say, $\lambda_0 \neq \mu_0$ such that $(A_0, \lambda_0), (A_0, \mu_0) \in \widetilde{\Sigma}$, then the analytic set germ $(\Sigma, A_0)$ is reducible.
 \end{enumerate}
      \end{cor}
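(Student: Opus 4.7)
The plan for part (1) is to exploit pure-dimensionality. The paper has already established that $\widetilde{\Sigma}$ is Cohen--Macaulay (as a determinantal variety), hence pure-dimensional of codimension $4$, and that $\Sigma$ is pure-dimensional of codimension $3$. By Remmert's proper mapping theorem, since $p|_{\widetilde{\Sigma}}$ is finite, the image germ $(\Sigma, A_0; \lambda_0)$ is an analytic subsetgerm of $(\Sigma, A_0)$ that is itself pure-dimensional of dimension $n^2 - 3$, matching the dimension of $(\Sigma, A_0)$. I would then invoke the following general fact about analytic germs (stated in the footnote of the corollary): if $(X, A_0) \subset (Y, A_0)$ are analytic germs with $Y$ pure-dimensional and $X$ pure-dimensional of the same dimension, then $X$ is a union of irreducible components of $Y$. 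The easy proof: each irreducible component $X_i$ of $X$ has dimension $\dim Y$; it is contained in some irreducible component $Y_j$ of $Y$, which also has dimension $\dim Y$; since $X_i$ is irreducible of the same dimension as $Y_j$, we get $X_i = Y_j$.

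For part (2), the plan is to exhibit a nearby matrix that lies in $(\Sigma, A_0; \lambda_0)$ but not in $(\Sigma, A_0; \mu_0)$, which together with part (1) will force $(\Sigma, A_0)$ to have at least two distinct irreducible components. The construction uses continuity of eigenvalues (Weyl's inequality, cited earlier in the excerpt): for $A$ sufficiently close to $A_0$, its eigenvalues split into groups that are close to $\lambda_0$ and close to $\mu_0$ respectively, with these groups remaining separated. A point lies in $(\Sigma, A_0; \lambda_0)$ precisely when, under the projection $p|_{\widetilde{\Sigma}}$, it has a preimage $(A, \lambda)$ with $\lambda$ close to $\lambda_0$, i.e.\ when $A$ has a geometrically degenerate eigenvalue close to $\lambda_0$. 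By separately perturbing in the eigenspaces of $\lambda_0$ and of $\mu_0$ (e.g.\ using the Schrieffer--Wolff style block decomposition that Section~\ref{ss:compSW} will make rigorous, or simply direct sum arguments), I can construct $A$ near $A_0$ such that $A$ retains a degeneracy near $\lambda_0$ but the eigenvalues near $\mu_0$ become pairwise distinct (and non-coinciding with any other eigenvalue).

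Such an $A$ belongs to $(\Sigma, A_0; \lambda_0)$ but not to $(\Sigma, A_0; \mu_0)$, so $(\Sigma, A_0; \lambda_0) \neq (\Sigma, A_0; \mu_0)$. By symmetry neither branch contains the other. By part (1) each branch is a non-empty union of irreducible components of $(\Sigma, A_0)$, so the set of irreducible components contributed by $(\Sigma, A_0; \lambda_0)$ is disjoint from the set contributed by $(\Sigma, A_0; \mu_0)$, forcing $(\Sigma, A_0)$ to have at least two irreducible components and hence be reducible.

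The main obstacle I anticipate is the careful justification in part (2) that the two branches really are set-theoretically distinct as germs; this requires showing that the constructed perturbation avoids every accidental coincidence among eigenvalues near $\mu_0$ (and between the two groups), which is a genericity statement that should follow from the separation of the two eigenvalue clusters together with the Neumann--Wigner codimension-$3$ count for degeneracies inside each cluster. Once this is in hand, part (1) is essentially automatic from pure-dimensionality.
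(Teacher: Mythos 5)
Your argument for part (1) follows exactly the reasoning the paper intends (and leaves implicit): Remmert's proper mapping theorem plus pure-dimensionality gives that $(\Sigma, A_0; \lambda_0)$ is a pure-dimensional analytic subsetgerm of $(\Sigma, A_0)$ of the same dimension, and then the footnote's characterization of irreducible components for pure-dimensional germs closes it out. The paper supplies no explicit proof for either part, so for part (2) you are filling in a gap that the authors apparently regard as obvious; your perturbation construction (split the eigenvalue cluster near $\mu_0$ while retaining a geometric degeneracy near $\lambda_0$) is a perfectly valid way to certify that $(\Sigma, A_0; \lambda_0) \neq (\Sigma, A_0; \mu_0)$.

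Two small blemishes. First, you cite Weyl's inequality for eigenvalue continuity, but that estimate is a Hermitian phenomenon; here you are in $\C^{n\times n}$ and the right reference is continuity of polynomial roots, which the paper invokes via Kato in the proof of Lemma~\ref{le:strictly2}. Second, the step asserting that the collections of irreducible components of $(\Sigma, A_0; \lambda_0)$ and $(\Sigma, A_0; \mu_0)$ are \emph{disjoint} does not follow from "neither contains the other"---they could overlap in some components. This misstep is harmless because you need much less: if $(\Sigma, A_0)$ were irreducible, each branch, being a non-empty union of components, would be the entire germ, forcing $(\Sigma, A_0; \lambda_0) = (\Sigma, A_0; \mu_0)$, contradicting what your perturbation established. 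Tightening the wording this way removes the unjustified disjointness claim and makes the argument airtight.
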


  \begin{rem}\label{re:everyirred}
      Moreover, every irreducible component of $(\Sigma, A_0)$ is contained in $(\Sigma, A_0; \lambda_0)$ for a geometric degenerate eigenvalue $\lambda_0$ of $A_0$. Indeed, from the finiteness of $p|_{\widetilde{\Sigma}}$ it can be deduced by Nakayama lemma \cite[C.1]{MondBook}  that every $A_0 \in \Sigma$ has an open neighborhood $U \subset \Sigma$ whose preimage decomposes as a disjoint union $(p|_{\widetilde{\Sigma}})^{-1}(U)=\bigsqcup_j V_j$, where $V_j$ are open neighborhoods of the preimages $(p|_{\widetilde{\Sigma}})^{-1}(A_0)=\{(A_0, \lambda_j)\}$.   Since $\bigsqcup_j V_j \to U$ is surjective, every component is contained in the image of a  $V_j$. 

      We will show  that $(\Sigma, A_0; \lambda_0)$ is an irreducible component of $(\Sigma, A_0)$, if $\lambda_0$ is a strictly $k$-fold eigenvalue of $A_0$, that is, both the algebraic and geometric multiplicities are $k$, see Corollary~\ref{cor:irred}.
 \end{rem}

 Intuitively, $(\Sigma, A_0; \lambda_0)$ is the set of matrices near $A_0$ which have an at least two-fold degenerate eigenvalue near $\lambda_0$ (but the other geometric degenerate eigenvalues of $A_0$ are possible dissolved).

Similar analysis can be done in the complex symmetric and diagonal cases, involving the following complex algebraic subsets:
\begin{itemize}
    \item $\Sigma'_{\text{sym}} =\Sigma' \cap \text{Symm}_{\C} (n)$ of codimension 3 in $\text{Symm}_{\C} (n) \cong \C^{(n+1)n/2}$.
    \item $\Sigma'_{\text{diag}} =\Sigma' \cap \text{Diag}_{\C} (n)$ of codimension 2 in $\text{Diag}_{\C} (n) \cong \C^n$.
   \item $\widetilde{\Sigma}_{\text{sym}} =\widetilde{\Sigma} \cap (\text{Symm}_{\C}(n) \times \C)$ of codimension 3 in $\text{Symm}_{\C}(n) \times \C\cong \C^{(n+1)n/2+1}$.
     \item $\widetilde{\Sigma}_{\text{diag}} =\widetilde{\Sigma} \cap (\text{Diag}_{\C}(n) \times \C)$ of codimension 2 in $\text{Diag}_{\C}(n) \times \C\cong \C^{n+1}$.
      \item $\Sigma_{\text{sym}} =\Sigma \cap \text{Symm}_{\C} (n)$ of codimension 2 in $\text{Symm}_{\C} (n) \cong \C^{(n+1)n/2}$.
      \item $\Sigma_{\text{diag}} =\Sigma \cap \text{Diag}_{\C} (n)$ of codimension 1 in $\text{Diag}_{\C} (n) \cong \C^{n}$.
\end{itemize}

If the size of the ambient matrix space $n \times n$ is important, we use the notations $\Sigma^{(n)}$, $\widetilde{\Sigma}^{(n)}$, $\Sigma'^{(n)}$, and similarly for the symmetric and diagonal versions.

The symmetric determinantal variety $\Sigma'_{\text{symm}} \subset \text{Symm}_{\C}(n)$ is irreducible, Cohen--Macaulay, its vanishing ideal is the ideal generated by the $(n-1) \times (n-1) $ minors of symmetric matrices, it is a prime ideal. Moreover, the singular points of $\Sigma'_{\text{symm}}$ are the matrices with corank at least 3. See e.g. \cite{HarrisTu,HosonoTakagi}.

Moreover, in the real vector spaces $\text{Herm}(n)$ and $\text{Symm}_{\R}(n)$ we also define the analogous sets, which are real algebraic subsets in these cases. They have the same dimensions and codimensions over $\R$ as their complexifications over $\C$. 


\subsection{Complex Schrieffer--Wolff chart}\label{ss:compSW}

 Let $A_0 \in \Sigma^{(n)}$ be an $n \times n$ matrix with a \emph{strictly $k$-fold degenerate eigenvalue} $\lambda_0$, that is, both the algebraic and geometric multiplicity of $\lambda_0$ is $k$. The main purpose of this section is to show that the germ $(\Sigma^{(n)}, A_0; \lambda_0)$  is an analytic trivial deformation of the `effective model' $(\Sigma^{(k)}, 0)$, and similar statement holds for symmetric matrices. For this, we provide a local chart of $\C^{n \times n}$ in a neighborhood of $A_0$ which `fits' to $(\Sigma^{(n)}, A_0; \lambda_0)$. This chart will be the generalization of the so-called Schrieffer--Wolff transformation.

The Schrieffer--Wolff transformation (SW) is a commonly used method in quantum mechanics to decrease the size of the Hamiltonian around a degeneracy point
 \cite{SW,BravyiSW,PinterSW,PymablockPaper,PymablockRelease}.
In \cite{PinterSW} it was pointed out that the SW induces a local chart of $\text{Herm}(n)$ in a neighborhood of a $k$-fold degeneracy point $A_0 \in \Sigma_{\text{herm}}$, called SW chart. Here we generalize this result -- and in fact, the corresponding SW decomposition -- to general complex matrices and complex symmetric matrices. Note that the real symmetric case is included in the original hermitian version.

Let $A_0 \in \Sigma$ be a matrix with a strictly $k$-fold degenerate eigenvalue $\lambda_0$.  $A_0$ can be  partially diagonalized by a general linear basis transformation  $G_0 \in GL_n(\C)$ in $\C^n$, that is, 
\begin{equation}\label{eq:normform}
 G_0^{-1} \cdot   A_0 \cdot G_0= \begin{pmatrix}
        \lambda_0 \mathds{1}_{k} & 0 \\
        0 & \widetilde{A}_{0} \\
    \end{pmatrix} :=A_0',
\end{equation}
where $\mathds{1}_{k}$ is the $ k \times k$ identity matrix, $\widetilde{A}_0 \in \C^{(n-k)\times (n-k)}$ satisfies that $\widetilde{A}_{0}-\lambda_0 \mathds{1}_{n-k}$ has maximal rank $n-k$.
 For example if $A_0'$ is the Jordan normal form of $A_0$, then it
satisfies these properties.

Consider the map
\begin{equation}\label{eq:swdec}
   g: (S,C,A_{\text{eff}}) \mapsto A=g(S,C,A_{\text{eff}})=
   G_0 \cdot e^{S} \cdot (A_0' +C+A_{\text{eff}}) \cdot e^{-S} \cdot G_0^{-1}, 
\end{equation}
where
\begin{enumerate}
    \item $S \in \C^{n \times n}$ has zero $k \times k$ upper-left and $(n-k) \times (n-k)$ lower-right diagonal blocks, that is,
    \begin{equation}
        S=\begin{pmatrix}
        0 & S_{12} \\
        S_{21} & 0 \\
    \end{pmatrix}
    \end{equation}
    with $S_{21} \in \C^{(n-k) \times k}$ and $S_{12} \in \C^{k \times (n-k)}$.
    \item $C$ possibly has non-zero elements only in the $(n-k) \times (n-k)$ lower-right diagonal block and the other entries are zero, that is,
       \begin{equation}
        C=\begin{pmatrix}
        0 & 0 \\
        0 & C_{22} \\
    \end{pmatrix}
    \end{equation}
    with $C_{22} \in \C^{(n-k)\times (n-k)}$.
    \item  $A_{\text{eff}}$ has only a $k \times k$ upper-left block, and the other entries are zero, that is,
    \begin{equation}
        A_{\text{eff}}=\begin{pmatrix}
        \widetilde{A}_{\text{eff}} & 0 \\
        0 & 0 \\
    \end{pmatrix}
    \end{equation}
    with $\widetilde{A}_{\text{eff}} \in \C^{k \times k}$. $A_{\text{eff}}$ is the analogue of the `effective Hamiltonian' in the hermitian case.
\end{enumerate}
Observe that the domain of $g$ has the same dimension $n^2$ as its target $A \in \C^{n \times n}$, in fact,
\begin{equation}
    (S,C,A_{\text{eff}}) \in \C^{2k(n-k)} \times \C^{(n-k)^2}  \times \C^{k^2} \cong \C^{n^2}.
\end{equation}

\begin{thm}[Complex Schrieffer--Wolff chart]\label{th:compsw} \begin{enumerate}[label=(\alph*)]
 In the above setup,   \item The map 
$
    g: \C^{n^2} \to \C^{n \times n}
$
is a complex analytic diffeomorphism (i.e. biholomorphsm)  between a neighborhood $\mathcal{U}_0$ of $0 \in \C^{n^2}$ and a neighborhood $\mathcal{V}_0$ of $A_0 \in \C^{n \times n}$.
\item For matrices $A=g(S, C, A_{\text{eff}}) \in \mathcal{V}_0$, $A \in (\Sigma^{(n)}, A_0; \lambda_0)$ holds if and only if $A_{\text{eff}} \in (\Sigma^{(k)}, 0)$. (Here, slightly abusing  the notation, $(\Sigma^{(n)}, A_0; \lambda_0)$ and $(\Sigma^{k}, 0)$ denote the representatives of the corresponding analytic set germs obtained as the intersection with $\mathcal{V}_0$ and $\mathcal{U}_0$, respectively.)
\item If in addition $\lambda_0=0$, that is, $A_0 \in \Sigma'^{(n)}$ is a strictly corank-$k$ matrix, then $A \in (\Sigma'^{(n)}, A_0)$ if and only if $A_{\text{eff}} \in (\Sigma'^{(k)}, 0)$.
\end{enumerate}
\end{thm}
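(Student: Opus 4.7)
I would prove part (a) via the inverse function theorem, and reduce parts (b), (c) to the observation that $A=g(S,C,A_{\text{eff}})$ is similar to the block-diagonal matrix $B=A_0'+C+A_{\text{eff}}$ through the invertible conjugation $G_0 e^S$, so spectral and rank data coincide. The block-diagonal structure of $B$ then decouples the analysis into the two diagonal blocks.

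\textbf{Part (a).} I would differentiate $g$ at the origin using $e^0=\mathds{1}_n$ and $\partial_S e^S|_{S=0}=s$ to obtain
\begin{equation*}
 dg|_0(s,c,a)=G_0\bigl([s,A_0']+c+a\bigr)G_0^{-1}.
\end{equation*}
Since conjugation by $G_0$ is invertible, it suffices that $(s,c,a)\mapsto[s,A_0']+c+a$ be a linear isomorphism. A direct block computation using the block form of $A_0'$ gives
\begin{equation*}
 [s,A_0']=\begin{pmatrix}0 & s_{12}(\widetilde{A}_0-\lambda_0\mathds{1}_{n-k})\\ -(\widetilde{A}_0-\lambda_0\mathds{1}_{n-k})s_{21} & 0\end{pmatrix},
\end{equation*}
so $a$ fills the upper-left $k\times k$ block, $c$ fills the lower-right $(n-k)\times(n-k)$ block, and $[s,A_0']$ fills the off-diagonal blocks. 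The last contribution is a bijection because $\widetilde{A}_0-\lambda_0\mathds{1}_{n-k}$ is invertible, which is guaranteed by the hypothesis that the algebraic (and geometric) multiplicity of $\lambda_0$ in $A_0$ is exactly $k$. Hence $dg|_0$ is an isomorphism between equidimensional spaces and the inverse function theorem yields the local biholomorphism.

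\textbf{Parts (b) and (c).} Since $A$ is similar to
\begin{equation*}
 B=\begin{pmatrix}\lambda_0\mathds{1}_k+\widetilde{A}_{\text{eff}} & 0\\ 0 & \widetilde{A}_0+C_{22}\end{pmatrix},
\end{equation*}
the spectrum of $A$ with all algebraic/geometric multiplicities (and also the rank) equals that of $B$, which decomposes as the disjoint union of the spectra of the two diagonal blocks. Shrinking $\mathcal{V}_0$ if necessary, $\widetilde{A}_0+C_{22}$ remains a small perturbation of the invertible matrix $\widetilde{A}_0-\lambda_0\mathds{1}_{n-k}$ (after the shift), so its spectrum stays in a definite neighborhood of the spectrum of $\widetilde{A}_0$ and, in particular, bounded away from $\lambda_0$. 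Therefore the only eigenvalues of $B$ near $\lambda_0$ come from the upper-left block and equal $\lambda_0+\mu$ for $\mu$ an eigenvalue of $\widetilde{A}_{\text{eff}}$, with matching geometric multiplicity. Combining this with the description in Section~\ref{ss:matrvar} that $(\Sigma^{(n)},A_0;\lambda_0)$ consists of matrices admitting a geometrically degenerate eigenvalue near $\lambda_0$, I conclude that $A\in(\Sigma^{(n)},A_0;\lambda_0)$ iff $\widetilde{A}_{\text{eff}}\in(\Sigma^{(k)},0)$. For (c), when $\lambda_0=0$, the lower block stays invertible so $\mathrm{rk}(A)=\mathrm{rk}(B)=(n-k)+\mathrm{rk}(\widetilde{A}_{\text{eff}})$, and $\mathrm{rk}(A)\le n-2$ iff $\mathrm{rk}(\widetilde{A}_{\text{eff}})\le k-2$.

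\textbf{Main obstacle.} The only subtle point is the matching in (b) between the analytic branch $(\Sigma^{(n)},A_0;\lambda_0)$ and the spectral condition "has a geometrically degenerate eigenvalue near $\lambda_0$". This identification is not a tautology of the defining equations of $\Sigma^{(n)}$, but follows from the finiteness and local fiber decomposition of the projection $p|_{\widetilde{\Sigma}}$ established in Section~\ref{ss:matrvar} (in particular Remark~\ref{re:everyirred}): small neighborhoods of $A_0$ in $\Sigma^{(n)}$ lift through $p$ into a disjoint union of neighborhoods indexed by the geometrically degenerate eigenvalues of $A_0$, and the $\lambda_0$-component is exactly the one captured by the upper-left block in the chart. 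Once this bookkeeping is in place, the block-diagonal reduction does the rest.
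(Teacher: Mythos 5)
Your proof is correct and follows essentially the same route as the paper: part (a) via the holomorphic inverse function theorem with the same block computation of $[s,A_0']$, and parts (b), (c) by exploiting that $A$ is similar to the block-diagonal $A_0'+C+A_{\text{eff}}$ so the spectrum near $\lambda_0$ comes entirely from the effective block, with the lower block's spectrum staying bounded away from $\lambda_0$ by continuity. Your closing remark about matching the analytic branch $(\Sigma^{(n)},A_0;\lambda_0)$ to the spectral condition via the local fiber decomposition of $p|_{\widetilde\Sigma}$ is a point the paper treats more tersely, but it is the same argument made explicit rather than a different approach.
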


\begin{proof} (a)
By the holomorphic inverse function theorem \cite[p. 18.]{GriffithsHarris1978}
it is enough to show that the Jacobian $(dg)_0$ of $g: \C^{n^2} \to \C^{n^2}$ at 0 has maximal rank $n^2$. Equivalently, we show that for every collection $(S,C, A_{\text{eff}}) \neq (0,0,0)$, the first order term of the one variable map $t \to g(tS,tC, tA_{\text{eff}})$ is non-zero. Computing the coefficient of the first order term we get 
\begin{eqnarray}
    \frac{d}{dt}g(tS,tC, tA_{\text{eff}})|_{t=0} &=&
    G_0 \cdot([S, A_0']+C+A_{\text{eff}}) \cdot G_0^{-1} \\
    &=&
    G_0 \cdot
    \begin{pmatrix}
        \widetilde{A}_{\text{eff}} & S_{12} \cdot (\widetilde{A}_0 -\lambda_0 \mathds{1}_{n-k}) \\
      (  \lambda_0 \mathds{1}_{n-k}-\widetilde{A}_0) \cdot S_{21} & C_{22} \\
    \end{pmatrix}
    \cdot G_0^{-1}.   
\end{eqnarray}
This cannot be zero, since $(S,C, A_{\text{eff}}) \neq (0,0,0)$ and $\widetilde{A}_{0}-\lambda_0 \mathds{1}_{n-k}$ is invertible, which implies that the off diagonal blocks are non-zero if $S \neq 0$. This proves the theorem.

(b) Each eigenvalue $\lambda$ of $A=g(S, C, A_{\text{eff}}) \in \mathcal{V}_0$ is either
 $\lambda=\lambda_0 + \mu$, where $\mu$ is an eigenvalue $A_{\text{eff}}$, or
   an eigenvalue of $\widetilde{A}_0+C_{22}$.
Hence, the eigenvalues of $A$ `born from $\lambda_0$' are in form $\lambda=\lambda_0 + \mu$. 
Note that $\mu$ can be assumed to be arbitrary small by the choice of the neighborhoods $\mathcal{U}_0$ and $\mathcal{V}_0$. 
Hence, $A \in (\Sigma^{(n)}, A_0; \lambda_0)$ if and only if  $\lambda=\lambda_0+\mu$ is a geometric degenerate eigenvalue of $A$, if and only if $\mu$ is a geometric degenerate eigenvalue of $A_{\text{eff}}$, that is, $A_{\text{eff}} \in (\Sigma^{(k)}, 0)$.

The proof of point (c) is very similar, with the additional restriction $\mu=0$ for a geometric degenerate eigenvalue of $A_{\text{eff}}$.
\end{proof}

We call Equation~\eqref{eq:swdec} (more precisely, the inverse of the map $g$) the \emph{(complex) SW decomposition of the matrix $A \in \mathcal{V}_0$}. Theorem~\ref{th:compsw} guarantees that locally  there is a unique SW decomposition, depending on $A$ in analytic way.

 The hermitian version of Theorem~\ref{th:compsw} is analogous, see \cite[Theorem 3.1.2]{PinterSW}. In this case $A_0$ can be diagonalized by a unitary matrix $G_0$, and $C$ and $A_{\text{eff}}$ are hermitian, $S$ is anti-hermitian (that is, $\overline{S}^T+S=0$ or equivalently, $iS $ is hermitian, hence $e^{S}$ is unitary). In this case, our proof for point (a) of Theorem~\ref{th:compsw} literally shows that the map $g$ is a diffeomorphism between neighborhoods $0 \in \R^{n^2}$ and $A_0 \in \text{Herm}(n) \cong \R^{n^2}$, see also \cite{PinterSW}. Note that in \cite{PinterSW} a bit different notation is used, namely, $H_{\text{eff}}$ denotes the \emph{traceless} effective Hamiltonian of a hermitian matrix $H$, its importance is summarized in Section~\ref{ss:efftwofold}. Also note that in \cite{PinterSW} the exponent  the exponent of $e$ is $\pm iS$, since $S$ is chosen to be hermitian instead of anti-hermitian.
 
 The hermitian version also covers the real symmetric case, where $G_0$ is real orthogonal matrix, $C$ and $A_{\text{eff}}$ are real symmetric matrices and $S$ is a real skew-symmetric matrix (that is, $S^T+S=0$, hence $e^{S}$ is real orthogonal). In this case $g$ is a local diffeomorphism between $\R^{(n+1)n/2}$ (at $0$) and $\text{Symm}_{\R}(n) \cong \R^{(n+1)n/2}$. Our proof for  point (a) of Theorem~\ref{th:compsw} literally works also in this case, for convenience we check that the dimensions are equal:
 \begin{equation}\label{eq:equidimsym}
      (S,C,A_{\text{eff}}) \in \C^{k(n-k)} \times \C^{(n-k+1)(n-k)/2}  \times \C^{(k+1)k/2} \cong \C^{(n+1)n/2} \cong \text{Symm}_{\R}(n).
 \end{equation}

 For the complex symmetric case, recall that complex symmetric matrices are not diagonalizable in general. Furthermore, if $A_0 \in \text{Symm}_{\C}(n)$ and $G $ is a complex orthogonal matrix (that is, $G^T \cdot G=\mathds{1}$), then $G^{-1} \cdot A_0 \cdot G \in \text{Symm}_{\C}(n)$, indeed, $(G^{-1} \cdot A_0 \cdot G)^T=G^{-1} \cdot A_0 \cdot G$. Recall that for complex skew-symmetric matrices $S$ (that is, $S^T+S=0$), $e^S$ is complex orthogonal.

 \begin{thm}
     Let $A_0 \in \Sigma_{\text{sym}}^{(n)} \subset \text{Symm}_{\C}(n)$ be a complex symmetric matrix with a strictly $k$-fold degenerate eigenvalue $\lambda_0$. Then there is a complex orthogonal matrix $G_0$ such that 
     \begin{equation}
 G_0^{-1} \cdot   A_0 \cdot G_0= \begin{pmatrix}
        \lambda_0 \mathds{1}_{k} & 0 \\
        0 & \widetilde{A}_{0} \\
    \end{pmatrix} :=A_0',
    \end{equation}
where $\widetilde{A}_0 \in \text{Symm}_{\C}(n-k)$ and $\widetilde{A}_{0}-\lambda_0 \mathds{1}_{n-k}$ has maximal rank $n-k$
 \end{thm}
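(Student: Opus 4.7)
The plan is to construct $G_0$ as the change-of-basis matrix to a basis of $\C^n$ that is simultaneously (i) adapted to a direct-sum decomposition into $A_0$-invariant subspaces isolating the $\lambda_0$-eigenspace, and (ii) orthonormal with respect to the complex bilinear form $(u,v) \mapsto u^T v$. Such a $G_0$ is precisely a complex orthogonal matrix, and conjugation by it preserves symmetry since $G_0^{-1} A_0 G_0 = G_0^T A_0 G_0$. First I would set $V = \ker(A_0 - \lambda_0 \mathds{1})$ and let $U$ be the direct sum of the generalized eigenspaces of $A_0$ for all eigenvalues $\mu \neq \lambda_0$; the strict $k$-fold hypothesis forces every Jordan block of $A_0$ at $\lambda_0$ to have size $1$, so $\C^n = V \oplus U$ with $\dim V = k$, $A_0|_V = \lambda_0 \mathds{1}_V$, and $\widetilde{A}_0 := A_0|_U$ having spectrum disjoint from $\{\lambda_0\}$, which automatically forces $\widetilde{A}_0 - \lambda_0 \mathds{1}_{n-k}$ to have maximal rank.

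The next step is to verify orthogonality and non-degeneracy of the form on the two summands. From $A_0^T = A_0$ and $A_0 v = \lambda_0 v$ one has $v^T A_0 = \lambda_0 v^T$; for any generalized eigenvector $u \in U$ satisfying $(A_0 - \mu \mathds{1})^m u = 0$ with $\mu \neq \lambda_0$, the identity $0 = v^T (A_0 - \mu \mathds{1})^m u = (\lambda_0 - \mu)^m v^T u$ forces $v^T u = 0$, whence $V \perp U$. Because the standard bilinear form on $\C^n$ is non-degenerate and $V \oplus U$ is an orthogonal decomposition, its restriction to each summand is non-degenerate; the classification of non-degenerate symmetric bilinear forms over $\C$ (all are equivalent to the standard one) then supplies orthonormal bases on both $V$ and $U$, whose concatenation forms the columns of a $G_0$ with $G_0^T G_0 = \mathds{1}_n$. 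In this basis $A_0$ becomes block-diagonal with the desired blocks $\lambda_0 \mathds{1}_k$ and $\widetilde{A}_0$, and symmetry of $\widetilde{A}_0$ is automatic from the symmetry of $G_0^T A_0 G_0$.

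The substantive obstacle, and the reason strict $k$-fold degeneracy is indispensable, is ensuring non-degeneracy of the bilinear form on the eigenspace $V$. Unlike the Hermitian case where positivity is free, a complex symmetric matrix may have an isotropic eigenspace whenever $\lambda_0$ carries a non-trivial Jordan block (the eigenvector then lies in the radical of the form), and in that scenario no orthonormal basis of $V$ exists. The strict-multiplicity hypothesis is precisely what makes $V$ coincide with its own generalized eigenspace; this equality is what powers the annihilator computation $v^T(A_0-\mu\mathds{1})^m u = (\lambda_0 - \mu)^m v^T u$, and non-degeneracy of the form on $V$ is then forced formally by its non-degeneracy on $\C^n$ together with the orthogonal splitting $\C^n = V \oplus U$, closing the argument.
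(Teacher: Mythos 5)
Your proof is correct, and it takes a genuinely different route from the paper's. The paper proceeds indirectly: it first produces a block-diagonal $A_0'$ by an \emph{arbitrary} change of basis, then invokes two results from Gantmacher as black boxes --- that every complex matrix is similar to a symmetric one (so $A_0'$ can be taken symmetric), and that similar symmetric matrices are automatically \emph{orthogonally} similar --- to upgrade the conjugating matrix to a complex orthogonal one. You instead construct $G_0$ directly: decompose $\C^n = V \oplus U$ into the $\lambda_0$-eigenspace and the sum of the remaining generalized eigenspaces, prove $V \perp U$ under the form $u^T v$ via the annihilator computation $v^T(A_0-\mu\mathds{1})^m u=(\lambda_0-\mu)^m\,v^Tu$, deduce non-degeneracy of the form on each summand from non-degeneracy on $\C^n$, and take orthonormal bases of $V$ and $U$ to build $G_0$. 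Your argument is self-contained and elementary, avoiding the Gantmacher theorems entirely, and it also makes transparent exactly where strict $k$-fold degeneracy is used --- to force $V$ to coincide with the generalized eigenspace of $\lambda_0$, without which $V$ could be isotropic (lying in its own radical) and no orthonormal basis of $V$ would exist. The paper's version is shorter modulo its citations, but your version carries more explanatory content about the geometry of the bilinear form. Both are valid.
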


 \begin{proof} Since $\lambda_0$ is a strictly $k$-fold degenerate eigenvalue of $A_0$, $A_0$ is similar to an $A_0'$ as in Equation~\eqref{eq:normform} by a general linear basis transformation. Since every complex matrix is similar to a symmetric one by \cite[Corollary 1]{Gantmacher1960}, $A_0'$ can be assumed to be symmetric. Moreover, similar symmetric matrices are also orthogonally similar by 
    \cite[Theorem 4.]{Gantmacher1960}, hence, the basis transformation $G_0$ can be chosen to be orthogonal. This proves the statement.
    
    Alternatively, $A_0'$ can be chosen to be the normal form of symmetric matrices constructed in 
    Ref.~\cite{Craven}. 
    This normal form satisfies the required properties.  
 \end{proof}

 Consider the map $g$ in Equation~\eqref{eq:swdec} for complex symmetric matrices, with $A_0$, $C$ and $A_{\text{eff}}$ complex symmetric, $G_0$ complex orthogonal and $S$ complex skew-symmetric matrices. The dimension calculation in equation~\eqref{eq:equidimsym} shows that $g$ is a map from $\C^{(n+1)n/2}$ to $\text{Symm}_{\C}(n) \cong \C^{(n+1)n/2}$. The proof of Theorem~\ref{th:compsw} literally applies in the complex symmetric case, leading to the following theorem.

\begin{thm}[Complex symmetric Schrieffer--Wolff chart]\label{th:symmsw} \begin{enumerate}[label=(\alph*)]
    \item The map 
\begin{equation}
    g: \C^{(n+1)n/2} \to \text{Symm}_{\C}(n)
\end{equation}
is a complex analytic diffeomorphism (i.e. biholomorphsm)  between a neighborhood $\mathcal{U}_0$ of $0 \in \C^{(n+1)n/2}$ and a neighborhood $\mathcal{V}_0$ of $A_0 \in \text{Symm}_{\C}(n)$.
\item For matrices $A=g(S, C, A_{\text{eff}}) \in \mathcal{V}_0$, $A \in (\Sigma_{\text{sym}}^{(n)}, A_0; \lambda_0)$ holds if and only if $A_{\text{eff}} \in (\Sigma^{(k)}_{\text{sym}}, 0)$. 
\item If in addition $\lambda_0=0$, that is, $A_0 \in \Sigma'^{(n)}_{\text{sym}}$ is a strictly corank-$k$ matrix, then $A \in (\Sigma'^{(n)}_{\text{sym}}, A_0)$ if and only if $A_{\text{eff}} \in (\Sigma'^{(k)}_{\text{sym}}, 0)$.
\end{enumerate}
\end{thm}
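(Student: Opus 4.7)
The plan is to transfer the proof of Theorem~\ref{th:compsw} to the complex symmetric setting, verifying that each step respects the symmetric constraints on $S$, $C$, and $A_{\text{eff}}$.

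First I would check that $g$ actually lands in $\text{Symm}_{\C}(n)$. Since $A_0' + C + A_{\text{eff}}$ is symmetric (by the block structure of $A_0'$, $C$, and $A_{\text{eff}}$), $e^S$ is complex orthogonal (because $S^T + S = 0$), and $G_0$ is complex orthogonal, transposing $g(S,C,A_{\text{eff}}) = G_0 \cdot e^S \cdot (A_0' + C + A_{\text{eff}}) \cdot e^{-S} \cdot G_0^{-1}$ returns the same matrix. Together with the dimension count adapted from Equation~\eqref{eq:equidimsym}, this shows $g$ is a map between equi-dimensional complex vector spaces of dimension $(n+1)n/2$.

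For part (a), I would invoke the holomorphic inverse function theorem and compute $(dg)_0$ verbatim as in the proof of Theorem~\ref{th:compsw}. The first-order term of $t \mapsto g(tS, tC, tA_{\text{eff}})$ is
\begin{equation*}
G_0 \cdot \begin{pmatrix} \widetilde{A}_{\text{eff}} & S_{12}(\widetilde{A}_0 - \lambda_0 \mathds{1}_{n-k}) \\ (\lambda_0 \mathds{1}_{n-k} - \widetilde{A}_0) S_{21} & C_{22} \end{pmatrix} \cdot G_0^{-1},
\end{equation*}
and its vanishing forces $\widetilde{A}_{\text{eff}} = 0$, $C_{22} = 0$ from the diagonal blocks, and then $S_{12} = S_{21} = 0$ from the off-diagonal blocks using invertibility of $\widetilde{A}_0 - \lambda_0 \mathds{1}_{n-k}$. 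Hence $(dg)_0$ is injective, and by the dimension match it is an isomorphism, so $g$ is a biholomorphism between neighborhoods $\mathcal{U}_0$ and $\mathcal{V}_0$.

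For parts (b) and (c), the argument is identical to the one in Theorem~\ref{th:compsw} since the conclusions depend only on the similarity $A \sim A_0' + C + A_{\text{eff}}$ via $G_0 e^S$ and on the block structure of the latter matrix, neither of which is sensitive to the ambient space. The eigenvalues of $A$ near $\lambda_0$ are exactly $\lambda_0 + \mu$ where $\mu$ ranges over eigenvalues of $\widetilde{A}_{\text{eff}}$, and the geometric multiplicity is preserved across the similarity and the block decomposition. Thus $A \in (\Sigma_{\text{sym}}^{(n)}, A_0; \lambda_0)$ iff $\widetilde{A}_{\text{eff}}$ has a geometrically degenerate eigenvalue, i.e.\ $A_{\text{eff}} \in (\Sigma_{\text{sym}}^{(k)}, 0)$, giving (b); part (c) follows by specializing to $\lambda_0 = 0$ and $\mu = 0$, using that $\widetilde{A}_0 + C_{22}$ stays invertible on $\mathcal{U}_0$. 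The only step requiring genuine care beyond quoting the general theorem is the well-definedness of $g$ into $\text{Symm}_{\C}(n)$; all other structural facts (block forms, invertibility of $\widetilde{A}_0 - \lambda_0 \mathds{1}_{n-k}$) carry over unchanged.
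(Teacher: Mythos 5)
Your proposal is correct and follows essentially the same route as the paper, which simply notes that the proof of Theorem~\ref{th:compsw} carries over verbatim once one observes that conjugation by $G_0 e^S$ (a complex orthogonal matrix, since $S$ is skew-symmetric) preserves $\text{Symm}_{\C}(n)$ and that the dimensions match by Equation~\eqref{eq:equidimsym}. Your extra observation that $S_{21}=-S_{12}^T$ makes the two off-diagonal blocks of $(dg)_0$ transposes of each other is worth keeping in mind — it confirms that the derivative lands in $\text{Symm}_{\C}(n)$ and that vanishing of either off-diagonal block kills all of $S$ — but the injectivity-plus-dimension-count argument you give is exactly the paper's.
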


\begin{cor}[Analytic trivial deformations]\label{co:trivfam} Let $A_0 \in \Sigma_{\bullet}^{(n)}$ be a matrix with a strictly $k$-fold degenerate eigenvalue $\lambda_0$, and let $B_0 \in \Sigma_{\bullet}'^{(n)}$ be a strictly corank $k$ matrix, where the lower index $\bullet $ is either empty (general case) or `sym' or `diag' (symmetric and diagonal cases, respectively). We have analytic isomorphisms 
\begin{eqnarray}
    (\Sigma_{\bullet}^{(n)}, A_0; \lambda_0) &\cong& (\Sigma_{\bullet}^{(k)}  \times \C^{\delta(\bullet)}, 0), \label{eqa:sigma} \\
    (\Sigma'^{(n)}_{\bullet}, B_0) &\cong & (\Sigma'^{(k)}_{\bullet}  \times \C^{\delta(\bullet)}, 0),
   \label{eqa:prime} \\ 
    (\widetilde{\Sigma}_{\bullet}^{(n)}, (A_0, \lambda_0)) &\cong& (\widetilde{\Sigma}_{\bullet}^{(k)}  \times \C^{\delta(\bullet)}, (0,0)),  \label{eqa:tilde}
\end{eqnarray}
 where, in each case, the dimension $\delta(\bullet)$ of the family parameter is equal to the difference of the dimensions of the ambient spaces, namely, 
    \begin{eqnarray}
      \delta(\text{empty}) = \dim (\C^{n \times n})-\dim (\C^{k \times k}) & =& n^2-k^2, \\ 
       \delta(\text{`sym'}) = \dim (\text{Symm}_{\C}(n))-\dim (\text{Symm}_{\C}(k))& =& \binom{n+1}{2}-\binom{k+1}{2} \\
      \delta(\text{`diag'}) =  \dim (\text{Diag}_{\C}(n))-\dim (\text{Diag}_{\C}(k))&=& n-k.
    \end{eqnarray}
    \end{cor}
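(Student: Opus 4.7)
The plan is to deduce all three isomorphisms directly from the complex Schrieffer--Wolff charts in Theorems~\ref{th:compsw} and~\ref{th:symmsw}. For each case $\bullet\in\{\text{empty},\text{sym}\}$, the SW map $g:(S,C,A_{\text{eff}})\mapsto A$ is a local biholomorphism whose domain splits as a product: the $A_{\text{eff}}$-factor is naturally the effective matrix space $\C^{k\times k}$ (respectively $\text{Symm}_{\C}(k)$), and the complementary $(S,C)$-factor is the linear subspace parametrising the off-diagonal blocks. The explicit dimension counts already given in the excerpt show that this $(S,C)$-subspace has dimension exactly $\delta(\bullet)$; for instance $2k(n-k)+(n-k)^2=n^2-k^2$ in the general case, and the analogous identity holds in the symmetric case. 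Thus in SW coordinates the ambient space is a product of the effective matrix space with $\C^{\delta(\bullet)}$.

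For Equation~\eqref{eqa:sigma}, part (b) of the SW theorem says that $A\in(\Sigma^{(n)}_\bullet,A_0;\lambda_0)$ if and only if $A_{\text{eff}}\in(\Sigma^{(k)}_\bullet,0)$; pulling back via $g^{-1}$ therefore identifies $(\Sigma^{(n)}_\bullet,A_0;\lambda_0)$ with the product germ $(\Sigma^{(k)}_\bullet,0)\times(\C^{\delta(\bullet)},0)$. Equation~\eqref{eqa:prime} is obtained in exactly the same way from part (c) of the SW theorem, applied at the strictly corank-$k$ matrix $B_0\in\Sigma'^{(n)}_\bullet$ (which is the case $\lambda_0=0$ with $\Sigma_\bullet^{(n)}$ replaced by $\Sigma'^{(n)}_\bullet$).

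For the lifted variety in Equation~\eqref{eqa:tilde}, I enlarge the SW chart by allowing the eigenvalue to vary: the assignment
\begin{equation}
\widetilde{g}:(S,C,A_{\text{eff}},\mu)\longmapsto\bigl(g(S,C,A_{\text{eff}}),\,\lambda_0+\mu\bigr)
\end{equation}
is a biholomorphism between neighborhoods of $(0,0)$ and $(A_0,\lambda_0)$, being the product of $g$ with the affine isomorphism $\mu\mapsto\lambda_0+\mu$ of $\C$. The eigenvalue dictionary established inside the proof of Theorem~\ref{th:compsw}(b) --- the geometrically degenerate eigenvalues of $g(S,C,A_{\text{eff}})$ near $\lambda_0$ are in bijection with those of $A_{\text{eff}}$ near $0$ via $\lambda=\lambda_0+\mu$, with matching geometric multiplicities --- shows that $(A,\lambda)\in\widetilde{\Sigma}^{(n)}_\bullet$ iff $(A_{\text{eff}},\mu)\in\widetilde{\Sigma}^{(k)}_\bullet$, which pulls back to the desired product decomposition.

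The diagonal case ($\bullet=\text{diag}$) is elementary: diagonal matrices have no off-diagonal blocks, so the SW construction degenerates to the canonical block splitting $\text{Diag}_{\C}(n)\cong\text{Diag}_{\C}(k)\times\text{Diag}_{\C}(n-k)$ after a permutation placing the strictly $k$-fold eigenvalue block first, and each of $\Sigma_{\text{diag}}^{(n)}$, $\Sigma'^{(n)}_{\text{diag}}$, $\widetilde{\Sigma}_{\text{diag}}^{(n)}$ factors accordingly. The only step requiring any real care is the eigenvalue-multiplicity correspondence needed for $\widetilde{\Sigma}$ in the non-diagonal cases; this is however already implicit in the proof of Theorem~\ref{th:compsw}(b), so no genuine obstacle remains and the corollary reduces to extracting these product structures from the SW charts.
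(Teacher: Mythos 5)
Your proposal is correct, and for Eqs.~\eqref{eqa:sigma} and \eqref{eqa:prime} it matches the paper's proof exactly: both simply pull back the product structure of the SW chart's domain through the biholomorphism $g$, with the $(S,C)$-block supplying the $\C^{\delta(\bullet)}$ factor and part~(b) or~(c) of Theorem~\ref{th:compsw} (respectively Theorem~\ref{th:symmsw}) identifying the relevant slice of the degeneracy locus with $\{A_{\text{eff}}\in\Sigma^{(k)}_\bullet\}$ or $\{A_{\text{eff}}\in\Sigma'^{(k)}_\bullet\}$. Your dimension bookkeeping $2k(n-k)+(n-k)^2=n^2-k^2$ and its symmetric analogue, as well as the treatment of the diagonal case as a degenerate SW chart with $S=0$, also agree with the paper.

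For Eq.~\eqref{eqa:tilde} your route differs, though the two are equivalent in substance. You enlarge the SW chart to $\widetilde{g}(S,C,A_{\text{eff}},\mu)=(g(S,C,A_{\text{eff}}),\lambda_0+\mu)$ and argue directly from the eigenvalue bijection inside the proof of Theorem~\ref{th:compsw}(b). The paper instead exploits the global linear isomorphism $(\widetilde{\Sigma}^{(n)},(A_0,\lambda_0))\cong(\Sigma'^{(n)}\times\C,(A_0-\lambda_0\mathds{1},0))$ given by $(A,\lambda)\mapsto(A-\lambda\mathds{1},\lambda-\lambda_0)$, which was essentially observed in Section~\ref{ss:matrvar}; this transports the question to the strictly corank-$k$ matrix $B_0=A_0-\lambda_0\mathds{1}$, invokes the already-proved Eq.~\eqref{eqa:prime}, and then applies the same linear isomorphism in reverse for $n=k$ to repackage $\Sigma'^{(k)}\times\C$ as $\widetilde{\Sigma}^{(k)}$. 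The paper's route buys economy: it reuses Eq.~\eqref{eqa:prime} verbatim rather than re-running the eigenvalue analysis. Your route buys self-containment: no recourse to the $\widetilde{\Sigma}\leftrightarrow\Sigma'\times\C$ dictionary is needed. Both are sound.
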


    \begin{proof} First we prove the general case. By points (a) and (b) of the complex SW chart theorem~\ref{th:compsw} (see also Equation~\ref{eq:swdec}), $(\Sigma^{(n)}, A_0; \lambda_0)$ is parametrized as $g(S, C, A_{\text{eff}})$  with $A_{\text{eff}} \in (\Sigma^{(k)}  , 0)$ and $(S, C) \in \C^{n^2-k^2}$. This proves Equation~\eqref{eqa:sigma}. 

Equation~\eqref{eqa:prime} follows from points (a) and (c) of the complex SW chart theorem~\ref{th:compsw} in the same way, but in this case $A_{\text{eff}} \in (\Sigma'^{(k)}  , 0)$.

Equation~\eqref{eqa:tilde} follows from the isomorphism 
\begin{eqnarray}\label{eq:isomtildeprime}
(\widetilde{\Sigma}^{(n)}, (A_0, \lambda_0))  &\cong& (\Sigma'^{(n)} \times \C, (A_0- \lambda_0 \mathds{1}, 0)) \\
 (A, \lambda) &\mapsto& (A-\lambda \mathds{1}, \lambda-\lambda_0).
\end{eqnarray}
Indeed, by point (b), the right hand side of equation~\eqref{eq:isomtildeprime} is isomorphic to $(\Sigma'^{(k)}  \times \C \times \C^{n^2-k^2}, 0 )$, which is isomorphic to $(\widetilde{\Sigma}^{(k)}  \times \C^{n^2-k^2}, (0,0))$, since equation~\eqref{eq:isomtildeprime} can be applied again. 

The symmetric cases can be proved in the same way, using the symmetric SW chart theorem~\ref{th:symmsw}, and that $(S, C) \in \C^{((n+1)n-(k+1)k)/2}$ in the symmetric case.

The diagonal case is straight forward. It can be considered a special case of either the general or the symmetric case with, $S=0$, $A_0$, $C$ and $A_{\text{eff}}$ are diagonal. In this case the family parameter is $C \in \C^{n-k}$.
\end{proof}

\begin{rem}
    For fixed $(S, C)$, $A_{\text{eff}} \mapsto g(S, C, A_{\text{eff}})$ parametrizes a plane (affine subspace) transverse to $\Sigma$, identified with the $k \times k$  matrices. 
    $(\Sigma^{(k)}, 0)$ is already a one parameter trivial family, indeed, there is an isomorphism
    \begin{eqnarray}
        (\Sigma_0^{(k)} \times \C, 0) &\to&  (\Sigma^{(k)}, 0) \\
        (A, \lambda) &\mapsto & A+\lambda \mathds{1},
    \end{eqnarray}
   where $\Sigma_0^{(k)}$ denotes the traceless part  of $\Sigma^{(k)}$. In contrast, $(\Sigma'^{(k)}, 0)$ is not a trivial family, since in this case the trace cannot be varied. The same holds in the symmetric and diagonal cases.
\end{rem}

\begin{rem}
    Corollary~\ref{co:trivfam} also holds in the hermitian and real symmetric cases with a small modification, namely, $\C$ should be replaced by $\R$. The proof is the same, using the hermitian and real symmetric versions of the SW chart.
\end{rem}

\begin{cor}[Irreducible components]\label{cor:irred}
If $\lambda_0$ is a strictly $k$-fold eigenvalue of $A_0 \in \Sigma^{(n)}$, then $(\Sigma, A_0; \lambda_0)$ is an irreducible component of $(\Sigma, A_0)$. The same holds in the symmetric case.
\end{cor}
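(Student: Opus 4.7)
The plan is to reduce to the prototype germ $(\Sigma^{(k)},0)$ via the Schrieffer--Wolff chart, prove irreducibility of that prototype using the finite projection from $\widetilde{\Sigma}^{(k)}$, and then invoke Corollary~\ref{co:unionofcomp}(1) to conclude. In detail, Corollary~\ref{co:trivfam} supplies an analytic isomorphism
\begin{equation*}
(\Sigma^{(n)}, A_0; \lambda_0) \;\cong\; (\Sigma^{(k)}, 0) \times (\C^{n^2-k^2}, 0),
\end{equation*}
and since taking the product with a smooth germ preserves irreducibility, it suffices to show that $(\Sigma^{(k)}, 0)$ is irreducible.

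For this, I would use the trivial product structure $(\widetilde{\Sigma}^{(k)}, (0,0)) \cong (\Sigma'^{(k)}, 0) \times (\C, 0)$ from Section~\ref{ss:matrvar}, where $(\Sigma'^{(k)}, 0)$ is irreducible because its defining ideal $I_{k-1}$ is homogeneous and prime, so algebraic irreducibility passes to the germ at the origin via Serre's GAGA (as already argued in the paper for $\Sigma'$). Hence $(\widetilde{\Sigma}^{(k)}, (0,0))$ is irreducible. The restriction $p|_{\widetilde{\Sigma}^{(k)}} \colon (\widetilde{\Sigma}^{(k)}, (0,0)) \to (\Sigma^{(k)}, 0)$ is finite by Proposition~\ref{pr:projectiongen1to1}, and its point-fiber over $0$ is just $\{(0,0)\}$ since $(0, \lambda) \in \widetilde{\Sigma}^{(k)}$ forces $\text{rk}(-\lambda \mathds{1}) \le k-2$ and hence $\lambda = 0$. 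Moreover, by continuity of eigenvalues, every matrix $A$ close to $0$ has all of its geometric degenerate eigenvalues near $0$, so every element of $(\Sigma^{(k)}, 0)$ lifts into $(\widetilde{\Sigma}^{(k)}, (0,0))$; the germ map is therefore a finite surjection. A proper closed analytic decomposition $(\Sigma^{(k)}, 0) = Y_1 \cup Y_2$ would pull back to a proper decomposition $p^{-1}(Y_1) \cup p^{-1}(Y_2)$ of the irreducible $(\widetilde{\Sigma}^{(k)}, (0,0))$, contradicting its irreducibility; hence $(\Sigma^{(k)}, 0)$ is irreducible.

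The isomorphism above now yields the irreducibility of $(\Sigma, A_0; \lambda_0)$, and Corollary~\ref{co:unionofcomp}(1), which identifies $(\Sigma, A_0; \lambda_0)$ as a union of irreducible components of $(\Sigma, A_0)$, forces this union to consist of a single component. The symmetric case is handled verbatim with the symmetric versions of the objects involved, the only additional input being that $\Sigma'^{(k)}_{\text{sym}}$ is also an irreducible determinantal variety (its vanishing ideal, generated by the $(k-1) \times (k-1)$ minors of a symmetric matrix, is prime). The delicate point I expect is verifying cleanly that the germ restriction of $p$ surjects onto the \emph{full} $(\Sigma^{(k)}, 0)$ rather than just one local branch, and that a finite surjection of irreducible germs has irreducible image; both are standard but worth stating explicitly.
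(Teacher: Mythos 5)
Your proof is correct and follows essentially the same chain as the paper's: reduce to $(\Sigma^{(k)},0)$ via Corollary~\ref{co:trivfam}, deduce irreducibility of $(\widetilde{\Sigma}^{(k)},(0,0))$ from that of $(\Sigma'^{(k)},0)$ via the product structure, and push irreducibility down through the finite projection. You make explicit two steps the paper leaves implicit — that $(0,0)$ is the unique preimage of $0$ so the germ-level projection is onto all of $(\Sigma^{(k)},0)$, and that a finite surjection from an irreducible germ has irreducible image — which is a worthwhile clarification but not a different route.
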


\begin{proof}
    By Corollary~\ref{co:trivfam} it is enough to prove that $(\Sigma^{(k)}, 0)$ is irreducible. This follows if we show that $(\widetilde{\Sigma}, (0,0))$ is irreducible. Since $(\widetilde{\Sigma}, (0,0)) \cong (\Sigma' \times \C, 0)$ and $(\Sigma', 0)$ is irreducible by \cite[Proposition 1.1]{BrunsVetterBook} (and \cite{SerreGAGA} to switch to the local version), the proof is done. The symmetric case is proved in the same way.
\end{proof}

\begin{cor}[Strictly two-fold degeneracy]\label{co:twofoldsmooth}
If $\lambda_0$ is a strictly two-fold eigenvalue of $A_0 \in \Sigma^{(n)}$, then
\begin{enumerate}[label=(\alph*)]
    \item $(\Sigma, A_0; \lambda_0)$ is non-singular, and
    \item the projection  $p|_{(\widetilde{\Sigma}, (A_0, \lambda_0))}: (\widetilde{\Sigma}, (A_0, \lambda_0)) \to (\Sigma, A_0; \lambda_0)$ is a biholomorphism. 
\end{enumerate}
  
\end{cor}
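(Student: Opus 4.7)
The plan is to reduce to the effective $k=2$ model via the complex Schrieffer--Wolff chart and Corollary~\ref{co:trivfam}, where both statements become essentially tautological because $\Sigma^{(2)}$ is just a line.

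First I would invoke Corollary~\ref{co:trivfam} with $k=2$: we have analytic isomorphisms $(\Sigma^{(n)}, A_0; \lambda_0) \cong (\Sigma^{(2)} \times \C^{n^2-4}, 0)$ and $(\widetilde{\Sigma}^{(n)}, (A_0,\lambda_0)) \cong (\widetilde{\Sigma}^{(2)} \times \C^{n^2-4}, (0,0))$. So both (a) and (b) will follow once I verify them for $n=k=2$, provided that the projection $p$ is compatible with these isomorphisms. To nail down compatibility I would use the explicit SW parametrization: a point of $(\Sigma^{(n)}, A_0; \lambda_0)$ is $g(S, C, A_{\text{eff}})$ with $A_{\text{eff}} \in (\Sigma^{(2)}, 0)$, and a point of $(\widetilde{\Sigma}^{(n)}, (A_0, \lambda_0))$ near the basepoint is $(g(S, C, A_{\text{eff}}), \lambda_0 + \mu)$, where $\mu$ is the geometrically doubly degenerate eigenvalue of $A_{\text{eff}}$ close to $0$. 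Thus both germs get parametrized by the same triple $(S, C, A_{\text{eff}})$ (with the additional scalar $\mu$ determined by $A_{\text{eff}}$), and $p$ literally becomes the forgetful map.

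The heart of the proof is the trivial observation that in dimension $k=2$ a matrix has geometric multiplicity $\geq 2$ at some eigenvalue if and only if the entire $\C^2$ is an eigenspace, i.e. $A_{\text{eff}} = \mu \mathds{1}_2$ for some $\mu \in \C$. Therefore $(\Sigma^{(2)}, 0) = \{\mu \mathds{1}_2 : \mu \in \C\}$ is a one-dimensional linear subspace, in particular smooth, proving (a). For the same reason $(\widetilde{\Sigma}^{(2)}, (0,0)) = \{(\mu \mathds{1}_2, \mu) : \mu \in \C\}$ is also a smooth line, and $p|_{\widetilde{\Sigma}^{(2)}} : (\mu \mathds{1}_2, \mu) \mapsto \mu \mathds{1}_2$ is linear and bijective onto $\Sigma^{(2)}$, hence a biholomorphism.

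Combining: under the SW parametrization, $(\Sigma^{(n)}, A_0; \lambda_0)$ is swept out by $(S, C, \mu) \mapsto g(S, C, \mu \mathds{1}_2)$ and $(\widetilde{\Sigma}^{(n)}, (A_0, \lambda_0))$ by $(S, C, \mu) \mapsto (g(S, C, \mu \mathds{1}_2), \lambda_0 + \mu)$, both biholomorphisms onto their images (of dimension $n^2-3$). The projection $p$ corresponds to the identity on the parameter space $\C^{n^2-4} \times \C$, proving (b) and reconfirming (a). The only step that demands a bit of attention is the compatibility of the projection with the trivial deformation of Corollary~\ref{co:trivfam}, but this is immediate from writing the parametrizations uniformly in $(S, C, \mu)$ rather than combining two separately constructed isomorphisms.
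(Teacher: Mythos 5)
Your proof of (a) matches the paper's exactly: both reduce via the Schrieffer--Wolff chart (i.e.\ Corollary~\ref{co:trivfam}) to the observation that $\Sigma^{(2)} = \{\mu\mathds{1}_2 : \mu \in \C\}$ is a line.

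For (b) your route diverges from the paper's in a way worth noting. The paper argues abstractly: it observes that $(\widetilde{\Sigma},(A_0,\lambda_0))$ is a non-singular point of $\widetilde{\Sigma}$ (being corank exactly 2), shows by the continuity of eigenvalues that $p$ is a \emph{bijection} from a small neighborhood in $\widetilde{\Sigma}$ onto a small neighborhood in $(\Sigma,A_0;\lambda_0)$, and then invokes the general fact that a bijective holomorphic map between non-singular analytic spaces is a biholomorphism (cited to Griffiths--Harris). You instead write down explicit SW parametrizations $(S,C,\mu)\mapsto g(S,C,\mu\mathds{1}_2)$ of the base and $(S,C,\mu)\mapsto (g(S,C,\mu\mathds{1}_2),\lambda_0+\mu)$ of the lift, and observe that $p$ intertwines them with the identity on the parameter space $\C^{n^2-3}$. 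The key input in your argument is that in the $k=2$ effective model the degenerate eigenvalue $\mu$ is \emph{determined} by $A_{\text{eff}}=\mu\mathds{1}_2$, so lifting to $\widetilde{\Sigma}$ adds no new parameter. This is more constructive than the paper's proof and sidesteps the Griffiths--Harris citation; the small price is that you must check that your map $(S,C,\mu)\mapsto(g(S,C,\mu\mathds{1}_2),\lambda_0+\mu)$ actually surjects onto $(\widetilde{\Sigma}^{(n)},(A_0,\lambda_0))$, i.e.\ that any $(A,\lambda)$ near $(A_0,\lambda_0)$ with $\operatorname{rk}(A-\lambda\mathds{1})\le n-2$ has an SW decomposition with $A_{\text{eff}}=(\lambda-\lambda_0)\mathds{1}_2$. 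You assert this using point~(b) of Theorem~\ref{th:compsw}; spelling out that calculation (the lower-right block $\widetilde{A}_0+C_{22}-\lambda\mathds{1}_{n-2}$ remains invertible, so the corank is concentrated in the upper-left $2\times 2$ block) would make the argument airtight, but it is a routine step and the overall proof is correct.
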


\begin{proof}
  (a)  Since  $\Sigma^{(2)}=\{\lambda \mathds{1}_2 \ | \ \lambda \in \C \}$, the statement follows from Corollary~\ref{co:trivfam}.

  (b) Recall that $(\widetilde{\Sigma}, (A_0, \lambda_0))$ is non-singular if  $\lambda_0$ is 2-fold geometric degenerate. Consider a sufficiently small open neighborhood $\mathcal{V} \subset {\Sigma}$ of $A_0$ (a representative of the germ $(\Sigma, A_0; \lambda_0)$), and its preimage $\mathcal{U} \subset \widetilde{\Sigma}$. These sets  $\mathcal{U}$ and $\mathcal{V}$ are non-singular, and $p|_{\mathcal{U}}: \mathcal{U} \to \mathcal{V}$ is a bijection. Indeed, for $(A, \lambda) \in \mathcal{U}$, $\lambda$ is a geometric degenerate eigenvalue of $A$ close to $\lambda_0$, but $\lambda_0$ is 2-fold geometric degenerate, hence, $(A, \lambda), (A, \mu) \in \mathcal{U}$ with $\lambda \neq \mu$ cannot happen. By \cite[p. 19]{GriffithsHarris1978}, a bijective holomorphic map between non-singular spaces is a biholomorphism. \end{proof}

\begin{lem}\label{le:strictly2}
     The set $\Xi$ of matrices $A \in \Sigma$ with only one geometric degenerate eigenvalue $\lambda$ which is strictly 2-fold degenerate is an open and dense subset  of $\Sigma$ (with respect to the Euclidean topology).
\end{lem}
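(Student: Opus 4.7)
The plan is to verify openness and density of $\Xi$ separately, leveraging the finite surjective projection $p|_{\widetilde{\Sigma}}\colon \widetilde{\Sigma}\to\Sigma$ from Proposition~\ref{pr:projectiongen1to1}, the trivial-deformation identification $\widetilde{\Sigma}\cong\Sigma'\times\C$, and the smoothness/biholomorphism result of Corollary~\ref{co:twofoldsmooth}. Both halves exploit the fact that at a strictly $2$-fold eigenvalue, $p|_{\widetilde{\Sigma}}$ is locally a biholomorphism onto the smooth locus of $\Sigma$, and that strict $2$-foldness is an open condition on $\widetilde{\Sigma}$.

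For openness I would take $A_0\in\Xi$ with unique geometric degenerate eigenvalue $\lambda_0$; since $\lambda_0$ is the only preimage of $A_0$ in $\widetilde{\Sigma}$, Remark~\ref{re:everyirred} gives $(\Sigma,A_0)=(\Sigma,A_0;\lambda_0)$, which by Corollary~\ref{co:twofoldsmooth} is smooth and biholomorphic to $(\widetilde{\Sigma},(A_0,\lambda_0))$. For $A\in\Sigma$ sufficiently close to $A_0$, continuity of the characteristic polynomial preserves simplicity of the $n-2$ simple eigenvalues of $A_0$, so any degenerate eigenvalue of $A$ must lie near $\lambda_0$; the two roots of the characteristic polynomial of $A$ near $\lambda_0$ must coincide into a single strictly $2$-fold eigenvalue, since $A\in\Sigma$ forces geometric multiplicity at least $2$ there and the total algebraic multiplicity near $\lambda_0$ is exactly $2$. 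Hence $A\in\Xi$.

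For density I would introduce $\widetilde{\Xi}\subset\widetilde{\Sigma}$ consisting of pairs $(A,\lambda)$ satisfying the three open conditions: (i) $\mathrm{rk}(A-\lambda\mathds{1})=n-2$, (ii) $\mathrm{rk}((A-\lambda\mathds{1})^2)=n-2$, and (iii) the $n-2$ roots of the characteristic polynomial of $A$ other than $\lambda$ are pairwise distinct and different from $\lambda$. Condition (i) pins the geometric multiplicity of $\lambda$ to exactly $2$; (i) combined with (ii) pins the algebraic multiplicity to $2$ via kernel stabilization; and (iii) forces $\lambda$ to be the unique degenerate eigenvalue of $A$, so $p(\widetilde{\Xi})\subset\Xi$ by construction. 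An explicit witness such as $(\mathrm{diag}(0,0,1,2,\ldots,n-2),0)$ shows $\widetilde{\Xi}$ is non-empty, and since $\widetilde{\Sigma}\cong\Sigma'\times\C$ is irreducible, $\widetilde{\Xi}$ is open and dense. Finiteness of $p|_{\widetilde{\Sigma}}$ implies it is closed, and together with the equality $\dim\widetilde{\Sigma}=\dim\Sigma$ forces $p(\widetilde{\Sigma}\setminus\widetilde{\Xi})$ to be a proper closed analytic subset of $\Sigma$, whose complement is open and dense in $\Sigma$ and contained in $p(\widetilde{\Xi})\subset\Xi$.

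The main potential obstacle is verifying that (i)+(ii) truly captures strict $2$-foldness and excludes a single Jordan block of size $2$, but this is immediate: once $\ker M=\ker M^2$ for $M=A-\lambda\mathds{1}$, the kernel chain stabilizes, so the generalized eigenspace equals the eigenspace and the algebraic multiplicity equals the geometric multiplicity. Beyond this, the argument is entirely structural: openness is driven by the identification of $\Xi$ locally with the smooth locus of a single irreducible component via Corollary~\ref{co:twofoldsmooth}, while density is transported from the obvious generic-point behavior of $\widetilde{\Sigma}$ down to $\Sigma$ through the finite projection $p|_{\widetilde{\Sigma}}$.
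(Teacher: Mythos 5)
Your proof is correct, and the density half takes a genuinely different route from the paper. For openness you reproduce the paper's continuity-of-eigenvalues argument, prefaced by some structural observations via Corollary~\ref{co:twofoldsmooth} and Remark~\ref{re:everyirred} that the concluding step does not actually invoke. For density, the paper cites the result from~\cite{DomokosHermitian} that the diagonalizable matrices are dense in $\Sigma$ and then perturbs eigenvalues within the diagonalizable locus, whereas you work upstairs: you cut out $\widetilde{\Xi}\subset\widetilde{\Sigma}$ by three algebraic non-vanishing conditions (strict $2$-foldness via the kernel-stabilization criterion $\ker(A-\lambda\mathds{1})=\ker(A-\lambda\mathds{1})^2$, uniqueness via simplicity of the remaining roots), use irreducibility of $\widetilde{\Sigma}\cong\Sigma'\times\C$ plus an explicit witness to get $\widetilde{\Xi}$ Zariski-open and hence Euclidean-dense, and then push density down to $\Sigma$ through the finite closed projection $p|_{\widetilde{\Sigma}}$, since $\dim p(\widetilde{\Sigma}\setminus\widetilde{\Xi})<\dim\Sigma$ and $\Sigma$ is pure-dimensional. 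Your route is self-contained within the machinery of Section~\ref{ss:matrvar} and dispenses with the external density statement, at the cost of a longer argument; the paper's is shorter but leans on a nontrivial input. Both are sound, and your version makes transparent that only finiteness of $p|_{\widetilde{\Sigma}}$, irreducibility of $\widetilde{\Sigma}$, and pure-dimensionality of $\Sigma$ are needed.
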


\begin{proof}
    $\Xi \subset \Sigma$ is obviously  an open subset. Indeed, the eigenvalues vary continuously \cite[Ch.II., Sec. 1.]{kato2013perturbation}, hence, any matrix $A_0 \in \Sigma$ has a neighborhood in $\Sigma$ such that different eigenvalues of $A_0$ do not become equal to each other inside this neighborhood. If in particular $A_0 \in \Xi$, then this neighborhhod is contained in $\Xi$.

    We show that $\Xi \subset \Sigma$ is dense.  By Proposition 2.1. and Remark 2.3. in \cite{DomokosHermitian}, the set of diagonalizable matrices $\Delta$ is dense in $\Sigma$. Inside $\Delta$, the set $\Delta'$ of matrices with only one geometric degenerate eigenvalue which is strictly 2-fold is dense in $\Delta$, hence, in $\Sigma$. Indeed, starting from a matrix $A_0 \in \Delta$, by an arbitrary small modification of the eigenvalues we obtain a matrix $A \in \Delta'$. Since $\Delta' \subset \Xi$, this proves the lemma.
\end{proof}

\begin{rem}
    Some important comments:
    \begin{itemize}
        \item On may try to prove the density from the Jordan normal form. Starting from the Jordan normal form of an $A_0 \in \Sigma$, it can be reached by an arbitrary small modification of the eigenvalues (with the same value in each Jordan block) that the resulted matrix $A \in \Sigma$ has exactly one geometric degenerate eigenvalue $\lambda$, and its geometric multiplicity is $2$. However, we cannot guarantee that easily that $\lambda$ is a strictly 2-fold degenerate eigenvalue of $A$.
        \item The set of diagonalizable matrices $\Delta$ is not open in $\Sigma$, for example, 
        \begin{equation}
\begin{pmatrix}
    0 & t & 0 \\
    0 & 0 & 0 \\
    0 & 0 & 0 \\
\end{pmatrix}
        \end{equation}
        is in $\Sigma$ for all $t$, but for $t \neq 0$ it is not in $\Delta$. Also if $A_0 \in \Delta'$, there are non-diagonalizable degenerate matrices arbitrary close to $A_0$, but they have only a two-fold algebraic degeneracy, hence, they are not in $\Sigma$.
    \end{itemize}
\end{rem}

\begin{cor}[The projection is generically one-to-one]\label{co:genonetoone} 
    Consider the set $\Upsilon$ of points $A \in \Sigma$ over which the projection $p|_{\widetilde{\Sigma}}: \widetilde{\Sigma} \to \Sigma$ is  an analytic isomorphism (biholomorphism) between smooth analytic set germs, that is: 
    \begin{itemize}
       \item $A$ has exactly one preimage $(A, \lambda)$,
       \item $(\Sigma, A)$ and $(\widetilde{\Sigma}, (A, \lambda))$ are non-singular germs,
       \item $p|_{\widetilde{\Sigma}}: (\widetilde{\Sigma}, (A, \lambda)) \to (\Sigma, A)$ is non-singular.
    \end{itemize}
     This set $\Upsilon \subset \Sigma$ is open and dense (with respect to the Euclidean topology).
\end{cor}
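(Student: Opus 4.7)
The plan is to deduce the corollary from Lemma~\ref{le:strictly2} and Corollary~\ref{co:twofoldsmooth} in two logically independent pieces: establish the inclusion $\Xi \subseteq \Upsilon$ to get density, and then argue that the three defining conditions of $\Upsilon$ are jointly stable under small perturbations in $\Sigma$ to get openness.

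For density, I would take $A \in \Xi$, so that $A$ has a unique geometric degenerate eigenvalue $\lambda$ which is strictly $2$-fold, and verify each of the three conditions in the definition of $\Upsilon$. Uniqueness of $\lambda$ immediately gives that $A$ has exactly one preimage $(A,\lambda) \in \widetilde{\Sigma}$. Strict $2$-foldness means $\mathrm{rk}(A-\lambda\mathds{1}) = n-2$, which by the characterization of non-singular points of $\widetilde{\Sigma}$ in Section~\ref{ss:matrvar} gives that $(\widetilde{\Sigma},(A,\lambda))$ is non-singular, and by Corollary~\ref{co:twofoldsmooth}(a) that the germ $(\Sigma,A;\lambda)$ is non-singular; but since $\lambda$ is the only geometric degenerate eigenvalue, Remark~\ref{re:everyirred} and Corollary~\ref{co:unionofcomp} give $(\Sigma,A;\lambda) = (\Sigma,A)$, so $(\Sigma,A)$ is non-singular too. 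Finally, Corollary~\ref{co:twofoldsmooth}(b) provides that $p|_{(\widetilde{\Sigma},(A,\lambda))}$ is a biholomorphism onto $(\Sigma,A)$, in particular non-singular. Hence $A \in \Upsilon$. Since $\Xi$ is dense in $\Sigma$ by Lemma~\ref{le:strictly2}, so is $\Upsilon$.

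For openness, let $A_0 \in \Upsilon$ with the unique preimage $(A_0,\lambda_0)$. Conditions (ii) and (iii) together imply that $p$ restricts to a local biholomorphism between open neighborhoods $W'$ of $(A_0,\lambda_0)$ in the smooth locus of $\widetilde{\Sigma}$ and $V'$ of $A_0$ in the smooth locus of $\Sigma$. By Remark~\ref{re:everyirred}, a sufficiently small neighborhood $U$ of $A_0$ in $\Sigma$ has preimage $(p|_{\widetilde{\Sigma}})^{-1}(U)$ decomposing as a disjoint union of open neighborhoods of the preimages of $A_0$; because $A_0$ has a single preimage, this decomposition consists of one component containing $(A_0,\lambda_0)$. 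Using finiteness (hence properness) of $p|_{\widetilde{\Sigma}}$, I shrink $V'$ so that $p^{-1}(V') \cap \widetilde{\Sigma} = W'$. Then every $A' \in V'$ has exactly one preimage, lying in $W'$; both germs remain non-singular throughout $V'$ and $W'$; and $p$ remains a local biholomorphism at every point of $W'$. Consequently $V' \subseteq \Upsilon$, so $\Upsilon$ is open.

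The main obstacle is the openness argument: one has to combine the infinitesimal information from conditions (ii)--(iii) (local biholomorphism at a single point) with the global cardinality condition (i) into a single open neighborhood on which all three conditions persist. This is where properness of the finite map $p|_{\widetilde{\Sigma}}$ is essential, ensuring that no additional preimages from distant parts of $\widetilde{\Sigma}$ can intrude into $V'$ under perturbation; the local biholomorphism then upgrades uniqueness of the preimage of $A_0$ to uniqueness of the preimage of every $A' \in V'$.
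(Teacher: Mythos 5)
Your proof is correct, and for the density half it takes essentially the paper's route: show $\Xi \subseteq \Upsilon$ using the strict $2$-fold hypothesis together with Corollary~\ref{co:twofoldsmooth} and Remark~\ref{re:everyirred}, then quote Lemma~\ref{le:strictly2}. Where you diverge, and to good effect, is in the openness argument. The paper's entire proof reads ``By the above arguments, $\Xi \subset \Upsilon$ and it is open and dense in $\Sigma$'' -- but $\Xi \subset \Upsilon$ with $\Xi$ open and dense only yields that $\Upsilon$ is \emph{dense}; it yields that $\Upsilon$ is \emph{open} only if one additionally knows $\Upsilon \subseteq \Xi$, i.e.\ that the unique $2$-fold geometric degeneracy at a point of $\Upsilon$ is automatically strictly $2$-fold (algebraic multiplicity also $2$), which neither the paper nor you establish. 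Your independent openness argument sidesteps this: from conditions (ii)--(iii) and the holomorphic inverse function theorem you get a local biholomorphism $W' \to V'$ near the unique preimage, and then finiteness (properness) of $p|_{\widetilde{\Sigma}}$ lets you shrink $V'$ so that $p^{-1}(V') \cap \widetilde{\Sigma} \subseteq W'$, forcing every point of $V'$ to satisfy all three conditions. This is a genuine improvement in rigor: it proves openness of $\Upsilon$ directly from the defining conditions without requiring the unproven inclusion $\Upsilon \subseteq \Xi$, at the modest cost of invoking properness of the finite projection one more time.
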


\begin{proof}
  By the above arguments,  $\Xi \subset \Upsilon $ and it is open and dense in $\Sigma$.
\end{proof}

We have similar characterization in the hermitian case, for the local components of $\Sigma_{\text{herm}}$. For a $k$-fold degenerate eigenvalue $\lambda_0$ of $A_0 \in \Sigma_{\text{herm}}$ we define the real analytic set germs $(\Sigma_{\text{herm}}, A_0;\lambda_0) \subset(\Sigma_{\text{herm}}, A_0)$ via the hermitian SW chart theorem, more precisely, the hermitian version of point (b) of Theorem~\ref{th:compsw}: $(\Sigma_{\text{herm}}, A_0;\lambda_0)$ is the image of $g(S, C, A_{\text{eff}})$ with $S, C$ arbitrary (in the given form), and $A_{\text{eff}} \in \Sigma_{\text{herm}}^{(k)}$. Hence, $(\Sigma_{\text{herm}}, A_0;\lambda_0)$ has the direct product structure in the same way as described in Corollary~\ref{co:trivfam} for the complex case. We use this definition from two reasons (1) in the hermitian case the SW is always defined in contrast of the complex case, where we assumed that $\lambda_0$ is strictly $k$-fold degenerate, (2)  using the projection $p|_{\widetilde{\Sigma}_{\text{herm}}}: \widetilde{\Sigma}_{\text{herm}} \to {\Sigma}_{\text{herm}}$, $p(A, \lambda)=A$ for the definition would require a detailed  analysis involving real algebraic geometry.

\begin{prop}\label{pr:hermcomp}
\begin{enumerate}[label=(\alph*)]
\item 
$(\Sigma_{\text{herm}}, A_0;\lambda_0)$ is the union of irreducible components of $(\Sigma_{\text{herm}}, A_0)$.
\item If $A_0$ has more then one degenerate eigenvalue, that is, in its label $\kappa=(k_1, \dots, k_l)$ more than one $k_i$ is bigger than 1, then the germ $(\Sigma_{\text{herm}}, A_0)$ of $\Sigma_{\text{herm}} $ at $A_0$ is reducible.
\item $(\Sigma_{\text{herm}}, A_0;\lambda_0)$ is smooth (non-singular) if and only if $\lambda_0$ is a two-fold degenerate eigenvalue of $A_0$.
\item The smooth  points of $\Sigma_{\text{herm}}$ are the degenerate matrices with exactly 2 coincident eigenvalues. 
$\Sigma_{\text{herm}}$ is singular at all other points.
\item If $\lambda_0$ is a two-fold degenerate eigenvalue of $A_0$, then the projection $p|_{\widetilde{\Sigma}_{\text{herm}}}: \widetilde{\Sigma}_{\text{herm}} \to {\Sigma}_{\text{herm}}$, $p(A, \lambda)=A$ is a local analytic diffeomorphism between the non-singular real analytic set germs $(\widetilde{\Sigma}_{\text{herm}}, (A_0, \lambda_0))$ and $(\Sigma_{\text{herm}}, A_0;\lambda_0)$.
\end{enumerate}
\end{prop}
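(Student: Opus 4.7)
\emph{Plan.} The idea is to reduce each of the five assertions to a local fact about $(\Sigma_{\text{herm}}^{(k)},0)$ using the hermitian Schrieffer--Wolff chart and the hermitian version of Corollary~\ref{co:trivfam}. That corollary supplies, for every degenerate eigenvalue $\lambda_i$ of $A_0$ of multiplicity $k_i$, a real analytic product decomposition
\[
(\Sigma_{\text{herm}}, A_0;\lambda_i) \;\cong\; (\Sigma_{\text{herm}}^{(k_i)},0)\times\R^{\delta_i}, \qquad \delta_i = n^2 - k_i^2,
\]
so the proof amounts to analysing the small factor.

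\emph{Decomposition and parts (a), (b).} First I would establish the set identity $(\Sigma_{\text{herm}}, A_0)=\bigcup_{i:\,k_i\ge 2}(\Sigma_{\text{herm}}, A_0;\lambda_i)$ by a cluster argument: by continuity of ordered eigenvalues \cite{kato2013perturbation}, for $A$ sufficiently close to $A_0$ the spectrum of $A$ stays in disjoint small clusters around the distinct values $\lambda_1,\dots,\lambda_r$, so any eigenvalue coincidence of $A\in\Sigma_{\text{herm}}$ must occur within one cluster and places $A$ in the corresponding $(\Sigma_{\text{herm}}, A_0;\lambda_i)$ via the SW chart centered at $\lambda_i$. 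By the Neumann--Wigner count $\dim_{\R}\Sigma_{\text{herm}}^{(k_i)}=k_i^2-3$, the product structure yields $\dim_{\R}(\Sigma_{\text{herm}}, A_0;\lambda_i)=n^2-3$ independent of $i$. Each piece is thus pure of the maximal dimension $n^2-3$, and is therefore a union of irreducible components of the ambient pure-dimensional germ $(\Sigma_{\text{herm}}, A_0)$, proving (a). For (b), the intersection $(\Sigma_{\text{herm}}, A_0;\lambda_i)\cap(\Sigma_{\text{herm}}, A_0;\lambda_j)$ with $\lambda_i\ne\lambda_j$ imposes two independent codimension-$3$ degeneracy conditions and hence has dimension at most $n^2-6$, so no component of $(\Sigma_{\text{herm}}, A_0)$ of dimension $n^2-3$ is shared between the two pieces, forcing at least two distinct components.

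\emph{Parts (c), (d), (e).} By the product decomposition, smoothness of $(\Sigma_{\text{herm}}, A_0;\lambda_0)$ is equivalent to smoothness of $(\Sigma_{\text{herm}}^{(k_0)},0)$. For $k_0=2$ the model $\Sigma_{\text{herm}}^{(2)}=\{\lambda\mathds{1}_2:\lambda\in\R\}$ is a smooth line. For $k_0\ge 3$ I would exploit that $\Sigma_{\text{herm}}^{(k_0)}$ is a pure-dimensional real cone: smoothness at $0$ would force the germ, and then by scaling the whole variety, to coincide with the tangent space $T_0\Sigma_{\text{herm}}^{(k_0)}$, a linear subspace; but $\Sigma_{\text{herm}}^{(k_0)}$ is not linear, as $A_1=\text{diag}(1,1,3,4,\dots,k_0)$ and $A_2=\text{diag}(-1,0,\dots,0)$ both lie in $\Sigma_{\text{herm}}^{(k_0)}$ while $A_1+A_2=\text{diag}(0,1,3,4,\dots,k_0)$ has simple spectrum. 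This proves (c); then (d) follows by combining (a), (b), (c): a matrix is smooth exactly when it lies in a unique piece that is itself smooth. For (e), at a strictly two-fold $\lambda_0$ the SW chart at $\lambda_0$ with $k=2$ parametrizes $(\Sigma_{\text{herm}}, A_0;\lambda_0)$ by $(S,C,\lambda)\in\R^{n^2-3}$ via $A=g(S,C,\lambda\mathds{1}_2)$, and the same parameters describe $(\widetilde{\Sigma}_{\text{herm}},(A_0,\lambda_0))$ via $(A,\lambda_0+\lambda)$; in these coordinates the projection $p(A,\mu)=A$ is the identity, so both germs are smooth and $p$ is a real analytic diffeomorphism.

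\emph{Main obstacle.} The most delicate step is the ``cone smooth at the apex $\Rightarrow$ linear'' implication used in (c) for $k_0\ge 3$. It requires pure-dimensionality of $\Sigma_{\text{herm}}^{(k_0)}$ to upgrade the scaling inclusion $V\subseteq T_0 V$ to an equality, and an explicit non-linearity witness inside $\Sigma_{\text{herm}}^{(k_0)}$. Pure-dimensionality can be imported from the fact that $\widetilde{\Sigma}_{\text{herm}}$ is a determinantal Cohen--Macaulay variety and that the projection $p|_{\widetilde{\Sigma}_{\text{herm}}}$ is finite, mirroring the analysis of Section~\ref{ss:matrvar}; the non-linearity witness is the explicit diagonal pair above.
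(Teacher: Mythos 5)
Your treatment of parts (a), (b), (d) and (e) runs parallel to the paper's proof: the SW product decomposition $(\Sigma_{\text{herm}},A_0;\lambda_0)\cong(\Sigma_{\text{herm}}^{(k_0)},0)\times\R^{\delta}$, a dimension count for (a) and (b), the unique-smooth-piece criterion for (d), and the compatibility of the SW parametrizations with the projection $p$ for (e). The paper is terser where you are explicit: for (a) it simply observes that $(\Sigma_{\text{herm}}, A_0;\lambda_0)$ is a real analytic subsetgerm of the same dimension, and for (e) it appeals to Corollary~\ref{co:twofoldsmooth}(b) by restricting the complex biholomorphism to the real loci, rather than re-parametrizing both germs by hand as you do. Your explicit cluster argument for the decomposition $(\Sigma_{\text{herm}}, A_0)=\bigcup_i(\Sigma_{\text{herm}}, A_0;\lambda_i)$ is something the paper leaves implicit (its Remark~\ref{re:everyirred} supplies the complex analogue); spelling it out is reasonable and correct.

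The genuine divergence is in (c), for $k_0\ge 3$. The paper's argument is topological: it cites Arnold's description of the link of $(\Sigma_{\text{herm}}^{(k_0)},0)$ as a union of complex projective spaces, hence not a sphere, hence the apex is singular. You instead argue algebraically: a real analytic cone that is smooth at its apex must coincide with its tangent space and therefore be a linear subspace, while $\Sigma_{\text{herm}}^{(k_0)}$ fails linearity, witnessed by $A_1=\mathrm{diag}(1,1,3,\dots,k_0)$ and $A_2=\mathrm{diag}(-1,0,\dots,0)$ with $A_1+A_2$ of simple spectrum. This is a valid alternative and arguably more elementary and self-contained — no appeal to the link's known topology. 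One small overcautious point: you flag pure-dimensionality as needed to upgrade $V\subseteq T_0V$ to equality, but in the smooth case the implication is automatic and cheaper. Once smoothness at $0$ is assumed, write the germ near $0$ as a graph $y=\phi(x)$ over $T_0V$ with $\phi(0)=0$ and $d\phi_0=0$; cone invariance gives $\phi(tx)=t\,\phi(x)$, which is incompatible with $\phi(tx)=O(t^2)$ unless $\phi\equiv0$. So the cone-linearity step needs only the smoothness hypothesis it is trying to refute, not a separate pure-dimensionality input. Both routes to (c) are sound; yours trades the citation of \cite{ArnoldSelMath1995} for a short explicit computation.
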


\begin{proof}
(a) $(\Sigma_{\text{herm}}, A_0;\lambda_0) \subset(\Sigma_{\text{herm}}, A_0)$ is the union of irreducible components, since it is a
real analytic subsetgerm of the same dimension. (b) follows directly.

(c) For a two-fold degenerate eigenvalue $\lambda_0$ of $A_0$, $(\Sigma_{\text{herm}}, A_0;\lambda_0)$ is smooth, indeed, the proof of point (a) of Corollary~\ref{co:twofoldsmooth} can be repeated. If the multiplicity of $\lambda_0$ is higher than 2, the link of $(\Sigma, A_0)$ (that is, its intersection with a small sphere around $A_0$ in $\text{Herm}(n)$) is not a sphere, but it is a union of complex projective spaces, see e.g. \cite{ArnoldSelMath1995}, hence $(\Sigma, A_0)$ is singular. This holds also for $(\Sigma_{\text{herm}}, A_0;\lambda_0)$, if the multiplicity $k$ of $\lambda_0$ is higher than 2, indeed, this germ is a trivial deformation of $(\Sigma^{(k)}_{\text{herm}}, 0)$, which is singular by the above argument. This also proves (d).

(e) Follows from point (b) of Corollary~\ref{co:twofoldsmooth}, since the hermition projection is this case is a restriction of a biholomorphism (the complex projection) to real analytic submanifolds. \end{proof}

\section{Multiplicity of the geometric degeneracy set and map germs}\label{s:proofs}

We give formulas for the number of `complex Weyl points' ($\Sigma$-points) of perturbations of holomorphic map germs $f: (\C^3, 0) \to (\C^{n \times n}, A_0)$, $A_0 \in \Sigma$. A general formula is provided in Section~\ref{ss:proof-pullback}. In case of $A_0$ with a strictly $k$-fold  eigenvalue $\lambda_0$ we give a numerical formula in terms of $k$, see Section~\ref{ss:proof-multi}.  These formulas are based on the reduction of $\Sigma$ to $\Sigma'$ via the lifting $p|_{\widetilde{\Sigma}}: \widetilde{\Sigma} \to \Sigma$, and the complex SW chart we introduced. Similar analysis is done for the symmetric and diagonal cases. 

Although the multiplicity of $(\Sigma_{\bullet}',0)$ follows from  more general formulas for determinantal varieties, we present here a direct computation based on free resolutions in the symmetric and general cases, see Section~\ref{ss:proofdiag}--\ref{ss:proofsymm2}. In Section~\ref{ss:proof-notcohmac} we show that $\Sigma$ is not Cohen--Macaulay. 

Throughout this chapter we consider intersections of complex analytic set germs and (small perturbations of) of holomorphic map germs. This requires to take representatives of the germs, wich we do not distinguish from the germ in notation. The dimension of an algebra always means here its dimension as a vector space over the base field, which is the complex field $\C$ in this section. In Appendix~\ref{a:app} we collected the wildly used concepts from local analytic and algebraic geometry.

\subsection{Number of complex Weyl points}\label{ss:proof-pullback}
Here we give an algebraic formula (in terms of the codimension of a suitable ideal) for the number of complex Weyl points, which works for any $A_0$, not only for those with a strictly $k$-fold eigenvalues.

Recall that in the space $\C^{n \times n}$ of complex matrices of type $n \times n $, $\Sigma=\Sigma^{(n)} \subset \C^{n \times n}$ denote the set of matrices having at least one eigenvalue $\lambda_0$ with geometric multiplicity at least 2, that is, the dimension of the eigenspace corresponding to $\lambda_0$ is at least 2. $\Sigma$ is an algebraic variety, see Section~\ref{s:tools}. Let $A_0 \in \Sigma$ be a matrix. For a fixed eigenvalue $\lambda_0$ of $A_0$ with geometric multiplicity at least 2,  $(\Sigma, A_0; \lambda_0)$ denotes the union of components of the analytic set germ $(\Sigma, A_0)$ corresponding to the eigenvalue $\lambda_0$, see Corollaries \ref{co:unionofcomp} and \ref{cor:irred}.

Let $f: (\C^3, 0) \to (\C^{n \times n}, A_0)$ be a holomorphic map germ with $f(0)=A_0 \in \Sigma$. Assume that $f$ is isolated with respect to $(\Sigma, A_0; \lambda_0)$, that is, $f^{-1}(\Sigma, A_0; \lambda_0) =\{0\}$ holds for a sufficiently small representative of $f$.

Let $f_t$ be a perturbation of $f$ generic (transverse) with respect to $(\Sigma, A_0; \lambda_0)$, that is, it intersects $(\Sigma, A_0; \lambda_0)$ transversely at non-singular points. Then the elements of $ f_t^{-1}(\Sigma, A_0; \lambda_0)$ can be interpreted as the complex Weyl points (i.e. generic degeneracy points) of $f_t$ \emph{born from the the degeneracy corresponding to $\lambda_0$.}

\begin{thm}\label{th:pullback}
     
In the above setup,
    \begin{equation}\label{eq:pullback}
        \sharp f_t^{-1}(\Sigma, A_0; \lambda_0)=\dim \frac{\mathcal{O}_4}{J},
    \end{equation}
    where $J \subset \mathcal{O}_4$ is the ideal generated by the $(n-1) \times (n-1) $ minors \begin{equation}
        M_{ij}(f(x)-(\lambda +\lambda_0) \mathds{1})
    \end{equation}
    of $f(x)-(\lambda +\lambda_0)\mathds{1}$.

   In particular, the number of complex Weyl points $\sharp f_t^{-1}(\Sigma, A_0; \lambda_0)$ born from $\lambda_0$ is the same for every generic perturbation $f_t$ of $f$. 
\end{thm}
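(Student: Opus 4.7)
The plan is to reduce the intersection problem with $(\Sigma, A_0; \lambda_0)$ -- which is not Cohen--Macaulay and therefore awkward to handle via pullback ideals -- to an intersection problem with $(\widetilde{\Sigma}, (A_0, \lambda_0))$, which \emph{is} Cohen--Macaulay by the determinantal description given in Section~\ref{ss:matrvar}. The bridge is the projection $p|_{\widetilde{\Sigma}}: \widetilde{\Sigma} \to \Sigma$ of Proposition~\ref{pr:projectiongen1to1} and Corollary~\ref{co:genonetoone}, which is finite and generically one-to-one.

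First, I would introduce the lifted map germ
\begin{equation*}
\tilde{f}: (\C^3 \times \C, 0) \to (\C^{n\times n} \times \C, (A_0, \lambda_0)), \qquad \tilde{f}(x, \lambda) = \bigl(f(x),\, \lambda + \lambda_0\bigr),
\end{equation*}
and observe that, by the defining equations of $\widetilde{\Sigma}$, the pullback ideal $\tilde{f}^{*} I(\widetilde{\Sigma}) \cdot \mathcal{O}_4$ is precisely the ideal $J$ generated by the $(n-1)\times (n-1)$ minors $M_{ij}(f(x) - (\lambda+\lambda_0)\mathds{1})$. Thus $\tilde{f}^{-1}(\widetilde{\Sigma}, (A_0, \lambda_0)) = V(J)$ as a set germ at $0 \in \C^4$.

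Next I would check that the hypothesis $f^{-1}(\Sigma, A_0; \lambda_0) = \{0\}$ is equivalent to $\tilde{f}^{-1}(\widetilde{\Sigma}, (A_0, \lambda_0)) = \{0\}$. This is a short argument using the definition of $(\Sigma, A_0; \lambda_0)$ as the branch of the projection $p|_{\widetilde{\Sigma}}$: a zero $(x, \lambda)$ of the lifted problem sits in the fibre over $f(x)$ with $\lambda + \lambda_0$ a geometrically degenerate eigenvalue near $\lambda_0$, so $f(x) \in (\Sigma, A_0; \lambda_0)$; conversely, every such $x$ admits at least one lift $\lambda$. Dimensions match ($\dim \C^4 = 4$, $\mathrm{codim}\,\widetilde{\Sigma} = 4$), so the lifted problem has isolated intersection.

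Now I would apply Proposition~\ref{pr:everypert} to $\tilde{f}$ and $(\widetilde{\Sigma}, (A_0, \lambda_0))$: since $\widetilde{\Sigma}$ is Cohen--Macaulay, every generic perturbation $\tilde{f}_s$ of $\tilde{f}$ (in particular $\tilde{f}_t(x,\lambda) := (f_t(x), \lambda + \lambda_0)$ for the prescribed perturbation $f_t$ of $f$) satisfies
\begin{equation*}
\sharp \tilde{f}_t^{-1}\bigl(\widetilde{\Sigma}, (A_0, \lambda_0)\bigr) = \dim_{\C} \frac{\mathcal{O}_4}{J}.
\end{equation*}
The final step is to translate this count back to $\Sigma$: for a perturbation $f_t$ generic with respect to $(\Sigma, A_0; \lambda_0)$, each intersection point $p \in f_t^{-1}(\Sigma, A_0; \lambda_0)$ is, by Lemma~\ref{le:strictly2} and Corollary~\ref{co:genonetoone}, a strictly $2$-fold degenerate matrix with a unique geometrically degenerate eigenvalue near $\lambda_0$. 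Hence the fibre of $p|_{\widetilde{\Sigma}}$ over $f_t(p)$ is a single point, and $(x, \lambda) \mapsto x$ gives a bijection between $\tilde{f}_t^{-1}(\widetilde{\Sigma}, (A_0, \lambda_0))$ and $f_t^{-1}(\Sigma, A_0; \lambda_0)$, yielding~\eqref{eq:pullback}.

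The main obstacle I foresee is the last step: verifying carefully that a perturbation $f_t$ which is generic with respect to $(\Sigma, A_0; \lambda_0)$ induces a lifted perturbation $\tilde{f}_t$ which is also generic with respect to $(\widetilde{\Sigma}, (A_0, \lambda_0))$, and that the bijection between preimages really holds at the germ level -- not merely set-theoretically, but with matching multiplicities. The non-injectivity locus of $p|_{\widetilde{\Sigma}}$ is where this could fail, but genericity of $f_t$ (combined with the openness and density of the locus $\Upsilon$ in Corollary~\ref{co:genonetoone}) pushes all intersection points of $f_t$ into the region where $p|_{\widetilde{\Sigma}}$ is a local biholomorphism, so the bijection is measure-preserving in the sense of intersection counts.
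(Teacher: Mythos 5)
Your proposal is correct and follows essentially the same route as the paper's own proof: lift $f$ to $\tilde f$, identify $J$ as the pullback of $I(\widetilde\Sigma, (A_0,\lambda_0))$, invoke Proposition~\ref{pr:everypert} using that $\widetilde\Sigma$ is Cohen--Macaulay, and transfer the count back along the finite generically one-to-one projection $p|_{\widetilde\Sigma}$. You merely make explicit two points the paper leaves compressed — the equivalence of the isolatedness hypotheses for $f$ and $\tilde f$, and the justification via Lemma~\ref{le:strictly2}/Corollary~\ref{co:genonetoone} that a generic $f_t$ meets $\Sigma$ only where the projection is a local biholomorphism.
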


\begin{proof}
Define the map germ \begin{eqnarray}
    \widetilde{f}: (\C^{3} \times \C, (0,0)) &\to& ( \C^{n \times n} \times \C, (A_0,\lambda_0)) \\
   \widetilde{f}(x, \lambda) & =& (f(x), \lambda+\lambda_0 ).
\end{eqnarray}

 A perturbation $f_t$ of $f$ induces a perturbation $\widetilde{f}_t$ of $\widetilde{f}$ as 
 \begin{equation}
     \widetilde{f}_t(x, \lambda):=(f_t(x),\lambda + \lambda_0 ).
 \end{equation}

  Since $f_t$ is generic with respect to $(\Sigma, A_0; \lambda_0)$, it implies that $\widetilde{f}_t$ is generic with respect to $(\widetilde{\Sigma}, (A_0, \lambda_0))$.  This is because the projection \begin{eqnarray}
      p|_{(\widetilde{\Sigma}, (A_0, \lambda_0))}: (\widetilde{\Sigma}, (A_0, \lambda_0)) &\to& (\Sigma, A_0; \lambda_0) \\
      p(A, \lambda)&=&A
  \end{eqnarray}
   generically 1-to-1, see Corollary~\ref{co:genonetoone}. This implies that $\sharp \widetilde{f}_t^{-1}(\widetilde{\Sigma}, (A_0, \lambda_0))=\sharp f_t^{-1}(\Sigma, A_0; \lambda_0) $.

Observe that  $J=\widetilde{f}^{*}(I(\widetilde{\Sigma}, (A_0, \lambda_0))) \cdot \mathcal{O}_4$, indeed, the the minors $M_{ij}(A-\lambda \mathds{1})$ generate the vanishing ideal $I(\widetilde{\Sigma}, (A_0, \lambda_0))$ of $(\widetilde{\Sigma}, (A_0, \lambda_0)) \subset (\C^{n \times n+1}, (A_0, \lambda_0))$. Since $(\widetilde{\Sigma}, (A_0, \lambda_0))$ is Cohen--Macaulay,  by Proposition~\ref{pr:everypert} we have
\begin{equation}
\sharp \widetilde{f}_t^{-1}(\widetilde{\Sigma}, (A_0, \lambda_0))= \dim \frac{\mathcal{O}_4}{\widetilde{f}^{*}(I(\widetilde{\Sigma}, (A_0, \lambda_0))) \cdot \mathcal{O}_4}=\dim \frac{\mathcal{O}_4}{J},
\end{equation}
proving the theorem.
\end{proof}

\begin{rem}
Although the proof of Theorem~\ref{th:pullback} is based on the general theory explained in Section~\ref{ss:intsect}, we emphasize that the ideal $J$ used in the theorem is not the pull-back of the vanishing ideal of $(\Sigma, A_0; \lambda_0)$. Indeed, the zero locus of the minors  $M_{ij}(A)$ is $\Sigma'=\Sigma'^{(n)} \subset \C^{n \times n}$, that is, the set of the matrices $A$ of corank at least 2, see Section~\ref{ss:matrvar}. Hence, with the notation $f'(x,\lambda)=f(x)-(\lambda +\lambda_0)\mathds{1}$ we have that 
\begin{equation}
    J=f'^{*}(I(\Sigma', A_0-\lambda_0 \mathds{1})) \cdot \mathcal{O}_4=\widetilde{f}^{*}(I{(\widetilde{\Sigma}, (A_0, \lambda_0))}) \cdot \mathcal{O}_4.
\end{equation}
We will see in Section~\ref{ss:proof-notcohmac} that the pull-back ideal $f^*(I(\Sigma, A_0; \lambda_0)) \cdot \mathcal{O}_3 \subset \mathcal{O}_3$ cannot be used directly to compute the number $\sharp f_t^{-1}(\Sigma, A_0; \lambda_0)$, because $\Sigma$ is not Cohen--Macaulay, see Theorem~\ref{co:notcohmac} and \ref{th:notwork00}.
\end{rem}

\begin{remark}\label{re:total}
  We get the total number of complex Weyl points $\sharp f_t^{-1}(\Sigma, A_0) $ born from the degeneracy at $A_0$ as the sum of the right hand sides of Equation~\eqref{eq:pullback} for every eigenvalue $\lambda_0$ of $A_0$ with geometric multiplicity at least 2. Indeed, the components of $(\Sigma, A_0) $ correspond to these eigenvalues by Remark~\ref{re:everyirred}.
\end{remark}

\begin{rem}
    The finiteness of $\dim (\mathcal{O}_4/J)$ is equivalent with the fact that $f$ is isolated with respect to $(\Sigma, A_0, \lambda_0)$. Indeed, both are equivalent with the fact that the vanishing locus $V(J) \subset (\C^4, 0)$ of $J$ is only one point (the origin). See Example~\ref{ex:onepoint}, cf. \cite[Theorem D.5.]{MondBook}.
\end{rem}

The analogues of Theorem~\ref{th:pullback} for symmetric and diagonal matrix families can be formulated and proved similarly. In these cases, $f: (\C^2, 0) \to (\text{Symm}_{\C}(n), A_0)$, $A_0 \in \Sigma_{\text{sym}}$ and $f: (\C, 0) \to \text{Diag}_{\C}(n)$, $A_0 \in \Sigma_{\text{diag}}$, respectively,  and the perturbation $f_t$ is considered in the same target space. Indeed,  the dimension of the source should be equal to the codimension of $\Sigma_{\bullet} $ to obtain isolated intersection points generically. The definition of the ideal $J$ is literally the same as in Theorem~\ref{th:pullback}, but it is an ideal in $\mathcal{O}_3$ in the symmetric case and in $\mathcal{O}_2$ in the diagonal case.

\subsection[Multiplicity at strictly k-fold degeneracy points]{Multiplicity at strictly $k$-fold degeneracy points}\label{ss:proof-multi}

Here we restrict the setup of Section~\ref{ss:proof-pullback} to map germs $f: (\C^3, 0) \to (\C^{n \times n}, A_0)$, where $\lambda_0$ is a strictly $k$-fold eigenvalue of $A_0 \in \Sigma$ (that is, both the algebraic and geometric multiplicity of $\lambda_0$ is $k$) and $f$ is \emph{generic} with respect to $(\Sigma, A_0; \lambda_0)$. This latter condition means that for a  perturbation $f_t$ of $f$ which is generic with respect $(\Sigma, A_0; \lambda_0)$, the number of preimages  $\sharp f_t^{-1}(\Sigma, A_0; \lambda_0)$ is equal to the multiplicity $\text{mult}(\Sigma, A_0; \lambda_0)$, see Appendix~\ref{ss:intsect}.

In this context, the main goal of this and the next subsections is to prove the Multiplicity Theorem~\ref{th:multipl}. In this subsection we give equivalent reformulations for this theorem, and the proofs are in the next subsections, separately for the three cases. 

\begin{thm}[Multiplicity Theorem]\label{th:multipl}
   Let $A_0 \in \Sigma^{(n)}$ be a matrix with a strictly $k$-fold degenerate eigenvalue $\lambda_0$. Let $B_0 \in \Sigma'^{(n)}$ be a strictly corank $k$ matrix.  Then,
    \begin{enumerate}
        \item \begin{equation}
            \text{mult}(\Sigma^{(n)}, A_0; \lambda_0)=
            \text{mult}(\widetilde{\Sigma}^{(n)}, (A_0, \lambda_0))=
            \text{mult}{(\Sigma'^{(n)}, B_0)}=\frac{k^2(k^2-1)}{12}.
        \end{equation}
        \item If furthermore $A_0$ and $B_0$ are complex symmetric matrices, then
        \begin{equation}
            \text{mult}(\Sigma_{\text{sym}}^{(n)}, A_0; \lambda_0)=
            \text{mult}(\widetilde{\Sigma}^{(n)}_{\text{sym}}, (A_0, \lambda_0))=
            \text{mult}{(\Sigma'^{(n)}_{\text{sym}}, B_0)}= \frac{k(k^2-1)}{6}.
            \end{equation}
            \item If furthermore $B_0$ and $A_0$ are complex diagonal matrices, then
        \begin{equation}
            \text{mult}(\Sigma_{\text{diag}}^{(n)}, A_0; \lambda_0)=
            \text{mult}(\widetilde{\Sigma}^{(n)}_{\text{diag}}, (A_0, \lambda_0))=
            \text{mult}{(\Sigma'^{(n)}_{\text{diag}}, B_0)}=\frac{k (k-1)}{2}.
        \end{equation}
    \end{enumerate}
    In particular, the above multiplicities do not depend on the size $n \times n$ of the ambient matrix spaces.
\end{thm}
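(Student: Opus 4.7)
The plan is to reduce each of the three parts of the theorem to computing the single number $\text{mult}(\Sigma'^{(k)}_\bullet, 0)$, where $\bullet \in \{\text{empty}, \text{sym}, \text{diag}\}$, and then extract that number from a Hilbert series. First, Corollary~\ref{co:trivfam} provides the analytic isomorphisms
\begin{equation*}
(\Sigma^{(n)}_\bullet, A_0; \lambda_0) \cong (\Sigma^{(k)}_\bullet, 0) \times (\C^{\delta(\bullet)}, 0),
\end{equation*}
and analogously for the $\widetilde{\Sigma}$ and $\Sigma'$ germs. Because multiplicity of a pure-dimensional germ is invariant under taking a trivial product with a smooth factor (a transverse linear slice for the product can be chosen as a transverse slice for the factor crossed with $\{0\}$), all three multiplicities appearing in each part of the theorem are reduced to their $k \times k$, origin-based counterparts in one stroke.

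Second, I identify the three $k \times k$ multiplicities with each other. The direct isomorphism $(\widetilde{\Sigma}^{(k)}, (0,0)) \cong (\Sigma'^{(k)} \times \C, 0)$ from Equation~\eqref{eq:isomtildeprime} gives $\text{mult}(\widetilde{\Sigma}^{(k)}, (0,0)) = \text{mult}(\Sigma'^{(k)}, 0)$ immediately, and the identity $\text{mult}(\Sigma^{(k)}, 0) = \text{mult}(\Sigma'^{(k)}, 0)$ is the content of Proposition~\ref{pr:primetilde}, to be proved via the finite, generically one-to-one projection $p|_{\widetilde{\Sigma}}$ whose set-theoretic fiber over $0 \in \Sigma^{(k)}$ is the single point $(0,0)$, since $\lambda = 0$ is the unique degenerate eigenvalue of the zero matrix. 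The same chain of reductions applies verbatim to the symmetric and diagonal settings, using the corresponding SW charts.

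Third, I compute $\text{mult}(\Sigma'^{(k)}_\bullet, 0)$ from the Hilbert series. Each of these germs is homogeneous, so its multiplicity at the origin equals the degree of the associated projective variety, which can be read off as the value at $t = 1$ of $(1 - t)^{d} H(t)$, where $d$ is the Krull dimension and $H(t)$ the Hilbert series of the homogeneous coordinate ring. For the general case I will build $H(t)$ from the Gulliksen--Neg{\aa}rd free resolution of $\C[A] / I_{k-1}$, an explicit finite-length resolution by graded free modules tailored to the $(k-1) \times (k-1)$ minors of a generic $k \times k$ matrix; the symmetric case uses an analogous resolution for symmetric minors. The diagonal case is essentially trivial, since $(\Sigma'^{(k)}_{\text{diag}}, 0)$ is the union of the $\binom{k}{2}$ coordinate hyperplanes $\{\lambda_i = \lambda_j\}$, giving multiplicity $\binom{k}{2} = k(k-1)/2$ by direct inspection.

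The main obstacle will be the bookkeeping in the Gulliksen--Neg{\aa}rd resolution (and its symmetric counterpart) needed to simplify $(1-t)^{d} H(t)|_{t=1}$ into the closed forms $k^2(k^2-1)/12$ and $k(k^2-1)/6$; this requires tracking the graded Betti numbers carefully and performing a nontrivial polynomial cancellation. A shortcut would be to invoke the Harris--Tu and Herzog--Trung formulas \cite{HarrisTu,HerzogTrung} for degrees of generic and symmetric determinantal varieties, but a direct resolution-based computation is preferable here for self-containedness and because it exhibits the factorizations transparently; it will be carried out case by case in the subsections that follow.
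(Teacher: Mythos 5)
Your proposal follows essentially the same route as the paper: reduce to the germ at the origin in $\C^{k \times k}$ via the Schrieffer--Wolff trivial deformation (Corollary~\ref{co:trivfam}/\ref{co:effmodmult}), identify $\text{mult}(\Sigma,0)$, $\text{mult}(\widetilde{\Sigma},(0,0))$, and $\text{mult}(\Sigma',0)$ via the product structure and the generically one-to-one projection (Proposition~\ref{pr:primetilde}), and then compute the last multiplicity from the Gulliksen--Neg{\aa}rd (resp.\ J\'{o}zefiak, resp.\ direct) resolution. Your Hilbert-series formulation $(1-t)^d H(t)|_{t=1}$ is equivalent to the paper's explicit tabulation of the graded dimensions of $\mathcal{O}_4/I_{k-1}(f)$; both amount to reading the projective degree off the graded Betti numbers.

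One slip worth fixing: you describe $(\Sigma'^{(k)}_{\text{diag}},0)$ as the union of the $\binom{k}{2}$ hyperplanes $\{\lambda_i=\lambda_j\}$. That is $\Sigma_{\text{diag}}$, a codimension-1 set; the variety $\Sigma'_{\text{diag}}$ you actually want consists of diagonal matrices with at least two \emph{vanishing} entries, i.e.\ the union of the $\binom{k}{2}$ codimension-2 coordinate subspaces $\{a_i=a_j=0\}$. Both have multiplicity $\binom{k}{2}$ at the origin (as Proposition~\ref{pr:primetilde} requires), so your final number is correct, but the description you give is the wrong set.
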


Let $f: (\C^3, 0) \to (\C^{n \times n}, A_0)$ be a map germ as in Therorem~\ref{th:pullback}, that is, $f(0)=A_0 \in \Sigma$ and $f$ is isolated with respect to $(\Sigma, A_0; \lambda_0)$, that is, $f^{-1}(\Sigma, A_0; \lambda_0) =\{0\}$ holds for a sufficiently small representative of $f$. Assume furthermore that $\lambda_0$ is a strictly $k$-fold eigenvalue of $A_0$. 
Recall from Section~\ref{ss:homo} that such map germ $f$ is generic with respect to $(\Sigma, A_0; \lambda_0)$ if 
\begin{equation}
 \sharp f_t^{-1}(\Sigma, A_0; \lambda_0)=\text{mult}(\Sigma, A_0; \lambda_0)
\end{equation}
 holds for any perturbation of $f$ generic with respect to $(\Sigma, A_0; \lambda_0)$. 
Hence, we have the following consequence of point 1. of Theorem~\ref{th:multipl}.

\begin{cor}\label{co:generic}
     In the above set-up,
     \begin{equation}
           \sharp f_t^{-1}(\Sigma, A_0; \lambda_0)=\frac{k^2(k^2-1)}{12}.
     \end{equation}
\end{cor}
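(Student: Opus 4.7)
The corollary will follow immediately by combining the genericity hypothesis on $f$ with point~1 of the Multiplicity Theorem~\ref{th:multipl}, so no new argument is needed once the theorem is available. The plan is simply to unpack the definitions in the right order.

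First, I will invoke the genericity assumption on $f$ with respect to $(\Sigma, A_0; \lambda_0)$, as recalled from Section~\ref{ss:homo} just before the statement. By definition, genericity means that for every perturbation $f_t$ of $f$ that is itself generic (transverse) with respect to $(\Sigma, A_0; \lambda_0)$, the cardinality of the perturbed preimage equals the multiplicity of the analytic set germ at the chosen point:
\begin{equation}
\sharp f_t^{-1}(\Sigma, A_0; \lambda_0) \;=\; \text{mult}(\Sigma, A_0; \lambda_0).
\end{equation}
This step uses nothing beyond the notion of multiplicity of a holomorphic map germ with respect to an analytic set germ summarized in Appendix~\ref{ss:intsect}; the hypothesis $f^{-1}(\Sigma, A_0; \lambda_0) = \{0\}$ ensures that the left-hand side is finite and independent of the chosen small representative.

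Second, I will apply the first item of Theorem~\ref{th:multipl} to the right-hand side. Since $\lambda_0$ is a strictly $k$-fold eigenvalue of $A_0$, that theorem gives
\begin{equation}
\text{mult}(\Sigma^{(n)}, A_0; \lambda_0) \;=\; \frac{k^2(k^2-1)}{12},
\end{equation}
independently of the ambient matrix size $n$. Chaining the two equalities yields the claimed formula. Thus there is no obstacle in the corollary itself: the entire content has been pushed into Theorem~\ref{th:multipl}, whose proof (the computation of the multiplicity of $(\Sigma^{(n)}, A_0; \lambda_0)$ via reduction to $(\Sigma'^{(k)}, 0)$ using the complex Schrieffer--Wolff chart of Section~\ref{ss:compSW} and the Gulliksen--Neg{\aa}rd resolution of Section~\ref{ss:proofgencase}) is where the real work lies. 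The corollary itself is a one-line consequence, and I would present it in the paper as a brief remark rather than a full proof.
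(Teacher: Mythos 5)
Your proposal is correct and matches the paper's own treatment exactly: the paper presents the corollary precisely as the concatenation of the definition of a generic map germ with respect to $(\Sigma, A_0; \lambda_0)$ (namely $\sharp f_t^{-1}(\Sigma, A_0; \lambda_0)=\text{mult}(\Sigma, A_0; \lambda_0)$) and point~1 of Theorem~\ref{th:multipl}, with no further argument. The only cosmetic difference is that you spell out the chain in two displayed equations, while the paper folds the same two steps into the surrounding prose and offers no proof environment at all.
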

The analogue of this corollary holds for 2-parameter generic smmyetric and 1-parameter generic diagonal families, according to the codimension of $\Sigma_{\text{sym}} \subset \text{Symm}_{\C}(n)$ (which is 2) and of $\Sigma_{\text{diag}} \subset \text{Diag}_{\C}(n)$ (which is 1). 

If in addition $n=k$ and $A_0=0$ (hence, $\lambda_0=0$), $(\Sigma, 0;0)=(\Sigma, 0)$ is homogeneous. By Corollary~\ref{co:finitegeneric2}, a linear map $f: (\C^3, 0) \to (\C^{k \times k}, 0)$ is generic with respect to $(\Sigma, 0)$ if and only if it is isolated with respect to $(\Sigma, 0)$. Hence, we have the following consequence  of Theorem~\ref{th:multipl} (we formulate it in all the three cases because of its importance in the proof and in the applications).

\begin{cor}\label{co:kanegyzet_symm}
\begin{enumerate}
\item For a linear family $f: (\C^3, 0) \to (\C^{k \times k}, 0)$ isolated with respect to $(\Sigma, 0)$ and for a generic perturbation $f_t$ of $f$  which is generic with respect to $(\Sigma, 0)$, we have 
 \begin{equation}
           \sharp f_t^{-1}(\Sigma, 0)=\frac{k^2(k^2-1)}{12}.
     \end{equation} 
    \item For a linear family $f: (\C^2, 0) \to (\text{Symm}_{\C}(k), 0)$ isolated with respect to $(\Sigma_{\text{sym}}, 0)$ and for a generic perturbation $f_t$ of $f$ in $\text{Symm}_{\C}(k) $ which is generic with respect to $(\Sigma_{\text{sym}}, 0)$, we have 
 \begin{equation}
           \sharp f_t^{-1}(\Sigma_{\text{sym}}, 0)=\frac{k(k^2-1)}{6}.
     \end{equation}
    \item  For a linear family $f: (\C, 0) \to (\text{Diag}_{\C}(k), 0)$ isolated with respect to $(\Sigma_{\text{diag}}, 0)$ and for a generic perturbation $f_t$ of $f$ in $\text{Diag}_{\C}(k) $ which is generic with respect to $(\Sigma_{\text{diag}}, 0)$, we have 
 \begin{equation}
           \sharp f_t^{-1}(\Sigma_{\text{diag}}, 0)=\frac{k(k-1)}{2}.
     \end{equation}
\end{enumerate}
     \end{cor}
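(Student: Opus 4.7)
The plan is to deduce the corollary directly from the Multiplicity Theorem~\ref{th:multipl} by combining it with the homogeneity of the target germs. Since $A_0=0$ and correspondingly $\lambda_0=0$, the germs $(\Sigma^{(k)},0)$, $(\Sigma_{\text{sym}}^{(k)},0)$ and $(\Sigma_{\text{diag}}^{(k)},0)$ are all cones, that is, homogeneous analytic set germs: their defining ideals (generated by $(k-1)\times(k-1)$ minors in $\C^{k\times k}$, by their symmetric analogues in $\text{Symm}_{\C}(k)$, or by differences of diagonal entries in $\text{Diag}_{\C}(k)$) consist of homogeneous polynomials, hence each $\Sigma_{\bullet}^{(k)}$ is invariant under the $\C^{*}$-scaling action on the ambient matrix space.

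First I would invoke Corollary~\ref{co:finitegeneric2} (cited just above in the excerpt), which tells us that for a linear map germ into a homogeneous analytic set germ, being isolated with respect to the germ is equivalent to being generic with respect to it. Since each of the three linear families $f$ in the statement is by hypothesis isolated with respect to the relevant $(\Sigma_{\bullet}^{(k)},0)$, this equivalence yields that $f$ is also generic. By the definition of genericity used throughout Section~\ref{s:proofs} (via Appendix~\ref{ss:intsect}), this means that for every perturbation $f_{t}$ of $f$ which is transverse to $(\Sigma_{\bullet}^{(k)},0)$ one has
\begin{equation}
\sharp f_{t}^{-1}(\Sigma_{\bullet}^{(k)},0)=\text{mult}(\Sigma_{\bullet}^{(k)},0).
\end{equation}

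Finally I would read off the three multiplicities computed in Theorem~\ref{th:multipl}, applied at the strictly corank-$k$ matrix $B_0=0$ inside $\C^{k\times k}$, $\text{Symm}_{\C}(k)$, or $\text{Diag}_{\C}(k)$ respectively: these are $k^{2}(k^{2}-1)/12$, $k(k^{2}-1)/6$ and $k(k-1)/2$, matching the three items of the corollary verbatim. The substantive content has already been packaged in Theorem~\ref{th:multipl}, so the only step requiring any attention is checking that the homogeneity hypothesis of Corollary~\ref{co:finitegeneric2} really is met in all three cases; this is immediate from the form of the defining ideals, so I do not anticipate any serious obstacle here.
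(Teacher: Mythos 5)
Your proposal is correct and is essentially the argument the paper gives in the discussion immediately preceding the corollary: homogeneity of $(\Sigma_{\bullet}^{(k)},0)$ makes isolation equivalent to genericity for linear maps (Corollary~\ref{co:finitegeneric2}), and the multiplicities are then read off from Theorem~\ref{th:multipl}. One small slip worth flagging: the vanishing ideal of $\Sigma^{(k)}$ is \emph{not} generated by the $(k-1)\times(k-1)$ minors — that ideal $I_{k-1}$ cuts out $\Sigma'^{(k)}$, the corank-$\geq 2$ matrices — and in the diagonal case the vanishing ideal is generated by the single discriminant $\prod_{i<j}(a_i-a_j)$ rather than by the differences $a_i-a_j$; the homogeneity you need is nonetheless correct, since it follows directly from the fact that $A\in\Sigma_{\bullet}^{(k)}$ implies $cA\in\Sigma_{\bullet}^{(k)}$ for all $c\in\C$.
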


     Of course, the analogue of Corollaries \ref{co:generic} and \ref{co:kanegyzet_symm} can be formulated for   $\Sigma'_{\bullet}$ and $\widetilde{\Sigma}_{\bullet}$ in context of map germs into their ambient spaces, where  $\bullet$ denotes each of the lower indices $\text{diag}$, $\text{sym}$ or empty (general case).

     However, the Multiplicity Theorem~\ref{th:multipl} can be reduced to the special case formulated Corollary~\ref{co:kanegyzet_symm} in two steps.  On the one hand, by Corollary~\ref{co:trivfam}, $(\Sigma_{\bullet}, A_0;\lambda_0)$, $(\widetilde{\Sigma}_{\bullet}, (A_0, \lambda_0))$ and $(\Sigma'_{\bullet}, B_0)$ are trivial deformations for an $A_0$ with a strictly $k$-fold degenerate eigenvalue $\lambda_0$, and for a strictly corank-2 matrix $B_0$, where  $\bullet$ denotes each of the lower indices $\text{diag}$, $\text{sym}$ or empty (general case). Therefore, the following multiplicities equal.

     \begin{cor}\label{co:effmodmult} In the above set up,
         \begin{eqnarray}
             \text{mult}(\Sigma_{\bullet}^{(n)}, A_0; \lambda_0)&=& \text{mult}(\Sigma_{\bullet}^{(k)}, 0), \\
            \text{mult}(\widetilde{\Sigma}_{\bullet}^{(n)}, (A_0, \lambda_0))&=& \text{mult}(\widetilde{\Sigma}_{\bullet}^{(k)}, (0,0)), \\
            \text{mult}{(\Sigma_{\bullet}'^{(n)}, B_0)}&=& \text{mult}{(\Sigma_{\bullet}'^{(k)}, 0)}
         \end{eqnarray}
     \end{cor}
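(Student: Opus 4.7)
The approach is to reduce each claimed equality to the statement that multiplicity is invariant under a trivial smooth product factor. By Corollary~\ref{co:trivfam} we already have analytic isomorphisms
\begin{equation}
(\Sigma_{\bullet}^{(n)}, A_0;\lambda_0) \cong (\Sigma_{\bullet}^{(k)} \times \C^{\delta(\bullet)}, 0), \qquad
(\widetilde{\Sigma}_{\bullet}^{(n)}, (A_0,\lambda_0)) \cong (\widetilde{\Sigma}_{\bullet}^{(k)} \times \C^{\delta(\bullet)}, (0,0)),
\end{equation}
and analogously for $(\Sigma'^{(n)}_{\bullet}, B_0)$. Since multiplicity is an intrinsic invariant of the analytic germ (equivalently, of its local ring, as recalled in Appendix~\ref{ss:intsect}), it is preserved under such biholomorphisms. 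Therefore it is enough to prove the general fact that for a pure-dimensional complex analytic germ $(X,0) \subset (\C^N, 0)$ of dimension $d$ and any $\delta \geq 0$,
\begin{equation}\label{eq:multprod}
\text{mult}(X \times \C^{\delta}, (0,0)) = \text{mult}(X, 0).
\end{equation}

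I would verify \eqref{eq:multprod} geometrically. For a pure-dimensional germ $(Y,0) \subset (\C^M, 0)$ of dimension $e$, the multiplicity equals the number of intersection points of $Y$ with a sufficiently generic affine subspace of codimension $e$ passing near but not through $0$ (this is the geometric reading of the Hilbert--Samuel multiplicity recalled in Appendix~\ref{ss:intsect}). Apply this to $Y = X \times \C^{\delta} \subset \C^{N} \times \C^{\delta}$ of dimension $d + \delta$, and use the product-form affine subspace $H = L \times \{c\}$, where $L \subset \C^N$ is a generic $(N-d)$-dimensional affine subspace close to $0$ and $c \in \C^{\delta}$ is a generic point close to $0$. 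Then
\begin{equation}
(X \times \C^{\delta}) \cap (L \times \{c\}) = (X \cap L) \times \{c\},
\end{equation}
so the number of intersection points equals $\sharp(X \cap L) = \text{mult}(X, 0)$.

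The only point requiring care is that the product-form slice $L \times \{c\}$ is indeed generic enough to compute the multiplicity of $X \times \C^{\delta}$. This follows because the relevant genericity is an open condition (non-empty intersection of the tangent cone with the affine subspace reduced to the origin), which is satisfied by an open dense family of product-form $H$ inside the appropriate Grassmannian of all affine subspaces. Alternatively, one may invoke the algebraic multiplicativity $e(R \otimes_{\C} S) = e(R) \cdot e(S)$ of Hilbert--Samuel multiplicities for tensor products of Noetherian local $\C$-algebras, together with $e(\mathcal{O}_{\C^{\delta},0}) = 1$. Either way, \eqref{eq:multprod} follows, and combined with Corollary~\ref{co:trivfam} this yields all three asserted equalities uniformly in the three cases $\bullet \in \{\text{empty}, \text{sym}, \text{diag}\}$. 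There is no serious obstacle here; the substantive content of Theorem~\ref{th:multipl} is concentrated in the $A_0 = 0$, $n = k$ computation carried out in the subsequent subsections, and Corollary~\ref{co:effmodmult} is precisely the bookkeeping that reduces the general case to that one.
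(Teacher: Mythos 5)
Your proof is correct and follows the same approach as the paper, which derives the corollary directly from the analytic product structure established in Corollary~\ref{co:trivfam}; the paper states the implication without further detail, whereas you explicitly supply (and justify) the underlying lemma that $\text{mult}(X \times \C^{\delta}, 0) = \text{mult}(X, 0)$ for a pure-dimensional germ $(X,0)$. Your geometric product-slice argument and the alternative appeal to multiplicativity of Hilbert--Samuel multiplicities are both standard and sound, so this is the intended reasoning, merely spelled out.
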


     On the other hand, we show that the right hand sides are equal to each other.

     \begin{prop}\label{pr:primetilde} In the above set up, for each fixed type  $\bullet=\{\}, \text{sym}, \text{diag}$ we have
         \begin{equation}
             \text{mult}(\Sigma_{\bullet}^{(k)}, 0)=\text{mult}(\widetilde{\Sigma}_{\bullet}^{(k)}, (0,0))=\text{mult}{(\Sigma_{\bullet}'^{(k)}, 0)}.
         \end{equation}
     \end{prop}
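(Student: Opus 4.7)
The plan splits the chain of equalities into two independent pieces. For the easier equality $\text{mult}(\widetilde{\Sigma}_\bullet^{(k)}, (0,0)) = \text{mult}(\Sigma_\bullet'^{(k)}, 0)$, I would invoke the trivial-family isomorphism
\[
(\widetilde{\Sigma}_\bullet^{(k)}, (0,0)) \cong (\Sigma_\bullet'^{(k)} \times \C, 0), \qquad (A, \lambda) \mapsto (A - \lambda \mathds{1}, \lambda),
\]
extracted from Equation~\eqref{eq:isomtildeprime} in the proof of Corollary~\ref{co:trivfam}, together with the invariance of multiplicity under product with $\C$. Both germs are cones at the origin, so their associated graded rings are again graded and their Hilbert series are related by $H_{\widetilde{\Sigma}_\bullet}(t) = H_{\Sigma_\bullet'}(t)\cdot\frac{1}{1-t}$. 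This raises the Krull dimension by one but preserves the numerator polynomial in the standard presentation $P(t)/(1-t)^d$; since the multiplicity equals $P(1)$, it is unchanged.

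For the first equality $\text{mult}(\Sigma_\bullet^{(k)}, 0) = \text{mult}(\widetilde{\Sigma}_\bullet^{(k)}, (0,0))$ the strategy is to find a single linear map germ whose perturbation counts see both multiplicities simultaneously, and to connect them through Theorem~\ref{th:pullback}. Let $d_\bullet \in \{3,2,1\}$ denote the codimension of $\Sigma_\bullet^{(k)}$ in its ambient space; take generic coefficient matrices $A_1,\dots,A_{d_\bullet}$ in that ambient space and set $f(x) = \sum_i x_i A_i$. A projective dimension count (the projectivization of $\Sigma_\bullet^{(k)}$ has codimension $d_\bullet$, so a generic projective $(d_\bullet{-}1)$-plane misses it) shows that for a generic choice the map germ $f$ is isolated with respect to $(\Sigma_\bullet^{(k)}, 0)$. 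Its lift $\tilde f(x, \lambda) = (f(x), \lambda)$ is then automatically isolated with respect to $(\widetilde{\Sigma}_\bullet^{(k)}, (0,0))$, since any zero $\tilde f(x, \lambda) \in \widetilde{\Sigma}_\bullet^{(k)}$ forces $f(x) \in \Sigma_\bullet^{(k)}$, hence $x = 0$, $f(x) = 0$, and finally $\lambda = 0$.

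Both germs $(\Sigma_\bullet^{(k)}, 0)$ and $(\widetilde{\Sigma}_\bullet^{(k)}, (0,0))$ are homogeneous cones, so Corollary~\ref{co:finitegeneric2} upgrades ``isolated'' to ``generic with respect to the germ'' for the linear maps $f$ and $\tilde f$. Consequently a generic perturbation of each realizes the multiplicity of the corresponding germ, and Theorem~\ref{th:pullback} (together with its symmetric and diagonal analogues), applied with $A_0 = 0$, $\lambda_0 = 0$, further identifies
\[
\sharp f_t^{-1}(\Sigma_\bullet^{(k)}, 0) = \sharp \tilde f_t^{-1}(\widetilde{\Sigma}_\bullet^{(k)}, (0,0))
\]
through the generically one-to-one projection $p|_{\widetilde{\Sigma}_\bullet}$. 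Chaining these three identifications closes the argument. I expect the only real obstacle to be conceptual rather than computational: one must resist the temptation to evaluate $\text{mult}(\Sigma_\bullet, 0)$ directly as $\dim(\mathcal{O}_{d_\bullet}/f^{\ast}I(\Sigma_\bullet))$ for a generic linear $f$, which fails because $\Sigma$ is not Cohen--Macaulay (Theorem~\ref{co:notcohmac}), and instead route every perturbation count through the Cohen--Macaulay lift $\widetilde{\Sigma}_\bullet$, invoking homogeneity to promote the isolated linear map to a generic one.
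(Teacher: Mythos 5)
Your proposal is correct, but it decomposes the chain of equalities differently from the paper. The paper proves all three equalities in one pass: fixing a linear $f$ isolated with respect to $(\Sigma_\bullet^{(k)},0)$, it simultaneously lifts $f$ to $\widetilde{f}(x,\lambda)=(f(x),\lambda)$ and to $f'(x,\lambda)=f(x)-\lambda\mathds{1}$, observes that isolatedness is equivalent for all three maps, upgrades all three to generic via Corollary~\ref{co:finitegeneric2}, and then uses the generically one-to-one projection to conclude that, for a generic perturbation $f_t$,
\begin{equation*}
\sharp f_t^{-1}(\Sigma_\bullet^{(k)},0)=\sharp \widetilde{f}_t^{-1}(\widetilde{\Sigma}_\bullet^{(k)},(0,0))=\sharp (f'_t)^{-1}(\Sigma'^{(k)}_\bullet,0),
\end{equation*}
which identifies all three multiplicities at once. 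You instead separate off the equality $\text{mult}(\widetilde{\Sigma}_\bullet^{(k)},(0,0))=\text{mult}(\Sigma'^{(k)}_\bullet,0)$ and handle it by the linear isomorphism $(A,\lambda)\mapsto(A-\lambda\mathds{1},\lambda)$ together with the Hilbert-series fact that multiplying by $\frac{1}{1-t}$ preserves the numerator polynomial and hence the degree $P(1)$. That is a valid standalone argument (multiplicity is indeed invariant under linear changes of coordinates and under the trivial cylinder $X\rightsquigarrow X\times\C$), and it has the advantage of being purely formal and independent of the perturbation machinery; in exchange it introduces the graded-ring language that the paper otherwise avoids in this proof. Your treatment of the remaining equality $\text{mult}(\Sigma_\bullet^{(k)},0)=\text{mult}(\widetilde{\Sigma}_\bullet^{(k)},(0,0))$ coincides with the paper's in substance. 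One small point to spell out more carefully: when you identify $\sharp f_t^{-1}(\Sigma_\bullet^{(k)},0)$ with $\sharp\widetilde{f}_t^{-1}(\widetilde{\Sigma}_\bullet^{(k)},(0,0))$ you need to choose the single perturbation $f_t$ so that $f_t$ is generic (transverse) with respect to $\Sigma_\bullet^{(k)}$ \emph{and} the induced $\widetilde{f}_t$ is simultaneously generic with respect to $\widetilde{\Sigma}_\bullet^{(k)}$; the paper notes explicitly that this simultaneous genericity is available precisely because the projection is generically one-to-one (Corollary~\ref{co:genonetoone}), whereas your proposal gestures at this but does not state it as the enabling step. Your closing remark about not attempting to compute $\text{mult}(\Sigma_\bullet^{(k)},0)$ directly from $\dim(\mathcal{O}_{d_\bullet}/f^*I(\Sigma_\bullet^{(k)}))$ because $\Sigma$ fails to be Cohen--Macaulay is exactly the right caution and is in harmony with Theorem~\ref{co:notcohmac} and Theorem~\ref{th:notwork00}.
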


     \begin{proof}
      We formulate the proof in the general case, and the symmetric and diagonal cases can be proved in the same way.
      
      Consider a linear map $f: (\C^3, 0) \to (\C^{k \times k}, 0)$ , and define 
      \begin{eqnarray}
          \widetilde{f}: (\C^3 \times \C, 0) &\to& (\C^{k \times k} \times \C, 0) \\
          \widetilde{f}(x, \lambda) &=& (f(x), \lambda)           
      \end{eqnarray}
      and 
       \begin{eqnarray}
    f': (\C^3 \times \C, 0) &\to& (\C^{k \times k} , 0) \\
          f'(x, \lambda) &=& (f(x)- \lambda \mathds{1}_k)           
      \end{eqnarray}
      Observe that the following are equivalent:
      \begin{enumerate}
          \item $f$ is isolated with respect to $(\Sigma, 0)$,
          \item $\widetilde{f}$ is isolated with respect to $(\widetilde{\Sigma}, (0,0))$, 
          \item $f'$ is isolated with respect to $(\Sigma',0)$.
      \end{enumerate} 
      Indeed, $f(x)-\lambda \mathds{1}_k \in \Sigma'$ holds for $(x, \lambda)$ close to $(0,0)$ if and only if  $(f(x), \lambda) \in \widetilde{\Sigma}$, if and only if $f(x) \in \Sigma$ and $\lambda$ is an eigenvalue of $f(x) $ with geometric multiplicity at least 2. That is, $\text{rk}(f(x)-\lambda \mathds{1}_k) \leq n-2$, equivalently, $f'(x) \in \Sigma'$. Hence, points 1--3. are equivalent with the fact that $f(x) \in \Sigma$ implies $x=0$ (if $x$ is close to 0).

   By Corollary~\ref{co:finitegeneric2}, a linear map is generic with respect to an analytic set germ if and only if it is isolated with respect to it. Hence, choosing an $f$ isolated with respect to $\Sigma$, $\text{mult}(\Sigma, 0)=\sharp f_t^{-1}(\Sigma, 0)$ holds for any perturbation $f_t$ of $f$ generic with respect to $\Sigma$. Such perturbation determine a perturbation $\widetilde{f}_t(x, \lambda)=(f_t(x), \lambda)$ of $\widetilde{f}$ and $f_t'(x, \lambda)=f_t(x)-\lambda \mathds{1}_k$ of $f'$. 

   Since by Corollary~\ref{co:genonetoone} the projection $\widetilde{\Sigma} \to \Sigma$ is generically 1-to-1, by a generic choice of $f_t$ it can be assumed that the perturbation $\widetilde{f}_t$ is generic with respect to $(\widetilde{\Sigma}, (0,0))$ and $f'_t$ is generic with respect to $(\Sigma', 0)$, hence,  
   \begin{equation}
      \sharp f_t^{-1}(\Sigma, 0)= \sharp \widetilde{f}_t^{-1} (\widetilde{\Sigma}, (0,0))= \sharp f'^{-1}_t(\Sigma', 0).
   \end{equation}
   Since (by definition) these terms are equal to $\text{mult}(\Sigma, 0)$, $\text{mult}(\widetilde{\Sigma}, (0,0))$ and $\text{mult}(\Sigma', 0)$, respectively, the proposition is proved.
     \end{proof}

Therefore, by Corollary~\ref{co:effmodmult} and Proposition~\ref{pr:primetilde}, to prove Theorem~\ref{th:multipl} it is enough to show that $\text{mult}(\Sigma'_{\bullet}, 0)$ is equal to $k^2(k^2-1)/12$ in the general case, $k(k^2-1)/6$ in the symmetric case and $k(k-1)/2$ in the diagonal case. But $(\Sigma'_{\bullet}, 0)$ is Cohen--Macaulay, hence, its multiplicity can be expressed by the dimension of a quotient algebra corresponding to a linear map, isolated with respect to $(\Sigma'_{\bullet}, 0)$. Namely, the following Reduced Multiplicity Theorem implies Multiplicity Theorem~\ref{th:multipl}.

\begin{thm}[Reduced Multiplicity Theorem]\label{th:multiprime}
\begin{enumerate}
\item For a linear family $f: (\C^4, 0) \to (\C^{k \times k}, 0)$ isolated with respect to $(\Sigma', 0)$ and for a generic perturbation $f_t$ of $f$   which is generic with respect to $(\Sigma', 0)$, we have 
 \begin{equation}
           \dim \frac{\mathcal{O}_4}{f^*(I_{k-1}) \cdot \mathcal{O}_4}=\frac{k^2(k^2-1)}{12},
     \end{equation} 
    \item For a linear family $f: (\C^3, 0) \to (\text{Symm}_{\C}(n), 0)$ isolated with respect to $(\Sigma'_{\text{sym}}, 0)$ and for a generic perturbation $f_t$ of $f$ in $\text{Symm}_{\C}(n) $ which is generic with respect to $(\Sigma'_{\text{sym}}, 0)$, we have 
 \begin{equation}
            \dim \frac{\mathcal{O}_3}{f^*(I_{k-1}) \cdot \mathcal{O}_3}=\frac{k(k^2-1)}{6}.
     \end{equation}
    \item  For a linear family $f: (\C^2, 0) \to (\text{Diag}_{\C}(n), 0)$ isolated with respect to $(\Sigma'_{\text{diag}}, 0)$ and for a generic perturbation $f_t$ of $f$ in $\text{Diag}_{\C}(n) $ which is generic with respect to $(\Sigma'_{\text{diag}}, 0)$, we have 
 \begin{equation}
           \dim \frac{\mathcal{O}_2}{f^*(I_{k-1}) \cdot \mathcal{O}_2}=\frac{k(k-1)}{2}.
     \end{equation}
\end{enumerate}
Here $I_{k-1}$ denotes the ideal generated by the $(k-1) \times (k-1)$ minors in the local algebras of each of the three matrix spaces $(\C^{n \times n}, 0)$, $(\text{Symm}_{\C}(n), 0)$ and $(\text{Diag}_{\C}(n), 0)$.
\end{thm}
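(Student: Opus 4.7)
The first step is to convert each of the three quotient dimensions into the multiplicity of $(\Sigma'^{(k)}_\bullet,0)$. Each ideal $I_{k-1}$ is homogeneous (its generators are minors of degree $k-1$), so $(\Sigma'^{(k)}_\bullet,0)$ is a homogeneous germ, and by Corollary~\ref{co:finitegeneric2} any linear $f$ isolated with respect to this germ is automatically generic with respect to it. Combined with the Cohen--Macaulayness of $\Sigma'^{(k)}_\bullet$ (noted in Section~\ref{ss:matrvar}) and Proposition~\ref{pr:everypert}, this yields
\begin{equation}
  \dim_{\C}\frac{\mathcal{O}_{d_\bullet}}{f^{*}(I_{k-1})\cdot\mathcal{O}_{d_\bullet}}
  \;=\; \mathrm{mult}(\Sigma'^{(k)}_\bullet,0),
\end{equation}
where $d_\bullet\in\{4,3,2\}$ in the general, symmetric, diagonal cases, matching $\mathrm{codim}\,\Sigma'^{(k)}_\bullet$ in the ambient matrix space. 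The theorem therefore reduces to evaluating these three multiplicities.

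\textbf{Hilbert series from a free resolution.} Since $(\Sigma'^{(k)}_\bullet,0)$ is the cone at the origin of a standard graded Cohen--Macaulay algebra $R_\bullet = \C[A_\bullet]/I_{k-1}$, its multiplicity equals $Q_\bullet(1)$, where $H_{R_\bullet}(t) = Q_\bullet(t)/(1-t)^{\dim R_\bullet}$ is the reduced Hilbert series. In the general case I would read $Q(t)$ off the Gulliksen--Neg{\aa}rd minimal free resolution
\begin{equation}
  0 \to R(-2k) \to R(-(k+1))^{k^2} \to R(-k)^{2(k^2-1)} \to R(-(k-1))^{k^2} \to R \to R/I_{k-1} \to 0,
\end{equation}
which produces the Hilbert numerator
\begin{equation}
  P(t) \;=\; 1 - k^2 t^{k-1} + 2(k^2-1)\,t^k - k^2 t^{k+1} + t^{2k}.
\end{equation}
Cohen--Macaulayness forces $P$ to vanish to order $4$ at $t=1$, so $Q(t) = P(t)/(1-t)^4$ and $Q(1) = P^{(4)}(1)/4!$; a routine Taylor expansion (or a direct substitution $t=1-u$ and reading off the coefficient of $u^{4}$) gives $Q(1) = k^{2}(k^{2}-1)/12$. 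The symmetric case is treated analogously using the length-$3$ J\'ozefiak resolution for the ideal of submaximal minors of a generic symmetric matrix, whose numerator vanishes to order~$3$ at $t=1$ and yields $Q_{\mathrm{sym}}(1) = k(k^{2}-1)/6$.

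\textbf{Diagonal case and main obstacle.} The diagonal case bypasses resolutions entirely: isolation forces the $k$ linear forms $L_{j}(x,y)$ constituting the components of $f$ to be pairwise non-proportional, and the generators $\prod_{j\neq i}L_{j}$ of $f^{*}(I_{k-1})$ are $k$ linearly independent elements of the $k$-dimensional space of degree-$(k-1)$ binary forms, so they span it; hence $f^{*}(I_{k-1})\cdot\mathcal{O}_{2} = \mathfrak{m}_{2}^{k-1}$ and the quotient dimension equals $1 + 2 + \cdots + (k-1) = k(k-1)/2$. The principal obstacle in the two non-diagonal cases is just writing down the correct minimal free resolution (the symmetric one requires more care than Gulliksen--Neg{\aa}rd) and then verifying the order-$c$ vanishing of the numerator at $t=1$; once those are in place the final arithmetic $Q_\bullet(1) = P_\bullet^{(c)}(1)/c!$ is mechanical and the three claimed closed forms fall out.
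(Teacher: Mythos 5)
Your proposal is correct and rests on the same two pillars as the paper's proof: the reduction to the multiplicity of the homogeneous Cohen--Macaulay germ $(\Sigma'^{(k)}_\bullet,0)$ via Corollary~\ref{co:finitegeneric2} and Proposition~\ref{pr:everypert}, and the Gulliksen--Neg{\aa}rd (resp.\ J\'ozefiak) free resolution as the computational engine. The packaging, however, differs. The paper pulls the resolution back along $f$ to a resolution over $\mathcal{O}_{d_\bullet}$, writes out the exact sequence of graded pieces degree by degree in a table, and sums the resulting alternating binomial expression directly; in the symmetric case (Section~\ref{ss:symmpart1}) it argues separately that the $\binom{k+1}{2}$ minors span $\C[x,y,z]_{k-1}$ once linear independence is established from J\'ozefiak exactness (Section~\ref{ss:proofsymm2}), which makes the quotient literally a truncated polynomial ring. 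You instead stay in the ambient graded ring $\C[A_\bullet]/I_{k-1}$, read off the Hilbert numerator $P_\bullet(t)$ from the resolution, and use the classical identity $\mathrm{mult} = Q_\bullet(1)$ where $P_\bullet(t)=Q_\bullet(t)(1-t)^{c}$ with $c$ the codimension. Expanding $P_\bullet(1-u)$ and extracting the coefficient of $u^c$ produces exactly the same alternating sum $\sum (-1)^i a_i \binom{m_i}{c}$ as the last row of the paper's table, so the arithmetic is identical and gives the three claimed closed forms. The Hilbert-series phrasing is tidier and makes the self-duality observation of Remark~\ref{re:cohmacgoren} more visible, while the paper's table makes the individual graded dimensions of $\mathcal{O}_4/I_{k-1}(f)$ explicit (used to identify the top nonvanishing degree $2k-4$).

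One small imprecision: you write that Cohen--Macaulayness forces $P_\bullet$ to vanish to order $c$ at $t=1$. The vanishing order of the Hilbert numerator at $t=1$ is always $N-\dim R_\bullet=\operatorname{codim}$, independent of Cohen--Macaulayness; what Cohen--Macaulayness buys here is rather the equality $\deg(\pi)=\mathrm{mult}(\pi)$ that justifies the first reduction step (Proposition~\ref{pr:everypert}) and, via Auslander--Buchsbaum, that the projective dimension matches the codimension so the resolution you wrote down is as short as possible. This is a cosmetic slip and does not affect the argument.
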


\begin{rem}
This reduction actually depends on the fact that the projection $(\widetilde{\Sigma}, (0,0)) \to (\Sigma, 0)$, $(A, \lambda) \mapsto A$ is `generic' in sense of the proof of Proposition~\ref{pr:primetilde}, that is, a generic linear map $f$ with respect to $\Sigma$ induces generic linear maps $\widetilde{f}$ with respect to $\widetilde{\Sigma}$ and $f'$ with respect to $\Sigma'$.
\end{rem}

 The general (1.) and symmetric (2.) cases of Reduced Multiplicity Theorem~\ref{th:multiprime} are special cases of more general results. Namely, formulas for the multiplicity are known for the vanishing locus of the $(n-r) \times (m-r)$ minors in $\C^{n \times m}$, see e.g. \cite[Proposition 12.]{HarrisTu} or \cite{HerzogTrung}.
 These formulas give the multiplicity of $(\Sigma'_{\bullet}, 0)$ for $n=m$ and $r=1$ (in \cite{HarrisTu} the symmetric case is also discussed).
 
 In the next subsections we provide an independent, more direct and  more elementary proof in our cases.
First we discuss the diagonal case from different perspectives, serving as an elementary illustration of these concepts.

\subsection{Proof of the diagonal case}\label{ss:proofdiag}

The diagonal case serves as a baby version to illustrate our methods. We prove point 3. of Multiplicity Theorem~\ref{th:multipl} in essentially two different ways. 

First we prove point 3. of Corollary~\ref{co:kanegyzet_symm}, that is, we compute directly the multiplicity of $(\Sigma_{\text{diag}}, 0)$. This implies point 3. of the Multiplicity Theorem~\ref{th:multipl} via Corollary~\ref{co:effmodmult} (about the multiplicity at other points) and Proposition~\ref{pr:primetilde} (about the equality of the multiplicities of $\Sigma_{\text{diag}}$, $\widetilde{\Sigma}_{\text{diag}}$ and $\Sigma'_{\text{diag}}$).

\begin{proof}[First proof of point 3. of Corollary~\ref{co:kanegyzet_symm}]

Introduce the simplified notation $A=\text{diag}(a_{ii})_i=(a_i)_i$ for the entries of a diagonal matrix $A \in \text{Diag}_{\C}(k) \cong \C^k$. Then 
\begin{equation}
    \Sigma_{\text{diag}}=\{ A \in \C^k \ | \ \exists i \neq j \ a_i=a_j \}=
    \{ A \in \C^k \ | \ \prod_{1 \leq i < j \leq k} (a_i-a_j)=0 \}.
\end{equation}
Hence, in the diagonal case, $\Sigma_{\text{diag}}$ is the hypersurface defined as the vanishing locus of the polynomial $F(A):=\prod_{1 \leq i < j \leq k} (a_i-a_j)$. In particular, $\Sigma_{\text{diag}}$ is Cohen--Macaulay.

Let $f: (\C, 0) \to (\C^k, 0)$ be a linear map, generic (isolated) with respect to $\Sigma_{\text{diag}}$. That is, $f(x)=(a_i  x)_i $, where $a_i \in \C$,  $a_i \neq a_i$ for $i \neq j$. A perturbation in special form $f_t(x)=(a_ix+b_i)_i$ ($b_i \in \C$) is generic with respect to $(\Sigma, 0)$ if and only if the solutions $x_{ij}$ of the equations $a_ix+b_i =a_jx+b_j$ for $i \neq j$ are pairwise different, that is, the lines intersect each other at different points. These solutions $x_{ij}$ are exactly the elements of the set 
\begin{equation}
    f_t^{-1}(\Sigma_{\text{diag}}, 0)=\{x \in \C \ | \ \prod_{1 \leq i < j \leq k} ((a_ix-b_i)-(a_jx-b_j))=0 \},
\end{equation}
 and their number is $\binom{k}{2}=k(k-1)/2$, hence, this is equal to $\text{mult}(\Sigma_{\text{diag}}, 0)$, by definition. This finishes the proof. \end{proof}

\begin{proof}[Second proof of point 3. of Corollary~\ref{co:kanegyzet_symm}]
 Alternatively, since $\Sigma_{\text{diag}}$ is Cohen--Macaulay, we can consider the pull-back 
 \begin{equation}
      f^*(F)(x)=F(f(x))=\prod_{1 \leq i < j \leq k} (a_ix-a_jx)=\prod_{1 \leq i < j \leq k} (a_i-a_j) x^{k(k-1)/2} \in \mathcal{O}_1.
 \end{equation}
 Then, 
 \begin{equation}
     \text{mult}(\Sigma_{\text{diag}}, 0)=\dim \frac{\mathcal{O}_1}{f^*(F) \cdot \mathcal{O}_1}=\binom{k}{2}=\frac{k(k-1)}{2}.
 \end{equation}

 \end{proof}

  Next we prove point 3. of the Reduced Multiplicity Theorem~\ref{th:multiprime}, that is, we compute the multiplicity of $(\Sigma'_{\text{diag}}, 0)$. This implies point 3. of the Multiplicity Theorem~\ref{th:multipl} via Corollary~\ref{co:effmodmult} (about the multiplicity at other points) and Proposition~\ref{pr:primetilde} (about the equality of the multiplicities of $\Sigma_{\text{diag}}$, $\widetilde{\Sigma}_{\text{diag}}$ and $\Sigma'_{\text{diag}}$).

\begin{proof}[Proof of point 3. of the Reduced Multiplicity Theorem~\ref{th:multiprime}]  Consider
 \begin{equation}
     \Sigma'_{\text{diag}} =\{ A \in \C^k \ | \ \exists i \neq j \ a_i=a_j=0 \} =\{ A \in \C^k \ | \ \forall i \ 
   \prod_{j \neq i} a_j  =0 .\} \end{equation}
     Note that $\prod_{j \neq i} a_j=M_{ii}(A)$ is the $(k-1) \times (k-1)$ minor corresponding to $a_{ii}$. The other minors $M_{ij}(A)$, $i \neq j$ are automatically 0 in the diagonal case. In other words, the vanishing ideal of $\Sigma'_{\text{diag}}$ is the ideal $I_{k-1}=I\langle M_{ii} \rangle \subset \C^k$ generated by the minors $M_{ii}$. 
     
     Consider a linear map 
     \begin{eqnarray}
         f: \C^2 &\to& \C^k \\
         f(x, y) &=& (a_i x +b_i y)_i.
     \end{eqnarray}
     
     $f$ is isolated with respect to $\Sigma'_{\text{diag}}$, that is, $f^{-1}(\Sigma'_{\text{diag}})=\{0\}$ if and only if the vectors $(a_i, b_i)$ are pairwise linear independent, for different values of $1 \leq i \leq k$. Perturb $f$ as $f_t(x,y)=(a_i x+b_iy +c_i t)_i$. For a fixed $t$ (close but not equal to 0), $f_t^{-1}(\Sigma'_{\text{diag}})$ is the union of the set of solutions of the systems 
     \begin{equation}
       \left.  \begin{array}{ccc}
             a_i x+b_iy +c_i t & = & 0 \\
             a_j x+b_jy +c_j t & = & 0 
         \end{array} \right\}
     \end{equation}
     for each pair of indices $i < j$. Because of the linear independence of the vectors $(a_i, b_i)$, each system has exactly one solution, hence, their number is
     \begin{equation}
         \text{mult}(\Sigma'_{\text{diag}}, 0)= \sharp f_t^{-1}(\Sigma'_{\text{diag}}) =\binom{k}{2}=\frac{k(k-1)}{2},
     \end{equation}
     proving the statement.
     \end{proof}
     
     Alternatively, we can avoid perturbation by considering the codimension of the pull-back ideal $I_{k-1}(f):=f^*(I\langle M_{ii} \rangle) \cdot \mathcal{O}_2$. It is the ideal in $\mathcal{O}_2$ generated by the functions $M_{ii} \circ f $, which are homogeneous polynomials of degree $k-1$ in two variables $x$ and $y$. We show below that  $M_{ii} \circ f $ form a basis in the complex vector space $\C[x, y]_{k-1} \cong \C^{k}$ of homogeneous polynomials of degree $k-1$, if $f$ is isolated with respect to $\Sigma'_{\text{diag}}$. This implies that every monomial of degree at least $k-1$ is contained in $I_{k-1}(f)$, hence, $I_{k-1}(f)$ is the space of holomorphic function germs of order at least $k-1$. Therefore, as vector spaces,
     \begin{equation}
\frac{\mathcal{O}_2}{I_{k-1}(f)} \cong \bigoplus_{d=0}^{k-2} \C[x,y]_d,
     \end{equation}
     hence its dimension is $\sum_{d=0}^{k-2} (d+1)=k(k-1)/2$. Since we obtained $\text{mult}(\Sigma'_{\text{diag}}, 0)$, this verifies that $(\Sigma'_{\text{diag}}, 0)$ is Cohen--Macaulay -- or, we obtained a second proof for point 3. of the Reduced Multiplicity Theorem~\ref{th:multiprime}, if we assume that  $(\Sigma'_{\text{diag}}, 0)$ is Cohen--Macaulay. 
     
     There is still left to prove that
     \begin{prop}
         If $f$ is isolated with respect to $\Sigma'_{\text{diag}}$, then the functions $M_{ii} \circ f $ form a basis in the complex vector space $\C[x, y]_{k-1}$ of homogeneous polynomials of degree $k-1$.
     \end{prop}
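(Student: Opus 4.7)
The plan is to interpret the hypothesis geometrically and then run a Lagrange-type interpolation argument. Note that $M_{ii}(A)=\prod_{j\neq i}a_j$ for diagonal $A=(a_i)_i$, so $P_i(x,y):=(M_{ii}\circ f)(x,y)=\prod_{j\neq i}(a_j x+b_j y)$ is a product of $k-1$ linear forms, hence a homogeneous polynomial of degree $k-1$ in $(x,y)$. Since $\dim_{\C}\C[x,y]_{k-1}=k$ and we are given $k$ such polynomials, it suffices to establish linear independence of $P_1,\dots,P_k$.

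The isolation hypothesis $f^{-1}(\Sigma'_{\text{diag}})=\{0\}$ is exactly the statement that the vectors $(a_i,b_i)\in\C^2$ are pairwise linearly independent, so the linear forms $\ell_i(x,y):=a_i x+b_i y$ are pairwise non-proportional. This means the lines $\{\ell_i=0\}\subset\C^2$ are $k$ distinct lines through the origin, and by construction $P_i$ vanishes on each line $\{\ell_j=0\}$ for $j\neq i$ but does \emph{not} vanish on $\{\ell_i=0\}$.

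To exploit this, I would pick on each line $\{\ell_i=0\}$ a non-zero point, for example $v_i:=(-b_i,a_i)$. A direct computation gives $\ell_l(v_i)=a_i b_l-a_l b_i$, which by pairwise linear independence is zero if and only if $l=i$. Consequently $P_j(v_i)=\prod_{l\neq j}(a_ib_l-a_lb_i)$, which vanishes precisely when $i\neq j$ (the product then contains the factor with $l=i$) and is non-zero when $i=j$. Thus the $k\times k$ matrix $\bigl(P_j(v_i)\bigr)_{i,j}$ is diagonal with non-zero diagonal entries.

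Now suppose $\sum_{j=1}^k c_j P_j=0$ in $\C[x,y]_{k-1}$. Evaluating at $v_i$ kills every term except $j=i$ and yields $c_i P_i(v_i)=0$, hence $c_i=0$. This establishes the linear independence, completing the proof. The argument is essentially free of obstacles — the only subtlety is translating the ideal-theoretic statement ``$f$ is isolated with respect to $\Sigma'_{\text{diag}}$'' into the pairwise linear independence of $(a_i,b_i)$, which is immediate from the description $\Sigma'_{\text{diag}}=\{A:\exists i\neq j,\ a_i=a_j=0\}$ given earlier in the subsection.
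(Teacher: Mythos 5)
Your proof is correct, and it takes a genuinely different route from the paper's. The paper proves the \emph{spanning} direction: given an arbitrary $g\in\C[x,y]_{k-1}$, it factors $g$ into linear forms and runs an iterative ``clearing of common factors'' argument --- at each stage replacing pairs $m_i^{(l)}, m_{i+1}^{(l)}$ by a suitable combination so that one factor $f_i$ or $f_{i+l}$ is traded for the next factor $g_l$ of $g$ --- eventually exhibiting $g$ explicitly as a linear combination of the $m_i$. You instead prove \emph{linear independence} by a Lagrange-interpolation argument: evaluating the $P_j$ at the points $v_i=(-b_i,a_i)$ yields a diagonal evaluation matrix with non-zero diagonal, so any vanishing linear combination has all coefficients zero. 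Both approaches then invoke $\dim_\C\C[x,y]_{k-1}=k$ to upgrade to a basis. Your argument is shorter and avoids the bookkeeping of the paper's induction; the paper's argument is more constructive in that it produces the coefficients expressing a given $g$ in the basis. Incidentally, your interpolation viewpoint also previews --- in the simplest case --- the role that syzygies and free resolutions play in the general and symmetric cases (Sections~\ref{ss:proofgencase}--\ref{ss:proofsymm2}), where such a direct evaluation argument is no longer available.
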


     \begin{proof}
         We show that an arbitrary $g \in \C[x, y]_{k-1}$ can be written as the linear combination of the polynomials $M_{ii} \circ f $. As every 2-variable homogeneous polynomial, $g$ decomposes as the product of linear polynomials $g_j$, that is, $g= \prod_{j=1}^{k-1} g_j$. (Indeed, $g^{-1}(0)$ is the union of 1 dimensional subspaces in $\C^2$.) Let us introduce the notations $m_i:=M_{ii} \circ f $ and $f_i=a_ix+b_iy$, then $m_i=\prod_{j \neq i} f_j$ is the decomposition of $m_i$ as the product of linear terms. 

         For $i=1, \dots, k-1$, $m_i$ and $m_{i+1}$ has $k-2$ common factors $f_j$ ($j \neq i, i+1 $) and two different factors $f_i$ and $f_{i+1}$. Because of the assumption that $f$ is isolated with respect to $\Sigma'_{\text{diag}}$, $f_i$ and $f_{i+1}$ are linearly independent, hence, there are complex coefficients $s$ and $t$ such that $sf_i+tf_{i+1}=g_1$. We define
         \begin{equation}
             m_i^{(2)}:=tm_i+s m_{i+1}=g_1  \cdot \prod_{j \neq i, i+1} f_j.
         \end{equation}
         We obtained $k-1$ new homogeneous polynomials $ m_i^{(2)}$ of degree $k-1$ ($i=1, \dots, k-1$). Each pair $m_i^{(2)}$, $m_{i+1}^{(2)}$, has $k-2$ common factors $g_1$ and $f_j$ ($j \neq i, i+1, i+2$), and they have two different factors: $f_i$ in $m_{i+1}^{(2)}$ and $f_{i+2}$ in $m_{i}^{(2)}$. We repeat the above argument with a choice of (new) coefficients $s$ and $t$ satisfying $sf_i+tf_{i+2}=g_2$ to obtain
          \begin{equation}
             m_i^{(3)}:=tm_i^{(2)}+s m_{i+1}^{(2)}=g_1 g_2 \cdot \prod_{ j \neq i, i+1, i+2} f_j,
         \end{equation}
         for $i=1, \dots, k-2$. This argument can be repeated inductively in steps $l=1, \dots, k$:
         \begin{equation}
              m_i^{(l+1)}:=tm_i^{(l)}+s m_{i+1}^{(l)}=\prod_{\alpha=1}^{l} g_{\alpha} \cdot \prod_{ j \neq i,\dots, i+l} f_j,
         \end{equation}
with coefficients $s$, $t$ such that $s f_i + tf_{i+l}=g_{l}$. Then, by induction, $m_1^{(k)}=g$ is the linear combination of $m_i$ ($i=1, \dots, k$).
         \end{proof}

         \begin{remark}
             In the diagonal case we can see directly that $I_{k-1}$ is a radical ideal, hence, it defines the reduced structure of $\Sigma'_{\text{diag}}$.  A germ $g \in \mathcal{O}_k$ is in $I_{k-1}$ if and only if every monomial of $g$ with non-zero coefficient contains at least $k-1$ variables (with non-zero exponent). Hence, the complement of $I_{k-1}$ in  $\mathcal{O}_k$ is closed under exponentiation, that is, if $g \in \mathcal{O}_k \setminus I_{k-1}$, then $g^j \in \mathcal{O}_k \setminus I_{k-1}$ also holds for all $j \in \N$. This implies that $\sqrt{I_{k-1}}=I_{k-1}$. 

             In contrast, one can consider the ideal $I \langle g_1, g_2 \rangle \subset \mathcal{O}_k$ generated by the two elements $g_1=\prod_{i=1}^k a_i$ and $g_2=\sum_{i=1}^k M_{ii}$ (see \cite[Example C.8]{MondBook} for $k=3$). The vanishing locus of $I \langle g_1, g_2 \rangle$ is also $(\Sigma'_{\text{diag}}, 0)$, as a set germ. Indeed, $g_1=0$ if and only if at least one of the variables is 0, say, $a_j=0$, then $M_{ii}$ is automatically 0, if $i \neq j$, hence $g_2=0$ means that $M_{jj}=0$, that is, another $a_{i}$ should also be 0 ($i \neq j$). Therefore, $V(I \langle g_1, g_2 \rangle)=V(I_{k-1})=(\Sigma'_{\text{diag}}, 0)$, and, in contrast of $I_{k-1}$, $I \langle g_1, g_2 \rangle$ is generated by $\text{codim}(\Sigma'_{\text{diag}}, 0)=2$ elements. But $I \langle g_1, g_2 \rangle$ is not a radical ideal, $\sqrt{I \langle g_1, g_2 \rangle}=I_{k-1}$. This means that $(\Sigma'_{\text{diag}}, 0)$ is \emph{set-theoretically a complete intersection}, but not a complete intersection. 
         \end{remark}


\subsection{Proof for the symmetric case (part 1)}\label{ss:symmpart1} We prove point 2. of the Reduced Multiplicity Theorem~\ref{th:multiprime}, that is, we compute the multiplicity of $(\Sigma'_{\text{sym}}, 0)$. This implies point 2. of the Multiplicity Theorem~\ref{th:multipl} via Corollary~\ref{co:effmodmult} (about the multiplicity at other points) and Proposition~\ref{pr:primetilde} (about the equality of the multiplicities of $\Sigma_{\text{sym}}$, $\widetilde{\Sigma}_{\text{sym}}$ and $\Sigma'_{\text{sym}}$).

\begin{proof}[Proof of point 2. of the Reduced Multiplicity Theorem~\ref{th:multiprime}, part 1] The vanishing ideal of $\Sigma'_{\text{sym}} \subset \text{Symm}_{\C}(k) $ in the polynomial ring $\C[a_{ij}]$ ($i \leq j$, since $a_{ij}=a_{ji}$ because of the symmetry) is generated by the $(k-1) \times (k-1)$ minors $M_{ij}$. Since $M_{ij}=M_{ji}$, the number of different minors is equal to the number of pairs $(i, j)$ with $1 \leq i \leq j \leq k$, namely, $(k+1)k/2$.

Consider a linear map $f=(f_{ij})_{ij}:\C^3 \to \text{Symm}_{\C}(k)$ isolated with respect to $(\Sigma'_{\text{sym}}, 0)$. 
The pull-back ideal $I_{k-1}(f):=f^{*}(I \langle M_{ij} \rangle )\cdot \mathcal{O}_3$ is generated by the homogeneous polynomials $m_{ij}:=M_{ij} \circ f$ of degree $k-1$. The vector space $\C[x,y,z]_{k-1}$ of such polynomials has dimension $(k+1)k/2$, that is, the number of degree $k-1$ monomials of three variables. If the minors $m_{ij} $ happen to be linearly independent, then they form a basis in $\C[x,y,z]_{k-1}$. Hence, every monomial of degree at least $k-1$ is contained in the ideal $I_{k-1}(f)$, that is, $I_{k-1}(f)$ is equal to the space of holomorphic functions germs of order at least $k-1$. Therefore, as vector spaces,
     \begin{equation}
\frac{\mathcal{O}_3}{I_{k-1}(f)} \cong \bigoplus_{d=0}^{k-2} \C[x,y,z]_d,
     \end{equation}
     hence its dimension is 
     \begin{equation}\label{eq:binoms}
         \dim \frac{\mathcal{O}_3}{I_{k-1}(f)}=\sum_{d=0}^{k-2} \binom{d+2}{2}=\binom{k+1}{3} = \frac{k(k^2-1)}{6}.
     \end{equation}
     This proves point 2. of the Reduced Multiplicity Theorem~\ref{th:multiprime}, once the linear independence of the minors $m_{ij}$ is verified. This will be done in Section~\ref{ss:proofsymm2} using the technics introduced for the generic case, namely, free resolution and syzygies. 
     \end{proof}

     \begin{rem}
         For Equation~\eqref{eq:binoms}  we used two combinatorial observations, which will be important in the next subsection as well.
         \begin{enumerate}
             \item The number of degree $d$ monomials of $m$ variable is $\binom{d+m-1}{m-1 }$.
             \item \begin{equation}
                 \sum_{d=0}^l \binom{d+c}{c}= \binom{l+c+1}{c+1} .
             \end{equation}
         \end{enumerate}
     \end{rem}

\subsection{Proof for the general case}\label{ss:proofgencase}
For the general case (point 1. of the Reduced Multiplicity Theorem~\ref{th:multiprime}) we use the Gulliksen--Neg{\aa}rd free resolution \cite{GulNeg}, 
see also \cite[Section 3.1]{MondGor}, \cite[Section 2]{BrunsVetterBook}.
 We start with a motivation by comparing the general case with the symmetric and diagonal cases.

Let $f=(f_{ij})_{ij}: \C^4 \to \C^{k \times k} $ be a linear map isolated with respect to $(\Sigma, 0)$. Note that $f$ can be considered as an element of $\mathcal{O}_4^{k \times k}$, that is, a matrix with entries $f_{ij} \in \mathcal{O}_4$. This motivates the simplified notation $M_{ij}(f)=M_{ij} \circ f$ for the $(k-1) \times (k-1)$ minors $M_{ij} \circ f$ of $f(x,y,z,w)$. We also use the simplified notation $(\mathcal{O}_4)_d=\C[x,y,z,w]_d$ for the vector space of degree $d$ homogeneous polynomials. 

 The pull-back ideal $I_{k-1}(f):=f^{*} (I \langle M_{ij} \rangle ) \cdot \mathcal{O}_4 \subset \mathcal{O}_4$ is generated by these minors $M_{ij}(f)$, which are homogeneous polynomials of degree $k-1$, and their number is $k^2$. The dimension of the space $(\mathcal{O}_4)_{k-1}$ of degree $k-1$ homogeneous polynomials is $\binom{ k+2}{3}=(k+2)(k+1)k/6$. Observe that $(k+2)(k+1)k/6 >k^2$, if $k>2$, indeed, the difference $(k+2)(k+1)k/6 -k^2$ has its three roots at $k=0, 1, 2$ and it is monotonic increasing for $k>2$. Therefore, in contrast of the symmetric and diagonal cases, the minors $M_{ij} (f)$ do not span the vector space $(\mathcal{O}_4)_{k-1}$.

Before introducing the Gulliksen--Neg{\aa}rd free resolution, as a motivation, we present a naive estimation for the dimension of  $\mathcal{O}_4/ I_{k-1}(f)$. 
All the homogeneous monomials of degree less than $k-1$ are not contained in the ideal $I_{k-1}(f)$, hence, they represent linearly independent elements of the quotient $\mathcal{O}_4/I_{k-1}(f)$. Therefore,  
    \begin{equation}
         \dim \frac{\mathcal{O}_4}{I_{k-1}(f)} \geq \sum_{d=0}^{k-2} \dim ((\mathcal{O}_4)_d)=
 \sum_{d=0}^{k-2} \binom{d+3}{3}=\binom{k+2}{4}.
     \end{equation}
     
     In degree $k-1$, the dimension of $I_{k-1}(f) \cap (\mathcal{O}_4)_{k-1}$ is at most $k^2$. In fact, if the minors $M_{ij} ( f)$ happen to be linearly independent (as we will see, they are), then they form a basis in $I_{k-1}(f) \cap (\mathcal{O}_4)_{k-1}$ (otherwise the dimension is less than $k^2$). Therefore,
     \begin{equation}
         \dim \frac{\mathcal{O}_4}{I_{k-1}(f)} \geq \sum_{d=0}^{k-1} \dim ((\mathcal{O}_4)_d)-\dim (I_{k-1}(f) \cap (\mathcal{O}_4)_{k-1}) \geq
         \sum_{d=0}^{k-1} \binom{d+3}{3}  -k^2=\binom{k+3}{4}-k^2.
     \end{equation}

     In degree $k$, the vector space $I_{k-1}(f) \cap (\mathcal{O}_4)_{k}$ is spanned by the products of the minors $M_{ij}(f)$ and the variables $x,y, z, w$. However, these products are linearly dependent by the cofactor formulas for each pair $1 \leq i \neq j \leq k$
     \begin{equation}\label{eq:cofact}
         \sum_{l=1}^k (-1)^{i+l} f_{il}  \cdot M_{jl}(f)=0 \text{ and } \sum_{l=1}^k (-1)^{i+l} f_{li}  \cdot M_{lj}(f)=0,
     \end{equation}
     and, from the expansion of $\det(f)$,
     \begin{equation}\label{eq:cofactdet}
         \sum_{l=1}^k ((-1)^{i+l} f_{il}  \cdot M_{il}(f)- (-1)^{j+l}f_{jl}  \cdot M_{jl}(f))=0 \text{ and } \sum_{l=1}^k ((-1)^{i+l} f_{li}  \cdot M_{li}(f)- (-1)^{j+l} f_{lj}  \cdot M_{lj}(f)))=0.
     \end{equation}
     The relations \eqref{eq:cofactdet} are not independent, indeed, each one is a linear combination of the $(i,j)=(i,i+1)$ cases ($i=1, \dots, k-1$). With this restriction, equations~\eqref{eq:cofact} and \eqref{eq:cofactdet} express $2(k^2-k)+2(k-1)=2(k^2-1)$ relations between the products $f_{i'j'} M_{ij}$. Assuming that they are linearly independent (as we will see, they are, and there are no other relations), we have
     \begin{equation}
         \dim \frac{\mathcal{O}_4}{I_{k-1}(f)} \geq 
         \sum_{d=0}^{k} \binom{d+3}{3}  -k^2- (4k^2-2(k^2-1))=\binom{k+4}{4}-3k^2-2.
     \end{equation}


One can continue this  analysis by considering the generators, relations between them, relations between relations etc. in degree $k+1, k+2$ and so on. These iterated relations are called \emph{syzygies}, and the \emph{free resolution} is  a useful tool to manage them. In our particular case, the Gulliksen--Neg{\aa}rd free resolution of $\mathcal{O}_4/I_{k-1}(f)$ is the sequence $\text{GN}(f)$
\begin{equation}\label{eq:gnf}
\text{GN}(f): \ 0 \to \mathcal{O}_4 \xrightarrow{d_4} \mathcal{O}_4^{k \times k} \xrightarrow{d_3} sl_{k} ( \mathcal{O}_4) \oplus sl_{k} (\mathcal{O}_4) \xrightarrow{d_2} \mathcal{O}_4^{k \times k} \xrightarrow{d_1} \mathcal{O}_4 \xrightarrow{d_0} \frac{\mathcal{O}_4}{I_{k-1}(f)} \to 0.
\end{equation}
Here $sl_{k} ( \mathcal{O}_4)=\{g \in \mathcal{O}_4^{k \times k} \ | \ \text{tr}(g)=0 \}$ is the space of special linear matrices, and the homomorphisms are defined as
\begin{itemize}
    \item $d_1(g)=\text{tr}(f^*g)$, where $f^*$ is the adjoint (signed cofactor matrix) of $f$, that is, $(f^*)_{ij}=(-1)^{i+j} M_{ji}(f)$,
    \item $d_2(g,h)=fg-hf$,
    \item $d_3(g)=(gf-\frac{\text{tr}(gf)}{k} \mathds{1}_k, fg-\frac{\text{tr}(fg)}{k} \mathds{1}_k)$,
    \item $d_4(g)=gf^*$.
\end{itemize}

To interpret the homomorphisms $d_1$, observe that the image of an element $g=(g_{ij})_{ij} \in \mathcal{O}_4^{k \times k}$ is 
\begin{equation}\label{eq:interprd1}
d_1(g)=\text{tr}(f^*g)=\sum_{i, j=1}^k (-1)^{i+j} g_{ij} M_{ij}(f) .
\end{equation}
Therefore, the image of $d_1$ is the ideal $I_{k-1}(f) \subset \mathcal{O}_4$, implying that $d_0 \circ d_1=0$, moreover, $\text{im}(d_{1})=\text{ker} (d_0)$.

To interpret the homomorphism $d_2$, consider the following generators of $sl_{k} ( \mathcal{O}_4) \oplus sl_{k}(\mathcal{O}_4)$ over $\mathcal{O}_4$:  $(e_{ij}, 0), (0,e_{ij})$ for $i \neq j$ and $ (e_{ii}-e_{jj}, 0), (0, e_{ii}-e_{jj})$, where $e_{ij} \in \mathcal{O}_4^{k \times k} $ is the matrix whose $ij$ entry is 1 and all the others are 0. 
Then $d_2(e_{ij}, 0) \in \mathcal{O}_4^{k \times k}$ is the matrix whose $j$-th column is equal to the $i$-th column of $f$, and $d_2(0,e_{ij}) \in \mathcal{O}_4^{k \times k}$ is the matrix whose $i$-th row is $(-1)$ times the $j$-th row of $f$. 

For $i \neq j$, the compositions $d_1(d_2(e_{ij},0))$ and $d_1(d_2(0,e_{ij})$ are exactly the expressions on the left hand side of the cofactor formulas~\eqref{eq:cofact}, while $d_1(d_2(e_{ii}-e_{jj}, 0))$ and $d_1(d_2(0, e_{ii}-e_{jj}))$ are the left sides of the formulas~\eqref{eq:cofactdet}, hence, they are 0. This implies that $d_1 \circ d_2=0$, since the images of the generators are 0.

$\text{GN}(f)$ is a \emph{free resolution} of $\mathcal{O}_4/I_{k-1}(f)$, which means the following:
\begin{itemize}
    \item All the terms except $\mathcal{O}_4/I_{k-1}(f)$ are free $\mathcal{O}_4$-modules, that is, isomorphic to $\mathcal{O}_4^K$ for some integer $K$. The only non-trivial one is $sl_{k} ( \mathcal{O}_4) \oplus sl_{k}(\mathcal{O}_4)$, but it is in fact isomorphic to $\mathcal{O}_4^{2(k^2-1)}$.
    \item The arrows are $\mathcal{O}_4$-module homomorphisms, that is, they are linear maps over $\mathcal{O}_4$.
    \item $\text{GN}(f)$ is a \emph{chain complex}, that is, the compositions of the neighboring arrows are 0. Above we already saw that $d_0 \circ d_1=0$ and $d_1 \circ d_2=0$, and it can be checked similarly for each composition.
    \item Moreover, $\text{GN}(f)$ is an \emph{exact sequence (acyclic chain complex)}, that is, $\text{im}(d_{i+1})=\text{ker} (d_i)$ holds for each pair of neighboring arrows. We already saw it above for $i=0$, and the proof for the other pairs of arrows can be found in \cite{BrunsVetterBook}. 
\end{itemize}

In other words, for example, the image of $sl_{k} ( \mathcal{O}_4) \oplus sl_{k} (\mathcal{O}_4)$ via $d_2$ exactly consists of  the relations between the minors $M_{ij}(f)$. These are generated by the $2(k^2-1)$ linearly independent cofactor formulas in Equations~\eqref{eq:cofact} and \eqref{eq:cofactdet}, according to the fact that $sl_{k} ( \mathcal{O}_4) \oplus sl_{k}(\mathcal{O}_4) \cong \mathcal{O}_4^{2(k^2-1)}$ (as $\mathcal{O}_4$-modules). The previous term $\mathcal{O}_4^{k \times k}$, via $d_3$, represents the algebraic relations between these cofactor formulas, and so on.

\begin{proof}[Proof of point 1. of the Reduced Multiplicity Theorem~\ref{th:multiprime}] The arrows of $\text{GN}(f)$ are homogeneous homomorphisms, i.e. they map homogeneous parts to homogeneous parts with shifted degree. Hence, we have exact sequences of complex vector spaces:
\begin{equation}
 \ 0 \to (\mathcal{O}_4)_{d-2k} \xrightarrow{d_4} (\mathcal{O}_4^{k^2})_{d-k-1} \xrightarrow{d_3} (\mathcal{O}_4^{2k^2-2})_{d-k} \xrightarrow{d_2} (\mathcal{O}_4^{k^2})_{d-k+1} \xrightarrow{d_1} (\mathcal{O}_4)_d \xrightarrow{d_0} \left( \frac{\mathcal{O}_4}{I_{k-1}(f)} \right)_d \to 0.
\end{equation}

Recall that for an exact sequence of vector spaces \begin{equation}
    0 \to V_K \xrightarrow{d_K} V_{K-1} \xrightarrow{d_{K-1}} \dots  \xrightarrow{d_2} V_1 \xrightarrow{d_1} 0,
\end{equation}
the alternate sum of the dimensions is 0, that is, $\sum_{i=1}^K (-1)^{i} \dim (V_i)=0$. This is because each $V_i$ decomposes as $V_i \cong \text{ker}(d_i) \oplus (V_i/\text{ker}(d_i))$, and, by the homomorphism theorem, $V_i/\text{ker}(d_i) \cong \text{im}(d_i)$. By the exactness,  $\text{im}(d_i)=\text{ker}(d_{i-1})$, $\text{ker}(d_{1})=V_1$ and $\text{ker}(d_{K})=0$. Hence, \begin{equation}
    \sum_{i=1}^K (-1)^{i} \dim (V_i)= \sum_{i=1}^K (-1)^{i} (\dim (\text{ker}(d_i))+ \dim (\text{ker}(d_{i-1}))=0.
\end{equation} 

Therefore, 
\begin{equation}\label{eq:sequence}
   \dim \left( \frac{\mathcal{O}_4}{I_{k-1}(f)} \right)_d = \dim (\mathcal{O}_4)_d - \dim (\mathcal{O}_4^{k^2})_{d-k+1}+ \dim (\mathcal{O}_4^{2k^2-2})_{d-k}- \dim (\mathcal{O}_4^{k^2})_{d-k-1} + \dim (\mathcal{O}_4)_{d-2k}.
\end{equation}
Moreover, if $\dim ( \mathcal{O}_4/I_{k-1}(f) )_{d_0}=0$ holds for a degree $d_0$, then $\dim ( \mathcal{O}_4/I_{k-1}(f) )_{d}=0$ holds for every $d \geq d_0$, since this means that $(\mathcal{O}_4)_{d_0} \subset I_{k-1}(f)$. In this case, as vector spaces, 
\begin{equation}
     \frac{\mathcal{O}_4}{I_{k-1}(f)} \cong \bigoplus_{d=0}^{d_0} \left( \frac{\mathcal{O}_4}{I_{k-1}(f)} \right)_d,
\end{equation}
hence, 
\begin{equation}
    \dim \left( \frac{\mathcal{O}_4}{I_{k-1}(f)} \right) = \sum_{d=0}^{d_0} \dim \left( \frac{\mathcal{O}_4}{I_{k-1}(f)} \right)_d.
\end{equation}

By taking into account that 
\begin{equation}
\dim (\mathcal{O}_4^N)_d=N \cdot \binom{d+3}{3}=\frac{N(d+3)(d+2)(d+1)}{6},
\end{equation}
the following table contains the values of the terms in Equation~\eqref{eq:sequence}:

\begin{small}
\begin{equation}\label{eq:table}
\renewcommand{\arraystretch}{2}
\begin{array}{*{6}{|c}{|c|}}
\hline
\text{dimension} & (\mathcal{O}_4)_{d-2k} & (\mathcal{O}_4^{k^2})_{d-k-1} & (\mathcal{O}_4^{2k^2-2})_{d-k} & (\mathcal{O}_4^{k^2})_{d-k+1} & (\mathcal{O}_4)_d & \left( \frac{\mathcal{O}_4}{I_{k-1}(f)} \right)_d \\ \hline
0 \leq d \leq k-2 & 0 & 0 & 0 & 0 & \binom{ d+3}{3} & \binom{d+3}{3} \\
d=k-1 & 0 & 0 & 0 & k^2 \binom{ d-k+4 }{3} & \binom{ d+3}{3} & \binom{2k-1-d}{3} \\
d=k & 0 & 0 & (2k^2-2) \binom{d-k+3}{3} & k^2 \binom{ d-k+4}{3} & \binom{ d+3}{3} & \binom{2k-1-d}{3} \\
k+1 \leq d \leq 2k-1 & 0 & k^2 \binom{ d-k+2}{3} & (2k^2-2) \binom{d-k+3}{3} & k^2 \binom{ d-k+4}{3} & \binom{ d+3}{3} & \binom{2k-1-d}{3} \\
2k \leq d & \binom{d-2k+3}{3} & k^2 \binom{ d-k+2}{3} & (2k^2-2) \binom{d-k+3}{3} & k^2 \binom{ d-k+4}{3} & \binom{ d+3}{3} & 0 \\ \hline
\sum_{d=0}^{2k-4} & 0 &   
k^2 \binom{k-1}{4}& 
(2k^2-2) \binom{k}{4}& 
k^2 \binom{k+1}{4} & 
\binom{2k}{4}& \frac{k^2(k^2-1)}{12} \\
\hline
\end{array}
\end{equation}
\end{small}

This is the table for $k \geq 4$, for $k=2, 3$ a similar table can be obtained.
The table shows that $d_0=2k-3$ can be chosen, and it proves the statement that $\dim ( \mathcal{O}_4/I_{k-1}(f))=k^2(k^2-1)/12$. 
\end{proof}

\begin{rem}\label{re:cohmacgoren}
    \begin{itemize}
        \item The Gulliksen--Neg{\aa}rd free resolution $\text{GN}(f)$ exists for any holomnorphic germ $f: (\C^m, 0) \to (\C^{k \times k}, 0)$, i.e., $f \in \mathcal{O}_m^{k \times k}$, just $\mathcal{O}_4$ has to be replaced by $\mathcal{O}_m$ in Equation~\eqref{eq:gnf}. $\text{GN}(f)$ is always a complex, and it is exact if $f^{-1}( \Sigma', 0)$ has its maximal codimension 4. In our case, that is, $f \in  \mathcal{O}_4^{k \times k} $ linear, this means that $f$ is isolated with respect to $(\Sigma', 0)$, or equivalently, $f$ is generic with respect to $(\Sigma', 0)$.
        \item In our case, the exactness of  $\text{GN}(f)$ implies that the $(k-1) \times (k-1)$ minors $M_{ij}(f)$ are linearly independent. Indeed, by Equation~\eqref{eq:interprd1}, a nontrivial zero linear combination would mean a degree 0 element (a non-zero constant matrix) $ g \in (\mathcal{O}_4^{k \times k})_0$ such that $d_1(g)=0$, that is, $ g \in \text{ker}(d_1)$. But $\text{ker}(d_1)=\text{im}(d_2)$ by the exactness, and the lowest degree here is 1.   Similar arguments show that the cofactor formulas~\eqref{eq:cofact} and \eqref{eq:cofactdet} are also linearly independent, and all the linear relations between the products of the variables and the minors are linear combinations of the cofactor formulas.
        \item Observe that the lenght of $\text{GN}(f)$, that is, the number of non-zero terms on the left of $\mathcal{O}_4$, is 4, the same as the codimension of $f^{-1}(\Sigma')$. As pointed out in \cite{MondGor}, this fact implies that $(\Sigma',0)$ is Cohen--Macaulay. This implication can be proved via the Auslander--Buchsbaum formula \cite[Chapter 19]{EisenbudCommutativeAlgebra}. 
        \item It is also mentioned in \cite{MondGor} that the sequence $\text{GN}(f)$ is self-dual, which means that $(\Sigma',0)$ is \emph{Gorenstein}, which is a property stronger than being Cohen--Macaulay. This implies that the dimensions in the last column of Table~\ref{eq:table} (i.e., the \emph{Hilbert sequence} of $\mathcal{O}_4/I_{k-1}(f)$) satisfies the symmetry
        \begin{equation}
          \dim  \left( \frac{\mathcal{O}_4}{I_{k-1}(f)} \right)_d= \dim  \left( \frac{\mathcal{O}_4}{I_{k-1}(f)} \right)_{2(k-2)-d},
        \end{equation}
        as it can be verified by direct computation.
    \end{itemize}
\end{rem}

\subsection{Proof for the symmetric case (part 2)}\label{ss:proofsymm2} For the symmetric case in Section~\ref{ss:symmpart1} there is still left to prove that the minors $M_{ij}(f) \in (\mathcal{O}_3)_{k-1} $ are linearly independent. This follows from the \emph{J\'{o}zefiak free resolution} of $\mathcal{O}_3/I_{k-1}(f)$ for symmetric families $f$ with $\text{codim}(f^{-1}(\Sigma'_{\text{sym}}, 0))=3$, see Refs.~\cite{MondGor,jozefiak1978ideals}.

Although the Gulliksen--Neg{\aa}rd free resolution $\text{GN}(f)$ can be defined for such symmetric matrix families via the inclusion $\text{Symm}_{\C}(k) \subset \C^{k \times k}$, in this case $\text{GN}(f)$ is not necessarily exact, because the codimension condition $\text{codim}(f^{-1}(\Sigma', 0))=4$ does not hold. Instead, we use the J\'{o}zefiak free resolution, which is
\begin{equation}
\text{JO}(f): \ 0 \to \text{sk}_k(\mathcal{O}_3) \xrightarrow{d_3} \text{sl}_k(\mathcal{O}_3) \xrightarrow{d_2} \text{symm}_k(\mathcal{O}_3)  \xrightarrow{d_1} \mathcal{O}_3 \xrightarrow{d_0} \frac{\mathcal{O}_3}{I_{k-1}(f)} \to 0,
\end{equation}
where $\text{sk}_k(\mathcal{O}_3), \text{sl}_k(\mathcal{O}_3) $ and $\text{symm}_k(\mathcal{O}_3)$ denote the spaces of skew-symmetric ($g+g^T=0$), special linear ($\text{tr}(g)=0$) and symmetric ($g^T=g$) matrices of size $k \times k$ over $\mathcal{O}_3$, respectively, and the homomorphisms are defined as
\begin{itemize}
    \item $d_1(g)=f^*g$,
    \item $d_2(g)=fg+g^Tf$,
    \item $d_3(g)=gf$.
\end{itemize}
In our case, i.e. $f \in (\text{symm}_k(\mathcal{O}_3))_1$ is isolated with respect to $(\Sigma'_{\text{sym}}, 0)$, the codimension condition $\text{codim}(f^{-1}(\Sigma'_{\text{sym}}, 0))=3$ is satisfied, hence $\text{JO}(f)$ is exact \cite{jozefiak1978ideals}. 

Then the linear independence of the minors can be proved in the same way as it was done for general matrix families, see Remark~\ref{re:cohmacgoren}. The (non-trivial) linear combinations of the minors $M_{ij}(f) \in (\mathcal{O}_3)_{k-1} $ are exactly the images $d_1(g)$ of the non-zero constant matrix families $g \in (\text{symm}_k(\mathcal{O}_3))_{0}$. Since  $\text{im}(d_2)=\text{ker}(d_1)$ does not contain any degree 0 element, $d_1(g) \neq 0$ in $\mathcal{O}_3$. This proves that the minors are linearly independent, and finishes the proof of point 2. of the Reduced Multiplicity Theorem~\ref{th:multiprime}.

Note that the length of $\text{JO}(f)$ (which is 3) is equal to the codimension of $f^{-1}(\Sigma'_{\text{sym}}, 0)$, proving that $(\Sigma'_{\text{sym}}, 0)$ is Cohen--Macaulay by the Auslander--Buchsbaum formula. However, $\text{JO}(f)$ is not self-dual, indeed, the dimensions $\dim(\mathcal{O}_3/I_{k-1}(f))$ are not symmetric, in other words, $(\Sigma'_{\text{sym}}, 0)$ is not Gorenstein.

\subsection[The geometric degeneracy variety Sigma is not Cohen--Macaulay]{The geometric degeneracy variety $\Sigma$ is not Cohen--Macaulay}\label{ss:proof-notcohmac}

Recall that all the varieties $\Sigma'_{\bullet}$ are Cohen--Macaulay, where the index $\bullet$  denotes either sym, diag, or nothing (general case), and this implies that $\widetilde{\Sigma}_{\bullet}$ are also Cohen--Macaulay (indeed, $\widetilde{\Sigma}_{\bullet} \cong \Sigma'_{\bullet} \times \C$). The Cohen--Macaulay property of $\Sigma'$ and $\Sigma'_{\text{sym}}$ is a classical result \cite{BrunsVetterBook}, it follows for example from the free resolutions Gulliksen--Neg{\aa}rd and  J\'{o}zefiak respectively, more precisely, from the fact that the length of these free resolutions agrees with the codimension of $\Sigma'$ and $\Sigma'_{\text{sym}}$, respectively. Moreover, $\Sigma'$ is also Gorenstein, which does not hold for $\Sigma'_{\text{sym}}$. See Section~\ref{ss:proofsymm2} and Remark~\ref{re:cohmacgoren}. 

The Cohen--Macaulay property of $\Sigma'_{\text{diag}}$ follows directly from the computation of the multiplicity, see Section~\ref{ss:proofdiag}. In the diagonal case, $\Sigma_{\text{diag}}$ is also Cohen--Macaulay, moreover, it is complete intersection, in fact, a hypersurface.

In contrast, here we show that $\Sigma$ and  $\Sigma_{\text{sym}}$ are not Cohen--Macaulay. We describe our argument for $\Sigma$, and $\Sigma_{\text{sym}}$ will be similar.

First we determine the vanishing ideal (reduced structure) $I(\Sigma, 0) \subset \mathcal{O}_{n \times n}$ of $(\Sigma, 0) \subset (\C^{n \times n}, 0)$. Recall from Section~\ref{ss:degcomp} that in \cite{DomokosHermitian}  an ideal $I_{(n)} \subset \C[A]$ is given whose vanishing locus $V(I_{(n)}) \subset \C^{n \times n}$ is equal to $\Sigma$ as a set, however it is proved only for $n=3$ that this ideal $I_{(n)}$ is the vanishing ideal of $\Sigma$ (that is, $I_{(n)}$ is a radical ideal). We use a different construction than \cite{DomokosHermitian}, but, of course, for $n=3$ our ideal $I(\Sigma^{(3)})$ coincide with $I_{(3)}$ given in  \cite{DomokosHermitian}, since both are the vanishing ideals of $\Sigma^{(3)}$ (and the local analytic version $I(\Sigma^{(3)}, 0)$ is the ideal in $\mathcal{O}_{n \times n}$ generated by $I(\Sigma^{(3)}) \subset \C[A] \subset \mathcal{O}_{n \times n}$). 

 By Proposition~\ref{pr:projectiongen1to1} and Corollary~\ref{co:genonetoone}, $\Sigma$ is finite and generically 1-to-1 image of $\widetilde{\Sigma} \subset \C^{n \times n} \times \C$ via the projection $p:  \C^{n \times n} \times \C \to  \C^{n \times n} $, $p(A, \lambda)=A$. Consider the restriction $p|_{\widetilde{\Sigma}}: \widetilde{\Sigma} \to \C^{n \times n}$. The induced homomorphism is
\begin{eqnarray}
    (p|_{\widetilde{\Sigma}})^*: \mathcal{O}_{n \times n} &\to& \mathcal{O}_{(\widetilde{\Sigma}, (0,0))} \\
    (p|_{\widetilde{\Sigma}})^*(g)(A, \lambda)&=&g(A),
\end{eqnarray}
for $g \in \mathcal{O}_{n \times n}$ and $(A, \lambda) \in (\widetilde{\Sigma}, (0,0))$, cf. Equation~\eqref{eq:inducedhomo} and the proof of Proposition~\ref{pr:projectiongen1to1}.

\begin{thm} The vanishing ideal $I(\Sigma,0) \subset \mathcal{O}_{n \times n}$ of $(\Sigma, 0) \subset (\C^{n \times n}, 0)$ is
\begin{equation}
    I(\Sigma,0)=\text{ker}(p|_{(\widetilde{\Sigma},(0,0))})^*= I(\widetilde{\Sigma}, (0,0)) \cap \mathcal{O}_{n \times n},
\end{equation}
and similarly, 
\begin{equation}
    I(\Sigma)=\text{ker}(p|_{\widetilde{\Sigma}})^*= I(\widetilde{\Sigma}) \cap \C[A].
\end{equation}
\end{thm}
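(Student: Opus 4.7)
The plan is to reduce both equalities to the elementary fact that the reduced coordinate ring of $\widetilde{\Sigma}$ kills exactly those holomorphic functions that vanish set-theoretically on $\widetilde{\Sigma}$, and then to use the set-theoretic identity $p(\widetilde{\Sigma}) = \Sigma$ from Equation~\eqref{eq:sigmanew}.

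\textbf{Step 1: reformulating the kernel.} I would first observe that $(p|_{\widetilde{\Sigma}})^{*}$ factors as
\[
\mathcal{O}_{n\times n} \hookrightarrow \mathcal{O}_{n\times n+1} \twoheadrightarrow \mathcal{O}_{n\times n+1}/I(\widetilde{\Sigma},(0,0)) = \mathcal{O}_{(\widetilde{\Sigma},(0,0))},
\]
where the first arrow embeds a germ $g(A)$ as a germ in $(A,\lambda)$ that is independent of $\lambda$, and the second arrow is the quotient projection. The kernel of the composite is clearly $\mathcal{O}_{n\times n} \cap I(\widetilde{\Sigma},(0,0))$, which gives the second displayed equality of the theorem. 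Note here that $I(\widetilde{\Sigma},(0,0))$ is indeed the reduced vanishing ideal of the germ $(\widetilde{\Sigma},(0,0))$: it is generated by the $(n-1)\times(n-1)$ minors of $A-\lambda\mathds{1}$, which was observed in Section~\ref{ss:matrvar} to be prime (via the primality of $I_{n-1}$ and the trivial product decomposition $\widetilde{\Sigma} \cong \Sigma' \times \C$).

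\textbf{Step 2: the main equality.} Since $I(\widetilde{\Sigma},(0,0))$ is reduced, an element $g \in \mathcal{O}_{n\times n}$ satisfies $(p|_{\widetilde{\Sigma}})^{*}(g) = 0$ if and only if $g\circ p$ vanishes set-theoretically on a sufficiently small representative of $(\widetilde{\Sigma},(0,0))$. I would then prove the two inclusions:
\begin{itemize}
\item[$(\supseteq)$] Suppose $g \in \ker (p|_{\widetilde{\Sigma}})^{*}$. Pick $A \in \Sigma$ close to $0$. By Equation~\eqref{eq:sigmanew} there is $\lambda \in \C$ with $(A,\lambda) \in \widetilde{\Sigma}$, and by the finiteness of the projection $p|_{\widetilde{\Sigma}}$ (Proposition~\ref{pr:projectiongen1to1}) such $\lambda$ can be chosen close to $0$, so $(A,\lambda)$ lies in our representative. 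Hence $g(A) = (g\circ p)(A,\lambda) = 0$, so $g \in I(\Sigma,0)$.
\item[$(\subseteq)$] Conversely, if $g$ vanishes on a small representative of $(\Sigma,0)$, then because $p(\widetilde{\Sigma}) \subseteq \Sigma$ (again by Equation~\eqref{eq:sigmanew}), $g\circ p$ vanishes on the corresponding representative of $(\widetilde{\Sigma},(0,0))$, hence $g\circ p \in I(\widetilde{\Sigma},(0,0))$ and $g \in \ker (p|_{\widetilde{\Sigma}})^{*}$.
\end{itemize}

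\textbf{Step 3: the global version.} The same argument works verbatim for polynomial rings: factor $(p|_{\widetilde{\Sigma}})^{*}: \C[A] \to \C[A][\lambda]/I(\widetilde{\Sigma})$ through $\C[A]\hookrightarrow \C[A][\lambda]$, use that $I(\widetilde{\Sigma}) = I_{n-1}(A-\lambda\mathds{1})$ is prime, and apply the Hilbert Nullstellensatz in place of the analytic Nullstellensatz to conclude $g$ vanishes on $\Sigma$ iff $g\circ p$ vanishes on $\widetilde{\Sigma}$.

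The only potential obstacle is the first inclusion in Step 2: one must know that every point of $\Sigma$ near $0$ actually has a preimage in the chosen representative of $\widetilde{\Sigma}$, not merely in $\widetilde{\Sigma}$ as a whole. This is precisely where the finiteness (hence properness) of $p|_{\widetilde{\Sigma}}$ is used; everything else is bookkeeping.
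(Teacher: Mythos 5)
Your proof is correct and takes essentially the same route as the paper: the paper's own two-sentence proof invokes finiteness of $p|_{\widetilde\Sigma}$ onto its image for the first equality and the description of $\mathcal{O}_{(\widetilde{\Sigma},(0,0))}$ as a quotient for the second, which is exactly what you carry out (with the inclusions spelled out). The only cosmetic difference is that the paper also cites "generically 1-to-1," which you correctly avoid since it is not actually needed here — finiteness (hence germ-level surjectivity onto the image) together with reducedness of $I(\widetilde\Sigma,(0,0))$ suffices.
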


\begin{proof}
    The first equality follows from the fact that $p|_{\widetilde{\Sigma}}$ 
 is a finite, generically 1-to-1 onto its image $(\Sigma,0)$. For the second equality observe that 
 \begin{equation}
     \mathcal{O}_{(\widetilde{\Sigma}, (0,0))}=\frac{\mathcal{O}_{n^2+1}}{I(\widetilde{\Sigma}, (0,0))},
 \end{equation}
 hence, $(p|_{\widetilde{\Sigma}})^*(g)=0$ holds for $g \in \mathcal{O}_{n \times n}$ if and only if $g \in I(\widetilde{\Sigma}, (0,0))$, proving the theorem. \end{proof}

Recall from Section~\ref{ss:matrvar} that $I(\widetilde{\Sigma}, (0,0))$ is the ideal in $\mathcal{O}_{n^2+1}$ generated by the $(n-1) \times (n-1)$ minors $M_{ij}(A-\lambda \mathds{1})$. Hence, the elements of $I(\Sigma, 0)$ are the combinations of $M_{ij}(A-\lambda \mathds{1})$ with coefficients in $\mathcal{O}_{n^2+1}$ which do not depend on $\lambda$ (i.e., they are obtained by the elimination of $\lambda$).

For example, for $n=3$ 
\begin{equation}
    a_{31} \cdot M_{12} +a_{12} \cdot M_{13}=a_{31} a_{21} a_{33}-a_{31}^2 a_{23} + a_{21}^2 a_{32} - a_{21} a_{31} a_{22} \in I(\Sigma^{(3)}, 0).
\end{equation}
This elimination method suggests the following proposition.

\begin{lemma}\label{le:ord3}
Every element of $I(\Sigma^{(3)}, 0)$ has order at least 3. 
\end{lemma}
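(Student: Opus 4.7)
The plan is to proceed by contradiction, exploiting the homogeneity of $(\Sigma^{(3)},0)$. Since $\Sigma^{(3)}$ is closed under scalar multiplication, $(\Sigma^{(3)},0)$ is a homogeneous cone, so $I(\Sigma^{(3)},0) \subset \mathcal{O}_{3 \times 3}$ is a homogeneous ideal. Consequently, if some element of this ideal has order at most $2$, then its lowest-degree homogeneous component $g_d$, of degree $d\in\{1,2\}$, is itself a nonzero element of $I(\Sigma^{(3)},0)$. It therefore suffices to rule out the existence of a nonzero homogeneous $g_d \in I(\Sigma^{(3)},0)$ of degree $1$ or $2$.

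By the theorem just proved, any such $g_d$ lies in $I(\widetilde{\Sigma}^{(3)},(0,0)) \cap \mathcal{O}_{3 \times 3}$. The ideal $I(\widetilde{\Sigma}^{(3)},(0,0))$ is homogeneous with respect to the standard grading on $\C[A,\lambda]$ in which every $a_{ij}$ and $\lambda$ has weight one, and it is generated by the $2\times 2$ minors $M_{ij}(A-\lambda\mathds{1})$, each homogeneous of degree $2$. A standard graded-lifting argument for homogeneous ideals then produces a representation
\begin{equation*}
g_d \;=\; \sum_{i,j=1}^{3} h_{ij}(A,\lambda)\, M_{ij}(A-\lambda\mathds{1}),
\end{equation*}
with each $h_{ij}\in\C[A,\lambda]$ homogeneous of degree $d-2$. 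The case $d=1$ is immediately vacuous since there are no homogeneous polynomials of degree $-1$, and in the case $d=2$ each $h_{ij}$ is forced to be a constant $c_{ij}\in\C$.

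The final step for $d=2$ is the direct verification that the only complex combination $\sum_{i,j} c_{ij}\, M_{ij}(A-\lambda\mathds{1})$ independent of $\lambda$ is the trivial one. To this end, I would read off the $\lambda^2$-coefficient, which equals $c_{11}+c_{22}+c_{33}$, and the $\lambda^1$-coefficient, which is a linear polynomial in the entries of $A$: its $a_{ll}$-coefficient arises only from the three diagonal minors and equals $-(c_{11}+c_{22}+c_{33})+c_{ll}$, while its $a_{ji}$-coefficient for $i\neq j$ arises only from $M_{ij}(A-\lambda\mathds{1})$ (whose $\lambda^1$-coefficient is $-a_{ji}$) and equals $-c_{ij}$. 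The vanishing of all these coefficients forces each $c_{ij}=0$, so $g_d=0$, the desired contradiction. The main subtle point in this plan is the graded-lifting step that reduces the $d=2$ case to constant coefficients; this is routine but relies crucially on the fact that the generators $M_{ij}(A-\lambda\mathds{1})$ are homogeneous of the same degree under the grading in which $A$ and $\lambda$ are both given weight one.
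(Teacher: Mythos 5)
Your proof is correct and takes a genuinely different route from the paper. The paper's proof is essentially a citation: it identifies $I(\Sigma^{(3)},0)$ with the ideal $I_{(3)}$ of Domokos, invokes the result of \cite{DomokosHermitian} that $I_{(3)}$ is radical, and quotes \cite{DomokosInvariant} for the fact that the minimal degree of a nonzero homogeneous element is $\binom{n}{2}=3$. Your argument, by contrast, is self-contained: it exploits only the theorem proved immediately before the lemma, namely $I(\Sigma^{(3)},0)=I(\widetilde{\Sigma}^{(3)},(0,0))\cap\mathcal{O}_{3\times 3}$ (and its polynomial version), together with the fact that $I(\widetilde{\Sigma}^{(3)})$ is generated by the degree-2 minors $M_{ij}(A-\lambda\mathds{1})$. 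The homogeneity reduction, the graded lifting to constant coefficients, and the coefficient-matching in $\lambda^2$ and $\lambda^1$ are all sound. What this buys is independence from the external references and a more explicit elimination-theoretic picture, exactly in the spirit of the paper's own motivating example preceding the lemma. What the paper's approach buys is the stronger general statement that the minimal order is $\binom{n}{2}$ for all $n$, not just $n=3$.

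One small quibble in the final coefficient-check: you state that the $\lambda^1$-coefficient of $M_{ij}(A-\lambda\mathds{1})$ for $i\neq j$ is $-a_{ji}$, but it is in fact $(-1)^{i+j}\cdot(-a_{ji})$; e.g. $M_{13}(A-\lambda\mathds{1})=a_{21}a_{32}-(a_{22}-\lambda)a_{31}$ has $\lambda^1$-coefficient $+a_{31}$, not $-a_{31}$. The conclusion is unaffected, since what you actually extract is $\pm c_{ij}=0$, which still forces $c_{ij}=0$, but the sign pattern should be $(-1)^{i+j}$ rather than uniformly $-1$.
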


\begin{proof}
    Let $I_{(n)} \subset \C[A]$ be the ideal given in \cite{DomokosHermitian} with vanishing locus $V(I_{(n)})=\Sigma^{(n)} \subset \C^n$, see Section~\ref{ss:degcomp}. It is proved in \cite{DomokosInvariant}
    that the minimal degree of a non-zero homogeneous component is $\binom{n}{2}$, which is equal to $3$ for $n=3$. See also the discussion on \cite[pg. 3965]{DomokosHermitian}.
    Moreover, it is proved in \cite{DomokosHermitian} that $I_{(3)}$ is the reduced structure of $\Sigma^{(3)}$, that is, $I(\Sigma^{(3)})=I_{(3)}$, therefore, $I(\Sigma^{(3)}, 0) \subset \mathcal{O}_{3 \times 3}$ is the ideal generated by $I_{(3)}$. 
\end{proof}

\begin{cor}\label{co:notcoh3}
    $(\Sigma^{(3)}, 0)$ is not Cohen--Macaulay.
\end{cor}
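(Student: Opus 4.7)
The plan is to derive a contradiction by comparing two numerical invariants of a generic linear family $f : (\C^3, 0) \to (\C^{3 \times 3}, 0)$ that is isolated with respect to $(\Sigma^{(3)}, 0)$. Such an $f$ exists: for instance, the complexification of the spin-$1$ Hamiltonian of Section~\ref{ex:spin1} provides one, and more generally the complement in $\mathrm{Hom}(\C^3, \C^{3 \times 3})$ of the non-isolated linear maps is open and dense since $\mathrm{codim}_{\C^{3 \times 3}}\Sigma^{(3)} = 3$.

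First I would compute $\sharp f_t^{-1}(\Sigma^{(3)}, 0)$ for a generic perturbation $f_t$. By Corollary~\ref{co:kanegyzet_symm} applied with $k=3$, this number equals
\begin{equation*}
\sharp f_t^{-1}(\Sigma^{(3)}, 0) = \frac{k^2(k^2-1)}{12}\bigg|_{k=3} = \frac{9 \cdot 8}{12} = 6.
\end{equation*}
Next I would estimate $\dim_{\C} \bigl( \mathcal{O}_3 / f^*(I(\Sigma^{(3)}, 0)) \cdot \mathcal{O}_3 \bigr)$ from below using Lemma~\ref{le:ord3}. That lemma says that every element of $I(\Sigma^{(3)}, 0)$ has order at least $3$, i.e.\ $I(\Sigma^{(3)}, 0) \subset \mathfrak{m}_{3 \times 3}^3$, where $\mathfrak{m}_{3 \times 3}$ denotes the maximal ideal of $\mathcal{O}_{3 \times 3}$. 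Since $f$ is linear, $f^*(\mathfrak{m}_{3 \times 3}) \subset \mathfrak{m}_3$ and therefore $f^*(\mathfrak{m}_{3 \times 3}^3) \subset \mathfrak{m}_3^3$. Hence the pull-back ideal satisfies $f^*(I(\Sigma^{(3)},0)) \cdot \mathcal{O}_3 \subset \mathfrak{m}_3^3$, and the quotient $\mathcal{O}_3 / f^*(I(\Sigma^{(3)}, 0)) \cdot \mathcal{O}_3$ surjects onto $\mathcal{O}_3 / \mathfrak{m}_3^3$. Counting monomials of degree $\leq 2$ in three variables gives
\begin{equation*}
\dim_{\C} \frac{\mathcal{O}_3}{f^*(I(\Sigma^{(3)}, 0)) \cdot \mathcal{O}_3} \geq \dim_{\C} \frac{\mathcal{O}_3}{\mathfrak{m}_3^3} = 1 + 3 + 6 = 10.
\end{equation*}

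Finally I would invoke Proposition~\ref{pr:everypert}: if $(\Sigma^{(3)}, 0)$ were Cohen--Macaulay, then the two quantities above would have to agree (since $f$ is isolated, and hence generic, with respect to $(\Sigma^{(3)}, 0)$ by Corollary~\ref{co:finitegeneric2}). But $6 < 10$, a contradiction. The hard part will be not in this step but in the input from Lemma~\ref{le:ord3}; assuming that lemma, the structure of the argument is the general principle that Cohen--Macaulayness forces the algebraic intersection number computed from the reduced ideal to equal the geometric perturbation count, and the order-of-vanishing bound easily violates this. The same strategy then adapts verbatim to $\Sigma_{\text{sym}}$ and, by suspension, to $\Sigma^{(n)}$ for $n \geq 3$.
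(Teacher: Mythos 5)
Your proof is correct and follows essentially the same route as the paper: invoke Corollary~\ref{co:kanegyzet_symm} to get the perturbation count $6$, use Lemma~\ref{le:ord3} and linearity of $f$ to bound the pull-back quotient's dimension below by $10$, and conclude via the Cohen--Macaulay equality (Corollary~\ref{co:finitegeneric2}). The only cosmetic difference is that you spell out the surjection onto $\mathcal{O}_3/\mathfrak{m}_3^3$ and cite Proposition~\ref{pr:everypert} alongside Corollary~\ref{co:finitegeneric2}, whereas the paper goes directly to Corollary~\ref{co:finitegeneric2}.
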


\begin{proof}
    Consider a linear map $f: (\C^3,0) \to (\C^{3 \times 3}, 0)$ isolated with respect to $\Sigma^{(3)}$. For a generic perturbation $f_t$, the number of complex Weyl points is $\sharp f_t^{-1}(\Sigma^{(3)}, 0)=\text{mult}(\Sigma^{(3)}, 0)=6$, see Corollary~\ref{co:kanegyzet_symm}. On the other hand, 
    \begin{equation}\label{eq:three}
        \dim \frac{\mathcal{O}_3}{f^*(I(\Sigma^{(3)}, 0)) \cdot \mathcal{O}_3} \geq \sum_{d=0}^2 \dim (\C[x,y,z]_d)=10,
    \end{equation}
    since the minimal order in $I(\Sigma^{(3)}, 0)$ is at least 3 and $f$ is linear. Since the left side of equation~\eqref{eq:three} is not equal to $\text{mult}(\Sigma^{(3)}, 0)$, this proves that  $(\Sigma^{(3)}, 0)$ is not Cohen--Macaulay by Corollary~\ref{co:finitegeneric2}.
\end{proof}

\begin{thm}
\label{co:notcohmac}
The affine variety $\Sigma^{(k)}$ and the analytic set germs $(\Sigma^{(k)}, 0)$, $(\Sigma^{(n)}, A_0; \lambda_0)$, $(\Sigma^{(n)}, A_0)$ are not Cohen--Macaulay for $k \geq 3$, where $\lambda_0$ is a strictly $k$-fold degenerate eigenvalue of $A_0 \in \Sigma^{(n)}$.
\end{thm}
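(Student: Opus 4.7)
The plan is to reduce the full statement to the base case $k=3$, $A_0=0$ already handled by Corollary~\ref{co:notcoh3}. Two bridging principles drive the reduction: (i) Cohen--Macaulayness is preserved in both directions under analytic trivial products with a smooth factor, which combined with Corollary~\ref{co:trivfam} reduces the CM question on each of the germs in the statement to the CM question on the corresponding $k \times k$ germ at the origin; and (ii) for a closed affine cone over $\C$, i.e.\ a subvariety defined by a homogeneous ideal, Cohen--Macaulayness as an affine variety is equivalent to Cohen--Macaulayness of the local ring at the vertex -- the standard equivalence, for a graded Noetherian $\C$-algebra, between CM of the algebra and CM of its localization at the irrelevant maximal ideal.

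First I would establish the key step, that $(\Sigma^{(k)}, 0)$ is not Cohen--Macaulay for every $k \geq 3$. For $k=3$ this is Corollary~\ref{co:notcoh3}. For $k \geq 4$, I would pick a matrix $B_0 \in \Sigma^{(k)}$ whose only geometrically degenerate eigenvalue is strictly $3$-fold, for instance $B_0 = \text{diag}(0,0,0,1,2,\dots,k-3)$. Then $(\Sigma^{(k)}, B_0) = (\Sigma^{(k)}, B_0; 0)$, and Equation~\eqref{eqa:sigma} of Corollary~\ref{co:trivfam} gives an analytic isomorphism with $(\Sigma^{(3)} \times \C^{k^2-9}, 0)$; this is non-Cohen--Macaulay by (i), since $(\Sigma^{(3)}, 0)$ is. Hence the affine variety $\Sigma^{(k)}$ fails to be Cohen--Macaulay at the closed point $B_0$, and principle (ii) transfers the failure to the vertex: $(\Sigma^{(k)}, 0)$ is non-CM, and so is the affine variety $\Sigma^{(k)}$ globally.

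Once $(\Sigma^{(k)}, 0)$ is known to be non-CM for all $k \geq 3$, the germ $(\Sigma^{(n)}, A_0; \lambda_0)$ is handled directly by Equation~\eqref{eqa:sigma} of Corollary~\ref{co:trivfam}: it is isomorphic to $(\Sigma^{(k)} \times \C^{n^2-k^2}, 0)$, which is non-CM by (i). For the full germ $(\Sigma^{(n)}, A_0)$, if $\lambda_0$ is the only geometrically degenerate eigenvalue of $A_0$ then $(\Sigma^{(n)}, A_0) = (\Sigma^{(n)}, A_0; \lambda_0)$ and we are done. In the remaining case I would perturb $A_0$ by a small modification supported on the non-$\lambda_0$ blocks of its Jordan decomposition so as to split every other geometrically degenerate eigenvalue, producing matrices $A_0' \in \Sigma^{(n)}$ arbitrarily close to $A_0$ for which $\lambda_0$ is still strictly $k$-fold and is the only geometric degeneracy. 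Then $(\Sigma^{(n)}, A_0')$ is non-CM by the previous case, and since the non-Cohen--Macaulay locus of an excellent finite-type $\C$-scheme is closed and contains points arbitrarily close to $A_0$, it contains $A_0$ as well.

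The main obstacle is conceptual rather than computational: beyond what has already been set up in the paper (principally Corollary~\ref{co:notcoh3} and the trivial product structure of Corollary~\ref{co:trivfam}), the only external structural input is principle (ii). This is the essential ingredient that makes the passage from non-CM at a nonzero point $B_0$ back to non-CM at the vertex $0 \in \Sigma^{(k)}$ work, and therefore the step I would double-check most carefully; the perturbation argument for the reducible case of $(\Sigma^{(n)}, A_0)$ is secondary and relies only on the openness of the Cohen--Macaulay locus.
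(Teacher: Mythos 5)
Your proof is correct and follows essentially the same route as the paper's: both reduce to the $k=3$ base case of Corollary~\ref{co:notcoh3} via the trivial-product decompositions of Corollary~\ref{co:trivfam}, and both transfer non-Cohen--Macaulayness from a nonzero point of $\Sigma^{(k)}$ back to the origin using the homogeneous/graded structure (the paper phrases this as openness of the CM locus plus scaling invariance of germs, you as the graded-local correspondence -- these are the same fact, cf.\ Remark~\ref{re:alex}). Your handling of $(\Sigma^{(n)}, A_0)$ in the reducible case is more explicit than the paper's one-line appeal to the component structure, but the underlying perturbation-plus-openness argument is the same one the paper implicitly invokes.
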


\begin{proof}
First we show that the germ $(\Sigma^{(k)}, 0) \subset (\C^{k \times k}, 0)$ is Cohen--Macaulay if and only if the affine variety $\Sigma^{(k)} \subset \C^{k \times k}$ is Cohen--Macaulay. This is because 
\begin{enumerate}
    \item being Cohen--Macaulay is an `open property', that is, a Cohen--Macaulay analytic set germ has a representative whose germ at every point is Cohen--Macaulay \cite[Prop. 18.8.]{EisenbudCommutativeAlgebra}.
    \item moreover, since $\Sigma^{(k)}$ is homogeneous, any small representative of $(\Sigma^{(k)}, 0)$  represents all the germs of $\Sigma^{(k)}$, that is, the germs $(\Sigma^{(k)}, A)$ and $(\Sigma^{(k)}, tA)$ are isomorphic for $0 \neq A \in \Sigma^{(k)}$, $0 \neq t \in \C$. See also Remark~\ref{re:alex}.
\end{enumerate}
If $k \geq 3$, then any small representative of  $(\Sigma^{(k)}, 0)$ has a point $A$ with  a strictly $3$-fold degenerate eigenvalue such that all the other eigenvalues are non-degenerate (in particular, $A$ is diagonalizable). Then $(\Sigma^{(k)}, A)$ is a trivial deformation of $(\Sigma^{(3)}, 0)$, which is not Cohen--Macaulay. This proves that $\Sigma^{(k)}$ and $(\Sigma^{(k)}, 0)$ are not Cohen--Macaulay. 

Since $(\Sigma^{(n)}, A_0; \lambda_0)$ is a trivial deformation of $(\Sigma^{(k)}, 0)$, it is not Cohen--Macaulay. Since $(\Sigma^{(n)}, A_0; \lambda_0)$ is a component of $(\Sigma^{(n)}, A_0)$, this is not Cohen--Macaulay, proving the corollary. 
\end{proof}

As a consequence, in Theorem~\ref{th:pullback} the right hand side cannot be replaced by the codimension of the pull back of the ideal of $(\Sigma^{(n)}, A_0; \lambda_0)$. More precisely,

\begin{thm}\label{th:notwork00} Let $f: (\C^3, 0) \to (\C^{n \times n}, A_0)$ be a holomorphic map germ such that $f(0)=A_0 \in \Sigma$ has a strictly $k$-fold eigenvalue $\lambda_0$. Assume that $f$ is isolated with respect to $(\Sigma, A_0; \lambda_0)$ and consider a perturbation $f_t$ of $f$ generic with respect to $(\Sigma, A_0; \lambda_0)$. Then, in general,
    \begin{equation}
        \sharp f_t^{-1}(\Sigma, A_0; \lambda_0) \neq \dim \frac{\mathcal{O}_3}{f^*(I(\Sigma, A_0; \lambda_0)) \cdot \mathcal{O}_3}.
    \end{equation}
\end{thm}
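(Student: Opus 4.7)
The plan is to prove the theorem by exhibiting a single counterexample, since the claim is only that the identity fails \emph{in general}. Two structural facts already in hand do most of the work: Theorem~\ref{co:notcohmac} shows that $(\Sigma, A_0; \lambda_0)$ is not Cohen--Macaulay whenever $k \geq 3$, and Corollary~\ref{co:finitegeneric2} (via Proposition~\ref{pr:everypert}) tells us that the equality between the preimage count and the codimension of the pullback ideal is guaranteed precisely when the target germ is Cohen--Macaulay. So one expects the identity to break for some $f$ with $k \geq 3$, and the task reduces to quantifying the discrepancy in a concrete instance.

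The cleanest instance is $n = k = 3$ with $A_0 = 0$, $\lambda_0 = 0$, and it essentially reuses the argument of Corollary~\ref{co:notcoh3}. Pick any linear $f : (\C^3, 0) \to (\C^{3 \times 3}, 0)$ isolated with respect to $\Sigma^{(3)}$, which exists by Corollary~\ref{co:kanegyzet_symm} and gives the left-hand side $\sharp f_t^{-1}(\Sigma, 0) = 6$. For the right-hand side, Lemma~\ref{le:ord3} shows every nonzero element of $I(\Sigma^{(3)}, 0)$ has order at least $3$; since $f$ is linear, the pullback ideal lies inside $\mathfrak{m}_3^3$, and
\[
\dim \frac{\mathcal{O}_3}{f^{*}(I(\Sigma, 0)) \cdot \mathcal{O}_3} \;\geq\; \dim \frac{\mathcal{O}_3}{\mathfrak{m}_3^3} \;=\; 1 + 3 + 6 \;=\; 10 \;\neq\; 6,
\]
which already proves the theorem at this base point.

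To transport this counterexample to arbitrary $n$ and arbitrary strictly $k$-fold degenerate $A_0$ with $k \geq 3$, I would use the complex Schrieffer--Wolff chart $g$ of Theorem~\ref{th:compsw} together with Corollary~\ref{co:trivfam}: the chart identifies $(\Sigma^{(n)}, A_0; \lambda_0)$ with the trivial deformation $(\Sigma^{(k)} \times \C^{n^2 - k^2}, 0)$. Pick a linear $\tilde f : (\C^3, 0) \to (\C^{k \times k}, 0)$ isolated with respect to $\Sigma^{(k)}$ and set $f(x) := g(0, 0, \tilde f(x))$. Then $f^{-1}(\Sigma, A_0; \lambda_0) = \tilde f^{-1}(\Sigma^{(k)}, 0)$, so the left-hand side equals $k^2(k^2 - 1)/12$ by Corollary~\ref{co:kanegyzet_symm}. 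The minimum-degree estimate $\binom{k}{2}$ for $I(\Sigma^{(k)}, 0)$ from \cite{DomokosInvariant} (invoked in the proof of Lemma~\ref{le:ord3}) forces the pullback into $\mathfrak{m}_3^{\binom{k}{2}}$, yielding the lower bound $\binom{\binom{k}{2} + 2}{3}$ for the right-hand side, which strictly exceeds $k^2(k^2 - 1)/12$ for every $k \geq 3$.

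The step requiring the most care is the naturality of the ideal under the SW chart: expressed in the coordinates $(S, C, A_{\text{eff}})$, the vanishing ideal $I(\Sigma^{(n)}, A_0; \lambda_0)$ should coincide with the extension of $I(\Sigma^{(k)}, 0) \subset \mathcal{O}_{k \times k}$ to the full local ring, with the flat factor $\C^{n^2 - k^2}$ contributing no new generators. This follows from the direct-product structure established in Corollary~\ref{co:trivfam}, after which the pullback by $f$ collapses to the pullback by $\tilde f$ because the $(S, C)$-components were chosen to vanish, and the strict inequality transfers cleanly. The rest is a routine comparison of binomial expressions.
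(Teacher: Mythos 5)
Your base case $n = k = 3$, $A_0 = 0$ is correct and is in fact exactly the paper's own argument: the paper gives no standalone proof of Theorem~\ref{th:notwork00} but instead states it as an immediate consequence of the non-Cohen--Macaulay result, and the explicit counterexample — a linear $f$ isolated with respect to $\Sigma^{(3)}$ with $\sharp f_t^{-1}(\Sigma^{(3)},0)=6$ while $\dim\left(\mathcal{O}_3/f^*(I(\Sigma^{(3)},0))\cdot\mathcal{O}_3\right)\geq 10$ — is precisely the content of the proof of Corollary~\ref{co:notcoh3}. Since the theorem's claim is only that the identity fails ``in general,'' this single counterexample settles the theorem, and you correctly recognize this.

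Your proposed extension to arbitrary $n$ and $k \geq 3$ via the Schrieffer--Wolff chart is extra material not in the paper, and it has a real gap. You attribute to \cite{DomokosInvariant} a minimum-degree bound of $\binom{k}{2}$ for the vanishing ideal $I(\Sigma^{(k)}, 0)$. However, what that reference (and the proof of Lemma~\ref{le:ord3}) establishes is the minimal degree $\binom{k}{2}$ for the ideal $I_{(k)}$ defining $\Sigma^{(k)}$ set-theoretically; the paper explicitly flags that $I_{(k)} = I(\Sigma^{(k)})$ (equivalently, $I_{(k)}$ is radical) is known only for $k=3$. For $k>3$, since $I(\Sigma^{(k)}) = \sqrt{I_{(k)}} \supseteq I_{(k)}$, the radical could in principle contain lower-degree elements (compare $(x)=\sqrt{(x^2)}$), so the order bound on the pullback ideal — and hence the binomial lower bound $\binom{\binom{k}{2}+2}{3}$ on the right-hand side — does not follow from what is cited. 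Because the extension is not needed, this does not invalidate your proof, but as written it overclaims. One smaller point: Corollary~\ref{co:kanegyzet_symm} does not itself assert the \emph{existence} of a linear $f$ isolated with respect to $\Sigma^{(3)}$; that existence is supplied elsewhere, e.g.\ by Proposition~\ref{pr:spinhamisol} via the complexified spin-$1$ Hamiltonian.
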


We have analogous statements for the symmetric case as well, shortly summarized it below.

\begin{thm}
\label{pr:notcohmacsym}
The affine variety $\Sigma_{\text{sym}}^{(k)}$ and the analytic set germs $(\Sigma_{\text{sym}}^{(k)}, 0)$, $(\Sigma_{\text{sym}}^{(n)}, A_0; \lambda_0)$, $(\Sigma_{\text{sym}}^{(n)}, A_0)$ are not Cohen--Macaulay for $k \geq 3$, where $\lambda_0$ is a strictly $k$-fold degenerate eigenvalue of $A_0 \in \Sigma^{(n)}_{\text{sym}}$.
\end{thm}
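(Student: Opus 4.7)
The plan is to mirror the proof of Theorem~\ref{co:notcohmac} for the general case, replacing the minimal-degree input (Lemma~\ref{le:ord3}) by a symmetric analogue. Once the base case that $(\Sigma^{(3)}_{\text{sym}}, 0)$ is not Cohen--Macaulay is in place, the statement for $\Sigma^{(k)}_{\text{sym}}$ and $(\Sigma^{(k)}_{\text{sym}}, 0)$ with $k \geq 3$ follows from homogeneity of $\Sigma^{(k)}_{\text{sym}}$ together with openness of the Cohen--Macaulay property: any small representative of $(\Sigma^{(k)}_{\text{sym}}, 0)$ contains a point with a strictly three-fold eigenvalue and all other eigenvalues simple, and the germ there is a trivial deformation of $(\Sigma^{(3)}_{\text{sym}}, 0)$ by Corollary~\ref{co:trivfam}. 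The same trivial-deformation principle then transfers failure of Cohen--Macaulay from $(\Sigma^{(k)}_{\text{sym}}, 0)$ to $(\Sigma^{(n)}_{\text{sym}}, A_0; \lambda_0)$ for $A_0$ with a strictly $k$-fold eigenvalue $\lambda_0$, $k \geq 3$; and since $(\Sigma^{(n)}_{\text{sym}}, A_0)$ contains this germ as a union of irreducible components by Corollary~\ref{cor:irred}, it too fails Cohen--Macaulay.

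The only genuinely new ingredient is the symmetric analogue of Lemma~\ref{le:ord3}: every nonzero homogeneous element of $I(\Sigma^{(3)}_{\text{sym}})$ has degree at least $3$. I would prove this parametrically. Any matrix $A \in \Sigma^{(3)}_{\text{sym}}$ with geometric two-fold eigenvalue $\mu$ satisfies $\mathrm{rk}(A - \mu \mathds{1}_3) \leq 1$, and any complex symmetric matrix of rank at most one is of the form $v v^T$ (write $M = u w^T$; symmetry forces $u \parallel w$ over $\C$, and a scalar can be absorbed by $\sqrt{\cdot}$). Hence the map
\begin{equation*}
  \phi: \C \times \C^3 \to \text{Symm}_{\C}(3), \qquad \phi(\lambda, v) = \lambda \mathds{1}_3 + v v^T,
\end{equation*}
surjects onto $\Sigma^{(3)}_{\text{sym}}$, so a polynomial $Q \in \C[a,b,c,d,e,f]$ lies in $I(\Sigma^{(3)}_{\text{sym}})$ iff $Q \circ \phi \equiv 0$ in $\C[\lambda, x, y, z]$. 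Expanding $Q \circ \phi$ for a general $Q$ of degree at most $2$ and separating by powers of $\lambda$ gives three groups of coefficient relations in $\C[x,y,z]$; reading them off systematically (the $\lambda^0$ part fixes most of the coefficients of $Q$ via the monomials $x^4, x^3 y, x^2 y^2, x^2 y z, \dots$, and the $\lambda$ and $\lambda^2$ parts kill the remainder) forces every coefficient of $Q$ to vanish.

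With this bound in hand, the base case follows exactly as in Corollary~\ref{co:notcoh3}: for any linear $f : (\C^2, 0) \to (\text{Symm}_{\C}(3), 0)$ isolated with respect to $\Sigma^{(3)}_{\text{sym}}$, the pullback ideal $f^{*}(I(\Sigma^{(3)}_{\text{sym}}, 0)) \cdot \mathcal{O}_2$ is contained in the cube $\mathfrak{m}_{\mathcal{O}_2}^3$ of the maximal ideal, so
\begin{equation*}
  \dim_{\C} \frac{\mathcal{O}_2}{f^{*}(I(\Sigma^{(3)}_{\text{sym}}, 0)) \cdot \mathcal{O}_2} \;\geq\; \sum_{d=0}^{2} \dim_{\C} \C[x,y]_d \;=\; 6,
\end{equation*}
while $\mathrm{mult}(\Sigma^{(3)}_{\text{sym}}, 0) = 3(3^2-1)/6 = 4$ by point~2 of Corollary~\ref{co:kanegyzet_symm}. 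Since the two numbers disagree, Corollary~\ref{co:finitegeneric2} forces $(\Sigma^{(3)}_{\text{sym}}, 0)$ not to be Cohen--Macaulay, and the propagation described in the first paragraph finishes the proof. The main obstacle I anticipate is the degree-bound step itself: a general quadratic $Q$ in six variables carries $21$ unknown coefficients, and one must verify systematically that matching $Q \circ \phi$ against zero in a polynomial ring in four variables forces all of them to vanish. I expect the separation by powers of $\lambda$ to reduce this to three small linear systems, each easy to analyze by inspection, but the bookkeeping is the only nontrivial piece of the argument.
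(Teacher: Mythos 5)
Your argument mirrors the paper's sketch exactly in its overall structure: reduce to $(\Sigma^{(3)}_{\text{sym}},0)$, compare $\text{mult}(\Sigma^{(3)}_{\text{sym}},0)=4$ from Corollary~\ref{co:kanegyzet_symm} against the lower bound $6$ for the pullback codimension supplied by the order-$3$ statement, invoke Corollary~\ref{co:finitegeneric2}, and propagate by Corollaries~\ref{co:trivfam} and \ref{cor:irred} plus openness of the Cohen--Macaulay locus. The one place you genuinely diverge is the order-$3$ lemma. The paper merely asserts that the symmetric analogue of Lemma~\ref{le:ord3} ``can be proved in the same way,'' but the general-case lemma rests on Domokos' results for $I_{(n)}$, and the transfer to $\text{Symm}_{\C}$ is not spelled out. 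Your parametrization $\phi(\lambda,v)=\lambda\mathds{1}_3+vv^T$ sidesteps the citations and is a self-contained, elementary substitute; the surjectivity argument (a rank-$\le 1$ complex symmetric matrix is of the form $vv^T$, absorbing a square root) is correct.

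One small correction to your sketch of the bookkeeping. The $\lambda^0$ layer does \emph{not} ``fix most of the coefficients'': the condition $Q(vv^T)\equiv 0$ in $\C[x,y,z]$ exactly confines $Q$ to the degree-$2$ part of the Veronese ideal, which is the $6$-dimensional span of the $2\times2$ minors $M_{ij}$ of a symmetric $3\times3$ matrix (the $21$ quadratics map to the $15$-dimensional space $\C[x,y,z]_4$, with kernel of dimension $6$). So $\lambda^0$ kills $15$ of the $21$ unknowns and leaves a $6$-parameter family of candidates. The $\lambda^1$ and $\lambda^2$ layers then finish: one computes $M_{11}\circ\phi=\lambda^2+\lambda(y^2+z^2)$, $M_{22}\circ\phi=\lambda^2+\lambda(x^2+z^2)$, $M_{33}\circ\phi=\lambda^2+\lambda(x^2+y^2)$, and $M_{12}\circ\phi,M_{13}\circ\phi,M_{23}\circ\phi$ proportional to $\lambda xy,\lambda xz,\lambda yz$; matching the $\lambda^2$ coefficient forces $c_{11}+c_{22}+c_{33}=0$, and matching the coefficients of $\lambda x^2,\lambda y^2,\lambda z^2,\lambda xy,\lambda xz,\lambda yz$ then forces all six scalars to vanish. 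Degrees $0$ and $1$ are trivial since $\Sigma^{(3)}_{\text{sym}}$ linearly spans $\text{Symm}_{\C}(3)$. With this adjustment, the bookkeeping you flagged as the main obstacle closes cleanly, and the rest of your proof is correct as written.
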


\begin{proof}[Sketch of the proof] The proof is similar to the general case, here we focus on the differences. We only have to show that $(\Sigma_{\text{sym}}^{(3)}, 0)$ is not Cohen--Macaulay, and this implies the theorem in the same way as the proof of Theorem~\ref{co:notcohmac} is based on Corollary~\ref{co:notcoh3}. Like in the proof of Corollary~\ref{co:notcoh3}, consider a linear map $f: (\C^2,0) \to (\text{Symm}_{\C}(3), 0)$ isolated with respect to $\Sigma_{\text{sym}}^{(3)}$. For a generic perturbation $f_t$, the number of complex Weyl points is $\sharp f_t^{-1}(\Sigma^{(3)}_{\text{sym}}, 0)=\text{mult}(\Sigma^{(3)}_{\text{sym}}, 0)=4$, see Corollary~\ref{co:kanegyzet_symm}. It can be proved in the same way as Lemma~\ref{le:ord3} that every element of $I(\Sigma^{(3)}_{\text{sym}}, 0)$ has order at least 3. Hence, 
    \begin{equation}\label{eq:three2}
        \dim \frac{\mathcal{O}_2}{f^*(I(\Sigma^{(3)}, 0)) \cdot \mathcal{O}_2} \geq \sum_{d=0}^2 \dim (\C[x,y]_d)=6,
    \end{equation}
    since $f$ is linear. Since the left side of equation~\eqref{eq:three2} is not equal to $\text{mult}(\Sigma^{(3)}_{\text{sym}}, 0)$, this proves that  $(\Sigma^{(3)}_{\text{sym}}, 0)$ is not Cohen--Macaulay by Corollary~\ref{co:finitegeneric2}, and this finishes the proof of this theorem.
\end{proof}


\section{Parameter dependent Hamiltonian systems}\label{s:physex}

We apply the results of the previous sections for  parameter dependent Hamiltonian systems $H: M^3 \to \text{Herm}(n)$ to give an upper bound for the number  of Weyl points $ \mathbf{WP}$ born from a multifold degeneracy point $p_0 \in M^3$ for a generic perturbation. In Section~\ref{ss:degpoints} we introduce the notion of degeneracy points, real and complex Weyl points, and the concept of the upper and lower bounds ($\sharp \mathbf{cWP}$ and $\sharp_{\text{alg}} \mathbf{WP}$, respectively) for $\sharp \mathbf{WP}$. In Section~\ref{ss:efftwofold} we summarize the analogous results for two-fold degeneracy points based on our previous papers \cite{PinterSW, BirthQ}. In Section~\ref{ss:chern} the lower bound $\sharp_{\text{alg}} \mathbf{WP}$ is expressed in terms of the first Chern numbers $c_1(\eta_j)$ of the eigenvector bundles $\eta_j$ (evaluated on a small 2-sphere surrounding $p_0$). In Section~\ref{ss:compweyl} the upper bound $\sharp \mathbf{cWP}$ is computed based on the results in Section~\ref{ss:proof-multi}. 

The results are illustrated on two physical examples, a spinful particle in an external magnetic field is discussed in Section~\ref{ss:exspin}, and an example for electronic band structures of crystalline materials is discussed in Section~\ref{sec:bandstructureexample}. In Section~\ref{ss:twofold} we show that our methods can be apllied for the characterization of two-fold degeneracy points without Schrieffer--Wolff.

\subsection{Degeneracy points of parameter dependent Hamiltonian}\label{ss:degpoints}

Let $H: M^3 \to \text{Herm}(n)$ be a parameter dependent Hamiltonian (a smooth map) with a degeneracy at $p_0 \in M^3$, that is, $A_0:=H(p_0) \in \Sigma_{\text{herm}}$. Assume that $\lambda_0:=\lambda_{l+1}$ is a $k$-fold degenerate eigenvalue of $A_0$, that is, 
\begin{equation}
    \lambda_1 \leq \lambda_2 \leq \dots \leq \lambda_l < \lambda_{l+1} = \lambda_{l+2} =  \dots = \lambda_{l+k} < \lambda_{l+k+1} \leq \dots \leq \lambda_n
\end{equation} 
holds for the eigenvalues of $A_0$.
 Locally, by introducing a local chart in $M^3$ around $p_0=0$, we consider the Hamiltonian germ $H: (\R^3, 0) \to (\text{Herm}(n), A_0)$. Assume that the germ $H$ is analytic\footnote{Although we formulate the statements for analytic maps, which is required to consider the complexification, many of them can be generalized   in smooth ($C^{\infty}$) context.}.

If a perturbation $H_t$ of $H$ is generic (transverse) to the branch $(\Sigma_{\text{herm}}, A_0; \lambda_0)$,  the elements of the preimage $\mathbf{WP}:=H_t^{-1}(\Sigma_{\text{herm}}, A_0; \lambda_0)$ are the (real) \emph{Weyl points} of the perturbation $H_t$  born from the degenerate eigenvalue $\lambda_0$. This definition coincide with the usual definition of the Weyl points in the physicist literature, see \cite[Theorem 3.4.5]{PinterSW}. Note also that in the physics literature, this terminology is ambiguous: many articles refer to any isolated two-fold degeneracy
point as a ‘Weyl point’, not only to the generic degeneracy points.

The \emph{complexification} $f:=H_{\mathbb{C}}:  (\mathbb{C}^3, 0) \to (\mathbb{C}^{n \times n}, A_0)$ of $H$  is the complex holomorphic map germ defined by the same power series as $H$, and its values can be arbitrary complex matrices (not necessarily hermitian). 

A perturbation $H_t$ of $H$ induces a perturbation $f_t$ of the complexification. If $f_t$ is generic with respect to the branch $(\Sigma, A_0; \lambda_0) \subset (\Sigma, A_0)$ corresponding to the geometric degenerate eigenvalue $\lambda_0$ (defined by Equation~\eqref{eq:projectionlocal}),  we call the elements of the preimage $\mathbf{cWP}:=f_t^{-1}(\Sigma, A_0; \lambda_0)$  \emph{complex Weyl points} of the perturbation $H_t$ (more precisely, $f_t$) born from the degenerate eigenvalue $\lambda_0$. Obviously, $\mathbf{WP} \subset \mathbf{cWP}$.

 By the properties of transversality, real and complex Weyl points are isolated two-fold degeneracy points. See \cite[Theorem 3.4.7]{PinterSW} in the real case, the complex case is similar. Let $\sharp \mathbf{WP}$  and $\sharp \mathbf{cWP}$ denote the number of Weyl points and the number of complex Weyl points, respectively, if the Hamiltonian $H$ (with $H(0)=A_0$ and its $k$-fold degenerate eigenvalue $\lambda_0$) and its perturbation $H_t$ is fixed. 
 
 Moreover, real Weyl points are endowed with a sign, also called a \emph{topological charge}, which is the sign of the transverse intersection of $H$ and $\Sigma_{\text{herm}}$. More precisely, an orientation can be defined for each smooth stratum of $(\Sigma_{\text{herm}}, A_0; \lambda_0)$. A smooth stratum, denoted by $\Sigma_{j,j+1}$ consists of the matrices with two-fold degenerate eigenvalue $\lambda_{j}=\lambda_{j+1}$, $j=l+1, \dots, l+k-1$. An orientation of $\Sigma_{j,j+1}$  defines a sign for the corresponding Weyl points $(x,y,z)$ with $H(x,y,z) \in \Sigma_{j,j+1}$, which is defined to be $+1$ if the Jacobian of $H$ at $(x,y,z)$ plus the tangent space of $\Sigma_{j,j+1}$ at $H(x,y,z)$ together defines positive orientation of $\text{Herm}(n) \cong \R^{n^2}$ determined by a fixed basis, see \cite{PinterSW}. Otherwise the sign of the Weyl point is $-1$. 
 
 In the next subsections we reformulate the sign of the Weyl points in terms of the effective germ and also in terms of the Chern numbers, which approache provides a coherent way to define the orientation for each $\Sigma_{j,j+1}$, but in this subsection it is enough that an orientation can be fixed for each $j$ independently, see below.   Note that the complex Weyl points do not have a sign, or their sign can be considered to be positive.

 Let $\sharp_{\text{alg}} \mathbf{WP}_{j, j+1}$ denote the signed number (algebraic number) of the Weyl points of $H_t$ on $\Sigma_{j,j+1}$, i.e. corresponding to the two-fold degeneracy $\lambda_{j}=\lambda_{j+1}$. We define 
 \begin{equation}\label{eq:algnumb}
 \sharp_{\text{alg}} \mathbf{WP}:= \sum_{j=l+1}^{l+k-1} |\sharp_{\text{alg}} \mathbf{WP}_{j, j+1}|,
 \end{equation}
 which, does not depend on the choice of the orientations on $\Sigma_{j,j+1}$.
 
   The main observation is the following.

     \begin{prop}\label{pr:bound1}
     \begin{equation}\label{eq:bound1}
      \sharp_{\text{alg}} \mathbf{WP}  \leq \sharp \mathbf{WP} \leq \sharp \mathbf{cWP}. \end{equation} 
      \end{prop}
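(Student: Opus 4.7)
The plan is to handle the two inequalities in \eqref{eq:bound1} separately, with almost all the work going into the left one, since the right one is essentially tautological.

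For the inequality $\sharp \mathbf{WP} \le \sharp \mathbf{cWP}$, I would simply invoke the inclusion $\mathbf{WP} \subset \mathbf{cWP}$ that was already noted in the text: a real Weyl point is a point $p \in \R^3$ with $H_t(p) \in \Sigma_{\text{herm}} \subset \Sigma$, and since $H_t$ is the real restriction of $f_t$ and $(\Sigma_{\text{herm}}, A_0;\lambda_0)$ is contained in $(\Sigma, A_0;\lambda_0)$ by the definition of the branch via the SW chart, every such $p$ is also a complex Weyl point. Taking cardinalities gives the inequality; both are finite because of transversality of the perturbations.

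For the inequality $\sharp_{\text{alg}} \mathbf{WP} \le \sharp \mathbf{WP}$, the key observation is that $\mathbf{WP}$ partitions naturally according to which smooth stratum $\Sigma_{j,j+1}$ (for $j = l+1,\dots,l+k-1$) contains the image of the Weyl point. Indeed, a generic perturbation $H_t$ is transverse to each smooth stratum of $(\Sigma_{\text{herm}}, A_0;\lambda_0)$, and since $H_t$ maps from the $3$-dimensional source $\R^3$ and the strata of codimension greater than $3$ can be avoided by genericity, every Weyl point of $H_t$ lies in the preimage of exactly one $\Sigma_{j,j+1}$. Let $\mathbf{WP}_{j,j+1}$ denote this piece; then $\sharp \mathbf{WP} = \sum_{j=l+1}^{l+k-1} \sharp \mathbf{WP}_{j,j+1}$.

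On each stratum the signed count $\sharp_{\text{alg}} \mathbf{WP}_{j,j+1}$ is a sum of $\pm 1$ contributions, one per point of $\mathbf{WP}_{j,j+1}$, so the triangle inequality gives
\begin{equation}
|\sharp_{\text{alg}} \mathbf{WP}_{j,j+1}| \le \sharp \mathbf{WP}_{j,j+1}.
\end{equation}
Summing over $j$ and comparing with the definition \eqref{eq:algnumb} yields $\sharp_{\text{alg}} \mathbf{WP} \le \sharp \mathbf{WP}$, which combined with the first step gives the full chain \eqref{eq:bound1}. The only subtle point I foresee is checking that the stratification argument really does partition $\mathbf{WP}$ cleanly, i.e.\ that a generic $H_t$ cannot hit strata of codimension larger than $3$; this is standard transversality in the Whitney-stratified setting, but it is the one place where genericity of the perturbation must be used carefully.
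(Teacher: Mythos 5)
Your proposal is correct and follows essentially the same logic as the paper's (quite terse) proof: the second inequality comes from $\mathbf{WP}\subset\mathbf{cWP}$, and the first from observing that $\sharp_{\text{alg}}\mathbf{WP}$, being a sum over strata of absolute values of signed counts, is bounded above by the unsigned total $\sharp\mathbf{WP}$ via the triangle inequality applied stratum by stratum. Your extra care about partitioning $\mathbf{WP}$ into the $\mathbf{WP}_{j,j+1}$ and about codimension is a reasonable elaboration, but it is already built into the definitions in Section~\ref{ss:degpoints} (a Weyl point is by definition a transverse preimage of a smooth stratum $\Sigma_{j,j+1}$), so the paper compresses all of this into two sentences.
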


  \begin{proof}  It is clear from the definitions. Indeed, $\mathbf{WP} \subset \mathbf{cWP}$ implies the second inequality. The first one follows from the fact that in $\sharp \mathbf{WP}$ each Weyl point is counted with positive sign, while in $\sharp_{\text{alg}} \mathbf{WP}$ some of them may be counted with negative sign. \end{proof}
      
  The number of real Weyl points $\sharp \mathbf{WP}$ depends on the choice of the perturbation $H_t$ of $H$, see Example~\ref{ex:spin1}. However,  $\sharp \mathbf{cWP} $ does not depend on the perturbation, it can be computed from the unperturbed map $H$ by algebraic methods. Indeed, $\sharp \mathbf{cWP} $ is exactly the multiplicity of $f=H_{\C}$ with respect to $(\Sigma, A_0; \lambda_0)$, hence our results can be used to compute it, see Section~\ref{ss:compweyl}.
  
  Also $ \sharp_{\text{alg}} \mathbf{WP} $ does not depend on the perturbation. In Section~\ref{ss:chern} we show that each $\sharp_{\text{alg}} \mathbf{WP}_{j, j+1}$ can be expressed in terms of the Chern numbers of the bands involved in the $k$-fold degeneracy of $H$, see also \cite{Bruno2006}. 

 \begin{prop}\label{pr:parity} The parities of $\sharp_{\text{alg}} \mathbf{WP}  $, $ \sharp \mathbf{WP}$ and  $ \sharp \mathbf{cWP}$ are the same.
 \end{prop}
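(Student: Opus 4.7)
The plan is to prove the two parity equalities $\sharp_{\text{alg}}\mathbf{WP}\equiv \sharp\mathbf{WP}\pmod 2$ and $\sharp\mathbf{WP}\equiv \sharp\mathbf{cWP}\pmod 2$ separately.

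For the first equality, I would split each stratum contribution. Let $a_j,b_j\ge 0$ denote the number of Weyl points of $H_t$ on the smooth stratum $\Sigma_{j,j+1}$ with sign $+1$ and $-1$ respectively. Then $\sharp\mathbf{WP}_{j,j+1}=a_j+b_j$ and $|\sharp_{\text{alg}}\mathbf{WP}_{j,j+1}|=|a_j-b_j|$. Since $a_j+b_j\equiv a_j-b_j\equiv |a_j-b_j|\pmod 2$, the parities agree stratum-by-stratum, and summing over $j=l+1,\dots,l+k-1$ (as in Equation~\eqref{eq:algnumb}) gives the desired congruence. This step is essentially combinatorial and presents no obstacle.

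For the second equality, the key idea is to exploit a complex-conjugation symmetry of the complexified perturbation. Since the perturbation $H_t$ of $H$ is taken inside $\text{Herm}(n)$, writing the Taylor series $H_t(\mathbf{x})=\sum_\alpha A_\alpha^{(t)}x^\alpha$ with $\mathbf{x}\in\R^3$ and using the linear independence of real monomials, each coefficient $A_\alpha^{(t)}$ must be hermitian. Consequently the complexification $f_t=(H_t)_\C$ satisfies the reality relation
\begin{equation}
f_t(\bar{\mathbf{z}})=f_t(\mathbf{z})^\dagger \qquad \text{for all } \mathbf{z}\in\C^3.
\end{equation}
I would then argue that the geometric degeneracy variety is invariant under hermitian conjugation $A\mapsto A^\dagger$, since $A$ and $A^\dagger$ share the geometric multiplicities at conjugate eigenvalues. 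Moreover, the specific branch $(\Sigma,A_0;\lambda_0)$ is preserved by this involution because $A_0\in\Sigma_{\text{herm}}$ satisfies $A_0^\dagger=A_0$ and its eigenvalue $\lambda_0\in\R$ satisfies $\bar{\lambda}_0=\lambda_0$.

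Combining the two ingredients, the antiholomorphic involution $\tau:\mathbf{z}\mapsto\bar{\mathbf{z}}$ on $(\C^3,0)$ preserves the set of complex Weyl points $\mathbf{cWP}=f_t^{-1}(\Sigma,A_0;\lambda_0)$. Its fixed-point set is exactly $\R^3$, and the fixed points inside $\mathbf{cWP}$ are precisely the real Weyl points $\mathbf{WP}$. Hence $\tau$ acts freely on $\mathbf{cWP}\setminus\mathbf{WP}$ as an involution, so $|\mathbf{cWP}\setminus\mathbf{WP}|$ is even and $\sharp\mathbf{cWP}\equiv \sharp\mathbf{WP}\pmod 2$, which together with the first equality finishes the proof. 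The main technical point to justify carefully is that a generic perturbation of $H$ can be chosen within the hermitian class so that its complexification $f_t$ remains transverse to $(\Sigma,A_0;\lambda_0)$ at every complex Weyl point; this is the only place where the symmetry of $f_t$ is used, and it follows because hermitian perturbations are real-analytically dense in the space of perturbations relevant for transversality to the branch corresponding to the real eigenvalue $\lambda_0$.
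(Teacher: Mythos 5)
Your proof is correct. The first part (parity of $\sharp_{\text{alg}}\mathbf{WP}$ versus $\sharp\mathbf{WP}$) matches the paper exactly: both rely on $+1\equiv-1\pmod 2$. For the second part you take a genuinely different, more geometric route. The paper works directly with the defining polynomial system: it observes that the diagonal minors $m_{ii}:=M_{ii}(H_t-\lambda\mathds{1})$ have real coefficients while the off-diagonal ones come in conjugate pairs $\overline{m}_{ij}=m_{ji}$, replaces the generators by the real combinations $m_{ij}+m_{ji}$ and $i(m_{ij}-m_{ji})$, and concludes that non-real solutions of this real system pair up under complex conjugation. You instead phrase the same conjugation symmetry intrinsically: you establish the reality relation $f_t(\bar{\mathbf{z}})=f_t(\mathbf{z})^\dagger$, check that $A\mapsto A^\dagger$ preserves $\Sigma$ and the branch $(\Sigma,A_0;\lambda_0)$ (using $A_0^\dagger=A_0$, $\bar\lambda_0=\lambda_0$), and then observe that the antiholomorphic involution $\tau(\mathbf{z})=\bar{\mathbf{z}}$ acts on $\mathbf{cWP}$ with fixed set $\mathbf{WP}$. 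The two arguments are morally the same — both are complex-conjugation symmetry arguments — but yours is coordinate-free and does not require choosing generators of the ideal, which buys conceptual clarity and generalizes more readily; the paper's is more explicit and shows directly that the system can be rewritten over $\R$, which may be useful computationally. One small remark: your closing paragraph about densely approximating by hermitian perturbations is unnecessary for the statement as formulated, since $H_t$ is by definition a perturbation inside $\text{Herm}(n)$ (so the reality relation is automatic), and the proposition is asserted for a given generic perturbation — you need not show that transverse hermitian perturbations exist. One point worth making explicit, which you state but do not verify, is that for $\mathbf{z}\in\R^3$ one has $f_t(\mathbf{z})\in(\Sigma,A_0;\lambda_0)$ if and only if $H_t(\mathbf{z})\in(\Sigma_{\text{herm}},A_0;\lambda_0)$; this holds because hermitian matrices are diagonalizable, so their geometric and algebraic multiplicities coincide and hence $\Sigma_{\text{herm}}=\Sigma\cap\text{Herm}(n)$, with the branches matching up by the SW chart.
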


 \begin{proof}
     For $\sharp_{\text{alg}} \mathbf{WP}  $ and $ \sharp \mathbf{WP}$ this follows from the definition, indeed, since $-1 \equiv 1$ modulo 2, the signed number of points has the same parity as their number. 
     
     The complex Weyl points are the complex solutions of the system of equations 
     \begin{equation}
         m_{ij} := M_{ij}\left(H_t(x,y,z) - \lambda \mathds{1}\right)=0,
     \end{equation}
     where $M_{ij}$ are the minors. Since the matrix $H_t(x,y,z) - \lambda \mathds{1}$ is hermitian, its diagonal minors $m_{ii}$ are real, that is, power series of the variables $x, y, z, \lambda$ with real coefficients. The other minors are in conjugate pairs $\overline{m}_{ij}=m_{ji}$, hence, by replacing them with $m_{ij}+m_{ji}$ and $i(m_{ij}-m_{ji})$, all the equations have real coefficients. Hence, the non-real solutions are in conjugate pairs, proving that $ \sharp \mathbf{WP}$ and  $ \sharp \mathbf{cWP}$ have the same parity. 
 \end{proof}

\subsection{Effective Hamiltonian and two-fold degeneracy}\label{ss:efftwofold}
For parameter values $(x,y,z)$ in a sufficiently small neighborhood of 0 consider the SW-decomposition of $A=H(x,y,z)$ with respect to $A_0$ and $\lambda_0$, see Equation~\ref{eq:swdec} or \cite[Theorem 3.1.2]{PinterSW}. Let $\widetilde{H}_{\text{eff}}^{\text{tr}=0}(x,y,z)$ denote the traceless part of the $k \times k$ effective Hamiltonian $\widetilde{H}_{\text{eff}}(x,y,z)$. (This is the upper-left $k \times k$ block of the $n \times n$ effective Hamiltonian $H_{\text{eff}}(x,y,z)$, whose traceless part is denoted by $H_{\text{eff}}^{\text{tr}=0}(x,y,z)$.)

$\widetilde{H}_{\text{eff}}^{\text{tr}=0}(x,y,z)$ can be expressed in a suitable basis of $\text{Herm}_0(k) \cong \R^{k^2-1}$ of traceless $k \times k$ hermitian matrices, see \cite[Section 2]{PinterSW}\footnote{Note that in \cite{PinterSW} the notation is different, the traceless effective Hamiltonian is denoted by $H_{\text{eff}}$.} In this way we obtain an analytic map germ
\begin{equation}
h: (\R^3, 0) \to (\R^{k^2-1}, 0),
\end{equation}
say, \emph{effective map germ}, see \cite[Section 3.4]{PinterSW}. Its complexification $h_{\C}: (\C^3, 0) \to (\C^{k^2-1}, 0)$ is defined by the same power series as $h$ considered with complex variables.

\begin{rem}
    It is possible to take first the complexification $f=H_{\C}: \C^3 \to \C^{n^2}$ of the Hamiltonian. Note that $\lambda_0$ is a strictrly $k$-fold eigenvalue of $A_0=f(0)$, since $A_0$ is hermitian. Then, by the complex SW decomposition we obtain the complex traceless effective Hamiltonian $\widetilde{f}^{\text{tr}=0}_{\text{eff}}(x,y,z)$, i.e. the traceless part of $\widetilde{f}_{\text{eff}}(x,y,z)$. Using the chosen basis of $\text{Herm}_0(k)$ over $\R$ as a complex basis of the traceless $k \times k$ matrices $\C^{k \times k}_0$, the expression of $\widetilde{f}^{\text{tr}=0}_{\text{eff}}(x,y,z)$ in this basis is equal to $h_{\C}(x,y,z)$. In other words, the effective germ of the complexification is the same as the complexification of the effective germ. 
\end{rem}

In the following we focus on two-fold degeneracy points ($k=2$). In this case $A_0 \in \Sigma_{\text{herm}} \subset \Sigma$ is a smooth point of the branches $(\Sigma_{\text{herm}}, A_0; \lambda_0)$ and $(\Sigma, A_0; \lambda_0)$, see Corollary~\ref{co:twofoldsmooth}. In this case the effective germ is $h=(h_1, h_2, h_3):(\R^3, 0) \to (\R^3, 0)$, noteworthy it is equi-dimensional.

A perturbation $H_t$ of $H$ induces a perturbation $h_t$ of the effective germ, $f_t$ of the complexification of $H$ and $h_{\C, t}$ of the complexification of the effective germ. Observe that 
\begin{equation}\label{eq:hameff}
    H_t^{-1}(\Sigma_{\text{herm}}, A_0; \lambda_0)=h_t^{-1}(0) \subset f_t^{-1}(\Sigma, A_0; \lambda_0)=h_{\C,t}^{-1}(0)
\end{equation}
holds by point (b) of the Complex SW chart theorem~\ref{th:compsw}, since the (complex) traceless effective Hamiltonian is degenerate at $(x,y,z)$ if and only if it is 0, that is, $h(x,y,z)=0$ ($h_{\C, t}=0$, respectively). 

$H_t$ is transverse to $(\Sigma_{\text{herm}}, A_0; \lambda_0)$ at the pre-images $(x,y,z)$ if and only if the Jacobian of $h_t$ at these points has  maximal rank 3, see \cite[Theorem 3.4.5]{PinterSW}. In this case these are Weyl points, i.e., $h_t^{-1}(0)=\mathbf{WP}$. The sign (topological charge) $+1$ or $-1$ of each Weyl point can be defined as the sign of the Jacobian determinant of $h_t$ at the point. 

Similarly, $f_t$ is transverse to $(\Sigma, A_0; \lambda_0)$ if and only if the Jacobian of $h_{\C,t}$ at the pre-images has  maximal rank 3. In this case they are complex Weyl points, i.e., $h_{\C,t}^{-1}(0)=\mathbf{cWP}$. Complex Weyl points do not have a charge (or it can be always considered to be $+1$), in fact, holomorphic maps always preserve the orientation induced by the complex structure.

 Recall the following notions, see details in See \cite[Appendix]{BirthQ} and \cite{MondBook}:
 \begin{itemize}
     \item If the degeneracy point is isolated in real sense, that is, $h^{-1}(0)=\{0\}$, then the \emph{local degree} (or \emph{index})  $\deg_0(h)$ of $h$ is defined as follows. We take a sphere $S^2_{\epsilon} \subset \R^3$ centered at 0 with a sufficiently small radius such that the only root of $h$ inside $S^2_{\epsilon}$ is 0. Then $\deg_0(h)$ is defined as the (global) degree of the normalized map
     \begin{equation}
         \widetilde{h}:= \left. \frac{h}{\|h\|} \right|_{S^2_{\epsilon}}: S^2_{\epsilon} \to S^2,
     \end{equation}
     that is, the signed number of preimages of a regular value. In physicist literature, $\deg_0(h)$ is called the topological charge of the two-fold degeneracy point. This generalizes the sign of the Weyl points, indeed, if $0$ is a Weyl point (i.e. the rank of the Jacobian of $h$ at 0 is 3), then $\deg_0(h)= \pm 1$ is the sign of it. The sum of the local degrees is preserved by perturbation, see Proposition 2 in \cite[Appendix]{BirthQ} or \cite[Cor. E.2]{MondBook}. In particular, for any generic perturbation,  $\sharp_{\text{alg}} \mathbf{WP}=\deg_0(h)$ holds for the signed number of Weyl points, see the discussion after \cite[Appendix]{BirthQ}, Proposition 2.
     \item The \emph{local algebra} of $h$ is the real algebra defined as $Q_0(h)=\mathcal{O}_3/I_h$, where $I_h=I(h_1,h_2,h_3) \subset \mathcal{O}_3$ is the ideal generated by the components of $h$. Similarly, the local algebra of the complexification is $Q_0(h_{\C})=\mathcal{O}_3/I_{h_{\C}}$. $Q_0(h_{\C})$ is a complex algebra, and it is the complexification of $Q_0(h)$, i.e. $Q_0(h_{\C}) \cong Q_0(h) \otimes \C$.
     \item The \emph{local multiplicity} of $h$ (or $h_{\C}$) is $\text{mult}_0 (h)= \dim_{\R} Q_0(h)=\dim_{\C} Q_0(h_{\C})=\text{mult}_0 (h_{\C})$. The local multiplicity is finite if and only if the degeneracy point is isolated in the complex sense, i.e. $h_{\C}^{-1}(0)=\{0\}$, which implies $h^{-1}(0)=\{0\}$, see \cite[Thm. D.5.]{MondBook}. Since the sign of the complex Weyl points of a perturbation are $+1$, we have $\text{mult}_0 (h_{\C})=\deg_0(h_{\C})=\sharp \mathbf{cWP}$ for the number of complex Weyl points, see the discussion after \cite[Appendix]{BirthQ}, Proposition 2. or \cite[Cor. E.3]{MondBook}.
 \end{itemize}

 Hence, for two-fold degeneracy points, the bounds for the number of Weyl points given in Equation~\eqref{eq:bound1} can be written as
 \begin{equation}\label{eq:bound2}
|\deg_0(h)| \leq \sharp \mathbf{WP} \leq \text{mult}_0(h).
 \end{equation}

 In Section~\ref{ss:twofold} we determine the local algebra $Q_0(h)$ directly from $H$ (up to isomorphism), avoiding SW and effective germ. The local degree $\deg_0(h)$ is determined by $Q_0(h)$ via the Eisenbud--Levine theorem \cite{Eisenbud1978,EisenbudLevineTeissier1977}.

\subsection{Chern numbers}\label{ss:chern}

Consider a setup similar to Section~\ref{ss:degpoints}, that is, a Hamiltonian $H: \R^3 \to \text{Herm}(n)$ and a degeneracy point $H(0)=A_0$ with a $k$-fold degenerate eigenvalue $\lambda_0$. Also we fix a generic perturbation $H_t$ of $H$.
By Equation~\eqref{eq:bound1}, $\sharp_{\text{alg}} \mathbf{WP}$ (defined by Equation~\eqref{eq:algnumb}) is a lower bound for the number of Weyl points $\mathbf{WP}$. Here we express $\sharp_{\text{alg}} \mathbf{WP}$ in terms of the Chern numbers of the bands involved in the degeneracy. In particular, this shows that $\sharp_{\text{alg}} \mathbf{WP}$ does not depend on the choice of the perturbation, it is determined by the unperturbed map $H$.

Let $\xi_1, \dots, \xi_n$ denote the eigenvector bundles (`bands') over $\text{Herm}(n) \setminus \Sigma$. Each one is a complex line bundle (with fibers isomorphic to the complex line $\C$), and $\xi_i$ is defined over the set of matrices with non-degenerate $i$-th eigenvalue, i.e. with $\lambda_{i-1} < \lambda_i < \lambda_{i+1}$. Denoting by $\Sigma_{i, i+1} \subset \Sigma_{\text{herm}}$ the set of matrices with $\lambda_i=\lambda_{i+1}$, $\xi_i$ is defined over $\text{Herm}(n) \setminus (\Sigma_{i-1, i} \cup \Sigma_{i, i+1})$.

Given a smooth oriented closed submanifold $M^2 \subset \R^3$ such that $\lambda_i$ is not degenerate for $H|_{M^2}$, consider the induced bundle $\eta_i=(H|_{M^2})^*(\xi_i)$. Let $c_1(\eta_i)[M^2]$ or shortly $c_1(\eta_i)$ denote the \emph{first Chern number} of $\eta_i$, that is, the first Chern class evaluated on the fundamental class of $M^2$. If $H|_{M^2}$ avoids $\Sigma$, then every Chern number is defined on $M^2$, and $\sum_{i=1}^n c_1(\eta_i)=0$, since it is the first Chern number of the trivial bundle $\bigoplus_{i=1}^n \eta_i$.

Note that in physicist literature $c_1(\eta_i)$ is mostly defined via the integral of the Berry curvature, which is the curvature form of an appropriate connection on $\eta_i$ called Berry connection (which actually agrees with the Chern connection) \cite{Asboth, ArnoldSelMath1995}.
Although this approach is effective for computations, we use here the following equivalent description of the Chern numbers.

At a point $p \in M^2$, let $L_i \subset \C^n$ be the eigenspace of $H(p)$ corresponding to $\lambda_i$. Consider $L_i$ as a point of the complex projective space $\C P^{n-1}$. Let $\widetilde{\eta}_i: M^2 \to \C P^{n-1}$ be the smooth map defined by $\widetilde{\eta}_i(p):=L_i $. Fix a complex projective hypersurface $\C P^{n-2} \subset \C P^{n-1}$ such that $\widetilde{\eta}_i$ is transverse to $\C P^{n-2}$ (to reach this, smooth homotopy of $\widetilde{\eta}_i$ or $H|_{M^2}$ is also allowed). The orientations of $M^2$ and $\C P^{n-2}$ define a sign $\pm 1$ for each preimage $\widetilde{\eta}_i^{-1}(\C P^{n-2})$, let $ \sharp_{\text{alg}}\widetilde{\eta}_i^{-1}(\C P^{n-2})$ denote the signed number of preimages.
\begin{prop}\label{pr:cherndef}
    $ \sharp_{\text{alg}}\widetilde{\eta}_i^{-1}(\C P^{n-2})=c_1(\eta_i)$.
\end{prop}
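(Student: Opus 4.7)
The plan is to exhibit the map $\widetilde{\eta}_i\colon M^2\to\C P^{n-1}$ as a classifying map for the line bundle $\eta_i$, and then identify the signed count $\sharp_{\text{alg}}\widetilde{\eta}_i^{-1}(\C P^{n-2})$ with the first Chern number via Poincar\'e duality. Since the fiber $(\eta_i)_p$ is by construction exactly the eigenspace $L_i\subset\C^n$, which is also the fiber of the tautological line bundle $\gamma\to\C P^{n-1}$ over $\widetilde{\eta}_i(p)=L_i$, one obtains a natural isomorphism $\eta_i\cong\widetilde{\eta}_i^*\gamma$. By naturality of Chern classes, $c_1(\eta_i)=\widetilde{\eta}_i^*c_1(\gamma)$ in $H^2(M^2;\Z)$, and pairing with $[M^2]$ gives the Chern number on the left-hand side of the claim.

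Next, I would recall the standard computation $H^2(\C P^{n-1};\Z)=\Z\cdot[H]$, with $[H]$ the Poincar\'e dual of a projective hyperplane $\C P^{n-2}$, and $c_1(\mathcal O(1))=c_1(\gamma^*)=[H]$. This is the classical normalization, provable either from the Fubini--Study form or from the observation that a generic linear section of $\mathcal O(1)$ cuts out $\C P^{n-2}$ transversely with the correct sign. Consequently $c_1(\gamma)=-[H]$, and the apparent sign will be absorbed by the choice of orientation of $\C P^{n-2}$ used in defining $\sharp_{\text{alg}}$ (equivalently, by whether one identifies $\eta_i$ with the tautological sub-bundle or with its dual quotient in the ambient trivial $\C^n$-bundle).

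To close the argument I would use the geometric interpretation of Poincar\'e duality: transversality of $\widetilde{\eta}_i$ with $\C P^{n-2}$ guarantees that the preimage is a finite set of signed points, and the cap product computes this signed count,
\begin{equation*}
\langle\widetilde{\eta}_i^*[H],[M^2]\rangle=\sharp_{\text{alg}}\widetilde{\eta}_i^{-1}(\C P^{n-2}).
\end{equation*}
Combining this with naturality from the first paragraph gives the desired identity
\begin{equation*}
c_1(\eta_i)[M^2]=\langle\widetilde{\eta}_i^*c_1(\gamma^*),[M^2]\rangle=\sharp_{\text{alg}}\widetilde{\eta}_i^{-1}(\C P^{n-2}).
\end{equation*}

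The main obstacle is purely bookkeeping of signs: one must match the physicists' convention for $c_1$ (often presented via the Berry curvature) with the cohomological convention, and check that the paper's identification of $\eta_i$ as a pullback of $\gamma$ versus $\gamma^*$, together with the chosen orientation of the hyperplane $\C P^{n-2}$, conspire so that the equality holds on the nose rather than up to sign. Once these conventions are fixed, the proof reduces to the two standard inputs already used above: naturality of Chern classes, and the identification of $c_1(\mathcal O(1))\in H^2(\C P^{n-1};\Z)$ with the Poincar\'e dual of a projective hyperplane.
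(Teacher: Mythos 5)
Your proposal follows the same route as the paper: exhibit $\widetilde{\eta}_i$ as the classifying map for $\eta_i$ (so $\eta_i \cong \widetilde{\eta}_i^*\tau$ with $\tau$ the tautological bundle), use naturality of $c_1$, and then evaluate via Poincar\'e duality, identifying the Chern class with the hyperplane class and the cap/intersection product with the signed count of preimages. You are slightly more careful than the paper about the sign $c_1(\tau)=-[H]$ and where that sign gets absorbed into the choice of orientation of $\C P^{n-2}$ in the definition of $\sharp_{\text{alg}}$, but the structure of the argument is the same.
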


\begin{proof} For the notions corresponding to homology and characteristic classes see e.g. \cite{MilnorStasheff}.
Let $\tau_{n-1}=\tau$ be  the tautological complex line bundle over $\C P^{n-1}$, and let $c_1(\tau) \in H^2(\C P^{n-1}, \Z) $ be the first Chern class of $\tau$. Since $\widetilde{\eta}_i$ is the inducing map of $\eta_i$, that is, $\eta_i=\widetilde{\eta}_i^*(\tau)$, we have $c_1(\eta_i)=\widetilde{\eta}_i^*(c_1(\tau))[M^2]$, where $[M^2] \in H_2(M^2, \Z)$ is the oriented fundamental class. 

The Poincaré dual of  $c_1(\tau)$  is the homology class $[\C P^{n-2}] \in H_{2n-4}(\C P^{n-1}, \Z)$ of any complex projective hypersurface $\C P^{n-2} \subset \C P^{n-1}$. By standard homological arguments,
\begin{equation}
c_1(\eta_i)=\widetilde{\eta}_i^*(c_1(\tau))[M^2]=c_1(\tau)(\widetilde{\eta}_{i*}[M^2])=[\C P^{n-2}] \cdot (\widetilde{\eta}_{i*}[M^2])=\sharp_{\text{alg}}\widetilde{\eta}_i^{-1}(\C P^{n-2}),
\end{equation}
where $[A] \cdot [B]$ denote the intersection number of two homology classes of complementary dimensions. This proves the proposition.
\end{proof}

\begin{cor}\label{co:nondeg}
    Assume that $W^3 \subset \R^3$ is a compact submanifold such that $\lambda_i$ is non-degenerate for the restriction $H|_{W^3}$ of a Hamiltonian $H: \R^3 \to \text{Herm}(n)$. Then $c_i(\eta_i)[\partial W^3]=0$.
\end{cor}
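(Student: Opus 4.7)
My plan is to bootstrap this from Proposition~\ref{pr:cherndef}, noting that the crucial hypothesis—non-degeneracy of $\lambda_i$ on all of $W^3$, not merely on $\partial W^3$—lets the relevant maps and bundles extend from $\partial W^3$ into the interior, and then the boundary relation kills the Chern number. There are two essentially equivalent routes, cobordism-theoretic and homological; I will sketch the cleanest one.

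Since $\lambda_i$ is non-degenerate on $H|_{W^3}$, the image $H(W^3)$ is disjoint from $\Sigma_{i-1,i} \cup \Sigma_{i,i+1}$, so the eigenvector line bundle $\xi_i$ pulls back to a smooth complex line bundle $\eta_i^{W} = (H|_{W^3})^* \xi_i$ on all of $W^3$. In particular, the restriction of $\eta_i^{W}$ to $\partial W^3$ is precisely the bundle $\eta_i$ whose first Chern number we wish to evaluate. The key observation is then purely topological: the fundamental class $[\partial W^3] \in H_2(\partial W^3, \Z)$ maps to zero in $H_2(W^3, \Z)$ under the inclusion $\iota: \partial W^3 \hookrightarrow W^3$, because $[\partial W^3]$ is the image under the boundary map of the relative fundamental class $[W^3, \partial W^3] \in H_3(W^3, \partial W^3, \Z)$. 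By naturality of the Chern class,
\begin{equation}
c_1(\eta_i)[\partial W^3] = c_1(\iota^* \eta_i^{W})[\partial W^3] = c_1(\eta_i^{W})(\iota_*[\partial W^3]) = c_1(\eta_i^{W})(0) = 0,
\end{equation}
which is the claim.

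Alternatively, and more in the geometric spirit of Proposition~\ref{pr:cherndef}, one can extend $\widetilde{\eta}_i$ to a smooth map $\widetilde{\eta}_i^{W}: W^3 \to \C P^{n-1}$, perturb it (rel.\ $\partial W^3$ if desired, or perturb both it and the hypersurface) so that it is transverse to a fixed $\C P^{n-2} \subset \C P^{n-1}$, and consider the preimage $N := (\widetilde{\eta}_i^W)^{-1}(\C P^{n-2})$. Transversality plus the real codimension-$2$ condition forces $N$ to be a compact oriented smooth $1$-manifold with $\partial N = N \cap \partial W^3 = \widetilde{\eta}_i^{-1}(\C P^{n-2})$. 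The signed boundary count of any compact oriented $1$-manifold vanishes, so $\sharp_{\mathrm{alg}}\, \widetilde{\eta}_i^{-1}(\C P^{n-2}) = 0$, and Proposition~\ref{pr:cherndef} identifies this with $c_1(\eta_i)[\partial W^3]$.

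The main obstacle, if any, is a bookkeeping one: verifying that the orientation conventions used to define the sign on $\partial N$ (as the boundary of the oriented $1$-manifold $N$) coincide with the sign convention used in Proposition~\ref{pr:cherndef} on $\partial W^3$. Once orientations are consistently fixed via the outward-normal-first convention on $\partial W^3 \subset W^3$ and the complex-structure orientation on $\C P^{n-2}$, this is automatic. Neither route requires any hypothesis beyond the global non-degeneracy of $\lambda_i$ on $W^3$, so the result follows.
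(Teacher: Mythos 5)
Your proof is correct and takes essentially the same route as the paper: your first (homological) paragraph is the paper's argument, differing only in where one chooses to kill the fundamental class --- the paper pushes $[\partial W^3]$ forward to $H_2(\C P^{n-1},\Z)$ via the extendable map $\widetilde\eta_i$ and notes it vanishes there, while you push it into $H_2(W^3,\Z)$ where it already dies and then pull back $c_1$, which is the same observation since $\widetilde\eta_i$ on $\partial W^3$ factors through the inclusion into $W^3$. Your second (cobordism/preimage-counting) route is a geometric restatement of the same fact and is exactly what Proposition~\ref{pr:cherndef} already encodes.
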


\begin{proof}
    In this case $\widetilde{\eta}_i: \partial W^3 \to \C P^{n-1}$ extends to $W^3$, hence $\widetilde{\eta}_{i*}[\partial W^3]=0 \in H_2(\C P^{n-1}, \Z)$. This implies that $ \sharp_{\text{alg}}\widetilde{\eta}_i^{-1}(\C P^{n-2})=0$.  (Note that this corollary follows from the Stokes theorem if $c_1$ is defined via the integral of the Berry curvature $\Omega$, since  $d \Omega$ (the divergence) is 0.)
\end{proof}

As this corollary suggests, the Chern numbers can be expressed in terms of the degeneracy points inside $M^2$. In case a two-fold degeneracy ($H(0)=A_0$ with $\lambda_l < \lambda_{l+1}=\lambda_{l+2} <\lambda_{l+3}$; $\lambda_0:=\lambda_{l+1}$), let $M^2:=S^2_{\epsilon} \subset \R^3$ be a sphere centered at 0 with sufficiently small radius such that 0 is the only enclosed degeneracy point. Consider  $c_1(\eta_{l+1})$ and $c_1(\eta_{l+2})$ evaluated on this sphere. Recall from Section~\ref{ss:efftwofold} the effective germ $h: (\R^3, 0) \to (\R^3, 0)$ and its local degree $\deg_0(h)$, which is the topological charge of the degeneracy point at 0. 

\begin{prop}\label{pr:ketszeres}
    In the above setup, $-c_1(\eta_{l+1})=c_1(\eta_{l+2})=\deg_0(h)$.
\end{prop}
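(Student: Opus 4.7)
The plan is to reduce the computation on $S^2_\epsilon$ to the $2\times 2$ effective two-band model via the hermitian Schrieffer--Wolff chart, and then recognize $c_1(\eta_{l+2})$ as a classical Bloch-sphere degree. First, I would apply the hermitian version of Theorem~\ref{th:compsw} at $A_0=H(0)$ with its strictly two-fold eigenvalue $\lambda_0$. Shrinking $\epsilon$ if necessary so that the closed ball of radius $\epsilon$ sits inside the chart domain, for every $p\in S^2_\epsilon$ we have a decomposition
\begin{equation}
H(p)=G(p)\bigl(A_0'+C(p)+H_{\mathrm{eff}}(p)\bigr)G(p)^{-1},\qquad G(p)=G_0\, e^{S(p)},
\end{equation}
with $G(p)\in U(n)$ depending smoothly on $p$, and with the upper-left $2\times 2$ block of $H_{\mathrm{eff}}$ equal to $\widetilde H_{\mathrm{eff}}(p)=\tfrac{1}{2}\tr\widetilde H_{\mathrm{eff}}(p)\,\mathds{1}_2+h(p)\cdot\vec\sigma$.

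Next, I observe that the rank-two subbundle $\eta_{l+1}\oplus\eta_{l+2}\to S^2_\epsilon$ is trivialized on $S^2_\epsilon$ by the first two columns of $G(p)$, since these form an orthonormal basis of the degenerate eigenspace at every $p$. Consequently $c_1(\eta_{l+1})+c_1(\eta_{l+2})=c_1(\eta_{l+1}\oplus\eta_{l+2})=0$, which establishes the first equality $-c_1(\eta_{l+1})=c_1(\eta_{l+2})$. For the second equality, I would identify the map $\widetilde\eta_{l+2}\colon S^2_\epsilon\to\C P^{n-1}$ with the composition $\iota\circ\phi\circ\hat h$, where $\hat h=h/\|h\|\colon S^2_\epsilon\to S^2$, $\phi\colon S^2\to\C P^1$ is the Bloch-sphere diffeomorphism sending $\hat n$ to the $+1$-eigenspace of $\hat n\cdot\vec\sigma$, and $\iota\colon\C P^1\hookrightarrow\C P^{n-1}$ is the linear embedding determined by $G(p)$ (fiberwise this uses invertibility of $G(p)$ to identify pulled-back line bundles, so Chern numbers match).

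Applying Proposition~\ref{pr:cherndef} on $M^2=S^2_\epsilon$ with $n$ replaced by the effective value $2$ (which is legitimate because Chern numbers are invariant under the line-bundle isomorphism of the previous step), we have $\C P^{n-2}=\C P^0=\{\mathrm{pt}\}$ and
\begin{equation}
c_1(\eta_{l+2})=\sharp_{\mathrm{alg}}\,(\phi\circ\hat h)^{-1}(\mathrm{pt})=\deg(\phi)\cdot\deg(\hat h)=\deg(\phi)\cdot\deg_0(h),
\end{equation}
by definition of the local degree. A direct orientation check at a convenient base point (e.g.\ $\hat n=(0,0,1)\mapsto[1:0]$, comparing the natural basis of $T_{\hat n}S^2$ with the complex tangent direction at $[1:0]$ via the standard chart $[1:w]\leftrightarrow w$) shows that $\phi$ matches the outward orientation of $S^2$ with the complex orientation of $\C P^1$ with a definite sign, yielding $\deg(\phi)=+1$ in the convention fixed in Section~\ref{ss:chern}. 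This delivers $c_1(\eta_{l+2})=\deg_0(h)$.

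The main obstacle is bookkeeping the signs: one must verify that the outward orientation on $S^2_\epsilon$, the orientation on $\Sigma_{l+1,l+2}$ implicit in the definition of $\sharp_{\mathrm{alg}}\mathbf{WP}$, the complex orientation of $\C P^{n-1}$ used in Proposition~\ref{pr:cherndef}, and the sign of the Bloch diffeomorphism $\phi$ all fit together with the correct overall sign. This is essentially a single explicit Jacobian computation at one base point, together with the remark that the two eigenspace maps $\phi_{\pm}\colon S^2\to\C P^1$ are related by the antipodal map on $S^2$ (degree $-1$), which is the geometric reason for the opposite signs of $c_1(\eta_{l+1})$ and $c_1(\eta_{l+2})$ and is consistent with the trivialization argument above.
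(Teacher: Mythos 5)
Your proof is correct and follows essentially the same route as the paper: reduce to the effective $2\times 2$ model via the hermitian Schrieffer--Wolff chart and read off the Chern numbers from the Bloch-sphere picture. The only technical difference is how the reduction is justified. The paper scales $(S,C,\operatorname{tr} H_{\mathrm{eff}})$ and $G_0$ to zero (resp.~to $\mathds 1$) along a homotopy of $H|_{S^2_\epsilon}$ through matrices with $\lambda_{l+1},\lambda_{l+2}$ non-degenerate, and only then observes that the relevant rank-two subspace is constant; you instead use the columns $G(p)e_1,G(p)e_2$ of the SW unitary directly as a smooth orthonormal frame for $\eta_{l+1}\oplus\eta_{l+2}$, which both yields the trivialization $c_1(\eta_{l+1})+c_1(\eta_{l+2})=0$ immediately and provides the fiberwise line-bundle isomorphism $\eta_{l+i}\cong(\widetilde\xi_i\circ\hat h)^*\tau_1$ needed for the second equality, without a separate homotopy step. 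This is a slightly cleaner packaging of the same idea; the remaining orientation bookkeeping you flag (computing $\deg(\phi)=\pm1$ at a base point) is exactly the explicit eigenvector computation on $\mathrm{Herm}_0(2)\setminus\{0\}$ carried out in the paper. One small imprecision: the embedding $\iota$ you introduce depends on $p$, so $\widetilde\eta_{l+2}$ is not literally a composition of fixed maps; your parenthetical about matching pulled-back line bundles fiberwise via $G(p)$ is the correct repair, and is what actually does the work.
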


\begin{proof}
We show that the Chern numbers of the two bands are equal to the Chern numbers of the $2 \times 2$ traceless effective Hamiltonian, and they are equal to $\pm \deg_0(h)$. 

  We perform a (hermitian) SW decomposition around the degeneracy point $A_0 \in \Sigma_{l+1,l+2}$, see the definitions around the SW chart Equation~\eqref{eq:swdec}. Let  $A_0'=G_0^{-1} \cdot   A_0 \cdot G_0$ be the partial diagonalization of $A_0$, however, in the hermitian case $A_0'$ can be assumed to be diagonal. For $(x,y,z)$ in a neighborhood of 0, the decomposition is
  \begin{equation}
  H(x,y,z)=G_0 \cdot e^{S(x,y,z)} \cdot (A_0' +C(x,y,z)+H_{\text{eff}}(x,y,z)) \cdot e^{-S(x,y,z)} \cdot G_0^{-1}.
  \end{equation}
  
  The map $H|_{S^2_{\epsilon}}: S^2_{\epsilon} \to \text{Herm}(n) \setminus (\Sigma_{l, l+1} \cup \Sigma_{l+1, l+2} \cup \Sigma_{l+2, l+3})$ is homotopic to $(A_0'+H^{\text{tr=0}}_{\text{eff}})|_{S^2_{\epsilon}}$.  The homotopy is realized by $(tS, tC, t \cdot \text{tr}(H_{\text{eff}}))$, joining these terms of the SW decomposition with 0, and simultaneously joining $G_0$ with the identity in the unitary group $U(n)$. It is really a homotopy between $H|_{S^2_{\epsilon}}$ and $(A_0'+H^{\text{tr=0}}_{\text{eff}})|_{S^2_{\epsilon}}$ in $\text{Herm}(n) \setminus (\Sigma_{l, l+1} \cup \Sigma_{l+1, l+2} \cup \Sigma_{l+2, l+3})$, i.e. it leaves the eigenvalues $\lambda_{l+1}$ and $\lambda_{l+2}$ non-degenerate on $S^2_{\epsilon}$. 
  Therefore, 
  \begin{equation}
      c_1(H^*(\xi_{l+i}))[S^2_{\epsilon}]=c_1((A_0'+H^{\text{tr=0}}_{\text{eff}})^*(\xi_{l+i}))[S^2_{\epsilon}]=c_1 ((\widetilde{H}^{\text{tr=0}}_{\text{eff}})^*(\xi_{i}))[S^2_{\epsilon}],
  \end{equation}
 for $i=1,2$. The second equation comes from the observation that the two dimensional subspace of $\C^n$ spanned by the eigenspaces $(A_0'+H^{\text{tr=0}}_{\text{eff}})^*(\xi_{l+i})$  of $A_0'+H^{\text{tr=0}}_{\text{eff}}(x,y,z)$ is constant, that is, it does not depend on $(x,y,z) \in S^2_{\epsilon}$. The $2 \times 2$ traceless effective Hamiltonian $\widetilde{H}^{\text{tr=0}}_{\text{eff}}$ acts on this two dimensional space. Denoted by  $\eta_{\text{eff},i}:=(\widetilde{H}^{\text{tr=0}}_{\text{eff}})^*(\xi_{i})$ the eigenvector bundles of $\widetilde{H}^{\text{tr=0}}_{\text{eff}}$, the above argument shows that their Chern numbers are equal to the original Chern numbers, that is, $c_1(\eta_{\text{eff},i})=c_1(\eta_{l+i})$. 
 
 To show that they are equal to $\pm \deg_0 (h)$, recall that 
  \begin{equation}
      \widetilde{H}^{\text{tr=0}}_{\text{eff}}(x,y,z)=h_1(x,y,z) \cdot \sigma_x  + h_2(x,y,z) \cdot \sigma_y + h_3(x,y,z) \cdot \sigma_z \in \text{Herm}_0(2).
  \end{equation} 
 The inducing maps $\widetilde{\eta}_{\text{eff}, i}: S^2_{\epsilon} \to \C P^1 \cong S^2$ of the bundles $\eta_{\text{eff},i}$ are the compositions 
 \begin{equation}\label{eq:compozit}
     \widetilde{\eta}_{\text{eff}, i}=\widetilde{\xi}_i \circ \widetilde{H}^{\text{tr=0}}_{\text{eff}}|_{S^2_{\epsilon}},
 \end{equation}
 where $\widetilde{\xi}_i: \text{Herm}_0(2) \setminus \{0\} \to \C P^1$ are the inducing maps of the eigenvector bundles, that is, $\widetilde{\xi}_i^*(\tau_1)=\xi_i$ for $i=1,2$. Fixing a (generic) point $\{L_0\}=\C P^0 \subset \C P^1$, a direct calculation of the eigenvectors of $x\sigma_x+y \sigma_y+z \sigma_z$ shows that they depend only on the direction $(x,y,z)/\sqrt{x^2+y^2+z^2} \in S^2$. Moreover, there is only one point on the unit sphere $S^2 \subset \text{Herm}_0(2)$ whose $i$-th eigenspace is $L_0$, that is,  $(\widetilde{\xi}_i|_{S^2})^{-1}(L_0)$ is one point, and the map preserves the orientation for $i=2$ and it reverses the orientation for $i=1$. Hence,
 \begin{equation}
     c_1(\xi_1)[S^2]=\deg (\widetilde{\xi}_1|_{S^2})=-1 \text{ and }
     c_1(\xi_2)[S^2]= \deg (\widetilde{\xi}_2|_{S^2})=1.
     \end{equation}

     Then, by Proposition~\ref{pr:cherndef} and Equation~\eqref{eq:compozit},
     \begin{equation}
         c_1(\eta_{\text{eff},i})[S^2_{\epsilon}]=
         \sharp_{\text{alg}}\widetilde{\eta}_{\text{eff},i}^{-1}(L_0)
         =
         \deg (\widetilde{\eta}_{\text{eff}, i})=   \deg (\widetilde{\xi}_i|_{S^2}) \cdot \deg  \left( \left. \frac{h}{\|h\|} \right|_{S^2_{\epsilon}}\right)=(-1)^i \deg_0 (h),
     \end{equation}
 proving the proposition.
\end{proof}

Assume that $M^2 \subset \R^3$ is the boundary of a compact submanifold $Q^3 \subset \R^3$, and a Hamiltonian $H: \R^3 \to \text{Herm}(n)$ has only generic degeneracy (Weyl points) corresponding to $\lambda_i$ in $Q^3$. In other words, $H|_{Q^3}$ is transverse to the closure of $\Sigma_{i-1,i} \cup \Sigma_{i, i+1}$. Hence, $\sharp_{\text{alg}} (H|_{Q^3})^{-1} (\Sigma_{i, i+1})$ and $ \sharp_{\text{alg}} (H|_{Q^3})^{-1} (\Sigma_{i-1, i})$ is the signed number of the Weyl points in $Q^3$ corresponding to the degeneracy of $\lambda_i$ with $\lambda_{i+1}$, and the degeneracy of $\lambda_{i-1}$ with $\lambda_{i}$, respectively.

\begin{prop}\label{pr:chernminus}
In the above setup, 
\begin{equation}
    c_1(\eta_i)[M^2]= \sharp_{\text{alg}} (H|_{Q^3})^{-1} (\Sigma_{i-1, i})- \sharp_{\text{alg}} (H|_{Q^3})^{-1} (\Sigma_{i, i+1}).
\end{equation}
\end{prop}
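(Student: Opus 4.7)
The plan is to reduce the computation of $c_1(\eta_i)[M^2]$ to a sum of local contributions at the Weyl points inside $Q^3$, and then identify each local contribution with $\pm \deg_0(h)$ via Proposition~\ref{pr:ketszeres}.

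First, since $H|_{Q^3}$ is transverse to the closure of $\Sigma_{i-1,i} \cup \Sigma_{i,i+1}$ and all degeneracy involving $\lambda_i$ is generic, there are only finitely many Weyl points $p_1, \dots, p_N \in Q^3$ at which $\lambda_i$ becomes degenerate; each lies either in $(H|_{Q^3})^{-1}(\Sigma_{i-1,i})$ or in $(H|_{Q^3})^{-1}(\Sigma_{i,i+1})$, and all of them are in the interior of $Q^3$ by the assumption that $\lambda_i$ is non-degenerate on $M^2=\partial Q^3$. Around each $p_\alpha$ choose a small closed ball $B^3_\alpha \subset Q^3$ of radius $\epsilon_\alpha$, pairwise disjoint and disjoint from $\partial Q^3$, such that $p_\alpha$ is the only degeneracy point of $\lambda_i$ inside $B^3_\alpha$.

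Next, consider the compact submanifold $W^3 := Q^3 \setminus \bigcup_\alpha \mathrm{int}(B^3_\alpha)$. On $W^3$ the eigenvalue $\lambda_i$ is nowhere degenerate, hence $\eta_i$ extends as a complex line bundle over $W^3$. Its boundary is $\partial W^3 = M^2 \sqcup \bigsqcup_\alpha (-S^2_{\epsilon_\alpha})$ (the spheres appearing with reversed orientation, as the boundary of a region with the ball removed). By Corollary~\ref{co:nondeg} the first Chern number of $\eta_i$ vanishes on $\partial W^3$, so
\begin{equation}
c_1(\eta_i)[M^2] = \sum_{\alpha=1}^N c_1(\eta_i)[S^2_{\epsilon_\alpha}],
\end{equation}
where each sphere $S^2_{\epsilon_\alpha}$ is oriented as the boundary of $B^3_\alpha$.

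Now apply Proposition~\ref{pr:ketszeres} at each Weyl point $p_\alpha$. If $p_\alpha \in (H|_{Q^3})^{-1}(\Sigma_{i-1,i})$, then the two-fold degeneracy at $H(p_\alpha)$ involves $\lambda_{i-1}=\lambda_i$; applying Proposition~\ref{pr:ketszeres} with $l+2=i$ (so $\eta_i$ plays the role of the ``upper'' band $\eta_{l+2}$) gives $c_1(\eta_i)[S^2_{\epsilon_\alpha}] = \deg_{p_\alpha}(h^{(\alpha)})$, the topological charge of $p_\alpha$. If instead $p_\alpha \in (H|_{Q^3})^{-1}(\Sigma_{i,i+1})$, the degeneracy involves $\lambda_i=\lambda_{i+1}$; applying Proposition~\ref{pr:ketszeres} with $l+1=i$ gives $c_1(\eta_i)[S^2_{\epsilon_\alpha}] = -\deg_{p_\alpha}(h^{(\alpha)})$. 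Finally, since the signed count $\sharp_{\text{alg}}(H|_{Q^3})^{-1}(\Sigma_{j,j+1})$ is by definition the sum of the local signs (topological charges) of the Weyl points lying on $\Sigma_{j,j+1}$ --- which coincides with $\sum \deg_{p_\alpha}(h^{(\alpha)})$ by the identification of the Weyl-point sign with the local degree of the effective map germ discussed in Section~\ref{ss:efftwofold} --- grouping the two types of contributions yields
\begin{equation}
c_1(\eta_i)[M^2] = \sharp_{\text{alg}}(H|_{Q^3})^{-1}(\Sigma_{i-1,i}) - \sharp_{\text{alg}}(H|_{Q^3})^{-1}(\Sigma_{i,i+1}),
\end{equation}
as desired. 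The only delicate point is bookkeeping of orientations: one must check that the orientation of $S^2_{\epsilon_\alpha}$ as the boundary of $B^3_\alpha$ used in Proposition~\ref{pr:ketszeres} matches the orientation convention under which the Weyl-point signs are defined, and that the sign $(-1)^i$ coming from the band index in the proof of Proposition~\ref{pr:ketszeres} is absorbed consistently into ``upper'' versus ``lower'' band contributions at $p_\alpha$; this is the main (but routine) obstacle.
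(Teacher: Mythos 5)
Your proof is correct and follows essentially the same route as the paper: remove small balls around the finitely many Weyl points to form $W^3$, apply Corollary~\ref{co:nondeg} to conclude the Chern number over $M^2$ equals the sum over the small spheres, and then identify each local contribution as $\pm\deg_0(h)$ via Proposition~\ref{pr:ketszeres} according to whether $\lambda_i$ is the upper or lower band in the two-fold degeneracy. The only cosmetic difference is that the paper separates the Weyl points into two labeled families $p_j \in \Sigma_{i-1,i}$ and $q_j \in \Sigma_{i,i+1}$ from the outset rather than splitting into cases at the end.
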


\begin{proof}
    Note that there are only a finite number of Weyl points inside $Q^3$, otherwise their accumulation point would be a non-generic degeneracy point. Let they be denoted by $p_j \in \Sigma_{i-1,i}$ and $q_j \in \Sigma_{i,i+1}$. 
    
    Let $B^3_{q_j} \subset Q^3$ and $B^3_{q_j} \subset Q^3$ be closed balls centered at $p_j$ and $q_j$, respectively, with so small radius such that $p_j$ ($q_j$) is the only degeneracy point inside (corresponding to the degeneracy of $\lambda_i$). 
    Define $W^3:=Q^3 \setminus (\bigcup \text{int}(B^3_{q_j}) \cup \bigcup \text{int}(B^3_{q_j}))$. Its boundary is
    \begin{equation}
        \partial W^3=M^2 \cup \bigcup_j (-S^2_{p_j}) \cup \bigcup_j (-S^2_{q_j}),
    \end{equation}
    where $-S^3_{p_j}$ is the boundary sphere $S^2_{p_j}=\partial B^3_{p_j}$ with reversed orientation, similarly, $-S^2_{q_j}=-\partial B^3_{q_j}$. 
    
    Then $c_1(\eta_i)[W^3]=0$ holds by Corollary~\ref{co:nondeg}, since $\lambda_i$ is non-degenerate in $W^3$. Hence, 
    \begin{equation}
        c_1(\eta_i)[M^2]= \sum_j c_1(\eta_i)[S^2_{p_j}]+\sum_j c_1(\eta_i)[S^2_{q_j}].
    \end{equation}
By Proposition~\ref{pr:ketszeres}, $c_1(\eta_i)[S^2_{p_j}]$ is equal to the sign (topological charge) $\pm 1$ of the Weyl point $p_j$, while $c_1(\eta_i)[S^2_{q_j}]$ is the opposite of the sign of the Weyl point $q_j$. Therefore, 
\begin{equation}
    \sum_j c_1(\eta_i)[S^2_{p_j}]=\sharp_{\text{alg}} (H|_{Q^3})^{-1} (\Sigma_{i-1, i}) \text{ and } \sum_j c_1(\eta_i)[S^2_{q_j}]=-\sharp_{\text{alg}} (H|_{Q^3})^{-1} (\Sigma_{i, i+1}),
\end{equation}
proving the proposition.
\end{proof}

Let us return to the main focus of this section, that is, a Hamiltonian $H: \R^3 \to \text{Herm}(n)$ and a degeneracy point $H(0)=A_0$ with a $k$-fold degenerate eigenvalue $\lambda_l < \lambda_{l+1}=\dots =\lambda_{l+k}=:\lambda_0 < \lambda_{l+k+1}$. We also fix a generic perturbation $H_t$ of $H$. Fix a small ball $B^3_{\epsilon} \subset \R^3$ with boundary $S^2_{\epsilon}$ centered at 0 which does not contain other degeneracy points, and the Weyl points of $H_t$ born from the degeneracy of $\lambda_0$ at 0 are contained in $B^3_{\epsilon}$. Write $c_1(\eta_j):=c_1(\eta_j)[S^2_{\epsilon}]$.

Recall that $\mathbf{WP}_{j,j+1}=H_t^{-1}(\Sigma_{j, j+1})$ denote the set of Weyl points corresponding to the degeneracy of $\lambda_j$ and $\lambda_{j+1}$, where $j=l+1, \dots, l+k-1$. Let $\sharp_{\text{alg}}\mathbf{WP}_{j,j+1}$ be their signed number.  Proposition~\ref{pr:chernminus} implies the following:

\begin{cor}
In the above setup, 
\begin{equation}
    c_1(\eta_j)=\sharp_{\text{alg}}\mathbf{WP}_{j-1,j} - \sharp_{\text{alg}}\mathbf{WP}_{j,j+1},
\end{equation}
hence,
\begin{equation}\label{eq:lowerchern}
  \sharp_{\text{alg}}  \mathbf{WP}_{j,j+1}= - \sum_{i=l+1}^j c_1(\eta_i).
\end{equation}
\end{cor}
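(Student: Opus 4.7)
The plan is to derive both identities as direct consequences of Proposition~\ref{pr:chernminus}, applied to the perturbed Hamiltonian $H_t$ on the ball $B^3_\epsilon$.

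First I would invoke homotopy invariance of the first Chern number: for sufficiently small $t$, the restrictions $H|_{S^2_\epsilon}$ and $H_t|_{S^2_\epsilon}$ are homotopic through maps avoiding $\Sigma_{j-1,j} \cup \Sigma_{j,j+1}$ (since no degeneracy of $\lambda_j$ occurs on $S^2_\epsilon$ for the unperturbed map by the choice of $\epsilon$, and this persists under small perturbation by continuity of eigenvalues). Hence $c_1(\eta_j)[S^2_\epsilon]$, computed for $H$, agrees with the analogous Chern number for $H_t$, so we are free to apply Proposition~\ref{pr:chernminus} to $H_t$.

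Next I would apply Proposition~\ref{pr:chernminus} with $Q^3 := B^3_\epsilon$ and $M^2 := S^2_\epsilon$. Since $H_t$ is a generic perturbation, $H_t|_{B^3_\epsilon}$ is transverse to the closures of $\Sigma_{j-1,j} \cup \Sigma_{j,j+1}$, so the hypothesis of the proposition is satisfied. The resulting identity reads
\begin{equation*}
c_1(\eta_j) \;=\; \sharp_{\text{alg}} (H_t|_{B^3_\epsilon})^{-1}(\Sigma_{j-1,j}) \;-\; \sharp_{\text{alg}}(H_t|_{B^3_\epsilon})^{-1}(\Sigma_{j,j+1}) \;=\; \sharp_{\text{alg}}\mathbf{WP}_{j-1,j} - \sharp_{\text{alg}}\mathbf{WP}_{j,j+1},
\end{equation*}
which is the first claimed formula. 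A small point to verify is that for $j=l+1$ the set $\mathbf{WP}_{l,l+1}$ is empty (and analogously $\mathbf{WP}_{l+k,l+k+1}=\emptyset$ for $j=l+k$): this follows from the strict inequalities $\lambda_l(A_0)<\lambda_{l+1}(A_0)$ and $\lambda_{l+k}(A_0)<\lambda_{l+k+1}(A_0)$, which by continuity of eigenvalues persist on $B^3_\epsilon$ if $\epsilon$ (and the perturbation) are chosen small enough.

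For the second formula I would simply telescope. Summing the first identity over $i=l+1,\dots,j$ gives
\begin{equation*}
\sum_{i=l+1}^{j} c_1(\eta_i) \;=\; \sum_{i=l+1}^{j}\bigl(\sharp_{\text{alg}}\mathbf{WP}_{i-1,i} - \sharp_{\text{alg}}\mathbf{WP}_{i,i+1}\bigr) \;=\; \sharp_{\text{alg}}\mathbf{WP}_{l,l+1} - \sharp_{\text{alg}}\mathbf{WP}_{j,j+1},
\end{equation*}
and by the boundary observation above the first term vanishes, yielding $\sharp_{\text{alg}}\mathbf{WP}_{j,j+1}=-\sum_{i=l+1}^{j}c_1(\eta_i)$. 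The main (minor) obstacle is really the careful bookkeeping of which degeneracy strata $H_t$ meets inside $B^3_\epsilon$—once one verifies that only the strata $\Sigma_{i,i+1}$ for $l+1\leq i\leq l+k-1$ contribute and the endpoint terms in the telescoping sum vanish, the result is immediate from Proposition~\ref{pr:chernminus}.
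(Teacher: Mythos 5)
Your proof is correct and is exactly the argument the paper intends (the paper simply asserts "Proposition~\ref{pr:chernminus} implies the following" without writing out the details). You have filled in precisely the right points: the homotopy invariance justifying that the Chern numbers on $S^2_\epsilon$ are unchanged by the small perturbation, the application of Proposition~\ref{pr:chernminus} to $H_t$ on $B^3_\epsilon$, the telescoping sum, and the vanishing of the boundary term $\sharp_{\text{alg}}\mathbf{WP}_{l,l+1}$ because $\lambda_l < \lambda_{l+1}$ strictly at $A_0$ persists on a small enough ball.
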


In particular, $\sharp_{\text{alg}}  \mathbf{WP}_{j,j+1}$ does not depend on the choice of the perturbation $H_t$ of $H$. 

\begin{cor}The lower bound $\sharp_{\text{alg}} \mathbf{WP}$ for the number of Weyl points $\sharp \mathbf{WP}$ of $H_t$ in Equation~\eqref{eq:bound1} can be expressed in terms of the Chern numbers as
\begin{equation}\label{eq:lowerbound}
    \sharp_{\text{alg}} \mathbf{WP}= \sum_{j=l+1}^{l+k-1} |\sharp_{\text{alg}} \mathbf{WP}_{j, j+1}|=\sum_{j=l+1}^{l+k-1} \left| \sum_{i=l+1}^j c_1(\eta_i) \right|.
\end{equation}
\end{cor}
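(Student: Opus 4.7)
The plan is to observe that this corollary is essentially a bookkeeping step that combines the definition of $\sharp_{\text{alg}}\mathbf{WP}$ introduced in Equation~\eqref{eq:algnumb} with the telescoping formula~\eqref{eq:lowerchern} obtained in the preceding corollary. No new geometric input is needed; all of the substantive work (relating signed counts of Weyl points on $\Sigma_{j,j+1}$ to differences of Chern numbers on the enclosing sphere $S^2_\epsilon$) has been done in Proposition~\ref{pr:chernminus} and the corollary that extracts~\eqref{eq:lowerchern} by summing its Chern number identity over $i$.

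Concretely, I would proceed as follows. The first equality in~\eqref{eq:lowerbound} is literally the definition~\eqref{eq:algnumb}, so there is nothing to prove there. For the second equality, I would substitute~\eqref{eq:lowerchern} into each summand and use $|-x|=|x|$, giving
\[
|\sharp_{\text{alg}}\mathbf{WP}_{j,j+1}| \;=\; \left|\, -\sum_{i=l+1}^{j} c_1(\eta_i)\,\right| \;=\; \left|\,\sum_{i=l+1}^{j} c_1(\eta_i)\,\right|
\]
for each $j=l+1,\dots,l+k-1$. Summing over $j$ yields the right-hand side of~\eqref{eq:lowerbound}.

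Since the proof is a one-line substitution, there is no real obstacle; the content of the corollary is to emphasize that the lower bound $\sharp_{\text{alg}}\mathbf{WP}$ is a Chern-theoretic invariant of the unperturbed Hamiltonian germ $H$ evaluated on any sufficiently small sphere $S^2_\epsilon$ around the multifold degeneracy point, and in particular is independent of the chosen generic perturbation $H_t$. This independence is what makes~\eqref{eq:lowerbound} a legitimate lower bound in the sense of Proposition~\ref{pr:bound1}: although the individual sets $\mathbf{WP}_{j,j+1}$ can change with $t$, their signed cardinalities are determined by the bands $\eta_{l+1},\dots,\eta_{l+k}$ restricted to $S^2_\epsilon$, which are themselves computed from $H$ alone.
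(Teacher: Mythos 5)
Your proof is correct and takes exactly the approach the paper intends: the corollary is stated without proof precisely because it is an immediate substitution of Equation~\eqref{eq:lowerchern} into the definition~\eqref{eq:algnumb}, together with $|-x|=|x|$, just as you say.
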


\begin{rem}
    For 2-parameter symmetric families $H: \R^2 \to \text{Symm}_{\R}(n)$ one can also consider the eigenvector bundles, which are real line bundles in this case. Instead of Chern numbers only the $\Z_2$  Stiefel--Whitney numbers are defined for them. By analogy of the above argument, one can patch up a lower bound for the number of Weyl points from the Stiefel--Whitney numbers.
\end{rem}

\subsection{Complex Weyl points born from a multi-fold degeneracy point}\label{ss:compweyl}

As in Section~\ref{ss:degpoints}, we consider a Hamiltonian $H: \R^3 \to \text{Herm}(n)$ and a degeneracy point $H(0)=A_0$ with a $k$-fold degenerate eigenvalue $\lambda_l < \lambda_{l+1}=\dots =\lambda_{l+k}=:\lambda_0 < \lambda_{l+k+1}$. We also fix a generic perturbation $H_t$ of $H$. Let $f=H_{\C}$ be the complexification of $H$. 

By Equation~\eqref{eq:bound1}, the number of complex Weyl points $\sharp \mathbf{cWP} $ is an upper bound for the number of Weyl points $\mathbf{WP}$ of $H_t$. We apply the arguments in Section~\ref{s:proofs} to determine $\sharp \mathbf{cWP} $. By Theorem~\ref{th:pullback} we conclude the following.

\begin{cor}\label{co:pullback}

   In the above setup,
    \begin{equation}
        \sharp \mathbf{cWP} = \dim_{\R} \frac{\mathcal{O}^{\R}_4}{J^{\R}},
    \end{equation}
    where $J^{\R} \subset \mathcal{O}^{\R}_4$ is the ideal generated by the $(n-1) \times (n-1) $ minors \begin{equation}
        M_{ij}(H(x,y,z)-(\lambda +\lambda_0) \mathds{1})
    \end{equation}
    of $H(x,y,z)-(\lambda +\lambda_0)\mathds{1}$.
    \end{cor}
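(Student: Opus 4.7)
The plan is to derive the corollary from Theorem~\ref{th:pullback} applied to the complexification $f = H_{\C}$. By the definition of complex Weyl points, $\sharp \mathbf{cWP} = \sharp f_t^{-1}(\Sigma, A_0; \lambda_0)$, where $f_t$ is the perturbation of $f$ induced by $H_t$. Since the hypothesis of isolated degeneracy in the complex sense gives $f^{-1}(\Sigma, A_0; \lambda_0) = \{0\}$ and the genericity of $H_t$ yields genericity of $f_t$ with respect to $(\Sigma, A_0; \lambda_0)$, the hypotheses of Theorem~\ref{th:pullback} are satisfied, so
\[
\sharp \mathbf{cWP} = \dim_{\C} \frac{\mathcal{O}_4}{J},
\]
where $J \subset \mathcal{O}_4$ is the ideal generated by the minors $M_{ij}(f(x,y,z) - (\lambda+\lambda_0)\mathds{1})$ in the ring of holomorphic germs in four complex variables.

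The remaining step is to rewrite this complex dimension as the real dimension of the real-analytic quotient. Expressing $H$ in a real basis of $\text{Herm}(n)$ makes its matrix entries real-analytic germs in $(x,y,z)$, so the minors $M_{ij}(H(x,y,z) - (\lambda+\lambda_0)\mathds{1})$ lie in $\mathcal{O}_4^{\R}$, the ring of real-analytic germs in the four real variables $(x,y,z,\lambda)$. Under the natural isomorphism $\mathcal{O}_4 \cong \mathcal{O}_4^{\R} \otimes_{\R} \C$ given by holomorphic extension of real-analytic germs, the generators of $J^{\R}$ correspond exactly to the generators of $J$, since the matrix entries of $f$ are by construction the holomorphic extensions of those of $H$. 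Hence $J$ identifies with $J^{\R} \otimes_{\R} \C$, so
\[
\frac{\mathcal{O}_4}{J} \cong \frac{\mathcal{O}_4^{\R}}{J^{\R}} \otimes_{\R} \C,
\]
and therefore $\dim_{\C}(\mathcal{O}_4/J) = \dim_{\R}(\mathcal{O}_4^{\R}/J^{\R})$, completing the proof.

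The principal conceptual point is this identification of the holomorphic local algebra with the complexification of its real-analytic counterpart, which parallels the relation $\text{mult}_0(h) = \dim_{\R} Q_0(h) = \dim_{\C} Q_0(h_{\C})$ used in Section~\ref{ss:efftwofold}. Apart from this standard fact, the argument is a straightforward bookkeeping reduction to Theorem~\ref{th:pullback}, and I do not anticipate any additional technical obstacle.
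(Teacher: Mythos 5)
Your argument is correct and follows essentially the same route as the paper: apply Theorem~\ref{th:pullback} to the complexification $f = H_{\C}$, then observe that $\mathcal{O}_4/J \cong (\mathcal{O}_4^{\R}/J^{\R}) \otimes_{\R} \C$ to equate the complex and real dimensions. Your write-up merely makes explicit why the generators of $J$ are the holomorphic extensions of those of $J^{\R}$, which the paper leaves implicit.
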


    \begin{proof}
        By Theorem~\ref{th:pullback}, 
         \begin{equation}
        \sharp \mathbf{cWP} = \dim_{\C} \frac{\mathcal{O}^{\C}_4}{J^{\C}},
    \end{equation}
    where $J^{\C} \subset \mathcal{O}^{\C}_4$ is the ideal generated by the $(n-1) \times (n-1) $ minors \begin{equation}
        M_{ij}(f(x,y,z)-(\lambda +\lambda_0) \mathds{1})
    \end{equation}
    of $f(x,y,z)-(\lambda +\lambda_0)\mathds{1}$. Since $\mathcal{O}^{\C}_4/J^{\C}=(\mathcal{O}^{\R}_4/J^{\R}) \otimes \C$, the two dimensions are equal.
    \end{proof}

  \begin{rem}\label{re:compiso}
      The finiteness of $\dim_{\C}(\mathcal{O}^{\C}_4/J^{\C})=\dim_{\R}(\mathcal{O}^{\R}_4/J^{\R})$ is equivalent with the fact that $f$ is isolated with respect to $(\Sigma, A_0; \lambda_0)$, i.e. the degeneracy point is isolated in the complex sense. Indeed, both are equivalent with the fact the vanishing locus of $J$ in $(\C^4, 0)$ is only one point origin. See Example~\ref{ex:onepoint}, cf. \cite[Theorem D.5.]{MondBook}.
  \end{rem}

For linear Hamiltonian we have the following consquence of Corollary~\ref{co:kanegyzet_symm}:

\begin{cor}\label{co:pyram}
    If in addition $A_0=0$ and $H: (\R^3, 0) \to (\text{Herm}(n), 0)$ as above is a linear map such that its complexification $f=H_{\C}$ is isolated with respect to $(\Sigma, 0)$, then
       \begin{equation}
           \sharp \mathbf{cWP}= \frac{k^2(k^2-1)}{12}.
     \end{equation}
\end{cor}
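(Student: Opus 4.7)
The plan is to verify that this corollary is an essentially immediate consequence of Corollary~\ref{co:kanegyzet_symm}.1 once we unwind definitions. First I would note that the hypothesis $A_0 = H(0) = 0 \in \text{Herm}(n)$ forces every eigenvalue of $A_0$ to equal $0$; in particular, the $k$-fold eigenvalue $\lambda_0$ of the degeneracy must satisfy $\lambda_0 = 0$ and $k = n$. Thus the Hamiltonian lands in $\text{Herm}(k)$, and $0$ is the unique (geometrically degenerate) eigenvalue of $A_0$.

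Next I would observe that, since $\lambda_0 = 0$ is the only eigenvalue of $A_0$ of geometric multiplicity at least two, the local branch decomposition of $(\Sigma, A_0)$ collapses, i.e.\ $(\Sigma, A_0; \lambda_0) = (\Sigma, 0; 0) = (\Sigma, 0)$. Consequently, by the definition of complex Weyl points recalled in Section~\ref{ss:degpoints}, we have
\begin{equation*}
  \sharp \mathbf{cWP} = \sharp f_t^{-1}(\Sigma, A_0; \lambda_0) = \sharp f_t^{-1}(\Sigma, 0),
\end{equation*}
where $f_t$ denotes any perturbation of the complexification $f = H_{\C}$ that is generic with respect to $(\Sigma, 0)$.

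Finally, the hypothesis says $f: (\C^3, 0) \to (\C^{k \times k}, 0)$ is linear and isolated with respect to the homogeneous germ $(\Sigma, 0)$. Point 1 of Corollary~\ref{co:kanegyzet_symm} is then directly applicable and yields $\sharp f_t^{-1}(\Sigma, 0) = k^2(k^2-1)/12$, as desired. There is no real obstacle here, since the nontrivial content --- namely the equality between the number of preimages of a generic perturbation and the multiplicity $\text{mult}(\Sigma, 0)$ (via the equivalence of isolatedness and genericity for linear maps into homogeneous germs, Corollary~\ref{co:finitegeneric2}), together with the multiplicity computation $\text{mult}(\Sigma, 0) = k^2(k^2-1)/12$ --- is already packaged into Corollary~\ref{co:kanegyzet_symm} and ultimately into the Multiplicity Theorem~\ref{th:multipl}. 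The only substantive check is the matching of branches $(\Sigma, A_0; \lambda_0) = (\Sigma, 0)$ forced by $A_0 = 0$.
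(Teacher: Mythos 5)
Your proposal is correct and takes the same route as the paper: the paper presents Corollary~\ref{co:pyram} without a written-out proof, merely labeling it a consequence of Corollary~\ref{co:kanegyzet_symm}, and your argument supplies exactly the intended details (noting $A_0 = 0$ forces $\lambda_0 = 0$ and $k = n$, so the branch collapses to $(\Sigma, A_0; \lambda_0) = (\Sigma, 0)$ and point~1 of Corollary~\ref{co:kanegyzet_symm} applies directly).
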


More generally, by Corollary~\ref{co:generic}, $\sharp \mathbf{cWP}= k^2(k^2-1)/12$ holds not only for linear Hamiltonian, but for the most generic Hamiltonians with a $k$-fold degeneracy point. We suggest the name \emph{$k$-fold Weyl point} for these degeneracy points, on the analogy of the generic 2-fold degeneracy points called Weyl points. 

\begin{rem}\label{re:lowerweyl} There is an important difference between the 2-fold and $k$-fold case (for $k>2$). Fora a 2-fold Weyl point the lower and upper bounds are equal, namely, $\sharp_{\text{alg}} \mathbf{WP}=\sharp \mathbf{cWP}=1$, hence, $\sharp \mathbf{WP}=1$ holds for any perturbation, according to the fact that the Weyl point is stable. In contrast, for $k$-fold Weyl points the upper bound is always $\sharp \mathbf{cWP}= k^2(k^2-1)/12$ by definition, but there are $k$-fold Weyl points with different lower bound $\sharp_{\text{alg}} \mathbf{WP}$. In fact, this lower bound is expressed in terms of the Chern numbers, see Equation~\eqref{eq:lowerbound}, and the Chern numbers can be different. See the example in Section~\ref{sec:bandstructureexample}.

It would be interesting to determine the possible patterns of Chern numbers $\{c_1(\eta_j)  \}$ for $k$-fold Weyl points, at least for linear Hamiltonians. They may be determined from the real algebra $\mathcal{O}^{\R}_4/J^{\R}$, on analogy of the Eisenbud--Levine theorem, relevant for non-generic 2-fold degeneracy points, which determines $\deg_0(h)$ from the local algebra $Q_0(h)=\mathcal{O}_3^{\R}/I_h$, see Section~\ref{ss:efftwofold} and \ref{ss:twofold}.

\end{rem}

\subsection{Example: a spinful particle in an external magnetic field}\label{ss:exspin}

Recall that the irreducible representations of $SU(2)$ are labeled by the half integers $s \in \frac{1}{2} \N$, where the corresponding representation space is $\C^{2s+1}$. 
The Hamiltonian of a spin-$s$ particle in a magnetic field is a linear map~\cite{Sakurai, FultonHarris}
\begin{equation}
    H: \R^3 \to \text{Herm}(2s+1),
\end{equation}
which is in fact the Lie algebra representation $H: \mathfrak{su}(3) \to \mathfrak{su}(2s+1)$, up to the identification of $\mathfrak{su}(n) =i \cdot \text{Herm}_0(n)$ with $\text{Herm}_0(n)$ (traceless hermitian matrices). In the $s=1/2$ case the spin operators $S_x$, $S_y$, $S_z$ are represented by the Pauli matrices $\frac{1}{2} \sigma_x$, $\frac{1}{2}\sigma_y$, $\frac{1}{2} \sigma_z$, both sets satisfying the same commutation relations (we measure angular momentum in $\hbar$ units, so the spin operators are dimensionless).

Therefore, $H(x,y,z)=xS_x+yS_y+zS_z$ is interpreted as the Hamiltonian corresponding to the magnetic field vector $\mathbf{B}=(x,y,z)$. Its eigenvalues, that is, the possible energy levels of a spin-$s$ particle in the magnetic field $\mathbf{B}$, are  $\lambda \in \{-rs, -r(s-1), \dots, r(s-1), rs\}$, where $r= \| \mathbf{B} \|=\sqrt{x^2+y^2+z^2}$. This follows from the argument below.

Considering the adjoint action (conjugation) of $SU(2)$ on $\text{Herm}_0(2)$ and $\text{Herm}_0(2s+1)$, the spin Hamiltonian $H$ is an $SU(2)$-equivariant map. Intuitively, rotating simultaneously the magnetic field and the spin, the Hamiltonian remains unchanged. Since the orbits of the adjoint action of $SU(2)$ on $\mathfrak{su}(2)\cong \R^3$ are the spheres ($r=\text{constant}$), this property is inherits for the image of $H$. Hence, $H(x,y,z)$ is unitary equivalent to $H(0,0,r)=rS_z$, in particular, the eigenvalues of $H(x,y,z)$ are $\lambda=-rs, -r(s-1), \dots, r(s-1), rs$.

$H(0)=0$ is an $n=k=2s+1$-fold degeneracy point of $H$. Intuitively this corresponds to the fact that for zero magnetic field the system has a rotational symmetry, which implies that the energy levels are not separated. 
Since for $r > 0$ all the eigenvalues of $H(x,y,z)$ are different, the degeneracy point at 0 is isolated, more precisely, $H$ is isolated with respect to $\Sigma_{\text{herm}}^{(2s+1)}$, that is, $H^{-1}(\Sigma_{\text{herm}}^{(2s+1)})=\{0\}$.
Furthermore, we show that the degeneracy point at 0 is isolated in the complex sense too. Let $f:=H_{\C}: \C^3 \to \C^{2s+1}$ be the complexification of the spin Hamiltonian $H$.

\begin{prop}\label{pr:spinhamisol}
   $f$ is isolated with respect to $\Sigma^{(2s+1)}$, that is, $f^{-1}(\Sigma^{(2s+1)})=\{0\}$. 
\end{prop}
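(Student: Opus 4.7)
The plan is to exploit the $SU(2)$-equivariance of the spin Hamiltonian, which complexifies to an $SL(2,\C)$-equivariance of $f$, and then reduce the verification to one representative from each orbit of $SL(2,\C)$ acting on $\C^3$.

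First, I would record the equivariance precisely. Identifying $\mathfrak{sl}(2,\C)$ with $\C^3$ via $v=(x,y,z)\mapsto x\sigma_x+y\sigma_y+z\sigma_z$, the map $f: \C^3\to \C^{(2s+1)\times(2s+1)}$ is the Lie-algebra representation of $\mathfrak{sl}(2,\C)$ of spin $s$, and therefore intertwines the adjoint $SL(2,\C)$-action on the source with conjugation on the target through the spin-$s$ lift $\rho:SL(2,\C)\to SL(2s+1,\C)$. In particular, $f(g\cdot v)=\rho(g)f(v)\rho(g)^{-1}$, so $f(v)\in\Sigma$ depends only on the $SL(2,\C)$-orbit of $v$, since $\Sigma$ is conjugation-invariant.

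Next, I would classify the $SL(2,\C)$-orbits on $\C^3\setminus\{0\}$. Writing $Q(v):=x^2+y^2+z^2$, a direct computation gives $\det(x\sigma_x+y\sigma_y+z\sigma_z)=-Q(v)$, so $Q$ is the fundamental invariant. Exactly two families of orbits appear outside the origin:
\begin{enumerate}
    \item the \emph{semisimple} orbits $\{Q(v)=c\}$ for $c\neq 0$, each containing the representative $v_c=(0,0,\sqrt{c})$;
    \item the unique \emph{nilpotent} orbit $\{Q(v)=0,\ v\neq 0\}$, with representative $v_0=(1,i,0)$ (all nonzero $2\times 2$ traceless matrices with $Q=0$ are regular nilpotent, hence $SL(2,\C)$-conjugate).
\end{enumerate}
It then suffices to verify $f(v_c)\notin\Sigma$ and $f(v_0)\notin\Sigma$.

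For the semisimple representative, $f(v_c)=\sqrt{c}\,S_z$ has the $2s+1$ eigenvalues $\sqrt{c}\cdot m$ with $m=-s,-s+1,\dots,s$, which are pairwise distinct since $\sqrt{c}\neq 0$; in particular every eigenvalue has geometric multiplicity $1$, so $f(v_c)\notin\Sigma$. For the nilpotent representative, $f(v_0)=S_x+iS_y=S_+$ is the raising operator of the spin-$s$ representation. The key structural input is that $S_+$ has a $1$-dimensional kernel, namely the highest-weight line $\C\cdot|s,s\rangle$, because $S_+|s,m\rangle$ is a nonzero multiple of $|s,m+1\rangle$ for $m<s$ and vanishes for $m=s$. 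Hence $0$ is the only eigenvalue of $S_+$ and it has geometric multiplicity $1$, so $f(v_0)\notin\Sigma$.

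Combining the three cases ($v=0$, $v\in$ semisimple orbit, $v\in$ nilpotent orbit) with the equivariance, every $v\in\C^3\setminus\{0\}$ satisfies $f(v)\notin\Sigma$, which proves $f^{-1}(\Sigma^{(2s+1)})=\{0\}$. The only substantive step is the regularity of $S_+$, and that reduces to the standard highest-weight description of the irreducible spin-$s$ representation; everything else is orbit bookkeeping.
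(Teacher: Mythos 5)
Your proof is correct but takes a genuinely different route from the paper's. The paper works directly with the explicit matrix form of the spin operators in the $S_z$-eigenbasis: it computes the corner minors $M_{-s,s}$ and $M_{s,-s}$ of the lifted Hamiltonian $\widetilde{f}=xS_x+yS_y+zS_z-\lambda\mathds{1}$ (noting that $S_x\pm iS_y$ are purely off-diagonal, so these minors are proportional to $(x\mp iy)^{2s}$), forces $x=y=0$, and then uses that $S_z$ has distinct diagonal entries. You instead complexify the $SU(2)$-equivariance to $SL(2,\C)$-equivariance, classify the adjoint orbits on $\mathfrak{sl}(2,\C)\cong\C^3$ by the invariant $Q=x^2+y^2+z^2$, and check one representative per orbit: $\sqrt{c}\,S_z$ (distinct eigenvalues) for the semisimple orbits, and $S_+$ (regular nilpotent, single Jordan block, so kernel of dimension $1$) for the nilpotent cone. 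Both are fine, and your nilpotent analysis in fact recovers the observation the paper relegates to the remark after the proposition, that along $y=\pm ix$, $z=0$ one gets a single Jordan block of size $2s+1$. The paper's minor computation is more elementary and transfers readily to Hamiltonians without special symmetry; your argument is more conceptual and explains \emph{why} the isolation holds, reducing the entire claim to a two-case orbit check plus the standard highest-weight description of the irreducible representation. One small bookkeeping point worth making explicit in a final write-up: the identification $(x,y,z)\mapsto x\sigma_x+y\sigma_y+z\sigma_z$ and the spin operators $S_a$ differ by the normalization $S_a^{(1/2)}=\tfrac12\sigma_a$, but this scalar factor does not affect the equivariance $f(g\cdot v)=\rho(g)f(v)\rho(g)^{-1}$, so the argument goes through as stated.
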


\begin{proof}
    We use the explicit form of the spin operators in a basis where $S_z$ is diagonal. With appropriate choices of the relative phases of the basis vectors, the spin-$s$ operators can be brought to the following form~\cite{Sakurai,Edmonds1996}:
    \begin{eqnarray*}
        (S_\pm)_{ab} &=& \sqrt{s(s + 1) - b(b \pm 1)}\delta_{a, b\pm 1},\\
        S_x &=& \frac{1}{2}\left(S_+ + S_-\right),\\
        S_y &=& \frac{1}{2i}\left(S_+ - S_-\right),\\
        (S_z)_{ab} &=& a \;\delta_{ab},
    \end{eqnarray*}
    where the indices $a,b$ take values $s \geq a,b \geq -s$ in steps of 1, and the Kronecker delta $\delta_{ab}$ is 1 if $a=b$ and 0 otherwise.
    The key properties of the spin matrices in this basis are that $S_z$ is diagonal, $S_x$ and $S_y$ are only nonzero on the first diagonals above and below the main digonal. Furthermore, $S_x$ is purely real, and $S_y$ is purely imaginary, and their entries satisfy $(S_y)_{ab} = -i (S_x)_{ab}$ if $a>b$ and $(S_y)_{ab} = i (S_x)_{ab}$ if $a<b$.

    Now we consider the lifted complexified spin Hamiltonian $\widetilde{f}(x,y,z, \lambda)=xS_x+yS_y+zS_z-\lambda \mathds{1}_k$ in this basis. 
    Evaluating its minor corresponding to the bottom left entry, we find that it is the determinant of a lower triangular matrix, and only the first diagonal above the main diagonal of the full matrix contributes: 
    \begin{equation}
        M_{-s, s}(\widetilde{f}) = \frac{1}{2^{2s}} \left(\prod_{b = -s}^{s-1} \sqrt{s(s + 1) - b(b + 1)}\right) \left(x - i y\right)^{2s}.
    \end{equation}
    This shows that $M_{-s, s}(\widetilde{f}) = 0$ iff $x = i y$.
    The same consideration for the minor of the upper right entry results in $x = -i y$, hence $x = y = 0$.
    Because $f(0, 0, z) = z S_z$, and $S_z$ is diagonal with all different diagonal entries, it is degenerate iff $z=0$.
\end{proof}

\begin{remark} 
    We note that spin Hamiltonians with complex parameters do have points $(x,y,z) \in \C^3$, where the algebraic multiplicity of some eigenvalue of $H_{\C}(x,y,z)$ is greater than 1, but the corresponding geometric multiplicity (the dimension of the corresponding eigensubspace) remains 1.
    For example, at parameters $y = \pm i x$ and $z = 0$, the spin Hamiltonian corresponds to a single Jordan block of size $2s+1$ corresponding to eigenvalue $0$ with algebraic degeneracy $2s+1$.
    This fact is not surprising, because the algebraic degenerate matrices form a codimension 1 algebraic subset in $\C^{n \times n}$, cf. Section~\ref{ss:degcomp}.
\end{remark}

By Corollary~\ref{co:kanegyzet_symm} we conclude that

\begin{cor}\label{co:upperspin}
    Any generic perturbation $H_t$ of the spin-$s$ Hamiltonian has $\sharp \mathbf{cWP}=k^2(k^2-1)/12$ complex Weyl points (where $k=2s+1$), which is an upper bound for the number of real Weyl points $\sharp \mathbf{WP}$.
\end{cor}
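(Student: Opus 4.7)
The plan is to identify Corollary~\ref{co:upperspin} as an immediate instance of Corollary~\ref{co:pyram} applied to the spin Hamiltonian, once the complex-isolatedness hypothesis is verified. First I would observe that the spin-$s$ Hamiltonian $H: \R^3 \to \text{Herm}(k)$ with $k = 2s+1$ is linear and satisfies $H(0) = 0$, so the origin is a $k$-fold degeneracy point with degenerate eigenvalue $\lambda_0 = 0$. In particular, the relevant germ branch $(\Sigma^{(k)}, 0; 0)$ coincides with the full germ $(\Sigma^{(k)}, 0)$, which by homogeneity is the whole variety considered locally at the cone point.

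Next I would invoke Proposition~\ref{pr:spinhamisol}, proved just above, which establishes precisely the hypothesis needed, namely that the complexification $f = H_{\C}: \C^3 \to \C^{k \times k}$ is isolated with respect to $(\Sigma^{(k)}, 0)$, i.e.\ $f^{-1}(\Sigma^{(k)}) = \{0\}$. With both hypotheses in place --- $f$ linear and isolated with respect to $(\Sigma^{(k)}, 0)$ --- Corollary~\ref{co:pyram} (equivalently, point 1 of Corollary~\ref{co:kanegyzet_symm}) applies directly and yields
\begin{equation*}
\sharp \mathbf{cWP} = \frac{k^2(k^2-1)}{12}
\end{equation*}
for any generic perturbation $H_t$ of $H$. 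The independence of this count from the choice of perturbation was already guaranteed by Theorem~\ref{th:pullback}; what Corollary~\ref{co:pyram} contributes here is the explicit closed-form value, coming from Corollary~\ref{co:effmodmult} and Proposition~\ref{pr:primetilde} combined with the Gulliksen--Neg{\aa}rd computation.

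Finally, the inequality $\sharp \mathbf{WP} \leq \sharp \mathbf{cWP}$ is just the containment $\mathbf{WP} \subset \mathbf{cWP}$ recorded in Proposition~\ref{pr:bound1}: every real solution of the Weyl-point equations is automatically a complex solution. There is no substantial obstacle remaining; the real content of the corollary --- the explicit minor computation showing $f^{-1}(\Sigma^{(k)}) = \{0\}$ --- has already been carried out in Proposition~\ref{pr:spinhamisol}, and the present corollary only packages that fact against the general multiplicity formula from Section~\ref{ss:proof-multi}.
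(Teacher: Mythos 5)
Your proof is correct and takes essentially the same route as the paper: the paper also deduces the corollary directly from Corollary~\ref{co:kanegyzet_symm} (of which Corollary~\ref{co:pyram} is, as you note, just the Hamiltonian-germ restatement) together with the complex-isolatedness established in Proposition~\ref{pr:spinhamisol}. The only cosmetic difference is that the paper cites Corollary~\ref{co:kanegyzet_symm} rather than Corollary~\ref{co:pyram}, and leaves the inequality $\sharp \mathbf{WP} \le \sharp \mathbf{cWP}$ implicit by reference to Proposition~\ref{pr:bound1}, which you spell out.
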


For a lower bound, consider the eigenvector bundles $\eta_a $ of $H$ corresponding to $a \in \{-s, -s+1, \dots, s-1, s\} $ over a small sphere $S^2 \subset \R^3$ centered at $0$. That is, the fiber $\eta_a(x,y,z) \subset \C^k$ over $(x,y,z) \in S^2$ satisfies $H(x,y,z)\cdot v=r a \cdot v$ for $v \in \eta_a(x,y,z)$, where $r=\sqrt{x^2+y^2+z^2}$. The Chern numbers are computed, for example, in Ref.~\cite{Nakahara}:

\begin{lem}\label{le:chernspin} For spin Hamiltonian, the first Chern number of $\eta_a $ is $c_1(\eta_a)=2a$. 
\end{lem}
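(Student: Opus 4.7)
The plan is to realize the spin-$s$ representation as the $2s$-fold symmetric power of the fundamental spin-$1/2$ representation, $\C^{2s+1} \cong \mathrm{Sym}^{2s}(\C^2)$ as $SU(2)$-representations, and thereby reduce the Chern-number computation to the $s = 1/2$ case already established in the proof of Proposition~\ref{pr:ketszeres}. Under this identification, $\vec{S}^{(s)}$ acts as the Leibniz-rule derivation induced by $\vec{S}^{(1/2)} = \tfrac{1}{2}\vec{\sigma}$ on $(\C^2)^{\otimes 2s}$, restricted to symmetric tensors. Consequently, if $v_{\pm}(\hat n) \in \C^2$ denotes a $\pm 1/2$-eigenvector of $\hat n \cdot \vec{S}^{(1/2)}$, the symmetric monomial $v_+(\hat n)^{s+a}\,v_-(\hat n)^{s-a} \in \mathrm{Sym}^{2s}(\C^2)$ is an eigenvector of $\hat n \cdot \vec{S}^{(s)}$ with eigenvalue $\tfrac{1}{2}(s+a) - \tfrac{1}{2}(s-a) = a$. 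Since the $2s+1$ eigenvalues $\{-s, -s+1, \ldots, s\}$ of $\hat n \cdot \vec{S}^{(s)}$ are all simple, this monomial spans the fiber of $\eta_a$ at $\hat n$.

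The next step is to upgrade this fibrewise description to an isomorphism of complex line bundles over $S^2$. The key observation is that the monomial depends only on the \emph{lines} $[v_{\pm}(\hat n)] \in \C P^1$, not on the chosen representatives, so the construction is canonical and $SU(2)$-equivariant. Writing $\eta^{(1/2)}_{\pm 1/2}$ for the two spin-$1/2$ eigenvector bundles on $S^2$, viewed as tautological subbundles of the trivial bundle $S^2 \times \C^2$, the natural multiplication map on symmetric powers yields a holomorphic bundle isomorphism
\begin{equation*}
\eta_a \;\cong\; \bigl(\eta^{(1/2)}_{+1/2}\bigr)^{\otimes(s+a)} \otimes \bigl(\eta^{(1/2)}_{-1/2}\bigr)^{\otimes(s-a)}.
\end{equation*}

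Taking the first Chern class and using its additivity under tensor products, together with the spin-$1/2$ values $c_1(\eta^{(1/2)}_{\pm 1/2}) = \pm 1$ (these are precisely the two numbers $-c_1(\xi_{l+1}) = c_1(\xi_{l+2}) = \deg_0(h) = 1$ computed in the proof of Proposition~\ref{pr:ketszeres} for the spin-$1/2$ Hamiltonian), one obtains
\begin{equation*}
c_1(\eta_a) \;=\; (s+a)\cdot(+1) \;+\; (s-a)\cdot(-1) \;=\; 2a,
\end{equation*}
as required. The only point needing care is that the fibrewise eigenspace identification genuinely defines a morphism of line bundles rather than merely a pointwise bijection; this is automatic because the construction is the functorial restriction of the $SU(2)$-equivariant symmetric-power map to the two tautological eigenline subbundles, and no global choice of the vectors $v_\pm(\hat n)$ themselves is needed. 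I do not anticipate any further obstacle.
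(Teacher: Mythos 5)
Your argument is correct, and it is a genuine proof rather than the citation the paper relies on: the paper states the lemma and simply refers to Nakahara for the computation, so it has no proof of its own for you to be compared against. Your route — realize $\C^{2s+1}\cong\mathrm{Sym}^{2s}(\C^2)$ as $SU(2)$-modules, observe that $\hat n\cdot\vec S^{(s)}$ acts as the induced derivation so that $v_+(\hat n)^{s+a}v_-(\hat n)^{s-a}$ spans the $a$-eigenline, and promote this to a line-bundle isomorphism $\eta_a\cong(\eta^{(1/2)}_{+1/2})^{\otimes(s+a)}\otimes(\eta^{(1/2)}_{-1/2})^{\otimes(s-a)}$ — is sound. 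The fibrewise nonvanishing of the symmetric monomial follows because $v_+(\hat n)$ and $v_-(\hat n)$ are independent (distinct eigenvalues) and $\mathrm{Sym}^\bullet(\C^2)$ is an integral domain, and you correctly note that only the lines $[v_\pm(\hat n)]$ enter, so the map is a well-defined morphism from the tautological eigenline subbundles. Combined with $c_1(\eta^{(1/2)}_{\pm1/2})=\pm1$ (which is the content of the $s=1/2$ computation inside Proposition~\ref{pr:ketszeres}) and additivity of $c_1$ under tensor product, $c_1(\eta_a)=(s+a)-(s-a)=2a$ follows. The one cosmetic slip is calling the isomorphism ``holomorphic''; these are smooth complex line bundles on $S^2$ and smooth isomorphism is all that is needed (and all that is established), since $c_1$ is a smooth invariant. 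Your approach has the virtue of being self-contained within the paper's framework, reducing the general $s$ to the already-proved $s=1/2$ case, whereas the paper's citation to Nakahara typically goes through the monopole/Berry-curvature integral; the two are of course equivalent, but yours avoids any curvature computation.
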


\begin{cor}\label{co:spinbounds}
    \begin{equation}
  \frac{k(k^2-1)}{6}     \leq \sharp \mathbf{WP} \leq \frac{k^2(k^2-1)}{12}
    \end{equation}
    holds for the number of Weyl points $\mathbf{WP}$ of any generic perturbation $H_t$ of $H$ (where $k=2s+1$).
\end{cor}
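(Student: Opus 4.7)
The plan is to assemble the two bounds from results already established in the paper: the upper bound is immediate from Corollary~\ref{co:upperspin}, and the lower bound comes from the Chern-number formula \eqref{eq:lowerbound} together with the explicit computation of $c_1(\eta_a)$ in Lemma~\ref{le:chernspin}.

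For the upper bound, I would simply combine Proposition~\ref{pr:bound1}, which gives $\sharp \mathbf{WP} \leq \sharp \mathbf{cWP}$, with Corollary~\ref{co:upperspin}, which identifies $\sharp \mathbf{cWP} = k^2(k^2-1)/12$ for the spin Hamiltonian (using that $H_{\C}^{-1}(\Sigma^{(k)}) = \{0\}$ by Proposition~\ref{pr:spinhamisol}, so the linear-map input to Corollary~\ref{co:kanegyzet_symm} applies). This is a one-line invocation.

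For the lower bound, I would apply Proposition~\ref{pr:bound1} in the other direction, $\sharp_{\text{alg}} \mathbf{WP} \leq \sharp \mathbf{WP}$, and then use Equation \eqref{eq:lowerbound}, which expresses $\sharp_{\text{alg}} \mathbf{WP}$ as $\sum_{j=l+1}^{l+k-1} \bigl| \sum_{i=l+1}^{j} c_1(\eta_i) \bigr|$. The degeneracy at $0$ is $k$-fold with $l=0$, and the relevant bands are labeled by $a \in \{-s, -s+1, \dots, s\}$ with Chern numbers $c_1(\eta_a) = 2a$ by Lemma~\ref{le:chernspin}. The partial sums become
\begin{equation}
\sum_{i=-s}^{j} 2i = (j+s+1)(j-s),
\end{equation}
which is non-positive for $j \in \{-s, \dots, s-1\}$, so its absolute value is $(j+s+1)(s-j)$. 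Re-indexing by $m = j + s + 1 \in \{1, \dots, k-1\}$ yields
\begin{equation}
\sharp_{\text{alg}} \mathbf{WP} = \sum_{m=1}^{k-1} m(k-m) = k\cdot \frac{(k-1)k}{2} - \frac{(k-1)k(2k-1)}{6} = \frac{k(k^2-1)}{6}.
\end{equation}

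There is no real obstacle here: both bounds are direct specializations of general results already proved in the paper to the case of the spin-$s$ Hamiltonian, where $k=2s+1$. The only nontrivial ingredient is the value of the Chern numbers, which is imported from Lemma~\ref{le:chernspin}; the rest is the arithmetic sum above. Thus the proof is a short two-part computation, one line for the upper bound and a brief algebraic identity for the lower.
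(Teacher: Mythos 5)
Your proof is correct and follows the paper's own approach exactly: upper bound from Proposition~\ref{pr:bound1} and Corollary~\ref{co:upperspin}, lower bound from Proposition~\ref{pr:bound1}, Equation~\eqref{eq:lowerbound}, and Lemma~\ref{le:chernspin}. You actually go one step further than the paper by writing out the re-indexing $m = j+s+1$ and the power-sum identity explicitly, where the paper merely remarks that the final identity "can be proved e.g. by induction"; your closed-form computation is a welcome filling of that small gap, and agrees with the paper's formula once one notes the $a=s$ term in the paper's sum vanishes.
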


\begin{proof}
    The upper bound is already established in Corollary~\ref{co:upperspin}. By Equation~\eqref{eq:bound1}, $\sharp_{\text{alg}} \mathbf{WP}$ is a lower bound for $\sharp \mathbf{WP}$, and it can be computed by Equation~\eqref{eq:lowerbound} as
    \begin{equation}
    \sharp_{\text{alg}} \mathbf{WP}=\sum_{a=-s}^{s} \left| \sum_{b=-s}^{a} c_1(\eta_b)  \right| =\sum_{a=-s}^{s} \left| \sum_{b=-s}^{a} 2b  \right|
    =\sum_{a=-s}^{s} \left| (a-s)(a+s+1)  \right|=\frac{k(k^2-1)}{6},
    \end{equation}
 the last equation can be proved e.g. by induction.
\end{proof}

Note that the lower bound agrees with the upper bound for symmetric families, see Section~\ref{ss:proof-multi}, this is the sequence of \emph{tetrahedral numbers} $\binom{ k+1}{3}$. 

In case of spin-1 Hamiltonian, the lower bound is $\sharp_{\text{alg}} \mathbf{WP}=4$, and the upper bound is $\sharp \mathbf{cWP}=6$. In Section~\ref{ex:spin1} two perturbations are given, one with $\sharp \mathbf{WP}=4$ and the other one with $\sharp \mathbf{WP}=6$. The first one is simpler than the second one, indeed, it is a \emph{constant perturbation (translation)}, that is,  it has the special form $H_t=H+tH_1$ with $H_1 \in \text{Herm}(3)$.

\begin{rem} We list here some related open questions.
\begin{itemize}
    \item We do not know whether the bounds are sharp or not for a spin Hamiltonian, $s>1$.
    \item It can be shown that 6 real Weyl points cannot be obtained from a constant perturbation (translation)  of the spin-1 Hamiltonian $H$. In contrast, constant perturbation of other (random) $3 \times 3$ Hamiltonian can have 6 real Weyl points. Is there any deeper reason of this phenomenon?
    \item What are the possible Chern number configurations $\{c_1(\eta_1), c_1(\eta_2), c_1(\eta_3) \}$ of $3 \times 3$ linear Hamiltonians?
\end{itemize}
    
\end{rem}

\subsection{Example: Electronic band structures of crystalline materials}
\label{sec:bandstructureexample}

In solid-state physics, electronic structure calculations aim to understand materials starting from information about the crystal structure and the chemical composition.
The crystal's symmetry group always contains the discrete translation subgroup isomorphic to $\mathbb{Z}^3$ (in 3 dimensions).
This allows Fourier-transformation to reciprocal space, the so-called Brillouin-zone isomorphic to a torus, whose elements are the wavenumber vectors ${\bf k}=(k_x, k_y, k_z) \in T^3$.
Knowledge of the chemical composition, more precisely, the number and type of atoms in a unit cell, allows to restrict the Hilbert-space to a finite number of dimensions, corresponding to the atomic orbitals of a unit cell most relevant to the electronic properties.

With these considerations, one can construct the Bloch-Hamiltonian $H: T^3 \to \text{Herm}(n)$, which is a matrix-valued function from the Brillouin-zone torus  to the space of $n\times n$ Hermitian matrices (operators acting on the restricted Hilbert-space of a single unit cell), ~\cite{AshcroftMermin}.
In most physical cases we can make the further assumption that the Hamiltonian is local in real-space (i.e. its matrix elements decay exponentially with distance), which guarantees that the Bloch-Hamiltonian $H$ is an analytical map in reciprocal space (with respect to the canonical chart on $T^3$).
Often in practice, the Bloch-Hamiltonian is only of interest in the vicinity of some special $\bf k$  point (for example $\bf k=0$). 
In this case the Bloch-Hamiltonian can be Taylor-expanded, and treated as an analytic map germ around this point, this is the so-called $\bf k \cdot p$ Hamiltonian. 

So far we have only used the translational symmetry of the crystal.
Other symmetries, such as rotations and mirrors, have further profound effects.
These symmetries can enforce multifold degeneracies at certain high-symmetry $\bf k$ points through the higher-dimensional representations of the symmetry group ~\cite{BradlynScience}.
These degeneracy points have been extensively studied and classified in the topological condensed matter literature~\cite{Alpin, encyclopedia}.
Introducing symmetry-breaking perturbations, these degeneracies will generally split into Weyl points, hence our results are perfectly applicable in this context.
The authors have already demonstrated similar methods applied to twofold degeneracies in crystalline band structures in previous work~\cite{BirthQ,naselli2024stability}.

Depending on the symmetry group of the crystal and the $\bf k$ point in question, various orders of degeneracy are possible.
Ref.~\cite{Alpin} extensively studied 4-fold degeneracy points protected by crystal symmetries.
While it was known that a $\bf k\cdot p$ Hamiltonian of the same form as the spin-$3/2$ Hamiltonian discussed in the previous section was allowed by symmetry, they found that a more generic class of linear Hamiltonians (parametrized by $\alpha =(\alpha_0, \alpha_1, \alpha_2) \in \R^3$) is also compatible with the symmetry:
\begin{equation}
\begin{aligned}
H_{(\alpha)}(\bf k) = & \alpha_0 \left[ 2k_x \sigma_x \tau_z + k_y \left( -\sqrt{3} \sigma_x \tau_0 - \sigma_y \tau_0 \right)  + k_z \left( \sigma_x \tau_x + \sqrt{3} \sigma_y \tau_x \right) \right] \\
& + \alpha_1 \left[ -2k_x \sigma_y \tau_z + k_y \left( -\sigma_x \tau_0 + \sqrt{3} \sigma_y \tau_0 \right) + k_z \left( \sqrt{3} \sigma_x \tau_x - \sigma_y \tau_x \right) \right] \\
& + 2\alpha_2 \left[ k_x \sigma_z \tau_x + k_y \sigma_0 \tau_y + k_z \sigma_z \tau_z \right]
\end{aligned}
\end{equation}
where $\sigma_i$ and $\tau_i$ are two sets of $2 \times 2$ Pauli matrices, and their products are taken in the tensor product sense, as customary in physics literature.

This Hamiltonian has an isolated degeneracy point for real $\bf k$ unless $\alpha_2 = \pm \sqrt{\alpha_0^2 + \alpha_1^2}$, or $\alpha_2 = 0$, or $\sqrt{\alpha_0^2 + \alpha_1^2} = 0$.
These surfaces in the three-dimensional parameter space separate regions with different patterns of Chern numbers $\{c_1(\eta_j) \ | \ j=1, \dots, 4 \}$ for the four bands, calculated on a sphere surrounding $\bf k=0$ where they are non-degenerate.
Ref.~\cite{Alpin} finds Chern number patterns $\{-3, 5, -5, 3\}$, $\{-3, -1, 1, 3\}$, or the reverse order.
While the second one is identical to the pattern of a spin-$3/2$ Hamiltonian (see Lemma~\ref{le:chernspin}), the first one is different.
This also results in a different lower bound $\sharp_{\text{alg}} \mathbf{WP}$ on the number of Weyl points  $\sharp \mathbf{WP}$ born upon a perturbation, see Equation~\eqref{eq:lowerbound}: $\sharp_{\text{alg}} \mathbf{WP}$ is equal to $8$ in the first case, $10$ in the second.

To apply the upper bound $\sharp \mathbf{cWP}$ one should check that the degeneracy at 0 is also isolated in the complex sense (for non-exceptional parameter values $\alpha$). By Remark~\ref{re:compiso} this is the case if and only if the dimension $\dim_{\C}(\mathcal{O}^{\C}_4/J^{\C})=\dim_{\R}(\mathcal{O}^{\R}_4/J^{\R})$ appears in Corollary~\ref{co:pullback} is finite. We computed this dimension  for random values of the parameter $\alpha$ by computer, and we obtained a finite value, namely, $\sharp \mathbf{cWP}=20$ in these cases. We conjecture that the degeneracy is isolated in the complex sense if it is isolated in the real sense, that is, for every $\alpha_i$ triple in the interior of the four regions defined above. Under this hypothesis, $\sharp \mathbf{cWP}=20$ is an upper bound for $\sharp \mathbf{WP}$ in all regions of the parameter $\alpha$, according to Corollary~\ref{co:pyram} (indeed, $4^2(4^2-1)/12=20$).

While numerical calculations demonstrate that the lower bound is sharp, it is unclear whether the upper bound is sharp for real $\bf k$.
We leave detailed study of the maximum number of Weyl points created by perturbing this, and other relevant degeneracy points protected by crystallographic symmetries in solid-state band structures, to future work.

\subsection{Two-fold degeneracy points can be described without SW}\label{ss:twofold}


Here we deduce the local algebra $Q_0(h)$ (up to isomorphism) of the effective germ $h$ of a 2-fold degeneracy point directcy from the Hamiltonian $H$, without SW. In a sense explained below, the isomorphism type of $Q_0(h)$ characterises the type of the two-fold degeneracy.

In case of a degeneracy point $p_0 \in M^3$ of a Hamiltonian $H: M^3 \to \text{Herm}(n)$ with a 2-fold degenerate eigenvalue $\lambda_0$ of $A_0=H(p_0)$, the traditional method of investigation is to derive the effective germ $h:(\R^3, 0) \to (\R^3, 0)$ from SW (see Section~\ref{ss:efftwofold}), which characterizes the degeneracy type of $H$ at $p_0$. Actually the `type of degeneracy' can be precisely defined as the equivalence class of the germ $h$ up to a suitable equivalence relation. The \emph{contact equivalence} \cite[Ch. 4]{MondBook} is a reasonable choice from several reasons, for example (1) the most important invariants, for example $\deg_0(h)$ (up to sign) and $\text{mult}_0(h)$, are preserved by contact equivalence, (2) the contact equivalence class of $h$ characterizes the \emph{contact type} of $H$ and $\Sigma$ at $A_0$ in sense of \cite{MontaldiContact}. Contact equivalence was used in~\cite{naselli2024stability} to study 2-fold degeneracy points. Although other type of equivalences were also introduced for the classification of 2-fold degeneracy points \cite{Teramoto2017,teramoto2020application}, here we aim for the contact eqiuvalence class of $h$. 

Consider two effective germs $h_1, h_2$ come from Hamiltonians with isolated degeneracy in complex sence, that is, $h_{1, \C}^{-1}(0)=\{0\}$ and $h_{2, \C}^{-1}(0)=\{0\}$ by Equation~\eqref{eq:hameff}. Then, by \cite[Theorem 4.4]{MondBook}, $h_1$ and $h_2$    are contact equivalent if and only if their local algebras $Q_0(h_1)$ and $Q_0(h_2)$ are isomorphic, as real algebras. This means that in this case the invariants of map germs preserved by contact equivalence are determined by the local algebra. For example, $\text{mult}_0(h)=\dim Q_0(h)$, and $\deg_0(h)$ (up to sign) can be computed by the Eisenbud--Levine algorithm \cite{Eisenbud1978,EisenbudLevineTeissier1977}
as the index of a suitable quadratic form on $Q_0(h)$.

Consider the Hamiltonian germ $H: (\R^3, 0) \to (\text{Herm}(n), A_0)$ (in local coordinates of $M^3$ around $p_0=0$), and its lift $\widetilde{H}(x, y,z, \lambda)=(H(x,y,z), \lambda+\lambda_0 )$ and ${H}'(x, y,z, \lambda)=H(x,y,z)- (\lambda +\lambda_0) \mathds{1}$. Consider the (real) algebra $\mathcal{O}_4/J$, where $J=I_{n-1}({H}')$ is the ideal in $\mathcal{O}_4$ generated by the $(n-1) \times (n-1)$ minors $M_{ij}(H-(\lambda_0 + \lambda \mathds{1}))$. (Here every algebra is real, but we omit the index $\R$ used earlier. Recall that  $\text{Herm}(n) \cong \R^{n^2}$.)

\begin{prop}
  $\mathcal{O}_4/J$ is isomorphic to $Q_0(h)$, as real algebras.
\end{prop}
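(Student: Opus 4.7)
The plan is to identify $J$ as the pullback via $\widetilde{H}$ of the local vanishing ideal of $(\widetilde{\Sigma}_{\text{herm}},(A_0,\lambda_0))$, to describe that ideal explicitly in the hermitian Schrieffer--Wolff chart (where it becomes a four-generator complete intersection), and then to eliminate $\lambda$. First, since the $(n-1)\times(n-1)$ minors $M_{ij}(A-\lambda\mathds{1})$ generate $I(\widetilde{\Sigma})$ by Section~\ref{ss:matrvar}, localizing at $(A_0,\lambda_0)$ and pulling back via $\widetilde{H}$ gives $J = \widetilde{H}^{*}\bigl(I(\widetilde{\Sigma}_{\text{herm}},(A_0,\lambda_0))\bigr)\cdot\mathcal{O}_4$ in the real-analytic local rings.

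Next I would apply the hermitian SW chart to obtain local coordinates $(S,C,a_{\text{eff}},h_1,h_2,h_3,\tilde\lambda)$ on $\text{Herm}(n)\times\R$ at $(A_0,\lambda_0)$, where $\tilde\lambda$ is the second coordinate shifted by $-\lambda_0$ and $\widetilde{A}_{\text{eff}} = a_{\text{eff}}\mathds{1}_2 + h_1\sigma_x + h_2\sigma_y + h_3\sigma_z$. In these coordinates the SW decomposition yields
\[
A-\lambda\mathds{1}_n \;=\; (G_0\, e^{S})\,\bigl(M_1\oplus M_2\bigr)\,(e^{-S}G_0^{-1}),\qquad M_1=\widetilde{A}_{\text{eff}}-\tilde\lambda\,\mathds{1}_2,\;\; M_2=\widetilde{A}_0+C_{22}-\lambda\,\mathds{1}_{n-2},
\]
so invariance of determinantal ideals under multiplication by invertibles gives $I_{n-1}(A-\lambda\mathds{1}_n)=I_{n-1}(M_1\oplus M_2)$. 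Expanding the $(n-1)\times(n-1)$ minors of a block-diagonal matrix shows that the cross-block minors vanish identically while the nonzero ones split into $\det(M_2)\cdot m_1^{ij}$ (the four entries of $M_1$, rescaled) and $\det(M_1)\cdot m_2^{(i,j)}$. Since $\lambda_0$ is strictly two-fold, $\widetilde{A}_0-\lambda_0\,\mathds{1}_{n-2}$ is invertible and hence $\det(M_2)$ is a unit in the local ring; moreover $\det(M_1)=(a_{\text{eff}}-\tilde\lambda)^2-(h_1^2+h_2^2+h_3^2)$ already lies in the ideal generated by the entries of $M_1$, absorbing the second group. Passing to real and imaginary parts of the four complex entries of the hermitian block $M_1$ then gives
\[
I(\widetilde{\Sigma}_{\text{herm}},(A_0,\lambda_0)) \;=\; (a_{\text{eff}}-\tilde\lambda,\;h_1,\;h_2,\;h_3).
\]

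Finally, in SW coordinates the lift $\widetilde{H}(x,y,z,\lambda)$ has image $(S(x,y,z),C(x,y,z),a_{\text{eff}}(x,y,z),h_1(x,y,z),h_2(x,y,z),h_3(x,y,z),\lambda)$ --- the last entry is $\lambda$ because $\widetilde{H}$ maps source $\lambda$ to target $\lambda+\lambda_0$, hence $\tilde\lambda=\lambda$ --- where $(h_1,h_2,h_3)$ is precisely the effective germ of Section~\ref{ss:efftwofold}. Therefore
\[
J \;=\; \bigl(a_{\text{eff}}(x,y,z)-\lambda,\;h_1(x,y,z),\;h_2(x,y,z),\;h_3(x,y,z)\bigr)\cdot\mathcal{O}_4,
\]
and the composition $\mathcal{O}_3\hookrightarrow\mathcal{O}_4\twoheadrightarrow\mathcal{O}_4/J$ is surjective since $\lambda\equiv a_{\text{eff}}(x,y,z)\bmod J$. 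Its kernel equals the ideal $(h_1,h_2,h_3)$ in $\mathcal{O}_3$: substituting $\lambda=a_{\text{eff}}(x,y,z)$ into any expression of $g\in\mathcal{O}_3$ as an $\mathcal{O}_4$-combination of the four generators eliminates the first generator and leaves an $\mathcal{O}_3$-combination of $h_1,h_2,h_3$. This yields the desired real-algebra isomorphism $\mathcal{O}_4/J\cong Q_0(h)$.

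The main obstacle is the second step: reducing the many $(n-1)\times(n-1)$ minors of $A-\lambda\mathds{1}_n$ to just four clean real generators in the SW chart. This requires combining invariance of determinantal ideals under invertible change of basis, the block-diagonal minor expansion (including the vanishing of cross-block minors), promotion of $\det(M_2)$ to a unit via the strictly two-fold hypothesis, and careful extraction of real and imaginary parts of the complex off-diagonal entries of the hermitian $2\times 2$ block $M_1$.
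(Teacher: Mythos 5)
Your proof is correct, and it takes a genuinely different route from the paper's. The paper proceeds abstractly: it first pulls back $I(\Sigma_{\text{herm}},A_0;\lambda_0)$ via $H$ (getting $I_h$ by definition of the SW effective germ) and $I(\widetilde\Sigma_{\text{herm}},(A_0,\lambda_0))$ via $\widetilde H$ (getting $J$), then constructs an auxiliary map $G(x,y,z,\lambda)=(H(x,y,z),\lambda)$ and asserts the existence of a local diffeomorphism $\Phi$ of $(\R^{n^2+1},(A_0,\lambda_0))\to(\R^{n^2+1},(A_0,0))$ extending the projection $p\colon(\widetilde\Sigma_{\text{herm}},(A_0,\lambda_0))\to(\Sigma_{\text{herm}},A_0;\lambda_0)$ and satisfying $\Phi\circ\widetilde H=G$; the isomorphism of quotient algebras then follows formally. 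You instead compute everything explicitly in the hermitian SW chart: conjugation-invariance of $I_{n-1}$, the block-diagonal minor expansion with vanishing cross-block minors, unit status of $\det(M_2)$ from the strictly two-fold hypothesis, absorption of $\det(M_1)$, and real/imaginary splitting of the $2\times2$ hermitian block, landing on the four generators $(a_{\text{eff}}-\tilde\lambda,h_1,h_2,h_3)$; you then eliminate $\lambda$ by substituting $\lambda=a_{\text{eff}}(x,y,z)$ and checking $J\cap\mathcal{O}_3=(h_1,h_2,h_3)$. Both approaches are valid; yours is more elementary and self-contained (no unproven extension lemma about $\Phi$) and makes completely explicit the content that the paper compresses into the existence of $\Phi$, at the cost of a longer minor-by-minor computation. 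One small point of care: when you invoke ``the minors generate $I(\widetilde\Sigma)$'' from Section~\ref{ss:matrvar} you are quoting the complex statement, whereas here you need the corresponding real-analytic one for $\widetilde\Sigma_{\text{herm}}$ (real and imaginary parts of the minors); this is exactly what your subsequent SW-chart computation establishes, so the proof is complete, but the opening sentence would read more cleanly if it were phrased as defining $J$ directly by the generators rather than by an appeal to the vanishing-ideal identification that you only justify afterward.
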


\begin{proof}
    By Proposition~\ref{pr:hermcomp},  $(\widetilde{\Sigma}_{\text{herm}}, (A_0, \lambda_0))$ and $(\Sigma_{\text{herm}}, A_0;\lambda_0)$ are non-singular germs and the projection $p: \R^{n^2+1} \to \R^{n^2}$, $p(A, \lambda)=A$ is an analytic diffeomorphism between them. By SW, the vanishing ideal $I(\Sigma_{\text{herm}}, A_0;\lambda_0)$ of $(\Sigma_{\text{herm}}, A_0;\lambda_0) \subset (\R^{n^2}, A_0)$ is generated by the components of the traceless effective Hamiltonian $A_{\text{eff}}^{\text{tr}=0}$, hence its pull-back via $H$ is the ideal $I_{h}=I(h_1, h_2, h_3) \subset \mathcal{O}_3$. The pull-back of $I(\widetilde{\Sigma}_{\text{herm}}, (A_0, \lambda_0))$ via $\widetilde{H}$ is $J \subset \mathcal{O}_4$. Hence, we have to show that \begin{equation}    \frac{\mathcal{O}_3}{H^*(I(\Sigma_{\text{herm}}, A_0;\lambda_0)) \cdot \mathcal{O}_3} \cong \frac{\mathcal{O}_4}{\widetilde{H}^*(I(\widetilde{\Sigma}_{\text{herm}}, (A_0, \lambda_0))) \cdot \mathcal{O}_4}.
    \end{equation}
    
    Consider $\Sigma$ also in $\R^{n^2+1}$ via the inclusion $\R^{n^2} \subset \R^{n^2 +1}$ of the $\{ \lambda=0\} $ hyperplane. The vanishing ideal of $(\Sigma_{\text{herm}}, A_0;\lambda_0) \subset (\R^{n^2 +1}, (A_0, 0))$ is equal to $I(\Sigma_{\text{herm}}, A_0;\lambda_0)+(\lambda)$.
    We define the map germ $G: (\R^4, 0) \to (\R^{n^2+1}, (A_0, 0))$, $G(x,y,z, \lambda)=(H(x,y,z), \lambda)$. Obviously,
        \begin{equation}    \frac{\mathcal{O}_3}{H^*(I(\Sigma_{\text{herm}}, A_0;\lambda_0)) \cdot \mathcal{O}_3} \cong \frac{\mathcal{O}_4}{G^*(I(\Sigma_{\text{herm}}, A_0;\lambda_0)+(\lambda)) \cdot \mathcal{O}_4}.
    \end{equation}

        The local diffeomorphism $p$ between $(\widetilde{\Sigma}_{\text{herm}}, (A_0, \lambda_0))$ and $(\Sigma_{\text{herm}}, A_0;\lambda_0)$ can be extended to a local diffeomorphism $\Phi: (\R^{n^2 +1}, (A_0, \lambda_0)) \to (\R^{n^2 +1}, (A_0, 0))$ with the properties
        \begin{enumerate}
            \item $\Phi(\widetilde{\Sigma}_{\text{herm}}, (A_0, \lambda_0))=(\Sigma_{\text{herm}}, A_0;\lambda_0)$ (that is, $\Phi$ is an extension of $p$),
            \item $\Phi \circ \widetilde{H}=G$.
        \end{enumerate}
        Therefore,

\begin{equation}
        \frac{\mathcal{O}_4}{G^*(I(\Sigma_{\text{herm}}, A_0;\lambda_0)+(\lambda)) \cdot \mathcal{O}_4} \cong 
        \frac{\mathcal{O}_4}{\widetilde{H}^*(I(\widetilde{\Sigma}_{\text{herm}}, (A_0, \lambda_0))) \cdot \mathcal{O}_4},
        \end{equation}
        which proves the proposition. \end{proof}

In practice, this also means that $\lambda$ always can be eliminated, that is, $[\lambda]$ is expressed by the other variables in the quotient algebra, as the following example shows. Note that it shows the analogous argument for 2-parameter families of real symmetric matrices.

\begin{example} We illustrate on a real symmetric example that the algebra of the degeneracy can be derived without calculating the effective model. The method is the same as for the 3-fold degeneracy: after substracting $\lambda$
\bean
H(x,y)-\lambda\mathds{1}=\begin{pmatrix}
2-\lambda &x&y\\
x&-\lambda &0\\
y&0&-\lambda
\end{pmatrix},
\eean
we consider the ideal $J$ generated by the $2\times 2$ minors. The four residue classes 
$[1]$, $[x]$, $[y]$, $[x^2] = [y]^2 = -2[\lambda ]$ form a basis of the algebra $\mathcal{O}_3/J$. It can be seen that this is isomorphic to the local algebra $Q_0(h)$ of $h(x,y)=(x^2-y^2, xy)$ (called `quadratic Weyl point') with local multiplicity 4, local degree 2, see e.g. \cite{BirthQ, naselli2024stability}. 

The SW transformation can be performed exactly on this $3 \times 3$ Hamiltonian, see \cite[Example 3.5.7]{PinterSW} (where a slightly modified version is computed), which verifies that the corresponding effective germ is contact equivalent to $h(x,y)=(x^2-y^2, xy)$.
\end{example}

\section*{Author contributions}
Gy. Frank provided the initial project idea, computed the results for specific examples, and conjectured the general results.
The mathematical formalization was done by G. Pint\'{e}r and Gy.
Frank, with contributions from D. Varjas and A. P\'{a}lyi.
Gy. Frank, A. P\'{a}lyi, and D. Varjas developed the physical examples.
A. P\'{a}lyi and D. Varjas produced the figures.
The initial draft of the manuscript was written by G. Pint\'{e}r with the contribution of Gy. Frank.
All authors reviewed and edited the manuscript. The authors are listed in alphabetic order.




\section*{Acknowledgments}
We thank M\'{a}ty\'{a}s Domokos, L\'{a}szló Feh\'{e}r, \'{A}kos Matszangosz, and David Mond for their useful feedback on the project, and especially Alexander Hof for many insightful technical discussions.
This work was supported by the HUN-REN Hungarian Research Network through the Supported Research Groups Programme
(HUN-REN-BME-BCE Quantum Technology Research Group, TKCS-2024/34).
D.V. received funding from the Deutsche Forschungsgemeinschaft (DFG, German Research Foundation) under Germany’s Excellence Strategy through the W\"{u}rzburg-Dresden Cluster of Excellence on Complexity and
Topology in Quantum Matter – ct.qmat (EXC 2147, project-ids 390858490 and 392019).
G.P. and D.V. acknowledge funding from the National Research, Development and Innovation Office of Hungary under OTKA grant no. FK 146499.

\section*{Data availability statement}

Our manuscript has no associated data.

\appendix
\section{Appendix: degree and multiplicity}\label{a:app}

For convenience we collect here the basic concepts of (affine) algebraic and local analytic geometry used throughout the paper, see Section~\ref{app:basic}. The main focus is on the degree and multiplicity of complex analytic set germs $(X, x_0)$, which is discussed in Section~\ref{app:multi} based on \cite[Appendix D.3.]{MondBook}. 

The main goal of this appendix if the discussion of the multiplicity of holomorphic map germs with respect to a complex analytic set germ in Section~\ref{ss:intsect}. Its main theorem is Proposition~\ref{pr:everypert}, which is the key concept of this paper, and for which we didn't find any direct reference in the literature, hence, we wrote a detailed proof for it.

Throughout the paper, the dimension of an algebra always means its dimension as a vector space over the base field, which is usually $\C$, sometimes $\R$. That is, we do not use the dimensions of rings (like Krull dimension).

\subsection{Basic concepts in algebraic and local analytic geometry}\label{app:basic}

We provide a very sketchy introduction to algebraic geometry here, without claiming to be complete. For affine and projective algebraic geometry we refer to the classical books e.g. \cite{HartshoneBook, EisenbudCommutativeAlgebra, Shafarevich}, for local analytic geometry we refer to \cite{Jong, MondBook}. For a very short introduction of algebras and ideals in physical context we refer to the appandix of \cite{BirthQ}.

In high school coordinate geometry, geometrical subsets (e.g. curves, surfaces) of $\mathbb{R}^n$ are defined by equations, i.e. as the common zero locus of some functions of $n$ variables. Let $V \subset \mathbb{R}^n$ be the common zero locus of the functions $f_1, \dots, f_k$, that is,
\bean
V=\bigcap_{i=1}^k f_i^{-1}(0)=f^{-1}(0),
\eean
where $f=(f_1, \dots, f_k): \mathbb{R}^n \to \mathbb{R}^k$. Which functions vanish on $V$? Beyond the linear combinations of $f_i$ (with real coefficients), the functions in form
\bean\label{eq:ideal}
g=\sum_{i=1}^k g_i f_i,
\eean
also vanish at the points $V$, where the coefficients $g_i: \mathbb{R}^n \to \mathbb{R}$ are functions. The functions $g$ in form \eqref{eq:ideal} are the elements of the ideal generated by $f_i$. Moreover, if, for example, $f_1(x, y)=(x+3y)^2$, then $\sqrt{f_1}=x+3y$ also vanishes on $V$. These observations motivate the setup of algebraic geometry.

Algebraic geometry is basicly discussed on 3 different levels, the space of functions under examination is chosen to fit to the level. Another variable input is the base field, which could be $\mathbb{R}$ or $\mathbb{C}$ in our cases, however the fundamental theorem (''Nullstellensatz'') requires to work over an algebraic closed field, which is $\mathbb{C}$ for us. The usual levels are the following:
\begin{enumerate}
    \item \emph{Affine algebraic geometry} in $\mathbb{C}^n$, the considered functions are the polynomials, $\mathbb{C}[x_1, x_2, \dots, x_n]$. 
    \item \emph{Projective algebraic geometry} in $\mathbb{P}^n:=\mathbb{C} P^n$, the functions are the homogeneous polynomials of $n+1$ variables.
    \item \emph{Local analytic geometry} in $\mathbb{C}^n$ around a point $p_0 \in \mathbb{C}^n$ for example, the origin $p_0=0$, the functions are the germs of analytic functions, that is, locally convergent power series, whose algebra is denoted by $\mathcal{O}_n:=\mathcal{O}_{\mathbb{C}^n, 0}=\mathbb{C}\{x_1, \dots, x_n\}$.
\end{enumerate}

Our investigation of degeneracy points is local, hence we need local analytic geometry, but we also use affine and projective algebraic geometry. 
The fundamental ingredients and theorems are the same in every level, but here we summarize them in the local analytic case, with notes on the affine case. 

The basic algebraic objects are the ideals $I \subset \mathcal{O}_n$, which are in a kind of duality with the local subsets $V \subset \mathbb{C}^n$. More precisely, instead of sets we have to consider \emph{set germs}: two subsets $V$ and $W$ define the same set germ at 0 if their intersections with a suitably small neighborhood of the origin are the same. The set germ of $V$ at 0 is denoted by $(V, 0) \subset (\C^n, 0)$. The set germ $(V, p)$ of a subset $V$ at any $p \in V$ is defined in the same way. In the definitions below, $V=(V,0)$ or $(V,P)$, we omit base point from the notation.

Every ideal $I \subset \mathcal{O}_n$ defines an \emph{analytic set germ} $(V(I), 0)$, the vanishing locus\footnote{To be precise, in the local analytic case one should take representatives for these definitions, see \cite{Jong}.} of $I$:
\bean
V(I)=\{p \in \mathbb{C}^n \ | \ f(p)=0 \mbox{ for all } f \in I  \}.
\eean

The vanishing ideal $I(V)$ of a set germ $V$ is
\bean
I(V)=\{ f \in \mathcal{O}_n \ | \ f(p)=0 \mbox{ for all } p \in V \}.
\eean
It is clear that $I(V)$ is always an ideal of $\mathcal{O}_n$, and it has the following properties:
\begin{enumerate}
    \item If $I_1 \subset I_2$, then $V(I_1) \supset V(I_2)$.
    \item If $V_1 \subset V_2$, then $I(V_1) \supset I(V_2)$.
    \item $I(V(I)) \supset I$.
    \item $V(I(V)) \supset V$.\footnote{The notation $\subset$  allows equality as well.}
\end{enumerate}

The fundamental theorem of algebraic geometry is called \emph{Nullstellensatz}.\footnote{The affine version is called Hilbert's Nullstellensatz, its local analytic analogue is the Rückert Nullstellensatz.} It states that over an algebraic closed field (like $\C$ in our case)
\bean
I(V(I)) =\sqrt{I}
\eean
holds, where $\sqrt{I}$ is the \emph{radical} of $I$ defined as
\bean
\sqrt{I}=\{f \in \mathcal{O}_n \ | \ \exists b \in \mathbb{N} \ f^b \in I \}.
\eean
It is clear that $\sqrt{I} \supset I$, and $I(V(I)) \supset \sqrt{I}$. At first sight we cannot see that $\sqrt{I}$ is really an ideal of $\mathcal{O}_n$, and of course the hard direction of the theorem, $I(V(I)) \subset \sqrt{I}$ is not trivial.

How does this whole setup relate to the high-school coordinate geometry? Via the Noetherian property of the ring $\mathcal{O}_n$! That is, every ideal $I $ of $\mathcal{O}_n$ is generated by finitely many elements, $f_1, \dots, f_k$, hence, $V(I)$ is the common zero locus of the germs $f_i$.\footnote{Note that the Noetherian property holds in (real or complex) analytic category, but it does not hold in real $\mathcal{C}^{\infty}$ category: the algebra of $\mathcal{C}^{\infty}$ real germs is not Noetherian.}  Therefore, the two concepcts `common zero locus of some functions' and `vanishing locus of an ideal' are the same.

An ideal $I$ with $\sqrt{I}=I$ is called \emph{radical ideal}. The radical of an ideal is a radical ideal, that is, $\sqrt{\sqrt{I}}=\sqrt{I}$. For a radical ideal $I(V(I))=I$ holds by Nullsetellensatz. Possibly many different ideals define the same analytic set germ $(V, 0)$, but the vanishing ideal $I(V)$ is always a radical ideal and it is unique. $I(V)$ is called the \emph{reduced analytic structure} of $(V, 0)$. An ideal $I$ with $V(I)=V$ (or equivalently, $\sqrt{I}=I(V)$) is  called an \emph{analytic structure} on $(V,0)$. An analytic set germ has many different analytic structures, but only one reduced analytic structure, in other words, there is a bijection between analytic set germs and radical ideals of $\mathcal{O}_n$ via Nullstellensatz.

In contrast of analytic set germs, the notion of space germ takes the analytic structure into consideration as well. A \emph{(complex analytic) space germ} is an analytic set germ $V=V(I)$ endowed with the quotient algebra
    \bean
    \mathcal{O}_{V, 0} := \frac{\mathcal{O}_n}{I}.
    \eean




For an ideal $I$, the quotient $\mathcal{O}_n/I$ is called the \emph{local algebra} of the space germ $(V(I), \mathcal{O}_n/I)$. Intuitively, $\mathcal{O}_n/I$ 
is interpreted as \emph{the algebra of germs of functions on $V(I)$} in sense the if $f_1-f_2 \in I$ holds for two function germs $f_1, f_2 \in \mathcal{O}_n$, then they agree on $V(I)$. However, the other direction is not true, that is, $f_1|_{V(I)} =f_2|_{V(I)}$ does not imply $f_1-f_2 \in I$ in general. This is because the function algebra $\mathcal{O}_n/I$ is associated to the analytic structure defined by $I$, and not to the analytic set germ $V(I)$. 

\begin{ex}[$x^2$ from \cite{ArnoldSing}] For a fixed $t$, consider the function $f_t(x)=x^2-t \in \C[x]$. In the affine space $\C$, the vanishing locus of $f_t(x)$ consists of two points $x=\pm \sqrt{t}$, if $t \neq 0$. However, for $t=0$, the zero locus of $x^2$ is only one point, the origin. Denoting the ideal of $\mathcal{O}_1$ generated by $x^2$ by $I=(x^2) $, we conclude that the corresponding analytic set germ is $V(I)=\{0\}=V(x)$, and its vanishing ideal $I(V(I))=(x)$, where $(x)$ is the ideal generated by $x$. Indeed, $(x)=\sqrt{(x^2)}$. Hence, as an analytic set germ, $V(I)$ is only one point. We loose the information that this point is created by merging the two roots $x=\pm \sqrt{t}$ together, as $t$ tends to 0.

However, this information is preserved by the local algebra $\mathcal{O}_1/I$. Namely, the dimension of $\mathcal{O}_1/I$ is 2, as a complex vector space. A basis is formed by the residue classes $[1]=1+I$ and $[x]=x+I$. 

The space of functions defined on the two points $x=\pm \sqrt{t}$ is also two dimensional, since a function is defined by two independent complex values. As a change of basis in the function space, these functions can be identified with the linear functions $ax+b$. This property is inherited to the $t=0$ case, showing that $x^2=0$ determines a `fat point' at 0 with multiplicity $2=\dim(\mathcal{O}_1/I)$. As a space germ, we think on $V(I)$ as this fat point.
    
\end{ex}

\begin{ex}[One point spaces]\label{ex:onepoint}
 Let $V=\{0\} \subset \mathbb{C}^n$, that is, only one point, the origin. It seems very trivial, in fact, 
\bean
V=\{x_1=x_2= \dots =x_n=0 \}.
\eean
The ideal generated by the functions $x_i$ is the unique maximal ideal $\mathfrak{m}_n$ of $\mathcal{O}_n$, therefore $V=V(\mathfrak{m}_n)$. It is the reduced structure of $V$, indeed, $\sqrt{\mathfrak{m}_n}=\mathfrak{m}_n$, since $\mathfrak{m}_n$ is maximal. 

What are the other analytic structures of $V$? By Nullstelelnsatz, an ideal $I$ defines $V(I)=V$ if and only  if $\sqrt{I}=\mathfrak{m}_n$. 

For a  map germ $f=(f_1, \dots, f_k): (\mathbb{C}^n, 0) \to (\mathbb{C}^k, 0)$ and for the ideal $I_f=I(f_1, \dots, f_k)$  generated by the components $f_i$ of $f$, the analytic set germ
\bean
f^{-1}(0)=V(I_f)
\eean
consists of only one point (the origin) if and only if the (complex) dimension of the local algebra
\bean
Q(f)= \frac{\mathcal{O}_n}{I_f}
\eean
is finite \cite[Theorem D.5]{MondBook}. Compering it with the Nullstellensatz, we conclude that for an ideal $I$ of $ \mathcal{O}_n$ the following are equivalent:
\bean
V(I)=\{0\} \ \Leftrightarrow \ \sqrt{I}=\mathfrak{m}_n \ \Leftrightarrow \ \dim_{\mathbb{C}} \frac{\mathcal{O}_n}{I} < \infty.
\eean

For the investigation of isolated degeneracy points we actually study different analytic structures of the origin!

Cf. Theorem D.5. in [Mond--BAllesteros].
\end{ex}

For a positive dimensional example see Example~\ref{ex:cusp}. In many important cases, the generators $f_1, \dots, f_k$ of the ideal $I$ are polynomials, therefore, they also generate an ideal $I_{\text{aff}}$ in the ring of polynomials $ \C[x_1, \dots, x_n]$, which defines an affine algebraic subset (affine variety) $X:=V(I_{\text{aff}}) \subset \C^n$.  Then, it defines an analytic space germ $(X, p)$ at each point $p \in X$ (the \emph{analytification} of $X$ at $p$) with ideal generated by $I$ in the local ring $\mathcal{O}_{\C^n, p}$. It can be translated to $\mathcal{O}_n=\mathcal{O}_{\C^n, 0}$ by taking the polynomials $f(x+p)$ .

\subsection{Multiplicities}\label{app:multi} There are several similar notions of degrees and multiplicities, their relations are clarified in \cite[Appendix D.3.]{MondBook}. We present here a short summary of this.

The most important notion is \cite[Definition D.3.]{MondBook}, the multiplicity of an analytic set germ $(X, x_0) \subset (\mathbb{C}^n, x_0)$ of dimension $d$. Shortly, it is the intersection multiplicity of $(X, x_0)$ with a generic $(n-d)$-dimensional affine linear subspace $(L, x_0) \subset \mathbb{C}^n$ at $x_0$. It is equal to the number of generic intersection points $X \cap L'$ of $X$ and a slightly shifted parallel copy $L'$ of $L$.

\cite[Definition D.3.]{MondBook}   uses a slightly different approach to defined $\text{mult}(X, x_0)$ based on the previously defined notions. 
Consider a \emph{finite, surjective} map germ 
\bean
f: (X, x_0) \to (Y, y_0),
\eean
where $(X, x_0)$ and $(Y, y_0)$ are irreducible analytic set germs of the same dimension $d$. The \emph{degree} of $f$ is defined as
\bean
\deg (f):=\sharp f^{-1}(y),
\eean
where $y \in Y$ is any generic value of a sufficiently small representative of the germ $f$. By \cite[Lemma D.2]{MondBook}, the degree does not depent on the choice of the generic value $y$, it depends only on the map germ $f$.

The degree $\deg (f)$ has several equivalent algebraic characterizations based of the induced homomorphism 
\bean
f^*: \mathcal{O}_{Y, y_0} \to \mathcal{O}_{X, x_0}
\eean
between the local algebras, see \cite[Lemma D.2.--Example D.3.]{MondBook}.

Consider finite, surjective map germs in the special case $Y=\C^d$, that is,
\bean
f: (X, x_0) \to (\mathbb{C}^d, 0),
\eean
where $(X, x_0)$ is irreducible. In this case the \emph{multiplicity} of $f$ is defined as the dimension of the quotient algebra:
\bean\label{eq:multideg}
\mbox{mult} (f):= \dim
\frac{\mathcal{O}_{X, x_0}}{f^* \mathfrak{m}_d \cdot \mathcal{O}_{X, x_0}}= \dim
\frac{\mathcal{O}_{X, x_0}}{I(f_1, f_2, \dots, f_d)}.
\eean

In general, for map germs $f: (X, x_0) \to (\mathbb{C}^d, 0)$ the multiplicity is not equal to the degree. By \cite[Corollary D.6]{MondBook}, we have inequality  \bean\label{eq:degmult}
    \deg(f) \leq \mbox{mult}(f),
    \eean
     with equality if and only if $(X, x_0)$ is \emph{Cohen–Macaulay}. The algebraic definition of Cohen–Macaulay property can be found in \cite[Appendix C.3.6.]{MondBook}, but in this paper we use the above equivalent characterization. Intuitively, $(X, x_0)$ is Cohen--Macaulay if it behaves nicely under perturbation of map germs in sense of Proposition~\ref{pr:everypert}, see also the onservation of ultiplicity \cite[Corollary E.6]{MondBook}. Furthermoore, we use the following facts about Cohen--Macaulay property:  
\begin{enumerate}
\item If $(X, x_0)=(\mathbb{C}^d, 0)$, then it is Cohen--Macaulay \cite[Exercise C.3.2]{MondBook}. Therefore, the degree of a map germ $f:(\mathbb{C}^d, 0) \to (\mathbb{C}^d, 0)$ is equal to its multiplicity, which is the main point in \cite{BirthQ}.
   \item  Recall that $(X, x_0)$ is a \emph{complete intersection} if its vanishing ideal $I(X, x_0) \subset \mathcal{O}_{\C^n, x_0}$ can be generated by $ \text{codim}(X \subset \C^n)=n-d$ elements. Cohen--Macaulay property is more general than complete intersection, that is, every complete intersection is Cohen--Macaulay \cite[Proposition C.11]{MondBook}.
   \item Cohen--Macaulay is \emph{pure-dimensional} (equi-dimensional), that is, there is no embedded component and each component has the same (co)dimension \cite[Proposition C.8]{MondBook}.
  \item Determinantal varieties are Cohen--Macaulay, see \cite{BrunsVetterBook} and Section~\ref{ss:matrvar}.
  \item   Being Cohen--Macaulay is an `open property', that is, a Cohen--Macaulay analytic set germ has a representative whose germ at every point is Cohen--Macaulay \cite[Prop. 18.8.]{EisenbudCommutativeAlgebra}. See Section~\ref{ss:proof-notcohmac} for application.
\end{enumerate}

The \emph{multiplicity} of a dimension $d$ analytic set germ $(X, x_0) \subset (\mathbb{C}^n, x_0)$ is defined as follows. Consider a \emph{generic} affine linear (i.e. shifted linear) projection 
\bean
\widetilde{\pi}: (\mathbb{C}^n, x_0) \to (\mathbb{C}^d, y_0), 
\eean
its restriction 
\bean
\pi:=\widetilde{\pi}|_{(X, x_0)}: (X, x_0) \to (\mathbb{C}^d, y_0)
\eean
is finite and surjective. Then we define
\bean\label{eq:multidef}
\mbox{mult}(X, x_0):=\deg(\pi).
\eean
The affine subspeces $L$ and $L'$ mentioned above are $L=\pi^{-1}(y_0)$ and $L'=\pi^{-1}(y)$ with a generic value $y \in Y$ (close to $y_0$). 

Put the ingredients together in the simplest case! Assume that $(X, 0) \subset (\mathbb{C}^n, 0)$ is an irreducible Cohen--Macaulay analytic set germ of dimension $d$, $\widetilde{\pi}: (\mathbb{C}^n, x_0) \to (\mathbb{C}^d, y_0)$ is a generic linear projection whose restriction $\pi: (X, 0) \to (\mathbb{C}^d, 0)$ is a finite surjective map. Then we have
\bean
\mbox{mult}(X, 0)= \deg( \pi) = \mbox{mult}(\pi).
\eean

Furthermore, let $I(X) \subset \mathcal{O}_n$ be the (reduced) ideal of $(X, 0)$, then $\mathcal{O}_{X, x_0}=\mathcal{O}_n/I(X)$. Let  $I_{\widetilde{\pi}} \subset \mathcal{O}_n$ be the ideal generated by the components $\widetilde{\pi}_j$ of $\widetilde{\pi}$ (which are germs of linear functions $\C^n \to \C$). 
Let $I_{\pi} \subset \mathcal{O}_{X, x_0}$ be the ideal generated by the components $\pi_j=\widetilde{\pi}_j|_X$ of $\pi$. Then we have
\bean\label{eq:multi}
\mbox{mult}(\pi)=\frac{\mathcal{O}_{X, 0}}{I_{\pi}}= \dim
\frac{\frac{\mathcal{O}_n}{I(X)}}{I_{\pi}}= \dim
\frac{\mathcal{O}_n}{I(X)+I_{\widetilde{\pi}}}, 
\eean
 where $I(X)+I_{\widetilde{\pi}}$ is the ideal generated by $I(X)$ and $I_{\widetilde{\pi}}$.
 
 Assume that $I(X)$ is generated by the elements $g_i \in \mathcal{O}_n$. Note that their number is not necessarily $n-d$, since $(X, 0)$ is not necessarily a complete intersection. 
 Then 
 \bean\label{eq:multikulti}
 \mbox{mult}(X, 0)=\mbox{mult}(\pi)=\dim \frac{\mathcal{O}_n}{I(X)+I_{\pi}}=
 \dim \frac{\mathcal{O}_n}{I(g_i, \pi_j)}
 \eean 
 With a special choice of coordinates in $\C^n$ it can be reached that
 \bean
 \pi(x_1, \dots, x_n)=(x_1, \dots, x_d).
 \eean

\begin{ex}[Cusp]\label{ex:cusp}
    Consider the affine algebraic subset $X=\{x^2-y^3=0\} \subset \C^2$, and its germ $(X,0)$ at 0. Consider the projections $\widetilde{\pi}: (\C^2, 0) \to (\C, 0)$, $\widetilde{\pi}(x,y)=ax+by$ (with $(0,0) \neq (a,b) \in \C^2$) and the restrictions $\pi=\widetilde{\pi}|_X$. Since $\dim(X,0)=1$ equals to the number of generators of the vanishing ideal $I(x^2-y^3)\subset \mathcal{O}_2$, $(X,0)$ is a complete intersection, hence, Cohen--Macaulay. 

    First we compute the multiplicity according to Equation~\eqref{eq:multikulti}:
    \begin{equation}
        \text{mult}(X,0)=\dim \frac{\mathcal{O}_2}{I(x^2-y^3, ax+by)}.
    \end{equation}
    
    If $b\neq 0$, then $[y]=-\frac{a}{b}[x]$ holds in the quotient algebra, hence, $[x^2-y^3]=[x^2+\frac{a^3}{b^3}x^3]$. The ideal generated by $x^2+\frac{a^3}{b^3}x^3$ in $\mathcal{O}_1$ consists of all power series of order at least 2. Indeed, $x^2+\frac{a^3}{b^3}x^3=x^2(1+\frac{a^3}{b^3}x)$, and $1+\frac{a^3}{b^3}x$ is invertible in $\mathcal{O}_1$ (see e.g. the appendix of \cite{BirthQ}).
    
    Hence, a basis of the algebra is formed by the residue classes $[1], [x]$, the dimension is 2, that is, $\text{mult}({\pi})=2$. Since this is the generic case, $\text{mult}(X,0)=\text{mult}({\pi})=2$.

    In the exceptional case $b=0$, $[x]=0$ in the quotient, hence, $[x^2-y^3]=-[y^3]$. A basis is formed by $[1], [y], [y^2]$, hence, by equation~\eqref{eq:multideg}, $\text{mult}({\pi})=3$ in this case. 

    Then we compute the degree $\deg(\pi)$ by definition, which agrees with $\text{mult}(\pi)$, since $(X,0)$ is Cohen--Macaulay. 
    First of all, consider the line $L=\{ax+by=0\} \subset \C^2$, and determine $(\pi^{-1}(0),0)=L \cap (X,0)$. This is the set of solutions of the system of equations
    \begin{equation}
       \left. \begin{array}{ccc}
       ax+by & = & 0 \\
       x^2-y^3 & =& 0 \\
       \end{array}
       \right\}
    \end{equation}
    
    For $a=0$ or $b=0$ there is only one solution $(x,y)=(0,0)$. If none of them is not 0, then we obtain one more solution $(x,y)=(-b^3/a^3, b^2/a^2)$, hence, $L \cap X$ consists of these two points (as sets). However, with a choice of a sufficiently small representative of $(X,0)$ this extra solution can be avoided. Hence, the intersection is $L \cap (X,0)=\{0\}$ in sense of set germs.  In any case, we fix the small representative of $(X,0)$ before perturbing $L$.

    Then consider a point $0 \neq t \in \C$, $|t| \ll 1$. Its preimages $\pi^{-1}(t)$ in $X$ are the intersections of $X$ with the line $L'=\{ax+by=t\} \subset \C^2$ for small $t \neq 0$. Keep in mind that  we want to count only the intersection points born from 0, i.e. those which go to 0 as $t$ tends to 0. These are the intersection points of $L'$ with the fixed representative $(X,0)$.
    
    These preimages (intersection points) can be obtained as the solutions of
    \begin{equation}
       \left. \begin{array}{ccc}
       ax+by & = & t \\
       x^2-y^3 & =& 0 \\
       \end{array}
       \right\}
    \end{equation}
   
    In the expected case $b=0$, 
    \begin{equation}
        x=\frac{1}{a} t, \ y=\rho \sqrt[3]{\frac{1}{a^2}t^2},
    \end{equation}
    with the three third roots of unity $\rho$ we get 3 solutions born from 0.

    In the general case, first consider $a=0$. Then,
    \begin{equation}
        x=\pm \sqrt{\frac{1}{b^3}t^2}, \ y=\frac{1}{b} t
    \end{equation}
    gives 2 solutions born from 0.

    If $a \neq 0$ and $b \neq 0$, substitution leads to the equation
\begin{equation}
    x^2-\frac{1}{b^2}(t-ax)^3=0.
\end{equation}
This has 3 solutions, but one of them  converges to $(x,y)=(-b^3/a^3, b^2/a^2)$ as $t$ tends to 0. Hence, we have 2 solutions born from 0. In other words, $L' \cap X$ consists of 3 points in this case, but $L' \cap (X,0)$ has 2 points.

Note that this third solution $(x,y)=(-b^3/a^3, b^2/a^2)$ of $L \cap X$ goes to the infinity (in the projective plane $\C P^2$) as $a$ tends to 0, and if $b$ tends to 0 it merges with the origin. This explains the increased multiplicity for $b=0$.
\end{ex}

For non Cohen--Macaulay examples see e.g. \cite[Exercise C.3.2]{MondBook}.

\subsection{Intersection of map germs and analytic set germs}\label{ss:intsect}


We introduce a terminology \emph{multiplicity of holomorphic map germs
with respect to a complex analytic set germ} serving as a common generalization of the degree of finite map germs $f:(X, x_0) \to (Y, y_0)$ between analytic set germs of the same dimension $d$, and the multiplicity of analytic set germs $(X,x_0) \subset (\C^n, x_0)$, and we clarify its basic properties in Propostion~\ref{pr:everypert}. This is the key point of the discussion in \ref{s:proofs}. We didn't find explicit reference in the literature, although \cite[C.5]{MondBook}, in particular, Proposition C.12, and Corollary E.6. are similar. The proof of Propostion~\ref{pr:everypert} is due to Alex Hof.

Let $(X, x_0) \subset (\C^n, x_0) $ be an irreducible analytic germ of dimension $d$. Let $f: (\mathbb{C}^{n-d}, 0) \to (\mathbb{C}^n, x_0)$ be a map germ. We say that $f$ is \emph{isolated with respect to $(X, x_0)$} if $f^{-1}(X, x_0)=\{0\}$ (as a set, not a space germ) holds for a sufficiently small representative of $f$ and $(X,x_0)$. Given $f$ and $(X,x_0)$, we fix such representatives, which we do not distinguish from the germ in notation.  We call a map germ $\widetilde{\pi}: (\C^n, x_0) \to (\C^{d}, 0)$  \emph{finite with respect to $(X,x_0)$} if $\pi:=\widetilde{\pi}|_{(X,x_0)}X: (X, x_0) \to (\C^d, 0)$ is finite, that is, $\pi^{-1}(0)=\{x_0\}$. See equivalent characterizations of finite map germs in \cite[Theorem D.5]{MondBook}.  

If $f$ is an immersion, then there is a (not unique) corresponding germ of submersion $\widetilde{\pi}: (\C^n, x_0) \to (\C^{d}, 0)$ such that $(\widetilde{\pi}^{-1}(0), x_0) \subset (\C^n, x_0)$ is the image of $f$. Obviously, in this case, $f$ is isolated with respect to $(X, x_0)$ if and only if $\widetilde{\pi}$ is finite with respect to $(X,x_0)$.

We define a one parameter perturbation $f_t: \C^{n-d} \to \C^n$ of $f$ as follows. Starting from an unfolding of $f$ in form 
\begin{equation}
F:(\C^{n-d} \times \C, (0,0)) \to (\C^n \times \C, (x_0,0)), \ F(x, t)=(f_t(x), t), \ f_{t=0}=f
\end{equation}
and we take a sufficiently small representative of $F$ defined on the domain $\mathcal{U}_x \times \mathcal{U}_t \subset \C^{n-d} \times \C$, cf. \cite[Sec. 5.5]{MondBook} and \cite[Appendix]{BirthQ}. For a fixed parameter value $t \in \mathcal{U}_t$, we call the map $f_t(x): \mathcal{U}_x \to \C^n$ the perturbation of $f$ corresponding to $t$. Taking the fixed (or smaller) representative $(X, x_0) \subset \mathcal{U}_x$, the elements of the preimage $f_t^{-1}(X, x_0) \subset \mathcal{U}_x$ are thought as the intersection points of the perturbation $f_t$ and $X$ born from $x_0$. 

The perturbation is generic \emph{generic (transverse) with respect to $(X, x_0)$}, if, with a sufficiently small choice of the representative of $F$, for every $0 \neq t \in \mathcal{U}_t$ and every point $p \in f_t^{-1}(X, x_0)$, $(X,x_0)$ is non-singular at $f_t(p)$, and $f_t$ is transverse to (the non-singular part of) $X$. 

\begin{rem} By the (stratified) transversality theorem and the fact that discriminants are complex analytic subsets of positive codimension, it seems that generic perturbations (with respect to a suitable topology) are transverse in the above sense. However, it is not obvious for the authors, see e.g. point 1. of Remark 4.4.4. in \cite{PinterSW}. In this paper, generic perturbation means transverse in the above sense, and we do not use that these perturbations are really generic.
\end{rem}

\begin{prop}\label{pr:everypert} Let $(X, 0) \subset (\C^n, 0) $ be an irreducible analytic set germ of dimension $d$. Assume that $(X,x_0)$ is Cohen--Macaulay. Let $f: (\mathbb{C}^{n-d}, 0) \to (\mathbb{C}^n, x_0)$ be a holomorphic map germ, and assume that $f$ is isolated with respect to $(X, x_0)$.
    \begin{enumerate}[label=(\alph*)]
        \item  For every  one parameter perturbation $f_t$ of $f$ which is generic with respect to $(X, x_0)$ and for every $t \neq 0$, $f_t$ has the same number of intersection points, that is, $\sharp f_t^{-1}(X, x_0)$ does not depend on the choice of the generic perturbation. Namely,
        \begin{equation}\label{eq:preimages}
            \sharp f_t^{-1}(X, x_0) = \dim \frac{\mathcal{O}_{n-d}}{f^*(I(X, x_0)) \cdot \mathcal{O}_{n-d}}.
        \end{equation}
        \item If, furthermore, $f$ is a germ of immersion, then  
        \begin{equation}
        \sharp f_t^{-1}(X, x_0) = \dim \frac{\mathcal{O}_{n-d}}{f^*(I(X, x_0)) \cdot \mathcal{O}_{n-d}}=
        \dim \frac{\mathcal{O}_{\C^n, x_0}}{I(X, x_0)+I_{\widetilde{\pi}}}=\text{mult} (\pi), \end{equation}
        where $\widetilde{\pi}: (\C^n, x_0) \to (\C^{d}, 0)$ is a germ of submersion corresponding to $f$, that is, $(\widetilde{\pi}^{-1}(0), x_0) \subset (\C^n, x_0)$ is the image of $f$, and $I_{\widetilde{\pi}}=\widetilde{\pi}^* (\mathfrak{m}_d) \cdot \mathcal{O}_{\C^n, x_0}$ is the ideal generated by the components of $\widetilde{\pi}$, and $\pi=\widetilde{\pi}|_X$.
    \end{enumerate}
    (Recall that $\dim=\dim_{\C}$ refers to the complex vector space dimension over $\C$ everywhere.)
\end{prop}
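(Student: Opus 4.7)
The plan is to prove (a) via a one-parameter flat family argument, and to reduce (b) to a direct identification of quotient algebras.

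For (a), I would form the relative preimage $F^{-1}(\widetilde{X})$ of the unfolding $F(x,t) = (f_t(x), t)$, where $\widetilde{X} := X \times \C \subset \C^{n+1}$, together with the projection $p: F^{-1}(\widetilde{X}) \to (\C, 0)$, $p(x,t) = t$. The core structural claim is that $F^{-1}(\widetilde{X})$ is a Cohen--Macaulay germ of pure dimension $1$ in $(\C^{n-d+1}, 0)$. Granting this, $p$ is a finite morphism from a one-dimensional Cohen--Macaulay source to the regular base $(\C, 0)$ whose fibres are all $0$-dimensional, hence flat by miracle flatness. Flatness plus finiteness then force the length of the scheme-theoretic fibre $p^{-1}(t)$ to be constant in $t$. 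At $t = 0$, the hypothesis $f^{-1}(X, x_0) = \{0\}$ identifies this length with $\dim_\C \mathcal{O}_{n-d}/f^*(I(X, x_0)) \cdot \mathcal{O}_{n-d}$; for generic $t$, transversality of $f_t$ to the smooth locus of $X$ at each intersection point forces $p^{-1}(t)$ to be reduced, so its length agrees with $\sharp f_t^{-1}(X, x_0)$. Matching the two gives Equation~\eqref{eq:preimages}.

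The main obstacle lies in the two structural ingredients feeding miracle flatness. Pure dimensionality of $F^{-1}(\widetilde{X})$ is the easier half: upper semi-continuity of fibre dimension together with $f^{-1}(X, x_0) = \{0\}$ forces every fibre $p^{-1}(t)$ to be $0$-dimensional in a neighbourhood of the origin, and surjectivity of $p$ onto a small disc then gives total dimension exactly $1$. The Cohen--Macaulay property is the more delicate half: $\widetilde{X} = X \times \C$ is Cohen--Macaulay since $(X, x_0)$ is, and $F^{-1}(\widetilde{X})$ realises the expected codimension $n-d$ in the smooth ambient $\C^{n-d+1}$, so the standard preservation principle applies. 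One concrete way to derive this principle is to pass to the graph $\Gamma_F \subset \C^{n-d+1} \times \C^{n+1}$, identify $F^{-1}(\widetilde{X})$ with $\Gamma_F \cap (\C^{n-d+1} \times \widetilde{X})$, and apply the transverse-intersection criterion (depth counting) for Cohen--Macaulay subgerms in a smooth ambient meeting in the expected codimension.

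Part (b) is a direct algebraic identification. Since $f$ is an immersion parametrizing the smooth germ $(\widetilde{\pi}^{-1}(0), x_0)$, the pullback $f^*$ descends to an isomorphism $\mathcal{O}_{\C^n, x_0}/I_{\widetilde{\pi}} \xrightarrow{\sim} \mathcal{O}_{n-d}$ sending the image of $I(X, x_0) + I_{\widetilde{\pi}}$ to $f^*(I(X, x_0)) \cdot \mathcal{O}_{n-d}$. Hence
$$\frac{\mathcal{O}_{n-d}}{f^*(I(X, x_0)) \cdot \mathcal{O}_{n-d}} \cong \frac{\mathcal{O}_{\C^n, x_0}}{I(X, x_0) + I_{\widetilde{\pi}}} \cong \frac{\mathcal{O}_{X, x_0}}{\pi^* \mathfrak{m}_d \cdot \mathcal{O}_{X, x_0}},$$
and the rightmost expression is $\mathrm{mult}(\pi)$ by the definition~\eqref{eq:multideg}, which closes (b) once (a) is in hand.
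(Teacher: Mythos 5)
Your proposal is correct and reaches the result by essentially the same mechanism the paper uses, repackaged in slightly different language. The paper's central construction is identical to yours: form the one-parameter total space $W = F^{-1}(X \times \C)$ over the $t$-line, prove it is a one-dimensional Cohen--Macaulay germ, and then exploit the rigidity of finite maps from a CM source to a regular base. The paper phrases this last step as $\deg(\sigma) = \operatorname{mult}(\sigma)$ for the finite projection $\sigma: Z \to (\C,0)$ (citing the characterization of Cohen--Macaulayness via degree equals multiplicity from Mond--Nu\~no-Ballesteros), whereas you phrase it as miracle flatness followed by constancy of fibre length; these are two faces of the same fact. The apparent structural difference --- the paper first handles the immersion case, identifying $W$ with a complete intersection $Z \subset X \times \C$, and then reduces general $f$ to an immersion via $\widetilde f(x) = (f(x),x)$ --- is in fact the same move as your graph trick: passing to $\Gamma_F$ \emph{is} replacing $F$ by the immersion $(x,t) \mapsto (F(x,t),x,t)$, and the "expected codimension plus CM ambient implies CM intersection" criterion you invoke is exactly the statement that $Z$ is a complete intersection inside the CM germ $X \times \C$. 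Your argument is a touch cleaner in that it avoids the explicit case split, and your identification of the special fibre $p^{-1}(0)$ with $\mathcal{O}_{n-d}/f^*(I(X,x_0))\cdot\mathcal{O}_{n-d}$ and the generic fibre with the reduced set of intersection points matches the paper's computation step for step. Part (b) is, in both accounts, the straightforward change-of-variables identifying the quotient algebras, so nothing further is needed there.
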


\begin{proof}
For simplicity we assume that $x_0=0$, the general case is the similar. Moreover, sometimes we omit the base point from the notation. First we prove part (a) and (b) under the additional condition that $f$ is an immersion. Then $F:(\C^{n-d} \times \C, 0) \to (\C^n \times \C, 0)$, $F(x, t)=(f_t(x), t)$ is an immersion as well. For a germ of submersion $\widetilde{\pi}: (\C^{n}, 0) \to (\C^d, 0) $ corresponding to $f$ there is an induced unfolding (and perturbation) 
\begin{eqnarray}
    \widetilde{\Pi}: (\C^n \times \C, 0) &\to& (\C^d \times \C, 0)  \\
    \widetilde{\Pi}(x, t) &=&(\widetilde{\pi}_t(x), t).
\end{eqnarray}
with the following properties: 
\begin{enumerate}
    \item $\Pi:=\widetilde{\Pi}|_{X \times \C}$ is finite, that is, $\Pi^{-1}(0,0)=\pi^{-1}(0)=0$, where $\pi=\pi_{t=0}=\widetilde{\pi}|_X$.
    \item $\widetilde{\pi}_t^{-1}(0)$ is the image of $f_t$, hence, $\widetilde{\Pi}^{-1}(\{0\} \times \C) \subset \C^n \times \C$ is the image of $F$. 
    \item The space germ $Z:=\Pi^{-1}(\{0\} \times \C) \subset X \times \C
    $ has dimension 1. $Z$ is the intersection of $X \times \C$ with the image of $F$.
    \item The space germ $W:=F^{-1}(X \times \C)=F^{-1}(Z) \subset \C^{n-d} \times \C$ has dimension 1. 
    \item $F|_W: W \to Z$ is an isomorphism of space germs (at $(0,0)$), because $F$ is an immersion, that is, an isomorphism between the non-singular ambient space germs $\C^{n-d} \times \C $ and $F(\C^{n-d}  \times \C)$ (at $(0,0)$).
    \item $Z \subset X \times \C$ is a complete intersection. Indeed, $Z$ is the vanishing locus of the ideal in $\mathcal{O}_{X \times \C, 0}$ generated by the first $d$ components of $\Pi$ (that is, the components of $\pi_t$), and their number $d$ equals to the  codimension of $Z$ in $ X \times \C$.
    \item $Z $ is Cohen--Macaulay. Indeed, $X$ is Cohen--Macaulay, hence $X \times \C$ is also Cohen--Macaulay, and  $Z$ is a coomplete intersection, therefore, Cohen--Macaulay space germ in $X \times \C$. 
\end{enumerate}

Consider the projection to the second coordinate $\sigma: (Z, (0,0)) \to (\C, 0)$, $\sigma(x,t)=t$. $\sigma$ is finite, that is, $\sigma^{-1}(0)=\{0\}$. Observe that 
\begin{equation}
     \sigma^{-1}(t)= \pi_t^{-1}(0),
\end{equation}
hence, 
\begin{equation}
     \sharp \sigma^{-1}(t)= \sharp \pi_t^{-1}(0)=\sharp f_t^{-1}(X).
\end{equation}

Since $Z$ is Cohen--Macaulay, 
$\deg(\sigma)=\text{mult}(\sigma)$. Since every value $t \neq 0$ is a regular value of $\sigma$, $\deg (\sigma)=\sharp \sigma^{-1}(t)$ for $t \neq 0$. The multiplicity is
\begin{equation}
    \text{mult}(\sigma) =\dim \frac{\mathcal{O}_{Z, (0,0)}}{I_{\sigma}}=\dim \frac{\mathcal{O}_{n+1}}{I(X)+I(\widetilde{\Pi}_1, \dots, \widetilde{\Pi}_d)+ I(t)}=\dim \frac {\mathcal{O}_{n}}{I(X)+I_{\widetilde{\pi}}},
\end{equation}
On the other hand, $\text{mult}(\sigma)=
    \text{mult}(\sigma \circ F|_W)$, since $F|_{W}$ is an isomorphism. Therefore,
\begin{equation}
    \text{mult}(\sigma)=
    \text{mult}(\sigma \circ F|_W)=
    \dim \frac{\mathcal{O}_{W, (0,0)}}{I_{\sigma \circ F|_W}}
    = \dim \frac{\mathcal{O}_{n-d+1}}{F^{*}(I(X \times \C)) \cdot \mathcal{O}_{n-d+1} +I(t)}=
    \dim \frac{\mathcal{O}_{n-d}}{f^{*}(I(X ))  \cdot \mathcal{O}_{n-d}},
\end{equation}
proving (a) and (b) if $f$ is an immersion.

If $f$ is not an immersion, we define a new map germ  
\begin{eqnarray}
    \widetilde{f}: (\C^{n-d} , 0) &\to& (\C^n \times \C^{n-d}, 0) \\
    \widetilde{f}(x) &=&(f(x), x)
\end{eqnarray}
 and a new space germ $\widetilde{X}=X \times \C^{n-d} \subset \C^n \times \C^{n-d}$. $\widetilde{X}$ is an irreducible Cohen--Macaulay analytic set germ,  $\widetilde{f}$ is an immersion and $\widetilde{f}$ is isolated with respect to $\widetilde{X}$. A perturbation $f_t$ of $f$ induces a perturbation $\widetilde{f}_t(x)=(f_t(x), x)$, moreover, if $f_t$ is generic with respect to $X$, then $\widetilde{f}_t$ is generic with respect to $\widetilde{X}$. Therefore, by the above argument,
 \begin{equation}
            \sharp \widetilde{f}_t^{-1}(\widetilde{X}) = \dim \frac{\mathcal{O}_{n-d}}{\widetilde{f}^*(I(\widetilde{X})) \cdot \mathcal{O}_{n-d}}
        \end{equation}
        holds. But $\sharp \widetilde{f}_t^{-1}(\widetilde{X})=\sharp f_t^{-1}(X)$ and $\widetilde{f}^*(I(\widetilde{X}))= f^*(I(X))$, hence, equation~\eqref{eq:preimages} holds, proving (a) if $f$ is not an immersion.
    
\end{proof}

\begin{remark}
This proof does not work if $X$ is not Cohen--Macaulay. For a fixed perturbation $f_t$, the number $\sharp f_t^{-1}(X)$ is the same for every nonzero value $t \neq 0$, but we cannot compare different perturbations in this case. The authors guess that the number $\sharp f_t (X)$ depends on the perturbation if $X$ is not Cohen--Macaulay. 

On the other hand, Theorem~\ref{th:pullback} suggests that the independence of $\sharp f_t (X)$ on the perturbation also holds for generically one-to-one images of a Cohen--Macaulay space germs, like $X=\Sigma$ is our case. 

Also note that in the immersion case $ \sharp f_t^{-1}(X, x_0)$ is equal to the intersection multiplicity of the analytic space germs $(X,x_0)$ and the image of $f$ (which is the zero locus of $\widetilde{\pi}$). However, in the general case we wanted to avoid investigation of the image $f$, which may be not reduced.
\end{remark}

Also if $(X, x_0)$ is not necessarily Cohen--Macaulay, we call the number of intersection points $\sharp f_t^{-1}(X,x_0)$ the \emph{multiplicity of $f$ with respect to $(X,x_0)$} at $0 \in (X,x_0)$. We call a germ of immersion $f$ (and also the corresponding germ of submersion $\pi$) \emph{generic with respect to $(X,x_0)$} if the multiplicity of $f$ with respect to $(X,x_0)$ is equal to $\text{mult}(X, 0)$, defined in \eqref{eq:multidef}.\footnote{Not to be confused with a perturbation $f_t$ of $f$ is generic (i.e., transverse) with respect to $(X,x_0)$.} In particular, if a  surjective affine linear linear map $\widetilde{\pi}: (\C^n, x_0) \to (\C^d,0)$ is generic with respect to $(X,x_0)$, assumed to be Cohen--Macaulay, and
\bean
f: (\mathbb{C}^{n-d}, 0) \to (L, x_0) \subset (\C^n, x_0)
\eean
is a linear parametrization of $L=\widetilde{\pi}^{-1}(0) \subset (\mathbb{C}^n, x_0)$, then
    \bean\label{eq:quest}
\dim \frac{\mathcal{O}_{n-d}}{f^*(I(X,x_0)) \cdot \mathcal{O}_{n-d}}= \dim \frac{\mathcal{O}_{\C^n, x_0}}{I(X,x_0)+I_{\widetilde{\pi}}}=\text{mult}(X, x_0).
\eean

\subsection{Homogeneous case}\label{ss:homo}

We show that with respect to a homogeneous variety a linear map germ is generic if and only if it is isolated.

An ideal $I \subset \C[x_1, \dots, x_n] $ is \emph{homogeneous} if it is generated by homogeneous polynomials $f_1, \dots, f_k$. Equivalently, $I$ is homogeneous if and only if $f \in I$ implies that $f^{(d)} \in I$ for every $d=0, \dots, \deg(f)$, where $f=\sum_{d=0}^{\deg(f)} f^{(d)}$ is the (unique) decomposition of $f$ as the sum of homogeneous polynomials $f^{(d)}$ of degree $d$. For details see \cite{FultonAlgebraicCurves}.

The vanishing locus $X=V(I) \subset \C^n$ of a homogeneous ideal $I$ is a cone, that is, $p \in X$ implies that $\lambda p \in X$ for all $\lambda \in \C$. Indeed, $f(\lambda p)=\lambda^d f(p)$ for a homogeneous polynomial $f$ of degree $d$. In other words, $X$ is the union of linear subspaces of $\C^n$ of dimension 1. 

Therefore, a homogeneous ideal $I$ determines a subset $\P X \subset \P^{n-1}$, where $\P^{n-1}:=\C P^{n-1}$ is the complex projective space of dimension $n-1$ consists of the linear subspaces of $\C^n$ of dimension 1. Such a subset, that is, the vanishing locus of a homogeneous ideal in $\P^{n-1}$ is called \emph{projective variety}. The dimension of $\P X$ is $d-1$.

Let $I$ be a homogeneous radical ideal, that is, $\sqrt{I}=I$, then $X$ is reduced. In this case, the \emph{degree} of $\P X$ is defined as follows. Consider projective subspaces of complementary dimension, $\P^{n-d} \subset \P^{n-1}$. A generic subspace intersects $\P X $ at non-singular points, and the intersection is transverse. For such subspaces $\P^{n-d}$, the number of intersection points is the same, it is the degree of $\P X$, that is,
\begin{equation}
    \deg (\P X)=\sharp (\P X \cap \P^{n-d})
\end{equation}
for generic projective subspaces $\P^{n-d} \subset \P^{n-1}$. By a classical result,
    $\deg (\P X)=\text{mult} (X, 0)$.

\begin{prop}\label{pr:finitegeneric}
    Let $V(I)=X^d \subset \C^n$ be  a reduced affine algebraic subspace defined by a homogeneous radical ideal $I \subset \C[x_1, \dots, x_n]$. Consider a linear projection $\widetilde{\pi}: \C^n \to \C^d$. Assume that $\pi=\widetilde{\pi}|_X: X \to \C^d$ is finite and surjective. Then $\pi$ is generic in sense that it fulfills the definition of the multiplicity, that is, $\deg(\pi)=\text{mult}(X,0)$.
\end{prop}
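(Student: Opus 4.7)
The plan is to pass to the projective picture and apply the classical fact that projecting a projective variety from a disjoint linear subspace of complementary dimension gives a finite morphism of degree equal to the degree of the variety.

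First, I would set $L := \ker \widetilde{\pi} \subset \C^n$ and $\Lambda := \P(L) \subset \P^{n-1}$. Since $X$ is a cone, the finiteness assumption ($X \cap L = \{0\}$) translates to $\P X \cap \Lambda = \emptyset$, which is exactly the disjointness needed for the standard linear projection $p_\Lambda : \P^{n-1} \setminus \Lambda \to \P^{d-1}$ to restrict to a morphism $\bar{\pi} := p_\Lambda|_{\P X} : \P X \to \P^{d-1}$. Surjectivity of $\pi$ together with a dimension count ensures that $\bar{\pi}$ is finite and surjective as well.

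Next I would record a bijection of fibres arising from the $\C^{*}$-action on the cone $X$: for any $y \in \C^d \setminus \{0\}$, the projectivisation $x \mapsto [x]$ restricts to a bijection
\begin{equation*}
\pi^{-1}(y) \;\longrightarrow\; \bar{\pi}^{-1}([y]).
\end{equation*}
Indeed, the preimage of a nonzero $y$ under the linear map $\widetilde{\pi}$ consists of nonzero points, so the map is well defined and injective; conversely, each $[x'] \in \bar{\pi}^{-1}([y])$ has a unique scalar rescaling $x$ with $\widetilde{\pi}(x) = y$, yielding surjectivity.

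The heart of the argument is the classical identity $\deg(\bar{\pi}) = \deg(\P X)$. Here $\bar{\pi}^{-1}([y])$ coincides with the intersection of $\P X$ with the $(n-d)$-dimensional projective subspace $\overline{p_\Lambda^{-1}([y])}$ containing $\Lambda$, and a Bertini-type argument for the linear system of $(n-d)$-planes through $\Lambda$ --- whose base locus $\Lambda$ is disjoint from $\P X$ --- shows that for generic $[y]$ this intersection is transverse and consists of $\deg(\P X)$ reduced points. Combining this with the equality $\deg(\P X) = \text{mult}(X,0)$ recalled just before the proposition and with the bijection above gives $\deg(\pi) = \text{mult}(X,0)$. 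The main obstacle I expect is the transversality step in the reduced but possibly reducible setting: one must arrange that a single generic $[y]$ simultaneously meets each irreducible component of $\P X$ transversely. Since the discriminant locus attached to each component is a proper Zariski-closed subset of $\P^{d-1}$ and there are only finitely many components, a finite union of these loci is still proper, so this is a routine extension of the irreducible case --- though it is the one point I would want to write out carefully rather than invoke by analogy.
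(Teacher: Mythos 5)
Your proof is correct and follows essentially the same route as the paper's: both pass to the projective picture by identifying the affine fibre $\pi^{-1}(y)$ with a projective linear section of $\P X$ (your $\bar{\pi}^{-1}([y])$, the paper's $\P X \cap \P\mathcal{L}$), and both then invoke $\deg(\P X)=\text{mult}(X,0)$; the only presentational difference is that you package the step via the linear projection $p_\Lambda$ centered at $\Lambda=\P L$, while the paper works directly with the span $\mathcal{L}=\mathrm{span}(L,L')$. Your explicit caution about reducibility is well placed --- the paper's appendix defines $\deg$ and $\text{mult}$ for irreducible germs and only applies the proposition to the irreducible $\Sigma'$, so irreducibility is an implicit hypothesis that the paper's proof glosses over and your finite-union-of-discriminants fix handles cleanly.
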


\begin{proof}
    Let $L:=\widetilde{\pi}^{-1}(0)$ be the linear subspace of dimension $n-d$ defined by $\widetilde{\pi}$, and $L':=\widetilde{\pi}^{-1}(y)$, where $y \in \C^d$ is a generic value close to $y_0=0$,  be one of its affine translates which has transverse intersection with $X$ at non-singular points.    
    We have to show that the number of points of $\pi^{-1}(y)=X \cap L'$ is equal to the multiplicity of $(X,0)$.

  By homogeneity, the preimages $L'_t=\widetilde{\pi}^{-1}(ty)$ also intersect $X$ transversely at smooth points (for $t \neq 0$); the intersection points for different $t$ values form complex linear subspaces of dimension one contained in $X$, spanned by the points of $X \cap L'$.  Let $\mathcal{L} \subset \C^n$ be the linear subspace of dimension $n-d+1$ spanned by $L$ and $L'$. In other words, $\mathcal{L}=\bigsqcup_{t \in \C} L'_t$. $\mathcal{L}$ intersects the non-singular part of $X$ transversely, and $X \cap \mathcal{L}$ is the union of the one dimensional  linear subspaces. 

    The projectivization
    $\P \mathcal{L} \subset \P^{n-1}$ is a projective subspace of dimension $n-d$. The intersection of $\P \mathcal{L}$ and $\P X$ is transverse, the intersection points $\P X \cap \P \mathcal{L}$ correspond to the one dimensional linear subspaces in $X \cap \mathcal{L}$. Hence, their number is equal to the degree of $\P X$, which equals to the multiplicity of $(X, 0)$, that is,
    \begin{equation}
        \sharp (X \cap L')=\sharp (\P X \cap \P \mathcal{L})= \deg (\P X)=\text{mult}(X, 0).
    \end{equation}
    This proves the proposition.
    
\end{proof}

    Then, by Propositions~\ref{pr:everypert} and \ref{pr:finitegeneric} we conclude that

    \begin{cor}\label{co:finitegeneric2}
        Let $V(I)=X^d \subset \C^n$ be  a reduced affine algebraic subspace defined by a homogeneous radical ideal $I \subset \C[x_1, \dots, x_n]$. Consider an injective linear map $l: \C^{n-d} \to \C^n$. Then,
        \begin{enumerate}[label=(\alph*)]
        \item $l$ is generic with respect to $(X,0)$ if and only if it is isolated, that is, $l^{-1}(X,0)=\{0\}$.
        \item If $l$ is isolated with respect to $(X,0)$ and $X$ is Cohen--Macaulay, then
         \bean
         \text{mult}(X, 0)=
\dim \frac{\mathcal{O}_{n-d}}{l^*(I(X,0)) \cdot \mathcal{O}_{n-d}}.
\eean
\end{enumerate}
    \end{cor}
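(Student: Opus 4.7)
The plan is to reduce everything to the projection viewpoint already developed for Proposition~\ref{pr:finitegeneric} and Proposition~\ref{pr:everypert}. To an injective linear map $l: \C^{n-d} \to \C^n$ with image $L$, I would associate a surjective linear projection $\widetilde{\pi}: \C^n \to \C^d$ whose kernel is $L$, and set $\pi = \widetilde{\pi}|_X$. Under this correspondence, affine perturbations $l_t$ of $l$ (inside the unfolding $(x,t) \mapsto (l_t(x), t)$) correspond exactly to affine translates $L_t = \widetilde{\pi}^{-1}(t y)$ of $L$ for $y \in \C^d$, and $l_t^{-1}(X,0)$ is identified with the intersection $X \cap L_t$ near the origin.

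For part (a), in the direction ``isolated $\Rightarrow$ generic'', I would argue as follows. Isolation $l^{-1}(X,0) = \{0\}$ means $L \cap X = \{0\}$ as set germs at $0$, and since $I$ is homogeneous the affine variety $X$ is a cone, so the vanishing locally forces $L \cap X = \{0\}$ globally; hence $\pi$ is finite and, since $\dim X = d$, also surjective. Proposition~\ref{pr:finitegeneric} then gives $\deg(\pi) = \text{mult}(X,0)$, i.e.\ a generic affine translate $L_t$ meets $X$ transversely at $\deg(\pi) = \text{mult}(X,0)$ smooth points. Translating back through the above correspondence, any perturbation $l_t$ that is generic with respect to $(X,0)$ satisfies $\sharp l_t^{-1}(X,0) = \text{mult}(X,0)$, so $l$ is generic in the sense of Section~\ref{ss:homo}. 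The converse direction ``generic $\Rightarrow$ isolated'' is essentially by construction: the multiplicity of $l$ with respect to $(X,0)$ is only defined when $l$ is isolated, and if $l$ were not isolated then by homogeneity of $X$ the image $L$ would meet $X$ in a positive dimensional cone, so the intersections $X \cap L_t$ would accumulate at $0$ and no finite multiplicity could equal $\text{mult}(X,0)$.

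Part (b) then follows by combining Proposition~\ref{pr:everypert}(b) with part (a). The Cohen--Macaulay hypothesis allows us to read off
\begin{equation*}
\sharp l_t^{-1}(X,0) \;=\; \dim \frac{\mathcal{O}_{n-d}}{l^*(I(X,0)) \cdot \mathcal{O}_{n-d}}
\end{equation*}
from Proposition~\ref{pr:everypert}(b) (applied to the immersion $l$), and part (a) identifies the left hand side with $\text{mult}(X,0)$. Alternatively, this is exactly the content of the formula~\eqref{eq:quest} stated at the end of Section~\ref{ss:intsect}, once one knows (from part~(a)) that $\widetilde{\pi}$ is generic with respect to $(X,0)$. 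The main conceptual obstacle is the bookkeeping in part (a), namely making precise that ``generic with respect to $(X,0)$'' in the sense used for $l$ really coincides with the intersection-theoretic notion $\deg(\pi) = \text{mult}(X,0)$ used for $\pi$; the homogeneity of $I$ is what makes this coincidence hold without any additional transversality argument, since it lets a purely local isolation condition propagate to a global statement about the cone $X$.
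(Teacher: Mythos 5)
Your proposal is correct and follows the same route the paper intends: the paper itself gives no explicit proof, only the one-line remark that the corollary follows by combining Proposition~\ref{pr:everypert} with Proposition~\ref{pr:finitegeneric}, and your argument is precisely the unwinding of that combination (associate the projection $\widetilde{\pi}$ with kernel $L=\operatorname{im}(l)$, use homogeneity to turn local isolation into the global statement $L\cap X=\{0\}$ so that $\pi=\widetilde{\pi}|_X$ is finite and surjective, apply Proposition~\ref{pr:finitegeneric} for $\deg(\pi)=\text{mult}(X,0)$, and then Proposition~\ref{pr:everypert}(b) for the Cohen--Macaulay equality in part (b)). The only point worth being explicit about in a final write-up is that the identification $\sharp l_t^{-1}(X,0)=\deg(\pi)$ is shown via the specific affine-translation perturbation $l_t$ (parametrizing $\widetilde{\pi}^{-1}(ty)$), which suffices because the paper's notion of ``generic with respect to $(X,0)$'' only requires the count to equal $\text{mult}(X,0)$ for a generic perturbation, not that the count be independent of all choices of perturbation in the non-Cohen--Macaulay case.
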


    \begin{rem}
        Note that Proposition~\ref{pr:finitegeneric} and Corollary~\ref{co:finitegeneric2} do not hold if $X$ is not homogeneous, see Example~\ref{ex:cusp}.
    \end{rem}

\begin{rem}\label{re:alex} For such a space $X$ defined by a homogeneous ideal in $\C^n$, we can see some extra facts about the condition that $X$ be Cohen-Macaulay. Specifically, since the Cohen-Macaulay locus of a variety is open, we can see that the germ $(X, 0)$ of $X$ at the origin is Cohen-Macaulay if and only if some neighborhood of the origin in $X$ is Cohen-Macaulay; using the scaling action on $X$, we can then see that this will be true if and only if $X$ is everywhere Cohen-Macaulay, since for any point $p$ outside our neighborhood we can choose small enough $t \in \C^*$ that $t \cdot p$ is in our neighborhood.

On the other hand, we can see that the projectivized space $\P X$ is Cohen-Macaulay if and only if $X \setminus \{0\}$ is Cohen-Macaulay, since $X \setminus \{0\}$ is a $\C^*$-bundle over $\P X$ and, for a ring $R$, $R$ is Cohen-Macaulay if and only if $R[t^{\pm 1}]$ is. In particular, we see that the Cohen-Macaulayness of $X$ implies that of $\P X$, but not the reverse.
\end{rem}

\printbibliography

\end{document}